%% file: paper.tex
\documentclass[11pt,notitlepage,tightenlines,nofootinbib,superscriptaddress]{revtex4-2}

\usepackage[english]{babel}

\usepackage[colorlinks=true, allcolors=blue]{hyperref}

\usepackage{preamble}
\usepackage{latex_notations}

\makeatletter
\@for\cloning@env:=theorem,proposition,lemma,claim,fact,corollary,definition,problem,question,example,remark,notation,conjecture\do{%
  \expandafter\def\csname theH\cloning@env\endcsname{\theHsection.\arabic{theorem}}
}
\makeatother

\begin{document}
\title{Approximate cloning of structured pure states}
\author{Yaroslav Herasymenko}
    \email{yherasymenko@pitp.ca}
    \affiliation{Perimeter Institute for Theoretical Physics, Waterloo, Ontario N2L 2Y5, Canada}
\author{Junqiao Lin}
    \affiliation{QuSoft, CWI, Amsterdam, The Netherlands}
\author{Philip Verduyn Lunel}
    \affiliation{Sorbonne Universit\'e, CNRS, LIP6, France}
\author{Dmitry Grinko}    
    \affiliation{QuSoft, University of Amsterdam, The Netherlands}

\begin{abstract}
The no-cloning theorem is a cornerstone result in quantum mechanics that forbids copying of general quantum information. More quantitatively, even given $N$ copies of an unknown state, any channel producing an $(N+1)$-copy state incurs a nonzero trace-distance error on some inputs. Turning this around, one can ask about the sample complexity of approximate $N\rightarrow N+1$ cloning: for a desired small error $\epsilon$, what $N$ suffices? For arbitrary pure states on a Hilbert space $\cH$, the answer is $N\simeq \dim \cH/\epsilon$ -- astronomically large for most $\cH$ of interest. What if the input is promised to lie in a structured family? We examine a range of families fundamental to many-body physics and quantum information: $n$-mode fermionic Gaussian and Slater states, $n$-mode bosonic Gaussian states, and qudit phase states. For fermions, we construct optimal cloning channels, reducing the complexity from exponential to polynomial in $n$. For bosonic Gaussian states, we construct an explicit cloner that certifies polynomial complexity, without any constraint on the energy of the state; this channel, however, is not optimal in general. The fermionic and bosonic results follow from a single representation-theoretic framework, which generalizes previous work by Werner and by Chiribella and Yang; it also yields the cloners' unitary implementation (a Stinespring dilation). Furthermore, we explain why for many families the cloning complexity is linear in the dimension of the family manifold, via a controlled saddle-point evaluation of the frame potential at high $N$ --- the same calculation accounts for the analogous scaling of tomography complexity. The above framework does not cover phase states; for these we give an independent construction of a channel with optimal cloning fidelity. For all considered families, our results extend to $N\rightarrow N+K$ cloning for $K>1$. Our work complements recent results on cloning stabilizer states, where cloning via learning yields asymptotically optimal complexity; for our families, we show that such a measure-and-prepare cloner cannot reach small $\epsilon$ at any $N$.
\end{abstract}

\maketitle
\tableofcontents

\input{MainText}

\input{NotationsAndDefinitions}

\input{CloningVsLearning}

\input{OrbitsAndCloners}

\input{FreeFermions}

\input{ActiveBosonicCloning}
\input{DimensionSaddlePoint}

\input{ComplementaryChannelsAndCircuits}

\input{PhaseStateCloner}

\section{Acknowledgements}

We thank Alex B.\ Grilo and Jonas Helsen for initial discussions on structured optimal cloning, and we thank Aram Harrow for a discussion regarding the phase states.
We also acknowledge useful discussions with Zongbo Bao, Chris Jackson, and Mykola Semenyakin.
We used large language models for literature searches, proofreading, text editing, and brainstorming at several stages of the project.
In particular, we are indebted to Claude Opus 4.7 for pointing out the degree series representations and the possible use of saddle-point approximation as an explanation for dimension law; ChatGPT 5.6 has helped find the proof of the optimal cloning strategy for the phase states.
We also used Claude Opus 5.5 and GPT-6 Astra for proofreading and text editing, and for assistance with proofs of technical statements, substantially including those in the appendices on bosonic Gaussian cloning, the dimension law, and phase-state optimality (\cref{app:bosonic-cloning-proofs,app:dimension-law-proofs,app:phase-smoothing}).

PVL is supported by France 2030 under the French National Research Agency award number ANR-22-PETQ-0007.
DG is supported by NWO grant NGF.1623.23.025 (“Qudits in theory and experiment”). JL is supported by ERC
STG grant 101040624-ASC-Q. Research of YH at Perimeter Institute is supported in part by the Government of Canada through the Department of Innovation, Science and Economic Development and by the Province of Ontario through the Ministry of Colleges and Universities. YH is also supported by the Applied Quantum Computing Challenge Program at the National Research Council of Canada.

\bibliographystyle{apsrev4-2}
\bibliography{Bibfile}

\appendix
\crefalias{section}{appendix}
\crefalias{subsection}{appendix}
\crefalias{subsubsection}{appendix}

\input{SimpleOrbits}

\input{AppCartanProduct}
\clearpage
\input{AppBosonicCloningProofs}
\clearpage
\input{AppDimensionLawProofs}
\clearpage
\input{AppGaussianHowe}
\input{AppPhaseStates}
\input{CloningBounds}

\end{document}

%% file: MainText.tex
\section{Introduction}

The no-cloning theorem is a foundational result in quantum mechanics, which demonstrates that general quantum information cannot be copied \cite{wootters1982single,dieks1982communication}. For any Hilbert space $\cH$, this statement forbids channels $T$ such that $T(\psi)=\psi^{\otimes2}$ for all pure $\psi$ on $\cH$. More generally, cloning channels from $N$ input copies to $N+K$ output copies --- those always producing $T(\psi^{\otimes N})=\psi^{\otimes N+K}$ --- are forbidden for any $N$ and $K$. Quantum mechanics not only bans \textit{exact} copying of information, but also puts limitations on \textit{approximate} cloning.
Indeed, one may ask how close can one get to achieving $T(\psi^{\otimes N})=\psi^{\otimes N+K}$ --- for example, in terms of the trace distance to the desired outcome?
The pioneering results in \cite{buvzek1996quantum, gisin1997optimal, werner1998optimal} (see also \cite{harrowChurchSymmetricSubspace2013} for a pedagogical introduction) give quantitative bounds on cloning --- in the example of $1\rarr2$ qubit state cloning, the best guaranteed trace distance is $1/3$ \cite{werner1998optimal}. This shows that the no-cloning theorem is fundamentally robust. From a practical perspective, these no-go results delimit the capabilities of quantum computers, by disallowing the `copy' operation on quantum memories \cite{nielsen2001quantum,briegel1998quantum}. 
At the same time, no-cloning theorems are the basis for various applications of quantum information, particularly in quantum communications and cryptography \cite{wiesner1983conjugate,bennett1984quantum,scarani2005quantum, aaronson2009quantum}.

Our work is motivated by the fundamental question: in what settings are quantum states \textit{effectively clonable}? 
Firstly, if the input state $\psi$ is promised to come from a particular family of states $\cS$, standard no-go theorems need not apply~\cite{bruss1998optimal}. As an extreme example \cite{nielsen2001quantum}, if $\cS$ consists of mutually orthogonal states, an exact cloning channel can be constructed (reconciling no-cloning with classical ability to copy information). Furthermore, one can consider near-exact rather than exact cloning, and allow multiple input copies to assist the cloning precision. 
To be more specific, let us call $T$ an $\epsilon$-approximate cloning channel (or cloner) from $N$ to $N+K$ copies, if it extends an input $\psi^{\otimes N}$ to $\psi^{\otimes N+K}$ for any $\psi\in\cS$, up to a deviation $\epsilon$ in the trace distance:
\begin{align}
\sup_{\psi \in \cS} \f{1}{2} \|T(\psi^{\otimes N})-\psi^{\otimes N+K}\|_{1}\leq\epsilon.
\end{align}
The cloners which achieve the best possible cloning precision $\epsilon$ we will call \textit{optimal} \cite{werner1998optimal}. To characterize `clonability of $\cS$' quantitatively, we will consider \textit{cloning sample complexity} $N_{\mr{min},\cS}(\epsilon)$. It is a recently introduced notion of complexity \cite{bansalCloningHardLearning2026}, defined as the answer to the question: for what $N$ do there exist $\epsilon$-approximate cloners from $N$ to $N+1$ copies? (The above notions are introduced formally in \cref{sec:basic_definitions}, including the difference between using fidelity or trace distance as a precision measure.)

How does the cloning sample complexity $N_{\mr{min},\cS}(\epsilon)$ depend on the family $\cS$ and on $\epsilon$? 
By pioneering results on optimal approximate cloning \cite{werner1998optimal}, the unstructured family of all pure states $\cS_{\cH}$ has complexity $N_{\mr{min},\cS_{\cH}}(\epsilon)=\Theta\lrp{\f{\dim \cH}{\epsilon}}$. This is prohibitively large in many cases: if $\cH$ is a many-body Hilbert space of $n$ qudits or fermionic modes, $\dim \cH$ is exponentially large in $n$; for bosons, it is even infinite. In search of more feasible cloning complexities, one can consider smaller, structured families $\cS\subsetneq\cS_{\cH}$. Indeed, results from prior work already establish reduced $N_{\mr{min},\cS}(\epsilon)$ for coherent and squeezed states of a single bosonic mode \cite{braunstein2001coherentCloning, chiribella2014optimal}, and recently also for stabilizer states \cite{bansalCloningHardLearning2026}. Do other families allow for improved cloning complexities?

Motivated by this question, our work examines a range of state families fundamental to many-body physics and quantum information. In particular, we consider fermionic Gaussian and Slater states, bosonic (multi-mode) Gaussian and condensate states, and qudit phase states. As shown below, the structures of these families allow for substantial reductions in cloning sample complexities, certified by explicit cloning channels. Moreover, for all of these families except bosonic Gaussian states, our cloning channel constructions are optimal.

\section{Overview of results}

\subsection{Obstruction to near-exact cloning by learning}

To achieve improved cloning for structured states, a natural instinct is to try and leverage results from learning theory, which was recently developed for many families listed above \cite{chen2026gaussian,arunachalam2022optimal}.
In particular: can one use the initial $N$ copies of $\psi$ to learn its estimate $\tilde{\psi}$ using established tools, and then return $\tilde{\psi}^{\otimes N+K}$? 
It was recently shown that for stabilizer states, this measure-and-prepare approach has the optimal sample complexity $N_{\mr{min},\cS}=\Theta(n)$ \cite{bansalCloningHardLearning2026,montanaroLearningStabilizerStates2017} (for constant $\epsilon$).
Does this also yield optimal complexity for other $\cS$? We find that the answer depends on whether $\cS$ is \textit{discrete} (like the stabilizer states) or \textit{continuous} (like the Gaussian states and the phase states). For a continuous family $\cS$, we find that measure-and-prepare cloning cannot achieve an arbitrary precision $\epsilon$ regardless of $N$ and $K$ (see \cref{sec:cloning_and_learning} for a complete explanation). The fundamental issue is that for such $\cS$, the best estimate $\tilde{\psi}$ has an error $\eta\simeq N^{-1/2}$,
which for $\tilde{\psi}^{\otimes N}$ is amplified to a constant $\epsilon$. This obstruction holds even for the simplest case of pure states on a qubit; by contrast, the standard $N\rightarrow N+1$ cloner of \cite{werner1998optimal} is readily $\epsilon$-approximate for $N=\Theta(1/\epsilon)$ --- but does not have the measure-and-prepare form.
This phenomenon pushes us to look for cloning approaches that do not rely on learning.

\subsection{Gaussian states and resolving orbits}

Our Gaussian-state results distinguish two bosonic families: centered pure Gaussian states $\cS_{\mathrm{sq}}$, whose first moments vanish but whose squeezing and correlations between modes are arbitrary, and all pure Gaussian states $\Sbg$, which also have unknown displacement. For fermions, $\Sffq$ is the $q$-particle subfamily of fermionic Gaussian states $\Sff$ ($\Sffq$ are also known as Slater states). See \cref{sec:optimal_fermionic_gaussian} and \cref{sec:active-bosonic-cloning} for precise definitions of these state families.
\begin{theorem}[Cloning of Gaussian states, informal]
\label{thm:gaussian_cloning_informal}
    For fermionic Gaussian states $\Sff$ and $q$-particle Slater states $\Sffq$ on $n$ modes, explicit optimal cloners give
    \begin{align}
    \label{eq:fermionic_gaussian_complexities}
        N_{\mr{min},\Sff}(\epsilon)=\frac{n(n-1)}{2\epsilon}\bigl[1+O(\epsilon)\bigr],
        \qquad
        N_{\mr{min},\Sffq}(\epsilon)=\frac{q(n-q)}{\epsilon}\bigl[1+O(\epsilon)\bigr],
    \end{align}
    as $\epsilon\to0$, uniformly for $n\geq2$ and $1\leq q\leq n-1$.
    For centered pure bosonic Gaussian states on $n\geq1$ modes, explicit cloners and a matching lower bound give
    \begin{align}
    \label{eq:centered_gaussian_complexity_informal}
        N_{\mr{min},\cS_{\mathrm{sq}}}(\epsilon)=\Theta_\epsilon(n)
        \qquad\text{for every fixed }0<\epsilon<1.
    \end{align}  
    For general bosonic Gaussian states, we obtain
    \begin{align}
    \label{eq:bosonic_gaussian_complexity}
        \frac{n}{-\log(1-\epsilon)}-1
        \leq N_{\mr{min},\Sbg}(\epsilon)
        \leq\left\lceil\frac{n(n+3)}{2\epsilon}+2n\right\rceil,
        \qquad 0<\epsilon<1.
    \end{align}
    The upper bounds are constructive; the lower bounds apply to arbitrary channels (\cref{cor:ab-centered-sample-complexity,cor:ab-gaussian-sample-complexity}).
\end{theorem}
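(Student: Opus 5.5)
The plan is to treat all four families through one orbit/representation-theoretic lens. Each family is an orbit $\cS=\{U_g\psi_0 U_g^\dagger: g\in G\}$ of a highest-weight vector $\psi_0$ under a group representation: $\mathrm{Spin}(2n)$ for $\Sff$, $U(n)$ acting on $\wedge^q\mathbb{C}^n$ for $\Sffq$, the metaplectic group for $\cS_{\mathrm{sq}}$, and the Jacobi (metaplectic $\ltimes$ Heisenberg) group for $\Sbg$.

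\textbf{Reduction to an irrep.} Because $\psi_0$ is a highest-weight vector of weight $\lambda$, the tensor power $\psi_0^{\otimes N}$ is highest weight of weight $N\lambda$. So the span of $\{\psi^{\otimes N}\}$ is the irreducible module $V_{N\lambda}$, the Cartan component of $V_\lambda^{\otimes N}$. Following Werner and Chiribella--Yang, I would take as candidate cloner
\begin{align}
T(\rho)=\frac{d_N}{d_{N+K}}\,P_{N+K}\bigl(\rho\otimes \mathbb{1}\bigr)P_{N+K},
\end{align}
where $P_{M}$ projects onto $V_{M\lambda}$ and $d_M=\dim V_{M\lambda}$.

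\textbf{Optimality and error.} Covariance together with a Schur-lemma argument over the orbit shows that this channel maximizes the worst-case global fidelity. That fidelity equals $d_N/d_{N+K}$, so trace-distance $\epsilon$-cloning is governed by $1-d_N/d_{N+1}$. The translation between fidelity and trace distance for pure targets uses Fuchs--van de Graaf together with the fact that the optimal output's overlap is exact; I am assuming the earlier sections provide this.

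\textbf{Fermionic case.} The Weyl dimension formula makes $d_N$ a polynomial in $N$ of degree $D=\dim_{\mathbb C}(G/\mathrm{Stab})$. This gives $D=n(n-1)/2$ for spinor orbits (the isotropic Grassmannian) and $D=q(n-q)$ for the Grassmannian. Expanding, $d_N/d_{N+1}=1-D/N+O(N^{-2})$, and solving $D/N\approx\epsilon$ gives $N=D/\epsilon\,[1+O(\epsilon)]$. Uniformity in $n$ and $q$ needs a careful product bound: $d_{N+1}/d_N=\prod_\alpha \frac{\langle N\lambda+\lambda+\rho,\alpha\rangle}{\langle N\lambda+\rho,\alpha\rangle}$, and each factor can be bounded explicitly.

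\textbf{Bosonic case.} The relevant modules are infinite-dimensional, namely discrete-series-type lowest-weight representations of $\mathrm{Mp}(2n)$. There $d_N/d_{N+1}$ is replaced by a ratio of formal degrees, or equivalently of frame potentials $\int |\langle\psi|\phi\rangle|^{2N}d\phi$. For $\cS_{\mathrm{sq}}$ I would evaluate the overlap $|\langle 0|S\rangle|^2=\prod_i \operatorname{sech} r_i$ in closed form and build an explicit cloner with fidelity $\ge(1-c/N)^{n}$, which gives $N=O_\epsilon(n)$. For $\Sbg$ I would combine a displacement cloner (as for coherent states, costing $O(n)$ extra copies) with the squeezing cloner, with parameters chosen to reach $\lceil n(n+3)/(2\epsilon)+2n\rceil$; the dimension of the Jacobi orbit is $n(n+1)/2+n=n(n+3)/2$.

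\textbf{Lower bound.} For the lower bound I would restrict to coherent states, a subfamily on which optimal cloning is known (Braunstein et al.). Its $N\to N+1$ fidelity is $(N/(N+1))^n$. Requiring $1-(N/(N+1))^n\le\epsilon$ gives $n\log(1+1/N)\le -\log(1-\epsilon)$, hence $N+1\ge n/(-\log(1-\epsilon))$. Monotonicity in the family extends the bound to $\cS_{\mathrm{sq}}$; alternatively one can use the single-mode squeezed vacua $\otimes n$, which gives the $\Theta_\epsilon(n)$ lower bound for centered states.

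\textbf{Main obstacle.} The hardest step is the bosonic one. Without an energy cutoff the orbit is noncompact, so the Werner projector is not trace-normalizable in the naive sense. The channel must then be defined through the positive-discrete-series structure, which is where I would try a Stinespring/Howe-duality construction, and complete positivity and trace preservation must be checked. The uniform-in-$n$ error control for the fermionic asymptotics is the secondary difficulty.
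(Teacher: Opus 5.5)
\emph{Fermionic part and $\Sbg$ lower bound.} Your fermionic part is essentially the paper's argument: Cartan product, the Weyl dimension formula, and explicit bounds on the product $d_N/d_{N+1}$ to get uniformity in $n$ and $q$. One correction: $\Sff$ is not a $\Spin(2n)$-orbit, since it has an even and an odd component. The paper uses $\Pin(2n)=\Spin(2n)\cup\mu\,\Spin(2n)$ and gets irreducibility of $V_{N\alpha}\oplus V_{N\beta}$ from two facts: the two highest weights are inequivalent, and $\mu^{\otimes N}$ exchanges the sectors. The factor $2$ in $d_N$ cancels in the ratio. Your coherent-state lower bound for $\Sbg$ is also the paper's, provided you use $\tfrac12\|\rho-P\|_1\ge 1-\Tr(P\rho)$ for pure $P$ rather than the weaker Fuchs--van de Graaf bound. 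The genuine gaps are all in the bosonic part.

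\emph{Upper bound for centered states.} The first gap is the claim $N_{\min,\cS_{\mathrm{sq}}}=O_\epsilon(n)$. The cloner your framework produces for squeezed vacua is the canonical one of \cref{thm:structured_optimal_cloning} for $\operatorname{Mp}(2n,\bR)$. Its fidelity is $\prod_{1\le i\le j\le n}\frac{N-i-j}{N+1-i-j}$, a product of $n(n+1)/2$ factors, so $1-f\approx n(n+1)/(2N)$ and you only get $O(n^2/\epsilon)$. ``Build an explicit cloner with fidelity $\ge(1-c/N)^n$'' is therefore the missing idea, not a step.

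The paper obtains it by giving up full metaplectic covariance:
\begin{itemize}
\item The solvable subgroup $S_n$ (upper-triangular $A$ with positive diagonal, quadratic phases $B$) acts simply transitively on centered states.
\item For $r\ge n$, every copy span carries the same irreducible representation on $L^2(\Omega_n,\det X^{-(n+1)/2}dX)$; only the vacuum vector $\xi_{r/2}\propto\det(X)^{r/4}e^{-\Tr X}$ depends on $r$.
\item The intertwining isometry $\Xi_M^{-1}\Xi_N$ has fidelity $\Gamma_n(\frac{N+M}{4})^2/[\Gamma_n(\frac N2)\Gamma_n(\frac M2)]$. Its logarithm is a sum of $n$ second differences of $\log\Gamma$, so $1-f\approx nK^2/(8N)$.
\item Being isometric, its trace error is $\sqrt{1-f}$, not $1-f$. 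This gives $N=O(n/\epsilon^2)$.
\end{itemize}

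\emph{Lower bound for centered states.} Coherent states are not centered; they meet $\cS_{\mathrm{sq}}$ only in the vacuum. So ``monotonicity in the family'' gives nothing here. Your alternative, products of one-mode squeezed vacua, is the paper's route, but both of its ingredients are missing:
\begin{itemize}
\item An unrestricted one-mode bound. \cref{thm:structured_optimal_cloning} cannot supply it: $\operatorname{Mp}(2,\bR)$ is non-amenable, and the one-mode affine group is amenable but not unimodular. The paper reduces to contractions intertwining $U_+$ with $U_+\otimes W$ for an arbitrary environment representation $W$, splits $W$ into frequency sectors, and bounds each by a weighted Schur test. This gives $f^\star_{\mathrm{sq},1}(N,N+1)\le e^{-1/(8(N+1))}$.
\item Multiplicativity $f^\star_{\mathrm{sq},n}\le(f^\star_{\mathrm{sq},1})^n$ for channels acting jointly on all modes. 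This is not automatic; the paper proves it by induction on modes, using the one-mode bound with an arbitrary environment.
\end{itemize}

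\emph{Upper bound for $\Sbg$.} ``Combine a displacement cloner with the squeezing cloner, with parameters chosen to reach'' the target is not a construction. The amplifier and vacuum ancillas of the coherent cloner are matched to the vacuum, not to an unknown $\ket{\Omega_W}$. The paper instead uses the canonical cloner of the Jacobi group itself. It is well defined through finite formal degrees for $N\ge 2n+2$; no Howe duality or Stinespring construction is needed for complete positivity and trace preservation.

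The formal degree comes from two steps. The passive concentration $C_r$ splits one displaced copy from $r-1$ centered ones. Hua's integral then gives $d_{\mathrm{bg},r}=(r/\pi)^n d_{\mathrm{sq},r-1}$. So the fidelity is a product of $p=n(n+3)/2$ factors $\frac{N-a}{N+1-a}$ with $0\le a\le 2n+1$. Hence $1-f\le p/(N-2n)$, and since the trace error equals $1-f$ for this channel, you get $\lceil n(n+3)/(2\epsilon)+2n\rceil$.
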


These results give an exponential improvement over unstructured cloning for fermions and a finite sample complexity for bosons, even when the unknown state has arbitrarily large energy.
For centered bosonic states, the linear dependence on $n$ is optimal at each fixed error; the sharp joint dependence on $n$ and $\epsilon$ remains open.
For the full Gaussian family, whether the quadratic upper bound can be improved to linear in $n$ is also open.

The optimal fermionic cloners and the quadratic bosonic upper bound follow from a common construction for families we call \emph{resolving orbits}.
The linear centered-state bound uses a different channel, obtained by restricting to a smaller symmetry group, as explained below.
Before we state the theorem for resolving orbits families $\cS$, let us explain this terminology (cf. the more formal \cref{def:resolving_orbit}).
First, we demand that $\cS$ is an orbit of a group $G$ under a representation $\phi$; for $\cS$ to be `resolving', we demand that for any $N$, representation $\phi^{\otimes N}$ restricted to a \emph{particular subspace} of $\cH^{\otimes N}$ has nice properties.
The subspace is the `$N$-copy span', i.e. the span of $\{\ket{\psi}^{\otimes N},\psi\in\cS\}$.
If thus restricted representation is irreducible and square-integrable for $N\geq L$, we say that $\cS$ is an $L$-resolving orbit; or simply `resolving' if $L=1$. Square-integrability is relevant for infinite-dimensional cases. The reason for the name `resolving' is that for such $\cS$, the integral $\int_{\psi\in\cS} \psi^{\otimes N}$ is well-defined and gives the resolution of identity on the $N$-copy span. A bit more precisely (see \cref{sec:orbits_cloners} for complete formalities), for $N$-resolving $\cS$ we have
\begin{align}
\label{eq:resolution_informal}
   \int_{\psi\in\cS} \psi^{\otimes N} = \frac{1}{d_N}\pi_N,
\end{align}
where $\pi_N$ is the projector onto $N$-copy span and $d_N<\infty$ is a number called formal degree. If $\dim \cH <\infty$, $d_N$ is simply the dimension of that span! But when this dimension is infinite, a finite formal degree $d_N<\infty$ is still guaranteed by our square-integrability condition.
The promised general theorem is

\begin{theorem}[Cloning of states from resolving orbits, informal]
\label{thm:resolving_cloning_informal}
If a family $\cS$ is an $N$-resolving orbit, a map $T$ defined as
\begin{align}
\label{eq:optimal_cloner_informal}
    T(\rho_{N})\coloneq\f{d_{N}}{d_{N+K}}\pi_{N+K} \lrp{\rho_{N}\otimes \bI_{K}} \pi_{N+K}
\end{align}
is an $\epsilon$-precise $N\rarr N+K$ cloning channel. Accuracy of $T$ is given by $\epsilon=1-f_{T,\cS}$, where $f_{T,\cS}=d_{N}/d_{N+K}$ has the meaning of cloning fidelity.
For compact $G$ and $\dim \cH<\infty$, the cloning channel $T$ is also \emph{optimal}.
\end{theorem}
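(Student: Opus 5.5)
I would split the argument into three parts: $T$ is a channel, its fidelity is $d_N/d_{N+K}$, and $T$ is optimal in the compact finite-dimensional case.

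\textbf{Channel property.} Complete positivity of $T$ is immediate, since $T$ is a compression $X\mapsto \pi_{N+K}XW\pi_{N+K}$ of the CP map $\rho\mapsto\rho\otimes\bI_K$, rescaled by a positive constant. Trace preservation is only required on operators supported on the $N$-copy span, i.e. we need $\operatorname{tr}_K\pi_{N+K}=\frac{d_{N+K}}{d_N}\pi_N$ as an operator identity on that span. I would derive this from the resolution of identity \eqref{eq:resolution_informal} at level $N+K$. Writing $\pi_{N+K}=d_{N+K}\int_{\psi\in\cS}\psi^{\otimes N+K}$ and taking the partial trace over the last $K$ factors (each $\psi$ has unit trace) gives $\operatorname{tr}_K\pi_{N+K}=d_{N+K}\int\psi^{\otimes N}=\frac{d_{N+K}}{d_N}\pi_N$.

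In infinite dimension the interchange of partial trace and integral needs justification. I would work in the weak sense, pairing with finite-rank operators $\ket{\alpha}\bra{\beta}$ on $\cH^{\otimes N}$ and using monotone convergence for positive $\alpha=\beta$. Square-integrability of the restricted representation guarantees that the integral converges weakly. I also need the fact that the $(N+K)$-copy span lies inside (span$_N)\otimes\cH^{\otimes K}$, which is clear since $\psi^{\otimes N+K}=\psi^{\otimes N}\otimes\psi^{\otimes K}$. Together these give $\operatorname{tr}T(\rho)=\frac{d_N}{d_{N+K}}\operatorname{tr}\bigl[(\operatorname{tr}_K\pi_{N+K})\rho\bigr]=\operatorname{tr}\rho$.

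\textbf{Fidelity.} For $\psi\in\cS$ we have $\pi_{N+K}\ket{\psi}^{\otimes N+K}=\ket{\psi}^{\otimes N+K}$, so
\[
\bra{\psi^{\otimes N+K}}T(\psi^{\otimes N})\ket{\psi^{\otimes N+K}}=\tfrac{d_N}{d_{N+K}}\,|\langle\psi|\psi\rangle|^{2N+2K}=\tfrac{d_N}{d_{N+K}}.
\]
Since the target $\psi^{\otimes N+K}$ is pure, the trace distance is at most $\sqrt{1-f}$ in general. The paper's convention (the formal definitions in \cref{sec:basic_definitions}) presumably sets $\epsilon=1-f$ for the fidelity-based precision, so I would state the result in that form. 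If trace distance is wanted instead, I would note that $T(\psi^{\otimes N})$ equals $f\,\psi^{\otimes N+K}$ plus a positive part orthogonal to it when averaged appropriately. This holds because $T$ is covariant and the output commutes with the stabilizer symmetry, giving $\frac12\|\cdot\|_1=1-f$ exactly when $\psi^{\otimes N+K}$ is an eigenvector of the output. I would verify this eigenvector property using covariance under the stabilizer subgroup of $\psi$ combined with irreducibility.

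\textbf{Optimality.} This is the main obstacle. For compact $G$ and finite $\dim\cH$, I would follow Werner's argument. First, average any cloner $S$ over $G$, using $\phi^{\otimes N}$ on the input and $\phi^{\otimes N+K}$ on the output; this does not decrease the worst-case fidelity, so $S$ may be assumed covariant. For covariant $S$ the fidelity is constant on the orbit and equals its average, $\int\operatorname{tr}[\psi^{\otimes N+K}S(\psi^{\otimes N})]$.

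Next, bound this average by an operator norm. Using the Choi operator $C_S$, the average equals $\operatorname{tr}[C_S\,\Omega]$ with $\Omega=\int\psi^{\otimes N+K}\otimes\overline{\psi}^{\otimes N}$. Since $\psi^{\otimes N}$ is recovered from $\psi^{\otimes N+K}$ by partial trace, one can also write the fidelity as $\operatorname{tr}[S^\dagger(\psi^{\otimes N+K})\psi^{\otimes N}]$. It then suffices to show that the operator $\int\psi^{\otimes N+K}\otimes(\text{dual})$ has norm $1/d_{N+K}$ against the trace constraint $\operatorname{tr}_{\mathrm{out}}C_S=\pi_N$ (of trace $d_N$), which yields $f\le d_N/d_{N+K}$.

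Concretely, I would bound $\operatorname{tr}[C_S\Omega]\le\|\tilde\Omega\|\operatorname{tr}C_S$ after restricting to the $(N+K)$-copy span on the output. Irreducibility on that span makes $\Omega$ act like $\frac{1}{d_{N+K}}$ times the projector onto the maximally entangled-type vector of $\mathrm{span}_{N+K}$ with $\mathrm{span}_N$. The step I expect to be delicate is identifying the top eigenvalue of $\Omega$ as exactly $1/d_{N+K}$. I would try Schur's lemma on the $G$-invariant operator $\Omega$, combined with the embedding $\mathrm{span}_{N+K}\subseteq\mathrm{span}_N\otimes\mathrm{span}_K$.
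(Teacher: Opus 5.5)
Your argument for the channel property and your computation of the fidelity value agree with the paper. Two steps, however, do not go through as proposed.

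\textbf{Trace distance.} In the paper $\epsilon$ is a trace distance (\cref{def:approximate_cloning}), and optimality is also meant in trace distance. So $\epsilon=1-f_{T,\cS}$ is a claim that must be proved, not a convention. It requires two things: the eigenvector property, and, to transfer optimality, the bound $\tfrac12\|\rho-P\|_1\geq1-\Tr(P\rho)$ for pure $P$, which you never state.

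Your justification of the eigenvector property through the stabilizer of $\psi$ fails in general. For coherent states, the stabilizer of the vacuum ray in the Heisenberg group is the central phase circle. It acts on the whole $(N+K)$-copy span by a scalar, so it imposes no constraint.

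No symmetry argument is needed. You already used two facts for the fidelity: $\pi_{N+K}\ket{\psi}^{\otimes N+K}=\ket{\psi}^{\otimes N+K}$ and $(\psi^{\otimes N}\otimes\bI_K)\ket{\psi}^{\otimes N+K}=\ket{\psi}^{\otimes N+K}$. Together they give $T(\psi^{\otimes N})\ket{\psi}^{\otimes N+K}=f\,\ket{\psi}^{\otimes N+K}$ for every $\psi$, with no averaging. Hence $T(\psi^{\otimes N})-\psi^{\otimes N+K}=R-(1-f)\psi^{\otimes N+K}$ with $R\geq0$ orthogonal to the target, and the trace error is exactly $1-f$.

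\textbf{Optimality.} The decisive inequality $\|\Omega\|_\infty\leq1/d_{N+K}$ is left unproven, and the picture you offer for it is wrong. Since $\Tr\Omega=1$, $\Omega$ cannot be $d_{N+K}^{-1}$ times a rank-one projector. In general it is not a multiple of any projector: for qubits with $N=K=1$ it has eigenvalue $1/3$ on a two-dimensional subspace and $1/12$ on a four-dimensional one. Schur's lemma on $\overline{\cH^{(N)}_\cS}\otimes\cH^{(N+K)}_\cS$ would at best make $\Omega$ block-scalar, on a decomposition that is not available for a general resolving orbit. It would still leave the block values to be computed.

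The missing idea is Cauchy--Schwarz: $|\bra{a}X\ket{b}|^2\leq\bra{a}XX^\dagger\ket{a}$ for unit $b$. Combined with $\int\psi_g^{\otimes N+K}d\mu(g)=\pi_{N+K}/d_{N+K}$, this gives the operator inequality $\Omega\leq d_{N+K}^{-1}\,\pi_{N+K}\otimes\bI_{\mathrm{in}}$. Equivalently, for Kraus operators $A_\ell$ of any channel on $\cH^{(N)}_\cS$, the average fidelity is at most $\sum_\ell\Tr(A_\ell^\dagger A_\ell)/d_{N+K}=d_N/d_{N+K}$. With this step your route is a valid alternative for compact $G$ and $\dim\cH<\infty$, and it does not even need twirling.

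The paper argues differently, in the Heisenberg picture. Take a covariant unital CP map $\Lambda$ and set $B=\Lambda(\psi_0^{\otimes N+K})$. Schur's lemma is applied on the irreducible $\cH^{(N)}_\cS$, not on the tensor product, giving $\frac{\Tr B}{d_N}\pi_N=\int\phi(g)^{\otimes N}B\,(\phi(g)^\dagger)^{\otimes N}d\mu(g)=\Lambda(\pi_{N+K})/d_{N+K}\leq\pi_N/d_{N+K}$. The fidelity is therefore at most $\Tr B\leq d_N/d_{N+K}$. Unlike your bound, which rests on $\Tr C_S=d_N$, this argument survives for amenable noncompact $G$ via an invariant mean.
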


Note that the tight relation $\epsilon=1-f_{T,\cS}$ between trace distance and fidelity is specific to this channel.

\cref{thm:resolving_cloning_informal} is formally stated and proven in \cref{sec:orbits_cloners} as \cref{thm:structured_optimal_cloning}.
This generalizes the cloner for unstructured pure states proposed originally by Werner in \cite{werner1998optimal} and the cloner for generalized coherent states of \cite{chiribella2014optimal}.
Indeed, the family of all pure states $\cS_{\cH}$ considered in \cite{werner1998optimal} is a resolving orbit: its $N$-copy span is the symmetric subspace of $\cH^{\otimes N}$, which hosts an irreducible representation of $\rU(\cH)$.
For Werner, its irreducibility was key to computing the precision and proving optimality of the cloning channel given in \cref{eq:optimal_cloner_informal}.
Here, we show that this idea generalizes to any resolving orbit; we refer to the cloner of \cref{eq:optimal_cloner_informal} as `canonical cloner'.
The proof is similar to that of \cite{werner1998optimal} but has to take a couple of alternative routes to allow for infinite-dimensionality, made possible by our square-integrability assumption. The result of \cite{chiribella2014optimal} covers the finite-dimensional resolving orbits generated by highest-weight vectors (such as the Slater states $\Sffq$; see below).

Why does \cref{thm:resolving_cloning_informal} apply to the Gaussian state families --- why do they form resolving orbits? One way to see it is to examine the so-called Howe dualities, focusing on the representations which are trivial in the copy space. 
That approach uses fairly heavy mathematical machinery, so we present more transparent and self-contained proofs using standard notions from representation theory (\cref{sec:optimal_fermionic_gaussian} and \cref{sec:active-bosonic-cloning}). (However, we do resort to Howe dualities to realize a Stinespring dilation, as explained below.)

Let us first sketch the fermionic case. One starts with the basic fact, that Gaussian states $\Sff$ form an orbit of the so-called $\Pin (2n)$ group, and its $q$-particle subset $\Sffq$ --- that of the $U(n)$ group.
In physics terms, these are the Bogoliubov transformations and the single-particle transformations. To obtain the resolving property, one needs to show that these groups are represented irreducibly on the $N$-copy spans.
For Slater states $\Sffq$, one observes that the base of the orbit (the `Fermi sea state') is the highest-weight vector of the respective $U(n)$ representation.
Using the Cartan product theorem, it follows that the $N$th tensor power of these vectors is also highest-weight and generates an irreducible representation on the $N$-copy span; the use of highest-weight theory is analogous to \cite{chiribella2014optimal}.
The resulting representations on $N$-copy spans are standard, and the dimensionality $d_N$ is given by the Weyl product formula.
The fermionic complexities in \cref{thm:gaussian_cloning_informal} then follow from the asymptotics of the cloning fidelity $d_{N}/d_{N+1}$.
The proof for general Gaussian states $\Sff$ is similar but with a couple of extra steps, because the group $\Pin (2n)$ is disconnected and thus the highest-weight theory does not directly apply.
See \cref{sec:optimal_fermionic_gaussian} for more details.

For the bosonic constructions, see \cref{sec:active-bosonic-cloning}. There, we describe Gaussian states $\cS_{\mathrm{sq}}$ as the vacuum orbit of the metaplectic group; adding displacements gives the Jacobi-group orbit $\Sbg$.
The desired proof of irreducibility on the copy spans follows from the vacuum being the unique vector at zero total photon number.
We find the formal degrees and square-integrability threshold $L$ from explicit integrals; these yield canonical cloners with sample complexity $O(n^2/\epsilon)$.
Optimality requires more care here: these groups are not amenable, so the construction is guaranteed to be optimal only among covariant channels.
For coherent states, the displacement group is amenable, and the canonical cloner is globally optimal, with sample complexity $\Theta(n/\epsilon)$.

For the centered states, we instead use a smaller, solvable group that still reaches every state and carries the same representation on all copy spans with at least $n$ copies.
Identifying the input and output spans gives a cloner with sample complexity $O(n/\epsilon^2)$.
A lower bound from the one-mode problem shows that its linear dependence on $n$ is optimal at fixed error, as stated in \cref{thm:gaussian_cloning_informal}; the sharp joint dependence on $n$ and $\epsilon$ remains open.
For the full Gaussian family, the fidelity optimum is also unknown, and the sample complexity remains between linear and quadratic bounds in $n$. (See \cref{sec:active-bosonic-cloning} for the precise guarantees and other details.)

\subsection{Stinespring dilation and complementary channels}

What does a unitary implementation of the cloner $T$ look like?
Every channel can be written as an isometry into the output and an environment, its \emph{Stinespring dilation}, followed by discarding the environment; discarding the output instead gives the \emph{complementary channel}.
In \cref{sec:complementary-channels-circuits}, we construct this isometry for $T$.
For compact resolving orbits, i.e.\ orbits of compact groups such as $\Sff$ and $\Sffq$, its environment is the complex conjugate of the $K$-copy span, the smallest possible.
For fermionic Gaussian and Slater states and for Bose condensates ($q$ bosons in a common unknown mode), symmetry alone fixes the isometry up to a phase.
Our proofs of the resolving property avoid Howe duality, but the circuit is built from it.
Howe duality organizes the multi-copy Hilbert space according to two commuting symmetries: the physical transformations applied identically to every copy, and transformations mixing the copies.
In this basis, each copy span is a single irreducible representation of the physical group, and the isometry becomes one Clebsch--Gordan coupling, as in the addition of angular momenta.
The coupling is sandwiched between the change to this basis on the input, the \emph{Howe transform}, and inverse Howe transforms on both outputs (\cref{subsec:howe-cg-circuit}).
This circuit extends Ref.~\cite{Brzic2026} for arbitrary pure states, where Schur--Weyl duality plays the role of Howe duality.
The circuit specifies the isometry exactly; an efficient decomposition of the Howe transforms and the coupling into elementary gates is left open.

What does the environment receive?
The complementary channel measures the POVM $\{d_N\psi^{\otimes N}d\psi\}_{\psi\in\cS}$ given by \cref{eq:resolution_informal} and prepares $K$ copies of the complex conjugate $\overline\psi$ of the outcome, i.e.\ of its transpose in a fixed product basis.
For compact resolving orbits, this channel is optimal among all channels for the task $\psi^{\otimes N}\mapsto\overline\psi^{\otimes K}$, with the same fidelity $d_N/d_{N+K}$ as the cloner (\cref{thm:structured-cloning-complementarity}).
This result extends the cloning--transposition complementarity found for unstructured pure states in Ref.~\cite{Brzic2026}.
Complex conjugation depends on the basis and is not a physical operation, but for fermionic Gaussian states a fixed unitary turns the environment output into $K$ optimal approximate copies of the particle--hole conjugate state (\cref{prop:particle-hole-decoder}).
For a $q$-particle Slater state, this results in the $(n-q)$-particle Slater state occupying the complementary single-particle subspace.

The same measurement links cloning to learning, a connection going back to Ref.~\cite{BrussEkertMacchiavello1998} (also cf.~\cite{chiribella2010quantum}).
Preparing $\psi^{\otimes K}$ instead of $\overline\psi^{\otimes K}$ gives the optimal way to learn from $N$ copies and prepare $K$ fresh ones, again with fidelity $d_N/d_{N+K}$ (\cref{cor:cloning-estimation}).
Adding $K$ copies is therefore exactly as hard as learning to prepare $K$ fresh ones: at a given fidelity, both need the same number of inputs.
A measure-and-prepare cloner, however, must also re-prepare the $N$ copies its measurement consumed, which caps its fidelity at $d_N/d_{2N+K}$; for pure qubit states and $K=1$, the number is $1/2$ for every $N$.
For the compact families of \cref{sec:dimension_law}, of real dimension $m$, this tends to $2^{-m/2}$ as $N\to\infty$ at $K=1$, whereas the infidelity of the optimal cloner decays as $m/(2N)$.
This gap quantifies the learning obstruction of \cref{sec:cloning_and_learning}.

\subsection{Why the manifold dimension controls canonical cloning}
\label{sec:dimension_law_overview}

The canonical cloners of \cref{thm:resolving_cloning_informal} display a pattern: their sample complexities are proportional to $n(n-1)$, $2q(n-q)$, $n(n+1)$, and $n(n+3)$ for fermionic Gaussian, Slater, centered bosonic Gaussian, and full bosonic Gaussian states, respectively.
Each quantity is the real dimension of the corresponding state manifold.
These scalings arise from ratios of formal degrees computed separately by representation theory, so it is apriori unclear why they should conspire to match the manifold dimension. To explain this, \cref{sec:dimension_law} gives a common explanation through a saddle-point calculation of the canonical fidelity $f_N=d_N/d_{N+1}$; we sketch the key idea here.
By \cref{eq:resolution_informal}, $d_N^{-1}=\int_{\cS}|\braket{\psi}{\psi_0}|^{2N}$ for a fixed reference state $\ket{\psi_0}\in\cS$, so this fidelity is an average over the family:
\begin{align}\label{eq:tilted_cartoon}
    f_N=\f{d_N}{d_{N+1}}=\Ex_{\psi\sim p_N}\!\lrb{\,\big|\braket{\psi}{\psi_0}\big|^2\,},
    \qquad p_N(\psi)\propto\big|\braket{\psi}{\psi_0}\big|^{2N}.
\end{align}
As $N$ grows, $p_N$ concentrates near the maximum of the overlap---the reference state itself---and this is where the dimension enters. Choose coordinates $\vec z\in\bR^m$ on the family around $\psi_0$, where $m=\dim\cS$, normalized so that
\begin{align}\label{eq:overlap_normal_coords}
    \big|\braket{\psi_{\vec z}}{\psi_0}\big|^{2}\;=\;1-|\vec z|^2\;+\;(\text{higher order}).
\end{align}
As an informal saddle-point approximation, let us now ignore the higher-order terms, the variation of the volume element, and everything far from $\psi_0$. Then $p_N(\vec z)\propto(1-|\vec z|^2)^N\simeq e^{-N|\vec z|^2}$ is a Gaussian of width $1/\sqrt N$ in each of the $m$ directions, and
\begin{align}\label{eq:equipartition_cartoon}
    1-f_N\;\simeq\;\Ex\big[|\vec z|^2\big]\;=\;\f{\dim\cS}{2N}\,:
\end{align}
each transverse direction of the family contributes $\tfrac1{2N}$ to the cloning infidelity, and nothing else about the family survives.
It follows that the canonical cloner reaches trace error $\epsilon$ with $N\simeq\dim\cS/(2\eps)$ copies, as desired.
This calculation is, of course, uncontrolled: it neglects (i) the anharmonicity of the overlap and the mass far from $\psi_0$, and (ii) the drift of the volume element in the coordinates $\vec z$, which carries the curvature of $\cS$ and is not universal.
In \cref{sec:dimension_law} we show that for a broad class of continuous families---including the Gaussian families $\Sff$, $\Sffq$ and $\Sbg$---all these deviations are rigorously controlled, and the saddle-point answer is exact to leading order: $1-f_N=\f{m}{2N}\big(1+O(m/N)\big)$, with explicit constants.

This dimension law determines the unrestricted complexity when the canonical cloner is optimal.
For centered bosonic Gaussian states, it instead describes a suboptimal construction: the manifold has dimension $n(n+1)$, while the optimal sample complexity is $\Theta_\epsilon(n)$ at fixed error.

For the compact families covered by \cref{cor:cloning-estimation}, the same ratio of formal degrees also gives the optimal fidelity of \emph{learning} a state from $N$ copies, so the same mechanism explains the dependence of their tomography complexity on $\dim\cS$.
Note that in the tomography reading one might hope to get this scaling from information counting alone---specifying a point of an $m$-dimensional manifold takes of order $m$ bits at fixed resolution---but by Holevo's theorem \cite{holevoBoundsQuantityInformation1971} each copy can supply up to $\log_2\dim\cH$ of these bits ($n$ bits for $n$ fermionic modes), so counting predicts $N\sim\dim\cS/\log\dim\cH$, for free fermions a factor $\sim n$ \emph{below} the true value.

\subsection{Other examples of resolving orbits}

The resolving-orbit construction also covers several simpler families, discussed in \cref{sec:other_resolving_orbits}.
For $D$ orthogonal states, the monomial group gives $d_N=D$ for every $N$, hence perfect cloning.
For products of $n$ unknown $D$-dimensional pure states, the optimal channel acts separately on the factors and the trace-distance sample complexity is $\Theta(n(D-1)/\epsilon)$.
For $n$-mode coherent states it is $\Theta(n/\epsilon)$, with the amplifier construction described in \cref{sec:active-bosonic-cloning}.
Finally, $q$-particle Bose condensates form the compact $U(n)$ orbit of $v^{\otimes q}$ and have degree $d_N=\binom{n+qN-1}{n-1}$.
For fixed $q\geq1$ and $n\geq2$, their sample complexity is $\Theta(n/\epsilon)$, compared with $\Theta(\binom{n+q-1}{q}/\epsilon)$ for arbitrary states in the same particle sector.

It is interesting to find further examples of resolving orbits.
What about other known families which have groups acting transitively on them?
$n$-qubit stabilizer states $\cS_{\mr{stab}}$ are an orbit of the Clifford group; is it resolving?
No, the $N$-copy span is not irreducible in general: this can be seen by computing the frame potential on $4$ copies \cite{grossSchurWeylDualityClifford2021,bittelCompleteTheoryClifford2026}.
Therefore, \cref{thm:resolving_cloning_informal} doesn't give an optimal cloner for this family;
the classical $\cS_{\mr{orth}}$ is the only discrete family known to us which is a resolving orbit.
Phase states provide a contrasting example: although they do not form a resolving orbit, their fidelity-optimal cloning channel can be determined exactly. We turn to this family next.

\subsection{Optimal cloner for phase states}
\label{sec:phase-results}

A basic family outside the resolving-orbit framework is given by the phase states
\begin{align}
    \ket{\psi_x}
    =
    \frac{1}{\sqrt d}
    \sum_{j=0}^{d-1}x_j\ket{j},
    \qquad
    x\in\C^d,
    \qquad
    |x_j|=1.
    \label{eq:phase-family-results}
\end{align}
These states form an orbit under diagonal phase rotations. However, because this group is abelian, its action on the tensor-power span decomposes into occupation-number sectors and is not irreducible. The phase states therefore do not form a resolving orbit, and the optimality theorem of \cref{thm:structured_optimal_cloning} does not apply. Nevertheless, we explicitly construct a cloner that maximizes the worst-case global fidelity with the ideal $N+K$-copy output among all quantum channels. Previous exact results cover arbitrary $N\to N+K$ qubit cloners and $1\to 1+K$ qutrit cloners~\cite{bruss2000phase, DArianoMacchiavello2003}. In arbitrary dimension, a known optimal construction applies when the number of added copies is $K=qd$ for an integer $q$~\cite{BuscemiDArianoMacchiavello2005}. This economical cloner acts particularly simply in the occupation basis: it adds $q$ to every occupation number. Our construction extends this idea to arbitrary $K$, distributing the added occupations so that their numbers differ by at most one.

For an occupation vector \(\mathbf b\in\N^d\) with \(|\mathbf b|=N\), let \(\ket{\Symb{N}{\mathbf b}}\) denote the normalized uniform superposition of all computational-basis strings with occupation \(\mathbf b\). Write the number of added copies as
\begin{align}
    K=qd+s,
    \qquad
    0\leq s<d,
\end{align}
and define the set of balanced occupations
\begin{align}
    \cC_{d,K}
    \coloneq
    \left\{
        \boldsymbol\Gamma\in\N^d:
        |\boldsymbol\Gamma|=K,\ 
        \max_j\Gamma_j-\min_j\Gamma_j\leq 1
    \right\}.
    \label{eq:balanced-occupations-results}
\end{align}
Thus every \(\boldsymbol\Gamma\in\cC_{d,K}\) has \(s\) entries equal to \(q+1\) and \(d-s\) entries equal to \(q\), and
\begin{align}
    |\cC_{d,K}|=\binom{d}{s}.
\end{align}

We construct the optimal cloner by combining these balanced shifts of the input occupation vector, as described in the following theorem.

\begin{theorem}[Optimal cloning of phase states, informal]
\label{thm:phase-results}
A fidelity-optimal \(N\to N+K\) cloner for the phase states is obtained by combining one Kraus operator \(A_{\boldsymbol\Gamma}\) for every balanced occupation \(\boldsymbol\Gamma\in\cC_{d,K}\). Its action on the symmetric occupation basis is
\begin{align}
    A_{\boldsymbol\Gamma}
    \ket{\Symb{N}{\mathbf b}}
    =
    \sqrt{
        \frac{
            \binom{N+K}{\mathbf b+\boldsymbol\Gamma}
        }{
            Z_{\mathbf b}
        }
    }
    \ket{\Symb{N+K}{\mathbf b+\boldsymbol\Gamma}},
    \qquad
    Z_{\mathbf b}
    =
    \sum_{\boldsymbol\Theta\in\cC_{d,K}}
    \binom{N+K}{\mathbf b+\boldsymbol\Theta}.
    \label{eq:phase-cloner-informal}
\end{align}
The channel
\begin{align}
    \Phi_{\mathrm{ph}}(X)
    =
    \sum_{\boldsymbol\Gamma\in\cC_{d,K}}
    A_{\boldsymbol\Gamma}X
    A_{\boldsymbol\Gamma}^{\dagger}
\end{align}
has the same fidelity with the ideal output $\ket{\psi_x}^{\otimes(N+K)}$ for every phase state and maximizes the worst-case fidelity among all quantum channels. Its fidelity is
\begin{align}
    F_{\mathrm{ph}}^{\star}(d,N,K)
    =
    \frac{|\cC_{d,K}|}{d^{2N+K}}
    \left(
        \sum_{|\mathbf b|=N}
        \frac{
            \sqrt{\binom{N}{\mathbf b}}
            \binom{N+K}{\mathbf b+\boldsymbol\Gamma_0}
        }{
            \sqrt{Z_{\mathbf b}}
        }
    \right)^2,
    \label{eq:phase-fidelity-informal}
\end{align}
where \(\boldsymbol\Gamma_0\) is any element of \(\cC_{d,K}\).
\end{theorem}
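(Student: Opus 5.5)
The plan has three parts: show that $\Phi_{\mathrm{ph}}$ is a channel, compute its fidelity on the whole family, and prove a matching upper bound for an arbitrary channel.

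\emph{Channel property.} The phase group $\mathbb{T}^d$ acts diagonally. On the symmetric basis it acts by $\ket{\Symb{N}{\mathbf b}}\mapsto x^{\mathbf b}\ket{\Symb{N}{\mathbf b}}$. Each $A_{\boldsymbol\Gamma}$ maps a weight-$\mathbf b$ vector to a weight-$(\mathbf b+\boldsymbol\Gamma)$ vector, so it is covariant up to the scalar $x^{\boldsymbol\Gamma}$. This scalar cancels in $A_{\boldsymbol\Gamma}XA_{\boldsymbol\Gamma}^\dagger$, so $\Phi_{\mathrm{ph}}$ is covariant, and therefore its fidelity is the same for every $\psi_x$. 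For trace preservation on the symmetric subspace, compute $\sum_{\boldsymbol\Gamma}A_{\boldsymbol\Gamma}^\dagger A_{\boldsymbol\Gamma}$ in the basis $\ket{\Symb{N}{\mathbf b}}$. Distinct $\mathbf b$ with the same shift land on distinct output weights, so off-diagonal terms vanish. The diagonal term equals $\sum_{\boldsymbol\Gamma}\binom{N+K}{\mathbf b+\boldsymbol\Gamma}/Z_{\mathbf b}=1$. Off the symmetric subspace the channel can be completed arbitrarily, since inputs $\psi^{\otimes N}$ live there.

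\emph{Fidelity at $x=(1,\dots,1)$.} Expand $\ket{\psi}^{\otimes N}=d^{-N/2}\sum_{\mathbf b}\sqrt{\binom{N}{\mathbf b}}\ket{\Symb{N}{\mathbf b}}$, and similarly for $N+K$ copies. Then $\bra{\psi^{\otimes N+K}}A_{\boldsymbol\Gamma}\ket{\psi^{\otimes N}}=d^{-(2N+K)/2}\sum_{\mathbf b}\sqrt{\binom N{\mathbf b}}\binom{N+K}{\mathbf b+\boldsymbol\Gamma}/\sqrt{Z_{\mathbf b}}$. Summing the squares over $\boldsymbol\Gamma$ gives the stated formula, provided each $\boldsymbol\Gamma\in\cC_{d,K}$ gives the same amplitude. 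To see this, note that permutations $\sigma\in S_d$ of the modes preserve $\cC_{d,K}$, preserve multinomials, and satisfy $Z_{\sigma\mathbf b}=Z_{\mathbf b}$. Reindexing the sum over $\mathbf b$ then maps the $\boldsymbol\Gamma$ amplitude to the $\sigma\boldsymbol\Gamma$ amplitude, and $S_d$ is transitive on $\cC_{d,K}$.

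\emph{Optimality.} By the standard twirling argument over the compact group $\mathbb{T}^d$, together with $S_d$ and $S_N$, $S_{N+K}$ symmetrization, I may restrict to covariant channels acting between symmetric subspaces. Their worst-case fidelity equals $F=\bra{\Psi}\Phi(\ket{\psi_1}\!\bra{\psi_1}^{\otimes N})\ket{\Psi}$ with $\Psi=\psi_1^{\otimes N+K}$. Write $\Phi$ through its Choi operator. Covariance forces the Choi operator to be block-diagonal in the shift $\boldsymbol\Gamma=\mathbf b'-\mathbf b\in\mathbb{Z}^d$ with $|\boldsymbol\Gamma|=K$, allowing negative entries for $\mathbf b'\ge0$. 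This yields Kraus-type families $B_{\boldsymbol\Gamma}\ket{\Symb{N}{\mathbf b}}=c_{\boldsymbol\Gamma}(\mathbf b)\ket{\Symb{N+K}{\mathbf b+\boldsymbol\Gamma}}$, after absorbing phases since all target amplitudes are positive, subject to $\sum_{\boldsymbol\Gamma}|c_{\boldsymbol\Gamma}(\mathbf b)|^2\le 1$. The problem then becomes: maximize $\sum_{\boldsymbol\Gamma}\big(\sum_{\mathbf b}\sqrt{\binom N{\mathbf b}\binom{N+K}{\mathbf b+\boldsymbol\Gamma}}\,c_{\boldsymbol\Gamma}(\mathbf b)\big)^2$ subject to that normalization. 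For each fixed shift structure I would bound this by a semidefinite dual. The natural dual candidate is the one diagonal in $\mathbf b$, namely $Y=\sum_{\mathbf b}y_{\mathbf b}\ket{\Symb{N}{\mathbf b}}\!\bra{\Symb{N}{\mathbf b}}$, with $y_{\mathbf b}\propto\sqrt{\binom N{\mathbf b}Z_{\mathbf b}}$ chosen so that complementary slackness holds for the balanced shifts. The required dual inequality is $\mathbb 1\otimes Y\ge$ Choi-projected target, and it reduces blockwise to a rank-one inequality per shift. This becomes the Cauchy--Schwarz statement $\sum_{\mathbf b}\binom N{\mathbf b}\binom{N+K}{\mathbf b+\boldsymbol\Gamma}/y_{\mathbf b}\le\lambda$. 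Here $\lambda$ is exactly the optimal value, with equality for balanced $\boldsymbol\Gamma$.

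\emph{Main obstacle.} The key lemma is that balanced shifts maximize $\sum_{\mathbf b}\binom N{\mathbf b}\binom{N+K}{\mathbf b+\boldsymbol\Gamma}/y_{\mathbf b}$ over all shifts. I would attack it by a smoothing argument: move a unit from the largest entry of $\boldsymbol\Gamma$ to the smallest and show the sum does not decrease. Pairing $\mathbf b$ with its transposition image in those two coordinates, this reduces to a two-coordinate log-concavity inequality for multinomials of the form $\frac{(b_i+\Gamma_i)!(b_j+\Gamma_j)!}{\dots}$. Making this compatible with the $\mathbf b$-dependent weights $1/y_{\mathbf b}$ is the delicate part. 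If it fails, I would instead seek a different dual whose weights make the pairwise exchange argument clean.
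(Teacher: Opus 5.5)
Your argument follows the paper up to the key lemma. That includes covariance and trace preservation of the Kraus form, the fidelity via transitivity of $S_d$ on $\cC_{d,K}$, the phase twirl, and the Choi blocks indexed by the shift $\boldsymbol\Delta$. It also includes the diagonal dual $y_{\mathbf b}\propto\sqrt{\binom{N}{\mathbf b}Z_{\mathbf b}}$. Feasibility of that dual is exactly the inequality $\Lambda(\boldsymbol\Delta)\le\Lambda(\boldsymbol\Gamma_0)$ for every integer shift with $|\boldsymbol\Delta|=K$, where $\Lambda(\boldsymbol\Delta)=\sum_{\mathbf b\in\cB_N}\sqrt{\binom{N}{\mathbf b}}\binom{N+K}{\mathbf b+\boldsymbol\Delta}/\sqrt{Z_{\mathbf b}}$. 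Your restriction to one Kraus operator per shift needs one line of justification: for a positive block with given diagonal, $v^\dagger Jv\le(\sum_{\mathbf b}|v_{\mathbf b}|\sqrt{J_{\mathbf b\mathbf b}})^2$, so rank-one blocks suffice. The SDP dual handles general blocks anyway. The genuine gap is that you leave $\Lambda(\boldsymbol\Delta)\le\Lambda(\boldsymbol\Gamma_0)$ unproved. That inequality carries the whole optimality claim, and, as you suspect, your sketched route does not reach it.

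The missing idea is the shape of the weights along two-coordinate lines. Partition $\cB_N$ into lines $\mathbf b(r)=\boldsymbol\eta+r\mathbf e_i+(n-r)\mathbf e_j$, $0\le r\le n$.
\begin{itemize}
\item On a line, $\binom{N+K}{\mathbf b(r)+\boldsymbol\Delta}=C_{\boldsymbol\eta}\binom{L}{r+\Delta_i}$ with $L=n+\Delta_i+\Delta_j$.
\item The smoothed shift $\boldsymbol\Delta-\mathbf e_i+\mathbf e_j$ gives $C_{\boldsymbol\eta}\binom{L}{r+\Delta_i-1}$, with the same $C_{\boldsymbol\eta}$ and $L$.
\item Each line therefore contributes $C_{\boldsymbol\eta}F(a)$, where $F(a)=\sum_r w_r\binom{L}{r+a}$ and $w_r=\sqrt{\binom{N}{\mathbf b(r)}/Z_{\mathbf b(r)}}$, evaluated at $a=\Delta_i$ versus $a=\Delta_i-1$.
\end{itemize}
A termwise comparison cannot work. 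Even after symmetrizing with $r\mapsto n-r$, the summand difference $\binom{L}{u-1}+\binom{L}{v+1}-\binom{L}{u}-\binom{L}{v}$, with $u=r+\Delta_i$ and $v=r+\Delta_j$, can change sign along a line. What is needed is global information on the shape of $w$.

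The paper supplies it as follows.
\begin{itemize}
\item $w$ is symmetric in $r\leftrightarrow n-r$, by $S_d$-invariance of $\cC_{d,K}$.
\item $w$ is log-concave. Indeed $Z_{\mathbf b(r)}/\binom{N}{\mathbf b(r)}=\sum_{\boldsymbol\Gamma\in\cC_{d,K}}c_{\boldsymbol\Gamma}\,\frac{r!}{(r+\Gamma_i)!}\,\frac{(n-r)!}{(n-r+\Gamma_j)!}$ is a sum of sequences with nondecreasing successive ratios, hence log-convex.
\item $F$ is then a convolution of two log-concave sequences without internal zeros, so it is log-concave by Hoggar's theorem.
\item $F$ is symmetric about $(\Delta_i+\Delta_j)/2$, so $F(\Delta_i-1)\ge F(\Delta_i)$ whenever $\Delta_i\ge\Delta_j+2$.
\item Iterating the transfer strictly decreases $\sum_\ell\Delta_\ell^2$ and ends in $\cC_{d,K}$, where $\Lambda$ is constant.
\end{itemize}
Given symmetry and unimodality of $w$, your pairing idea can also be made to work without Hoggar. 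Pair $r$ with $n+\Delta_j-\Delta_i+1-r$ instead of $n-r$; this is the reflection that negates $\binom{L}{r+\Delta_i-1}-\binom{L}{r+\Delta_i}$.

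Without the line-wise log-convexity of $Z_{\mathbf b}/\binom{N}{\mathbf b}$, or an equivalent shape statement, your dual point is not shown to be feasible. Optimality therefore does not yet follow.
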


The form of the channel has a simple occupation-number interpretation. Each Kraus operator adds one of the most evenly distributed \(K\)-copy occupations to the input occupation. When \(K=qd\), there is a unique balanced occupation, \((q,\ldots,q)\), and the channel reduces to the economical cloner of~\cite{BuscemiDArianoMacchiavello2005}. When \(d\) does not divide \(K\), the optimal channel combines all coordinate permutations of the balanced occupation.

The proof uses phase covariance rather than irreducibility. Using phase symmetry, we can choose an optimal channel whose Kraus operators each change the occupation vector by a fixed amount. Each such operator sends an occupation state labeled by $\mathbf b$ to a multiple of the state labeled by $\mathbf b+\boldsymbol\Delta$, where $\boldsymbol\Delta$ is fixed for that operator.

To bound the fidelity of any channel, we compare different occupation shifts. If two entries of a shift differ by at least two, we move one unit from the larger entry to the smaller one. We show that verifying the bound for the resulting shift is sufficient to establish it for the original shift. Repeating this step reduces the proof to shifts whose entries differ by at most one, which are called \textit{balanced shifts}. Using phase covariance, we formulate the cloning problem as a semidefinite optimization problem (SDP). The channel in \cref{thm:phase-results} provides a primal feasible point supported on these balanced blocks, and we construct an explicit feasible dual point that has the same value, proving optimality.

The same channel also admits a moment representation in a form analogous to the Petz recovery map. In this form, the identity on the added systems that appears in the resolving-orbit cloner is replaced by the projector onto the balanced $K$-copy occupation subspace. We derive this representation, together with the formal statement and proof of \cref{thm:phase-results}, in \cref{sec:phase-main}.

The exact optimal-fidelity formula also allows us to determine how the number of input copies needed to produce one additional copy scales with the dimension and the target global infidelity.

\begin{corollary}[Fidelity sample complexity of phase states, informal]
\label{cor:phase-complexity-informal}
Let $N_{\min,\cS_{\mathrm{ph}}}^{F}(\delta)$ denote the minimum number of input copies required to produce one additional copy of an unknown phase state with global fidelity at least $1-\delta$ with the ideal $(N+1)$-copy output, for every input state. Uniformly for $d\geq 2$ and $0<\delta\leq 1/4$,
$$
N_{\min,\cS_{\mathrm{ph}}}^{F}(\delta)
=
\Theta\!\left(d\log\frac{1}{\delta}+\sqrt{\frac{d}{\delta}}\right).
$$
The optimal cloner described above achieves this bound for $K=1$.
\end{corollary}

The formal statement appears in \cref{cor:phase-one-copy-complexity}, with the proof in \cref{app:phase-one-copy-completion}. For trace-distance error $0<\epsilon\le1/4$ and dimension $d\ge\epsilon^{-2}$, this gives sample complexity $\Theta(d\log(1/\epsilon))$, improving on the $\Theta(d/\epsilon)$ complexity for unstructured states.

\subsection{Outlook}
\label{sec:outlook}

In this work, we considered a number of state families with useful algebraic properties that allowed us to establish bounds on cloning sample complexity, and in some cases find optimal cloning channels. However, a number of important structured classes remain uncharacterized.

Some state families form orbits of well-characterised groups, yet optimal cloning channels are not yet identified. One example is stabilizer states~\cite{bansalCloningHardLearning2026}, as discussed in \cref{sec:cloning_and_learning}. For phase states, the channels presented in this work have optimal fidelity and provide good bounds for complexity in terms of trace distance. However, this complexity in the high-precision regime is not exactly known, and cloners with optimal trace distance remain an open question. More generally, for orbit families which are not resolving, one may try a naive extension of our bounds from \cref{sec:orbits_cloners} (see \cref{app:bounds_nonresolving}) --- but we have not yet found a compelling application of this approach. And even among resolving orbits, the optimal cloners and cloning complexity of infinite-dimensional families, such as the bosonic Gaussian states, are only partially understood.

While product states are characterised by our resolving orbit framework (as explained in \cref{sec:other_resolving_orbits}), it is hard to say what happens under small deviations from a product form. One interesting question concerns the optimal cloning of states that can be generated by constant-depth circuits. This class of states does not admit a natural orbit structure, and the smallest set of states that forms a resolving orbit and contains all constant-depth circuit states may well be the set of all states. It is known that this class of states can be efficiently learned~\cite{landauLearningQuantumStates2025}; however, it is unclear how to exploit such a structure for cloning. Matrix product states are another fundamentally important target for an improved cloning protocol, which has also been studied extensively in the context of tomography \cite{cramer2010efficient}. 

Finally, natural classes of states for which optimal cloners could be studied are non-Gaussian bosonic and fermionic states. Unlike Gaussian states, non-Gaussian states do not form an orbit, which makes extending the techniques from this work difficult. The degree of non-Gaussianity is related to the hardness of learning different non-Gaussian states~\cite{mele2025efficient, mele2025learning, meleSymplecticRank2026}. Does this measure also affect the optimal cloning rate?

Going beyond cloning structured pure states, it is of interest to consider cloning of mixed states and unitaries (via higher-order operations~\cite{chiribella2008transforming}), similarly to \cite{chiribella2008optimal,sekatski2025cloning,d2005superbroadcasting,barnum1996noncommuting,jeon2026optimal,fanizza2026optimal}.

%% file: NotationsAndDefinitions.tex
\section{Basic notations and definitions}
\label{sec:basic_definitions}
We will consider pure states (complex unit vectors) in the separable Hilbert space $\cH$, possibly infinite-dimensional. Linear operators on $\cH$ will be denoted as $\BofH$, and in particular unitaries $\rU(\cH)$. The cone of density matrices on $\cH$ is denoted $\DofH$, and completely positive and trace preserving (CPTP) maps $\CPTP(\cH_1, \cH_2)$. 
For pure states $\ket{\psi}$, we will shorthand $\psi\defeq\ketbra{\psi}$. Families of states will always be sets of density matrices $\cS\subseteq \DofH$, even though most of the work concerns pure states -- this allows us to disregard the irrelevant global phases (which would add a confusion to our formalism otherwise).

Let us formally define the key notions of approximate cloning for structured quantum states. 
\begin{definition}[$N$-copy span]
\label{def:ncopy_span}
    For a family $\cS\subset \DofH$ of states on $\cH$, we define the \emph{$N$-copy span} $\cH^{(N)}_{\cS}=\Span(\{\ket{\psi}^{\otimes N}:\psi\in \cS\})\subset\cH^{\otimes N}$.
\end{definition}

Here and throughout the text, $\Span(A)$ denotes the \textit{closed} linear span of $A$, where $A$ is a set of subspaces of a Hilbert space; $\supp(\rho)$ means a subspace that supports $\rho$.

As the $N$ input copies are of a state which is promised to come from a given family $\cS$, for simplicity we consider channels which take input in the $N$-copy span $\cH^{(N)}_{\cS}$. (This can be extended to the full $\cH^{\otimes N}$ domain with `erasure' action on the complement of $\cH^{(N)}_{\cS}$.)

\begin{definition}[Approximate cloning]
\label{def:approximate_cloning}
    An $\epsilon$-approximate cloning channel (or cloner) from $N$ to $N+K$ copies on $\cS\subset \DofH$ is $T\in \CPTP\lrp{\cH^{(N)}_{\cS}, \cH^{\otimes N+K}}$, such that 
    \begin{align}
        \epsilon_{T,\cS}\defeq \sup_{\psi \in \cS} \f{1}{2} \|T(\psi^{\otimes N})-\psi^{\otimes N+K}\|_{1}\leq\epsilon,
    \end{align}
    where $\|\cdot\|_{1}$ is the trace norm.
    If a cloning channel $T_*$ is $\epsilon$-approximate for $\epsilon=\inf_{T} \epsilon_{T,\cS}$, we say $T_*$ is an \textit{optimal} cloner for $\cS$. Given $\cS$ and $\epsilon$, the \textit{cloning sample complexity} $N_{\mr{min},\cS}(\epsilon)$ is defined as the minimal $N$ for which there exists an $\epsilon$-approximate $N\rarr N+1$ cloner.
\end{definition}

Given a family $\cS$, we will aim to construct a good cloning channel $T$ (ideally, an optimal one). To be of interest, it needs to significantly outperform the optimal cloner for unstructured states (i.e., the set of all pure states $\cS_{\cH}$), as given by \cite{werner1998optimal}. Most crucially, we want to find families whose complexity $N_{\mr{min},\cS}(\epsilon)$ is dramatically improved compared to the unstructured case.

An alternative figure of merit for the output of a cloning channel is the fidelity rather than trace distance to $\psi^{\otimes N+K}$, and it implies the notions of optimality and complexity distinct from \cref{def:approximate_cloning}. Due to the operational meaning of trace distance, we will use it as a measure of cloning precision, unless said otherwise. One exception is phase states (\cref{sec:phase-main}), where we focus on constructing a cloning channel optimal w.r.t. fidelity.

%% file: CloningVsLearning.tex
\section{Cloning and learning for continuous families}
\label{sec:cloning_and_learning}

The relation between cloning and learning is subtle and requires careful examination. In particular, it differs significantly depending on whether the target state family $\cS$ is continuous or discrete. For continuous families, such as those considered in this work, even the optimal learning approaches cannot be used to construct $\epsilon$-approximate cloners.

Consider the extreme example of $\cS$ consisting of $d$ mutually orthogonal states $\ket{\psi_i}$, where perfect $1\rarr 2$ cloning proceeds by learning $i\in[d]$ through projective measurement of $\ketbra{\psi_i}{\psi_i}$ and then producing $\ket{\psi_i}^{\otimes2}$.
For this `maximally discrete' family of states, the optimal state learning also yields the optimal cloning strategy.
A powerful learn-clone strategy is also available for the less trivial family of stabilizer states on $n$ qubits. First, one uses the input copies to learn an exact classical stabilizer description and then prepares the requested $N+K$ copies. Learning can be done by Bell sampling, whose failure probability is at most $\epsilon$ for $N=O(n+\log(1/\epsilon))$ input copies \cite{montanaroLearningStabilizerStates2017}. Conditioning on success or failure shows that the resulting output has trace distance at most $\epsilon$ from the target, independently of $K$. A matching $\Omega(n)$ lower bound for sufficiently small constant error is now known \cite{bansalCloningHardLearning2026}, so the above procedure has asymptotically optimal constant-error sample complexity. However, the prefactors and subleading terms of this optimal complexity are not settled, and the optimal cloning channel for stabilizer states remains an open question.

The learn-clone strategy works for stabilizer states, because they form a discrete
family; this is what allows, up to a vanishing failure probability, to learn the state exactly. On a continuous family $\cS$ of pure states, the issue is that any state $\psi_1 \in \cS$ admits $\psi_2\in \cS$ that is close to $\psi_1$ and thus hard to distinguish from it. In particular, pick $\psi_2$ such that $\frac{1}{2}\|\psi_1-\psi_2\|_1=N^{-1/2}$. This yields a bound on trace distance $\frac{1}{2}\|\psi_1^{\otimes N}-\psi_2^{\otimes N}\|_1\leq \sqrt{3}/2<1$, hence discriminating $\psi_1^{\otimes N}$ and $\psi_2^{\otimes N}$ can only be done with a non-vanishing failure probability $\Omega(1)$. Because learning with accuracy $\eta<N^{-1/2}/2$ would allow discriminating between $\psi_1$ and $\psi_2$ (which are separated by $>2\eta$), it also needs to fail with a non-vanishing probability. Therefore, if one demands a learning procedure with an arbitrarily small chance of failure, one should expect accuracy no
better than $\eta=\Omega(N^{-1/2})$. Unlike for the discrete families, the exact learning ($\eta=0$) is impossible to achieve with finite resources.

This obstruction leads to a strikingly different performance of learn-clone
strategies, and in particular, prevents near-exact learn-cloning for any
continuous family $\cS$. To increase the number of copies even by one, the learn-clone strategy must
first re-prepare the copies consumed in the learning step, and the resulting product-state error accumulates irreperably. To see it concretely, suppose that after using $N$ input copies of $\psi\in \cS$, the
learning step successfully outputs $\td\psi$ such that
\begin{align}
    \f{1}{2}
    \left\|
        \td\psi
        -
        \psi
    \right\|_1
    =
    \eta .
\end{align}
If we then prepare $N+1$ copies of $\ket{\td\psi}$, the trace-distance error is
\begin{align}
    \f{1}{2}
    \left\|
        \td\psi^{\otimes (N+1)}
        -
        \psi^{\otimes (N+1)}
    \right\|_1
    =
    \sqrt{
        1-
        (1-\eta^2)^{N+1}
    } .
\end{align}
For this quantity to vanish as $N\rarr\infty$, one would need
$\eta=o(N^{-1/2})$. At the same time, as explained above, there is a lower bound $\eta=\Omega(N^{-1/2})$. We conclude that the learn-clone strategy for $\cS$
yields a worst-case cloning error bounded away from zero. To circumvent this, other strategies are required.

%% file: OrbitsAndCloners.tex
\section{Cloners for the resolving orbits}
\label{sec:orbits_cloners}

The formalism of this paper concerns state families $\cS$ which we call resolving orbits\footnote{This notion overlaps with Perelomov's generalized coherent states. Perelomov coherent states in the specific case of compact Lie groups and highest weight vectors as the base point of the orbit, such as those considered in \cite{chiribella2014optimal}, fall under our definition. Our setting is more general and allows for finite and non-reductive groups.}.
\begin{definition}
\label{def:resolving_orbit}
    For a separable Hilbert space $\cH$, we call a family of pure states $\cS$ in $\cH$ an \textit{$N$-resolving orbit} if there exists a group $G$ admitting a unimodular Haar measure and a strongly continuous unitary representation $\phi:G\rarr\rU(\cH)$ such that
    \begin{enumerate}
        \item $\cS$ is an orbit of $\phi$, i.e., $\cS = \{\phi(g)\, \psi_0 \,\phi^\dag(g) : g \in G\}$ for some pure $\psi_0 \in \DofH$, and
        \item For any integer $N'\geq N$, the representation $\phi^{\otimes N'}$ restricted to the $N'$-copy span $\cH^{(N')}_{\cS}$ is irreducible and square-integrable. \label{resolving_orbit_condition_2} 
    \end{enumerate}
    If $\cS$ is an $N$-resolving orbit already for $N=1$, we will say it is simply a \textit{resolving orbit}.
\end{definition}

\begin{remark}
\label{rmk:orbit_span}
The restriction of $\phi^{\otimes N}$ to the $N$-copy span $\cH^{(N)}_{\cS}$, referred to in condition~\ref{resolving_orbit_condition_2} of \cref{def:resolving_orbit}, exists because $\cH^{(N)}_{\cS}$ is $\phi^{\otimes N}$-invariant by construction. In particular, $\cH^{(N)}_{\cS}$ defines the subrepresentation of $\phi^{\otimes N}$ \emph{generated} by $\ket{\psi_0}^{\otimes N}$, i.e., the smallest closed invariant subspace containing $\ket{\psi_0}^{\otimes N}$.
\end{remark}

\begin{remark}
\label{rmk:resolving_orbit}
We use the term ``resolving'' to highlight the key operational consequence of \cref{def:resolving_orbit}. By Schur's Lemma, the irreducibility and square-integrability of $\phi^{\otimes N}$ over the group $G$ with unimodular Haar measure $d\mu$ guarantee a valid resolution of the identity on the subspace $\cH^{(N)}_{\cS}$: 
\begin{align}
\label{eq:orbit_resolution}
\int_G\psi^{\otimes N}_g \,d\mu(g)=\f{1}{d_{N}} \pi_{N},
\end{align} 
where pure $\ket{\psi_g}\defeq \phi (g)\ket{\psi_0}$, $\pi_{N}$ is the identity on (projector onto) $\cH^{(N)}_{\cS}$, and $d_{N}>0$ is a finite number known as the formal degree of the irreducible representation. Using \cref{eq:orbit_resolution}, one can compute it explicitly as
\begin{align}
\label{eq:finite_degree}
d_{N}=\lrp{\int_G \left|\langle \psi_g|\psi_0\rangle \right|^{2N} \,d\mu(g)}^{-1}.
\end{align} 
The convergence of the integral on the r.h.s. of \cref{eq:finite_degree} is in fact both necessary and sufficient for $\phi^{\otimes N}$ to be square-integrable on $\cH^{(N)}_{\cS}$ \cite{howe2012non}.
For finite-dimensional $\cH^{(N)}_{\cS}$ and a compact group $G$ with Haar measure normalized as $\int_Gd\mu(g)=1$, we have simply $d_N=\dim\,(\cH^{(N)}_{\cS})$.
\end{remark}

Resolving orbits form an extension of the formalism developed in \cite{werner1998optimal}. In particular, we have

\begin{theorem}
\label{thm:structured_optimal_cloning}
    Consider a separable Hilbert space $\cH$ and let the family of pure states $\cS\subset\DofH$ be an $N$-resolving orbit of a group $G$. Then there exists a cloning channel $T$ 
    from $N$ to $N+K$ 
    copies on the family $\cS$, given by
    \begin{align}
    \label{eq:optimal_cloner}
        T(\rho_{N})\coloneq\f{d_{N}}{d_{N+K}}\pi_{N+K} \lrp{\rho_{N}\otimes \bI_{K}} \pi_{N+K},
    \end{align}
    where $\rho_{N}\in\cD(\cH^{(N)}_{\cS})$, $\pi_{N+K}$ is a projector onto the $(N+K)$-copy span $\cH^{(N+K)}_{\cS}$, the identity operator $\bI_{K}$ acts on $\cH^{\otimes K}$, and $d_N$ is the formal degree as defined in \cref{rmk:resolving_orbit}.
    The channel $T$ is an $\epsilon$-approximate cloner for $\epsilon=1-f_{T,\cS}$, where $f_{T,\cS}$ is the guaranteed fidelity of $T(\psi^{\otimes N})$ to $\psi^{\otimes N+K}$, and equals
    \begin{align}
    f_{T,\cS}=\f{d_{N}}{d_{N+K}}.
    \end{align}
    If the group $G$ is amenable, in particular if $G$ is compact, the cloning channel $T$ is also \emph{optimal}. For any $G$, $T$ is optimal among $G$-covariant cloning channels.
\end{theorem}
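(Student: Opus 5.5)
The proof splits into three parts: $T$ is a channel, its fidelity and trace error are as claimed, and it is optimal.

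\emph{Channel property.} Complete positivity is clear, since $T$ is a compression of $\rho_N\otimes\bI_K$ by $\pi_{N+K}$, up to a positive constant. For trace preservation I would show that the adjoint $T^\dagger$ maps $\pi_{N+K}$ to $\pi_N$, that is,
\begin{align}
\f{d_N}{d_{N+K}}\,\pi_N\,\mathrm{Tr}_K(\pi_{N+K})\,\pi_N=\pi_N .
\end{align}
I would insert the resolution of identity \cref{eq:orbit_resolution} at level $N+K$, giving $\pi_{N+K}=d_{N+K}\int_G\psi_g^{\otimes N+K}\,d\mu(g)$. The partial trace then gives $\mathrm{Tr}_K\pi_{N+K}=d_{N+K}\int_G\psi_g^{\otimes N}\,d\mu(g)=\f{d_{N+K}}{d_N}\pi_N$, using \cref{eq:orbit_resolution} again at level $N$. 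In infinite dimension the partial trace has to be exchanged with the integral. I would justify this weakly, pairing with vectors of finite support, and use positivity and monotone convergence; the $N$-resolving hypothesis makes both integrals converge.

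\emph{Fidelity and trace error.} I would write $\Psi=\psi^{\otimes N+K}$. Since $\pi_{N+K}\ket\psi^{\otimes N+K}=\ket\psi^{\otimes N+K}$, the fidelity is
\begin{align}
\langle\Psi|T(\psi^{\otimes N})|\Psi\rangle=\f{d_N}{d_{N+K}}\,\langle\psi^{\otimes N+K}|\psi^{\otimes N}\otimes\bI_K|\psi^{\otimes N+K}\rangle=\f{d_N}{d_{N+K}} .
\end{align}
For the trace distance I would use the identity $T(\psi^{\otimes N})\ge \f{d_N}{d_{N+K}}\Psi$, which holds because $\psi^{\otimes N}\otimes\bI_K\ge \Psi$ and compression preserves order. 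Writing $f=d_N/d_{N+K}$, this gives $T(\psi^{\otimes N})=f\Psi+(1-f)\sigma$ with $\sigma\ge0$ a density matrix. Then
\begin{align}
\tfrac12\|T(\psi^{\otimes N})-\Psi\|_1=\tfrac12\|(1-f)(\sigma-\Psi)\|_1\le 1-f .
\end{align}
Equality holds because $\langle\Psi|\sigma|\Psi\rangle=0$, which follows since the fidelity is exactly $f$.

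\emph{Optimality.} First I would take a $G$-covariant channel $T'$. Its fidelity is constant on the orbit, and the trace error is bounded below by $1-$fidelity for pure targets. So it suffices to show that the fidelity of $T'$ is at most $d_N/d_{N+K}$. I would write the fidelity as an average over $G$ of $\mathrm{Tr}\,[T'(\psi_g^{\otimes N})\psi_g^{\otimes N+K}]$; for noncompact $G$ I would instead work at the base point directly, using covariance. I would express this through the Choi operator or adjoint as $\mathrm{Tr}[T'^\dagger(\psi^{\otimes N+K})\,\psi^{\otimes N}]$. The key inequality is $T'^\dagger(\psi_0^{\otimes N+K})\le\f{d_N}{d_{N+K}}\cdot(\text{something of norm}\le1)$. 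To get it, I would apply the Werner-style argument to the integrated form
\begin{align}
\int_G T'^\dagger(\psi_g^{\otimes N+K})\,d\mu(g)=\f{1}{d_{N+K}}T'^\dagger(\pi_{N+K})\le\f{1}{d_{N+K}}\pi_N .
\end{align}
Sandwiching against $\psi_g^{\otimes N}$ and using covariance to make the integrand constant then yields the bound. For compact or amenable $G$, I would reduce an arbitrary channel to a covariant one by averaging $T'$ over $G$ with a Haar or invariant mean. This does not increase the worst-case error, since a sup dominates an average and the trace norm is convex.

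\emph{Main obstacle.} I expect the hardest step to be the covariant bound for noncompact $G$, where $\int_G d\mu=\infty$. There the sandwiching has to be done pointwise: I would rewrite $\pi_N$ through \cref{eq:orbit_resolution}, pair $\psi_0^{\otimes N}$ with $T'^\dagger(\psi_0^{\otimes N+K})$, and use the normalization $d_N^{-1}=\int|\langle\psi_g|\psi_0\rangle|^{2N}$ to extract the constant. The amenable averaging step, which uses invariant means with weak$^*$ limits of channels, is the other technical point.
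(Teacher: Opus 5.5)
The channel and fidelity/trace-error parts of your proposal are fine and essentially match the paper. The gap is in the covariant optimality bound, which is the substance of the theorem for noncompact $G$.

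\textbf{The covariant bound for noncompact $G$.} You cannot sandwich the integrated inequality $\int_G T'^\dagger(\psi_g^{\otimes N+K})\,d\mu(g)\le\pi_N/d_{N+K}$ with $\psi_g^{\otimes N}$: that vector depends on the integration variable. In the compact case the sketch can be repaired by bounding $\psi_g^{\otimes N}\le\pi_N$, which gives $f\le\Tr[\pi_N T'^\dagger(\pi_{N+K})]/d_{N+K}\le\Tr\pi_N/d_{N+K}=d_N/d_{N+K}$. This uses $\mu(G)=1$ and $\Tr\pi_N=d_N<\infty$, and both fail for noncompact $G$.

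Your plan for that case is to pair $\psi_0^{\otimes N}$ with $\Lambda_0=T'^\dagger(\psi_0^{\otimes N+K})$ and extract the constant from $d_N^{-1}=\int_G|\langle\psi_g|\psi_0\rangle|^{2N}d\mu(g)$. As described, this does not close. That identity is the twirl of $\psi_0^{\otimes N}$ alone. Using it would require something like $\langle\psi_g^{\otimes N}|\Lambda_0|\psi_g^{\otimes N}\rangle\ge|\langle\psi_g|\psi_0\rangle|^{2N}\langle\psi_0^{\otimes N}|\Lambda_0|\psi_0^{\otimes N}\rangle$. That inequality is false for general positive $\Lambda_0$, for example $\Lambda_0$ rank one on a vector not parallel to $\psi_0^{\otimes N}$. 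Covariance also does not make the integrand constant once you sandwich with a fixed vector.

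The missing idea is that the integrated inequality controls the \emph{trace} of the effect. Sandwich $\int_G\phi(g)^{\otimes N}\Lambda_0\,\phi(g)^{\dagger\otimes N}d\mu(g)\le\pi_N/d_{N+K}$ with the fixed vector $\ket{\psi_0}^{\otimes N}$. The integrand becomes $\langle\psi_{g^{-1}}^{\otimes N}|\Lambda_0|\psi_{g^{-1}}^{\otimes N}\rangle$. Unimodularity, plus the level-$N$ resolution applied to $\Lambda_0$ (not to $\psi_0^{\otimes N}$), integrates this to $\Tr\Lambda_0/d_N$. Hence $\Tr\Lambda_0\le d_N/d_{N+K}$, and then $f=\langle\psi_0^{\otimes N}|\Lambda_0|\psi_0^{\otimes N}\rangle\le\Tr\Lambda_0$. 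The paper states the same step as $\int_G\phi(g)^{\otimes N}B\,\phi(g)^{\dagger\otimes N}d\mu=\frac{\Tr B}{d_N}\pi_N$ for finite-rank $0\le B\le\Lambda_0$.

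\textbf{The amenable case.} You flag a second issue here but do not resolve it. The invariant-mean average of $T'^\dagger$ is a covariant unital completely positive map that is generally not normal. So it is not the adjoint of a channel, and "reducing to a covariant channel" is inaccurate.

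In particular you cannot use $\int_G\overline\Lambda(\psi_g^{\otimes N+K})d\mu=\overline\Lambda(\pi_{N+K})/d_{N+K}$. Over an infinite-measure group that integral converges only weakly, and a non-normal map need not respect such limits. The covariant bound must therefore be proved for non-normal maps.

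The paper does this by integrating only over finite-measure regions $E\subset G$. There $\int_E\psi_g^{\otimes N+K}d\mu$ is a norm-convergent Bochner integral bounded by $\pi_{N+K}/d_{N+K}$. Any bounded covariant $\Lambda$ then gives $\int_E\phi(g)^{\otimes N}\Lambda_0\,\phi(g)^{\dagger\otimes N}d\mu\le\pi_N/d_{N+K}$. The trace bound follows by taking the supremum over compact $E$.

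Relatedly, phrase the averaging step in terms of fidelity, a linear functional whose mean dominates its infimum. Convexity of the trace norm is not available, because the averaged map has no density-operator outputs.
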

Here a cloning channel $\Phi$ is $G$-covariant when
\begin{align}
\Phi\lrp{\phi(g)^{\otimes N}\rho (\phi(g)^\dag)^{\otimes N}}=\phi(g)^{\otimes(N+K)}\Phi(\rho)(\phi(g)^\dag)^{\otimes(N+K)}.
\end{align}
Here by $\phi(g)^{\otimes N}$ we mean its restriction to $\cH^{(N)}_{\cS}$, the domain of $\Phi$. Amenability of a group $G$ means that it admits an invariant mean, a linear functional $\theta$ on the bounded continuous functions on $G$ with Haar-measure-like properties. Namely, $\theta$ is positive ($f\geq0\Rightarrow\theta(f)\geq0$), unital ($\theta(1)=1$), and invariant under group shifts, $\theta\lrp{f(\cdot\,h)}=\theta\lrp{f(h\,\cdot)}=\theta(f)$ for all $h\in G$. A mean substitutes for the normalized Haar measure of a compact group, which does not exist when $G$ is non-compact; unlike a measure, $\theta$ is only finitely additive, but this turns out to be acceptable. Compact groups, such as those appearing in our finite-dimensional examples, are automatically amenable. Among the groups behind our infinite-dimensional examples, the Weyl--Heisenberg group (displacements) is amenable, while the metaplectic group (squeezing) is not.

The main inspiration of the proof of \cref{thm:structured_optimal_cloning} is to observe that the argument of \cite{werner1998optimal} mainly relies on $\rU(\cH)^{\otimes L}$ acting irreducibly on $\Sym^{L}(\cH)$ of $\cH^{\otimes L}$ for any $L$. This key idea generalizes to any resolving orbit. To account for the possible infinite-dimensionality of $\cH$ and non-compactness of $G$, we introduce some necessary modifications to the proof strategy of \cite{werner1998optimal}, leveraging the square-integrability and amenability assumptions. Note the subtle difference between optimality in trace distance $\epsilon$ and in fidelity $f_{T,\cS}$. We first show the latter and then demonstrate that the former is implied by the form of our cloning channel (which saturates the more general bound $\epsilon\geq 1-f_{T,\cS}$).

\begin{proof}

    Let $G$ and $\phi$ denote the group and its unitary representation, respectively, associated with the $N$-resolving orbit $\cS$ (\cref{def:resolving_orbit}).
 
 First, let us check that $T$ as defined in \cref{eq:optimal_cloner} is indeed a channel (CPTP map).
 For this, let us provide an explicit Kraus decomposition of $T$, $T(\rho)=\sum_j V_j \rho V_j^\dag$, where $j$ runs over the basis vectors $\ket{j}_{K}$ of $\cH^{\otimes K}$ and $V_j$ is a Kraus operator with output in $\cH^{\otimes N+K}$, acting on an arbitrary state $\ket{\psi}_N$ in $\cH^{(N)}_{\cS}$ as
 \begin{align}
 \label{eq:optimal_cloner_Kraus}
     V_j \,\ket{\psi}_N \defeq \sqrt{\f{d_{N}}{d_{N+K}}}\pi_{N+K}\ket{\psi}_N \otimes \ket{j}_{K}.
 \end{align}
 The form of $T(\rho)$ given in \cref{eq:optimal_cloner} follows explicitly from \cref{eq:optimal_cloner_Kraus}, implying complete positivity of $T$. For trace preservation, we need to show that $\sum_j V^\dag_j V_j=\pi_N$ (projector onto $\cH^{(N)}_{\cS}$, the domain of $T$). Using the resolution of the projector $\pi_{N+K}$ in terms of $(N+K)$ copies of a state $\psi_g\defeq \phi(g)\,\psi_0\,\phi(g)^\dag$ from the orbit $\cS$ (see \cref{rmk:resolving_orbit}), we can rewrite the expectation value of $\sum_j V^\dag_j V_j$ on a general state $\ket{u}\in\cH^{(N)}_{\cS}$ as
 \begin{align}
     \sum_j &\bra{u}V^\dag_j V_j\ket{u}=\f{d_N}{d_{N+K}}\sum_j \bra{u}\otimes \bra{j} \pi_{N+K}\ket{u}\otimes\ket{j}\\
     &=d_{N}\sum_j\int_G |\braket{u}{\psi_g^{\otimes N}}|^2 \cdot|\braket{j}{\psi_g^{\otimes K}}|^2 d\mu(g)\\
     &=d_{N}\int_G |\braket{u}{\psi_g^{\otimes N}}|^2  d\mu(g)=\bra{u}\pi_N\ket{u}.
 \end{align}
 Here we used that $\{\ket{j}\}$ is a complete basis. As $\ket{u}$ was arbitrary, it follows that $\sum_j V^\dag_j V_j=\pi_N$ by polarization identity.
In infinite dimension, this operator identity is understood in the strong topology.
For every positive trace-class $\rho$, the finite partial sums $\sum_{j\in F}V_j\rho V_j^\dag$ increase, their traces converge to $\Tr\rho$, and hence they converge in trace norm.
This defines a normal CPTP map, and the expression $\pi_{N+K}(\rho\otimes\bI_K)\pi_{N+K}$ in \cref{eq:optimal_cloner} is interpreted through that trace-norm-convergent Kraus series.

 Next, we find that the cloning fidelity of $T$ is $f_{T,\cS}=\f{d_{N}}{d_{N+K}}$ for any input $\kpsi\in\cS$.
 Observe that $\psi^{\otimes N+K} \pi_{N+K} \lrp{\psi^{\otimes N}\otimes \bI_{K}} \pi_{N+K}=\psi^{\otimes N+K}$ for any $\kpsi\in \cS$, because $\psi^{\otimes N+K}$ sits in the support of both projectors $\pi_{N+K}$ and $\psi^{\otimes N}\otimes \bI_{K}$.
 Thus $\Tr\lrp{\psi^{\otimes N+K}T(\psi^{\otimes N})}=\f{d_{N}}{d_{N+K}}$, implying the claimed fidelity.

In what follows, we will prove a fidelity bound $\f{d_N}{d_{N+K}}$ for covariant maps resembling adjoints (Heisenberg picture) of cloning channels, but not assumed to be normal. The bound directly implies that $T$ carries optimal fidelity among covariant cloning channels. To prove optimality among all channels for amenable $G$, we then apply an invariant mean over $G$ to the adjoint of a general channel, producing a map of exactly this type --- covariant but in general not normal. For simplicity, as in the earlier definition of covariance, by $\phi(g)^{\otimes N}$ we mean its restriction to $\cH^{(N)}_{\cS}$ where necessary. Finally, we will show that optimal fidelity for this channel also implies optimal trace distance, as desired.

Let $\Lambda:\cB(\cH^{\otimes N+K})\to\cB(\cH^{(N)}_\cS)$ be a unital, completely positive map that is covariant,
\begin{align}
\Lambda\lrp{\phi(g)^{\otimes N+K}\,X\,(\phi(g)^\dag)^{\otimes N+K}}=\phi(g)^{\otimes N}\,\Lambda(X)\,(\phi(g)^\dag)^{\otimes N},
\end{align}
but not necessarily normal. For $E\subset G$ of finite measure, consider $\int_E\psi_g^{\otimes N+K}\,d\mu(g)$: it is a norm-convergent Bochner integral, bounded by $\f{\pi_{N+K}}{d_{N+K}}$ because of \cref{eq:orbit_resolution} and the positivity of the integrand. Together with the covariance of $\Lambda$, this gives
\begin{align}
\int_E \phi(g)^{\otimes N}\,\Lambda\lrp{\psi_0^{\otimes N+K}}\,(\phi(g)^\dag)^{\otimes N}\,d\mu(g)=\Lambda\!\lrp{\int_E \psi_g^{\otimes N+K}\,d\mu(g)}\leq\frac{\Lambda(\pi_{N+K})}{d_{N+K}}\leq\frac{\pi_N}{d_{N+K}}.
\label{eq:finite-region-effect-bound}
\end{align}
On the other hand, for any finite-rank operator $B$ with $0\leq B\leq \Lambda\lrp{\psi_0^{\otimes N+K}}$, square-integrability gives $\int_G\phi(g)^{\otimes N}B\,(\phi(g)^\dag)^{\otimes N}\, d\mu(g)=\f{\Tr B}{d_N}\,\pi_N$. The integrand is weakly continuous, so this integral is the supremum of the integrals over compact $E\subset G$; comparing with \cref{eq:finite-region-effect-bound} yields $\Tr B\leq \f{d_N}{d_{N+K}}$. Taking $B=\Lambda\lrp{\psi_0^{\otimes N+K}}^{1/2}Q\,\Lambda\lrp{\psi_0^{\otimes N+K}}^{1/2}$ with finite-rank projections $Q\uparrow\pi_N$, we conclude
\begin{align}
\Tr\lrp{\psi_0^{\otimes N}\,\Lambda(\psi_0^{\otimes N+K})}\leq\Tr \Lambda\lrp{\psi_0^{\otimes N+K}}\leq \f{d_N}{d_{N+K}}.
\label{eq:covariant-effect-trace-bound}
\end{align}

Consider now a covariant cloning channel $\Phi$. Its adjoint $\Lambda=\Phi^*$ is unital, completely positive and covariant, and the left-hand side of \cref{eq:covariant-effect-trace-bound} equals the fidelity $\Tr\lrp{\psi_0^{\otimes N+K}\,\Phi(\psi_0^{\otimes N})}$ at the base point. As covariance makes the fidelity the same at every orbit point, $f_{\Phi,\cS}\leq\f{d_N}{d_{N+K}}$, proving the optimality of $T$ among covariant channels.

Let us now extend this bound to all cloning channels, assuming $G$ is amenable. Let $\Phi$ be any cloning channel with adjoint $\Lambda=\Phi^*$, and let $\theta$ be an invariant mean on the bounded continuous functions on $G$, which exists by amenability. For any fixed $X$ and $\ket u,\ket v\in\cH^{(N)}_\cS$, the function
\begin{align}
h\;\longmapsto\;\bra u\, (\phi(h)^\dag)^{\otimes N}\,\Lambda\lrp{\phi(h)^{\otimes N+K}\,X\,(\phi(h)^\dag)^{\otimes N+K}}\,\phi(h)^{\otimes N}\,\ket v
\end{align}
is bounded and continuous, because $\Lambda$ is normal (being the adjoint of a channel) and $\phi$ is strongly continuous. Applying $\theta$ to these functions defines a unital completely positive map $\overline\Lambda$, which is covariant by the invariance of $\theta$, but not necessarily normal. Its base-point value is the mean of the fidelities of $\Phi$ along the orbit,
\begin{align}
\Tr\lrp{\psi_0^{\otimes N}\,\overline\Lambda(\psi_0^{\otimes N+K})}=\theta_h\lrp{\Tr\lrp{\psi_h^{\otimes N+K}\,\Phi(\psi_h^{\otimes N})}}\geq f_{\Phi,\cS},
\end{align}
where the inequality holds because a mean, being positive and unital, dominates the infimum of its argument. Since \cref{eq:covariant-effect-trace-bound} applies to $\overline\Lambda$, we conclude $f_{\Phi,\cS}\leq\f{d_N}{d_{N+K}}$ for every cloning channel $\Phi$, so $T$ is optimal.

Finally, let us show that the optimality in fidelity for this setting also implies optimality in trace distance. The identity used in the fidelity computation above shows that $\psi^{\otimes N+K}$ is an eigenvector of $T(\psi^{\otimes N})$ with eigenvalue $\f{d_N}{d_{N+K}}$, so $T(\psi^{\otimes N})=\f{d_N}{d_{N+K}}\,\psi^{\otimes N+K}+R$ with $R\geq0$ supported orthogonally to $\psi^{\otimes N+K}$ and $\Tr R=1-\f{d_N}{d_{N+K}}$. The difference $T(\psi^{\otimes N})-\psi^{\otimes N+K}$ thus has positive part $R$ and negative part $\lrp{1-\f{d_N}{d_{N+K}}}\psi^{\otimes N+K}$, whence
\begin{align}
\frac12\left\|T(\psi^{\otimes N})-\psi^{\otimes N+K}\right\|_1=1-\frac{d_N}{d_{N+K}}.
\label{eq:canonical-cloner-exact-trace-error}
\end{align}
Conversely, for every state $\rho$ and pure target $P$, the two-outcome measurement $\{P,\bI-P\}$ gives
\begin{align}
\frac12\|\rho-P\|_1\geq1-\Tr(P\rho),
\label{eq:trace-distance-fidelity-lower-bound}
\end{align}
so the fidelity bound $\f{d_N}{d_{N+K}}$ forces a trace error of at least $1-\f{d_N}{d_{N+K}}$ on every cloning channel, which $T$ attains.

\end{proof}

%% file: FreeFermions.tex
\section{Optimal cloning of Fermionic Gaussian states}
\label{sec:optimal_fermionic_gaussian}

Fermionic particles on $n$ sites can be described using states on a $2^n$ dimensional Hilbert space $\Hf$ and the so-called Majorana fermion operators $c_\alpha\in\cB(\Hf)$, which satisfy the algebra $c_\alpha c_\beta+c_\beta c_\alpha=2\delta_{\alpha \beta}$. The standard basis in $\Hf$ is formed by the shared eigenstates of $n$ mutually commuting \textit{number operators} $n_{j}\defeq\f{1}{2}(1-ic_{2j-1} c_{2j} )$. Their eigenvalues $s_j\in\{0,1\}$ are interpreted as the presence or absence of a fermion particle at a site $j$. A basis state $\ket{0}$ without particles, i.e., the one for the trivial bitstring $(s_1,..s_n)=s=0$, is referred to as the vacuum.

The set of free-fermion states $\Sff$, also known as pure fermionic Gaussian states, is the set of ground states of the Hamiltonians of the following form:
\begin{equation}
\label{eq:H_free_ferm}
    H=\sum^{2n}_{\alpha ,\beta=1} ih_{\alpha \beta} c_\alpha c_\beta,
\end{equation}
where $h$ is a real antisymmetric matrix (and for this definition, we will require that the ground state of $H$ be non-degenerate). Such Hamiltonians are referred to as free-fermion due to the absence of genuine interactions between the particles (the latter would correspond to terms of weight higher than $2$).

As $H$ in \cref{eq:H_free_ferm} commutes with a parity operator $\Par=i^{n}c_1 c_2..c_{2n}$, free fermion states split into those with even and odd parity --- a notion that plays a key role when proving that $\Sff$ is a resolving orbit.
\begin{align}
\label{eq:parity_split}
    \Sff=\Sffp\cup\Sffm,~\Sffpm\defeq\left\{\ket{\psi}\in\Sff\,|\,\Par\ket{\psi}=\pm\kpsi\right\}
\end{align}

The proofs below rest on the representation theory of the groups $\Spin(2n)$ and $\Pin(2n)$, realized on the Fock space by the fermionic Gaussian unitaries; the necessary background is collected in \cref{app:ff_background}. We first treat the fixed-particle-number states as a warm-up, then the full family.

An important subfamily of free-fermion states are those with fixed particle number $q$, also known as Slater states,
\begin{align}
\label{eq:slater_def}
    \Sffq\defeq\left\{\kpsi\in\Sff \,:\, \sum^n_{j=1} n_{j}\kpsi=q\kpsi\right\}.
\end{align}
Our formalism applies to $\Sffq$ as follows.

\begin{proposition}
\label{prop:slater_orbit}
    The state family $\Sffq$ is a resolving orbit of the group $\rU(n)$, with $N$-copy span $\cH^{(N)}_\Sffq\subset\Hf^{\otimes N}$ of dimension
    \begin{align}
    \label{eq:slater_dimension}
        d_{\Sffq,N}=\prod_{1\leq i\leq q<j\leq n}\f{N+j-i}{j-i}.
    \end{align}
    Therefore, it admits an optimal cloning channel from $N$ to $N+K$ copies, as described in \cref{thm:structured_optimal_cloning}; it is $\epsilon$-approximate for $\epsilon=1-\ffqf$, where $\ffqf$ is
    \begin{align}
    \label{eq:slater_fidelity}
        \ffqf=\f{d_{\Sffq,N}}{d_{\Sffq,N+K}}=\prod_{1\leq i\leq q<j\leq n}\f{N+j-i}{N+K+j-i}.
    \end{align}
    Considering $K=1$, this certifies that the cloning complexity $N_{\mr{min}, \Sffq}(\epsilon)=\f{q(n-q)}{\epsilon}\lrp{1+o(1)}$, in particular,
    \begin{align}
    \label{eq:slater_N_bounds}
        \f{q(n-q)(1-\epsilon)}{\epsilon}-n \leq N_{\min, \Sffq} \leq \f{q(n-q)}{\epsilon}.
    \end{align}
\end{proposition}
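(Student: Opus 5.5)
The plan has three parts: identify $\Sffq$ as a $\rU(n)$ orbit of a highest-weight vector, deduce irreducibility of every copy span from the Cartan product theorem, then read off the dimension and bounds on $N_{\min}$.

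\textbf{Orbit and highest weight.} The single-particle unitaries $u\in\rU(n)$ act on Fock space $\Hf=\bigwedge(\bC^n)$ as $\Gamma(u)$, the second quantization $c_j^\dagger\mapsto\sum_i u_{ij}c_i^\dagger$. This action preserves particle number, so on the $q$-particle sector it is the fundamental representation $\bigwedge^q\bC^n$. A $q$-particle Slater state is $\ket{\psi}=a^\dagger_{v_1}\cdots a^\dagger_{v_q}\ket{0}$ for an orthonormal $q$-frame $v_1,\dots,v_q$. (For Gaussian states of fixed particle number, the ground state of a number-conserving free Hamiltonian is such a determinant, background in \cref{app:ff_background}.) Completing the frame to a unitary shows that $\Sffq$ is the orbit of the Fermi sea $\ket{\psi_0}=c^\dagger_1\cdots c^\dagger_q\ket{0}$. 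With respect to the diagonal torus and the upper-triangular Borel subgroup, $\ket{\psi_0}$ is the highest-weight vector of $\bigwedge^q\bC^n$, with weight $\omega_q=(1^q,0^{n-q})$. The representation is compact and finite-dimensional, so square-integrability is automatic.

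\textbf{Irreducibility of copy spans.} The vector $\ket{\psi_0}^{\otimes N}$ is annihilated by all raising operators, since they act as derivations on the tensor product. It is a weight vector of weight $N\omega_q$, so it is a highest-weight vector in $(\bigwedge^q\bC^n)^{\otimes N}$. By the Cartan product theorem (\cref{app:cartan}), it generates the irreducible submodule $V_{N\omega_q}$, which occurs with multiplicity one. By \cref{rmk:orbit_span}, the $N$-copy span is exactly the submodule generated by $\ket{\psi_0}^{\otimes N}$, hence $\cH^{(N)}_\Sffq\cong V_{N\omega_q}$ is irreducible for every $N\ge1$. This makes $\Sffq$ a resolving orbit, and \cref{thm:structured_optimal_cloning} applies, with $d_N=\dim V_{N\omega_q}$ because $\rU(n)$ is compact.

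\textbf{Dimension and complexity.} I apply the Weyl dimension formula $\dim V_\lambda=\prod_{i<j}(\lambda_i-\lambda_j+j-i)/(j-i)$ to $\lambda=(N^q,0^{n-q})$. Pairs $i<j$ on the same side of $q$ contribute $1$, and pairs with $i\le q<j$ contribute $(N+j-i)/(j-i)$, which gives \cref{eq:slater_dimension}. Taking the ratio gives \cref{eq:slater_fidelity}.

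For $K=1$, write $f=\prod_{i\le q<j}(1-\tfrac{1}{N+1+j-i})$. The upper bound on $N_{\min}$ follows from $f\ge 1-\sum_{i\le q<j}\tfrac{1}{N+1+j-i}\ge 1-\tfrac{q(n-q)}{N+1}$, via the union (Bernoulli) bound, so $N=\lceil q(n-q)/\epsilon\rceil-1\le q(n-q)/\epsilon$ suffices.

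For the lower bound, optimality together with the tight trace-error identity requires $1-f\le\epsilon$. I use $1-f\ge 1-\exp(-\sum\tfrac1{N+1+j-i})\ge \tfrac{S}{1+S}$ with $S=\sum_{i\le q<j}\tfrac{1}{N+1+j-i}\ge \tfrac{q(n-q)}{N+n}$, since $j-i\le n-1$. Then $S\le \epsilon/(1-\epsilon)$, which gives $N\ge q(n-q)(1-\epsilon)/\epsilon-n$.

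The main obstacle is the representation-theoretic step: confirming that the copy span equals the full Cartan component, not merely that it sits inside it. The cyclic-vector argument settles this, since an irreducible module is generated by any nonzero vector. The rest is routine inequality bookkeeping, where the constants should be checked carefully.
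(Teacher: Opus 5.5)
Your proposal is correct and follows the paper's proof essentially step for step. The paper likewise identifies the Fermi sea as the highest-weight vector of $\Lambda^q\bC^n$, applies the iterated Cartan product theorem to get irreducibility of every copy span, uses the Weyl dimension formula, and derives the $N_{\min}$ bounds from elementary product inequalities.

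The differences are cosmetic. You bound the product over all $q(n-q)$ pairs directly, using $1-e^{-S}\ge S/(1+S)$. The paper first telescopes over $j$ and then uses $\ln\frac{1}{1-x}\le\frac{x}{1-x}$; both routes give identical bounds. The paper also adds a remark you omit: the central $\rU(1)$ acts by scalars, so the semisimple Cartan product theorem for $\mathfrak{su}(n)$ transfers to $\rU(n)$.
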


\begin{proof}
    We use the background of \cref{app:slater}: under the identification $\cH_q\cong\Lambda^q\bC^n$, the number-conserving Gaussian unitaries realize the representation $u\mapsto u^{\wedge q}$ of $\rU(n)$; the family $\Sffq$ is a single orbit of this action with base point $\ket{\psi_0}=\ket{1\cdots10\cdots0}$, and it spans the irreducible representation $\Lambda^q\bC^n=\Span(\Sffq)$ (\cref{eq:slater_orbit,eq:slater_span}).

    The base point is a highest-weight vector of this representation, of weight
    \begin{align}
    \label{eq:slater_weight}
        \omega_q=(\underbrace{1,\dots,1}_{q},0,\dots,0),
    \end{align}
    the $q$-th fundamental weight of $\mathfrak{gl}_n$ \cite{FultonHarris}. To make this tangible: as Cartan subalgebra of $\mathfrak u(n)$ one may take the span of the number operators $n_j$, so that each Fock basis state $\ket s$ is a weight vector, of weight $s=(s_1,\dots,s_n)$; the raising operators are represented by the hopping terms $\{a^\dag_ia^{\phantom\dag}_j\}_{i<j}$ (moving a particle to a lower-indexed mode), and all of them annihilate $\ket{\psi_0}$, whose particles fill the $q$ lowest-indexed modes.

    Now fix $N\geq1$. By \cref{rmk:orbit_span}, the $N$-copy span $\cH^{(N)}_{\Sffq}$ is the subrepresentation of $u\mapsto(u^{\wedge q})^{\otimes N}$ generated by $\ket{\psi_0}^{\otimes N}$; and by the Cartan product theorem (\cref{thm:cartan_product} in \cref{app:cartan}, iterated), $\ket{\psi_0}^{\otimes N}$ is a highest-weight vector, of weight $N\omega_q$, that generates an \emph{irreducible} subrepresentation. (The theorem concerns the semisimple part $\mathfrak{su}(n)$ of $\mathfrak u(n)=\mathfrak{su}(n)\oplus\mathfrak u(1)$; since the central $\rU(1)$ acts on $(\Lambda^q\bC^n)^{\otimes N}$ by scalars, $\mathrm{SU}(n)$- and $\rU(n)$-invariant subspaces coincide.) Thus $\rU(n)$ acts irreducibly on $\cH^{(N)}_{\Sffq}$ for every $N\geq1$; since $\rU(n)$ is compact and $\cH^{(N)}_{\Sffq}$ is finite-dimensional, the square-integrability requirement of \cref{def:resolving_orbit} holds trivially, with formal degree $d_{\Sffq,N}=\dim\cH^{(N)}_{\Sffq}$ (\cref{rmk:resolving_orbit}), and $\Sffq$ is a resolving orbit.

    For the dimension, the Weyl dimension formula for $\mathfrak{sl}_n$ \cite{FultonHarris} gives for the irreducible representation of highest weight $\lambda=(\lambda_1\geq\dots\geq\lambda_n)$
    \begin{align}
    \label{eq:gl_weyl_dim}
        \dim V_\lambda=\prod_{1\leq i<j\leq n}\f{\lambda_i-\lambda_j+j-i}{j-i},
    \end{align}
    and substituting $\lambda=N\omega_q$ (so that $\lambda_i-\lambda_j=N$ for $i\leq q<j$ and $=0$ otherwise) yields \cref{eq:slater_dimension}. The claimed cloning channel and its fidelity, \cref{eq:slater_fidelity}, follow from \cref{thm:structured_optimal_cloning}.

    It remains to prove \cref{eq:slater_N_bounds}. Since the channel of \cref{thm:structured_optimal_cloning} is optimal, $N_{\min,\Sffq}$ is the smallest $N$ with $\epsilon_N\leq\epsilon$, where $\epsilon_N\defeq1-d_{\Sffq,N}/d_{\Sffq,N+1}$. Fixing $i$ in \cref{eq:slater_dimension} and taking the product over $j$, one finds
    \begin{align}
    \label{eq:slater_ratio}
        1-\epsilon_N=\prod_{i=1}^{q}\f{N+q-i+1}{N+n-i+1}
        =\prod_{t=1}^{q}\lrp{1-x_t},\qquad x_t\defeq\f{n-q}{N+n-q+t}\in[0,1].
    \end{align}
    For the upper bound, the inequality $\prod_t(1-x_t)\geq1-\sum_t x_t$, together with $x_t\leq x_1$, gives $\epsilon_N\leq q(n-q)/(N+1)$; hence $\epsilon_N\leq\epsilon$ already for $N=\lceil q(n-q)/\epsilon\rceil-1$, and so $N_{\min,\Sffq}\leq q(n-q)/\epsilon$. For the lower bound, let $N=N_{\min,\Sffq}$, so that $\epsilon_N\leq\epsilon$. The inequality $1-x\leq e^{-x}$ applied to each factor of \cref{eq:slater_ratio}, together with $x_t\geq x_q$, gives
    \begin{align}
        \ln\f{1}{1-\epsilon_N}\geq\sum_{t=1}^{q}x_t\geq\f{q(n-q)}{N+n}.
    \end{align}
    Since $\ln\f{1}{1-x}\leq\f{x}{1-x}$ for $x\in[0,1)$ and $\f{x}{1-x}$ is increasing, this implies $\f{\epsilon}{1-\epsilon}\geq\f{\epsilon_N}{1-\epsilon_N}\geq\f{q(n-q)}{N+n}$, which rearranges to the lower bound in \cref{eq:slater_N_bounds}. Finally, \cref{eq:slater_N_bounds} gives $N_{\min,\Sffq}=\f{q(n-q)}{\epsilon}\lrp{1+o(1)}$ as $\epsilon\to0$, uniformly in $n$ and $q$, since $n\leq2q(n-q)$.
\end{proof}

We now turn to the full family $\Sff$, which requires the larger group $\Pin(2n)$.

\begin{proposition}
\label{prop:free_fermion_+_orbit}
    The state family $\Sff$ is a resolving orbit of the group $\Pin(2n)$, with $N$-copy span $\cH^{(N)}_\Sff\subset\Hf^{\otimes N}$ of dimension
    \begin{align}
    \label{eq:ff_dimension}
        d_{\Sff,N}=2\prod_{1\leq i<j\leq n}\f{N+2n-i-j}{2n-i-j}.
    \end{align}
    Therefore, it admits an optimal cloning channel from $N$ to $N+K$ copies, as described in \cref{thm:structured_optimal_cloning}; it is $\epsilon$-approximate for $\epsilon=1-\fff$, where $\fff$ is
    \begin{align}
    \label{eq:ff_fidelity}
        \fff=\f{d_{\Sff,N}}{d_{\Sff,N+K}}=\prod_{1\leq i<j\leq n}\f{N+2n-i-j}{N+K+2n-i-j}.
    \end{align}
    Considering $K=1$, this certifies that the cloning complexity $N_{\min,\Sff}=\f{n(n-1)}{2\epsilon}\lrp{1+o(1)}$, in particular,
    \begin{align}
    \label{eq:ff_N_bounds}
        \f{n(n-1)(1-\epsilon)}{2\epsilon}-2(n-1) \leq N_{\min,\Sff} \leq \f{n(n-1)}{2\epsilon}.
    \end{align}
\end{proposition}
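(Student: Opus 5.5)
The proof follows the Slater-state argument of \cref{prop:slater_orbit}. The extra work comes from $\Pin(2n)$ being disconnected and from $\Sff$ splitting by parity as in \cref{eq:parity_split}. I will use the background of \cref{app:ff_background}: Gaussian unitaries realize $\Pin(2n)$ on $\Hf$, and $\Hf=\Hf^+\oplus\Hf^-$ is the sum of the two half-spin representations of $\Spin(2n)$, which are irreducible and inequivalent. Each $\Sffpm$ is a single $\Spin(2n)$-orbit, with base points $\ket{0}$ and $\ket{10\cdots0}$. Any odd element of $\Pin(2n)$, for example the Majorana operator $c_1$, swaps the two parity sectors. Hence $\Sff$ is a single $\Pin(2n)$-orbit of $\psi_0=\ketbra{0}$.

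Next I identify the $N$-copy span. Write $\cH^{(N)}_\pm$ for the span of $\{\ket\psi^{\otimes N}:\psi\in\Sffpm\}$. Then $\cH^{(N)}_{\Sff}=\cH^{(N)}_+\oplus\cH^{(N)}_-$, and the sum is orthogonal because the summands lie in $(\Hf^+)^{\otimes N}$ and $(\Hf^-)^{\otimes N}$ respectively. The vacuum is a highest-weight vector of the half-spin representation $\Hf^+$, with weight $\omega_+=(\tfrac12,\dots,\tfrac12)$ for the Cartan subalgebra spanned by the $n_j-\tfrac12$. By \cref{thm:cartan_product} iterated, $\ket0^{\otimes N}$ is a highest-weight vector of weight $N\omega_+$ and generates an irreducible $\Spin(2n)$-module. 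By \cref{rmk:orbit_span}, this module is exactly $\cH^{(N)}_+$. The same argument shows $\cH^{(N)}_-$ is irreducible with highest weight $N\omega_-=N(\tfrac12,\dots,\tfrac12,-\tfrac12)$. These two weights differ for $n\geq2$, so the two $\Spin(2n)$-modules are inequivalent.

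For $\Pin(2n)$-irreducibility, let $W\subseteq\cH^{(N)}_{\Sff}$ be an invariant subspace. It is in particular $\Spin(2n)$-invariant, and since the two summands are inequivalent irreducibles, $W$ must be $0$, $\cH^{(N)}_+$, $\cH^{(N)}_-$, or the whole space. The odd element $c_1^{\otimes N}$ maps $\cH^{(N)}_+$ onto $\cH^{(N)}_-$, which rules out the two middle cases. So $\Pin(2n)$ acts irreducibly for every $N\geq1$, and square-integrability is automatic for a compact group on a finite-dimensional space. Therefore $d_{\Sff,N}=2\dim V_{N\omega_+}$. The $n=1$ case, where $\Spin(2)$ is abelian, is excluded by the statement or trivial. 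I expect the main delicate point to be this step: checking the inequivalence and that the highest-weight conventions match the appendix.

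For the dimension I apply the Weyl dimension formula for $D_n$, using positive roots $e_i\pm e_j$ and $\rho=(n-1,\dots,1,0)$:
\begin{align}
\dim V_\lambda=\prod_{i<j}\f{(\lambda_i-\lambda_j+j-i)(\lambda_i+\lambda_j+2n-i-j)}{(j-i)(2n-i-j)}.
\end{align}
For $\lambda=N\omega_+$ we have $\lambda_i-\lambda_j=0$, so the first factor is $1$, and $\lambda_i+\lambda_j=N$. This gives \cref{eq:ff_dimension}, and \cref{eq:ff_fidelity} then follows from \cref{thm:structured_optimal_cloning}.

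For the bounds with $K=1$, I fix $i$ and telescope over $j$, writing $a=2n-i$:
\begin{align}
\prod_{j>i}\f{N+a-j}{N+a-j+1}=\f{N+n-i}{N+2n-2i}.
\end{align}
This gives $1-\epsilon_N=\prod_{i=1}^{n-1}(1-x_i)$ with $x_i=\f{n-i}{N+2n-2i}$. Then I repeat the two estimates from the Slater proof. For the upper bound, $\sum_i x_i\leq \f{n(n-1)}{2(N+2)}$ gives $N_{\min}\leq n(n-1)/(2\epsilon)$. For the lower bound, $x_i\geq\f{n-i}{N+2n-2}$ together with $\ln\f1{1-\epsilon}\leq\f{\epsilon}{1-\epsilon}$ gives $\f{\epsilon}{1-\epsilon}\geq\f{n(n-1)}{2(N+2(n-1))}$, which rearranges to \cref{eq:ff_N_bounds}.
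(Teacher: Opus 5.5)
Your proof is the paper's proof. Both use the orthogonal parity split of the copy span, the Cartan product theorem in each sector, and inequivalence of $N\alpha$ and $N\beta$ together with a parity-swapping element to get $\Pin(2n)$-irreducibility. Both then apply the $D_n$ Weyl formula and use the same telescoped product $1-\epsilon_N=\prod_{a=1}^{n-1}\bigl(1-\tfrac{a}{N+2a}\bigr)$ for the bounds. Note that $d_{\Sff,N}=2\dim V_{N\omega_+}$ needs the two sectors to have equal dimension. Your unitary $c_1^{\otimes N}$, which maps $\cH^{(N)}_+$ onto $\cH^{(N)}_-$, already gives this. The paper instead reads the equality off the $\lambda_n^2$-dependence of Weyl's formula.

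The one thing to repair is the weight bookkeeping you flagged.
\begin{itemize}
\item With Cartan generators $n_j-\tfrac12$, the vacuum has weight $(-\tfrac12,\dots,-\tfrac12)$, not $(\tfrac12,\dots,\tfrac12)$. It is a \emph{lowest}-weight vector for the positive roots $e_i\pm e_j$: the root vectors $a_i^\dagger a_j^\dagger$, $i<j$, do not annihilate it.
\item Your odd base point $\ket{10\cdots0}$ is not annihilated by all raising operators of the system in which the vacuum is highest. For example, the hopping $a_2^\dagger a_1$ does not kill it. So ``the same argument'' does not directly give it highest weight $N\omega_-$.
\end{itemize}

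There are two ways to fix this.
\begin{itemize}
\item Follow the paper's conventions. Use generators $\tfrac12-n_j$, raising operators $a_i^\dagger a_j$ ($i>j$) and $a_ia_j$ ($i<j$), and odd base point $\ket{0\cdots01}=a_n^\dagger\ket{0}$. Both base points are then annihilated by all raising operators, with weights exactly $\alpha$ and $\beta$.
\item Keep $\ket{10\cdots0}$ and use \cref{rmk:orbit_span}. The span $\cH^{(N)}_-$ is generated by the $N$-th tensor power of any point of the $\Spin(2n)$-orbit $\Sffm$. Hence it equals the module generated by $\ket{0\cdots01}^{\otimes N}$.
\end{itemize}
With either fix, the rest of your argument goes through unchanged.
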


\begin{proof}
    We freely use the background of \cref{app:ff_background}: $\phi$ denotes the spinor representation on $\Hf$ of the group $\Pin(2n)=\Spin(2n)\cup\mu\,\Spin(2n)$ generated by the Gaussian unitaries $\Spin(2n)$ and the parity-flipping unitary $\mu=c_{2n-1}$, and $\Sff$ is a single orbit of $\Pin(2n)$.

    The even and odd subspaces are spanned by the Gaussian states of the respective parity, $\Span(\Sffpm)=\Hfpm$, and each fixed-parity family is a single orbit of $\Spin(2n)$ with base points $\chi_+=\ketbra{0}$ and $\chi_-=\ketbra{e}$, $\ket{e}=a_n^\dag\vac=\ket{0\cdots01}$ (Eqs.~\ref{eq:sector_orbits}--\ref{eq:sector_spans}). Accordingly, since $\Sff=\Sffp\cup\Sffm$, the $N$-copy span splits into the $N$-copy spans of the fixed-parity subfamilies,
    \begin{align}
    \label{eq:ncopy_split}
        \cH^{(N)}_{\Sff}=\cV^{+}_N\oplus\cV^{-}_N,\qquad \cV^{\pm}_N\defeq\cH^{(N)}_{\Sffpm},
    \end{align}
    an orthogonal sum since $\cV^{\pm}_N\subset(\Hfpm)^{\otimes N}$.

    As representations of $\Spin(2n)$, the two parity sectors $\Hfpm$ are the half-spinor representations \cite{FultonHarris}: irreducible representations with highest weights
    \begin{align}
    \label{eq:ff_highest_weights}
        \alpha=\lrp{\tfrac12,\dots,\tfrac12}\ \text{ for }\Hfp,\qquad \beta=\lrp{\tfrac12,\dots,\tfrac12,-\tfrac12}\ \text{ for }\Hfm,
    \end{align}
    whose highest-weight vectors are precisely the base states $\vac$ and $\ket e$ (see also \cite{helsen2022matchgate}).
    To make this tangible: as Cartan subalgebra of $\mathfrak{so}(2n)$ one may take the span of the operators $\f12-n_j$, so that each Fock basis state $\ket s$ is a weight vector, of weight $(\f12-s_1,\dots,\f12-s_n)$; with this choice the vacuum has weight $\alpha$ and $\ket e$ has weight $\beta$. The raising operators (root vectors of the standard positive roots $\{L_i-L_j,\,L_i+L_j\}_{i<j}$ of $\mathfrak{so}(2n)$) are represented by the hopping terms $\{a^\dag_i a^{\phantom\dag}_j\}_{i>j}$ (moving a particle to a higher-indexed mode) together with the pair annihilators $\{a_i a_j\}_{i<j}$, and one checks directly that all of them annihilate $\vac$ as well as $\ket{e}$, whose single particle occupies the highest-indexed mode. Note that the sign convention for the Cartan subalgebra is opposite to the one used for $\rU(n)$ in the proof of \cref{prop:slater_orbit}; this is what makes the vacuum a highest- rather than lowest-weight vector.

    Now fix $N\geq1$. Since $\Sffpm$ are orbits of $\Spin(2n)$ with the base points above, \cref{rmk:orbit_span} identifies $\cV^{\pm}_N$ with the subrepresentations of $\phi^{\otimes N}$ (viewed as a representation of $\Spin(2n)$) generated by $\vac^{\otimes N}$ and $\ket{e}^{\otimes N}$, respectively; and by the Cartan product theorem (\cref{thm:cartan_product}, iterated), these vectors are highest-weight vectors, of weights $N\alpha$ and $N\beta$, each generating an \emph{irreducible} subrepresentation. Since $N\alpha\neq N\beta$ for every $N\geq1$, the two representations are moreover inequivalent \cite{FultonHarris}. Their common dimension follows from the Weyl dimension formula for $\mathfrak{so}(2n)$ \cite{FultonHarris}: an irreducible representation of highest weight $\lambda=(\lambda_1\geq\dots\geq\lambda_{n-1}\geq|\lambda_n|)$ has dimension
    \begin{align}
    \label{eq:weyl_dim}
        \dim V_\lambda=\prod_{1\leq i<j\leq n}\f{\ell_i^2-\ell_j^2}{m_i^2-m_j^2},\qquad \ell_i=\lambda_i+n-i,\quad m_i=n-i.
    \end{align}
    Since $\ell_n=\lambda_n$ (the shift $n-i$ vanishes at $i=n$), \cref{eq:weyl_dim} depends on $\lambda_n$ only through $\lambda_n^2$, so $N\alpha$ and $N\beta$ give the same value \cite{FultonHarris}. Substituting $\lambda=N\alpha$, we have $\ell_i=\f N2+n-i$, so $\ell_i^2-\ell_j^2=(j-i)(N+2n-i-j)$ while $m_i^2-m_j^2=(j-i)(2n-i-j)$, and the factors $(j-i)$ cancel:
    \begin{align}
    \label{eq:sector_dim}
        \dim \cV^{+}_N=\dim \cV^{-}_N=\prod_{1\leq i<j\leq n}\f{N+2n-i-j}{2n-i-j}.
    \end{align}
    This proves \cref{eq:ff_dimension}.

    Let us now show that the full group $\Pin(2n)$ acts irreducibly on $\cH^{(N)}_{\Sff}$. Any $\Pin(2n)$-invariant subspace is in particular $\Spin(2n)$-invariant; and since $\cH^{(N)}_{\Sff}$ is an orthogonal sum, \cref{eq:ncopy_split}, of two \emph{inequivalent} $\Spin(2n)$-irreducibles, its only $\Spin(2n)$-invariant subspaces are $\{0\}$, $\cV^{+}_N$, $\cV^{-}_N$, and $\cH^{(N)}_{\Sff}$ itself (by Schur's lemma). But the parity-flipping unitary acts on $N$ copies as $\mu^{\otimes N}$, mapping $\psi^{\otimes N}\mapsto(\mu\psi)^{\otimes N}$ and hence exchanging $\cV^{+}_N\leftrightarrow\cV^{-}_N$ (as $\mu$ maps $\Sffp$ onto $\Sffm$, \cref{app:spin_pin}); so neither sector alone is $\Pin(2n)$-invariant, and $\phi^{\otimes N}$ restricted to $\cH^{(N)}_{\Sff}$ is irreducible. We stress that the inequivalence of the two sectors \emph{at every tensor power} is what makes this argument work: had they carried equivalent representations, $\cH^{(N)}_{\Sff}$ would contain additional `diagonal' invariant subspaces that would need to be ruled out.
    Finally, $\Pin(2n)$ is compact and $\cH^{(N)}_{\Sff}$ is finite-dimensional, so the square-integrability requirement of \cref{def:resolving_orbit} holds trivially, with formal degree $d_{\Sff,N}=\dim\cH^{(N)}_{\Sff}$ (\cref{rmk:resolving_orbit}). This establishes that $\Sff$ is a resolving orbit; the claimed cloning channel and its fidelity, \cref{eq:ff_fidelity}, follow from \cref{thm:structured_optimal_cloning}, the overall factors of $2$ canceling in the ratio $d_{\Sff,N}/d_{\Sff,N+K}$.

        It remains to prove \cref{eq:ff_N_bounds}. Since the channel of \cref{thm:structured_optimal_cloning} is optimal, $N_{\min,\Sff}$ is the smallest $N$ with $\epsilon_N\leq\epsilon$, where $\epsilon_N\defeq1-d_{\Sff,N}/d_{\Sff,N+1}$. Fixing $i$ in \cref{eq:ff_dimension} and taking the product over $j$, one finds
    \begin{align}
    \label{eq:ff_ratio}
        1-\epsilon_N=\prod_{i=1}^{n-1}\f{N+n-i}{N+2(n-i)}
        =\prod_{a=1}^{n-1}\lrp{1-x_a},\qquad x_a\defeq\f{a}{N+2a}\in[0,1].
    \end{align}
    For the upper bound, the inequality $\prod_a(1-x_a)\geq1-\sum_a x_a$, together with $x_a\leq\f{a}{N+2}$, gives $\epsilon_N\leq\f{n(n-1)}{2(N+2)}$; hence $\epsilon_N\leq\epsilon$ already for $N=\lceil n(n-1)/(2\epsilon)\rceil-2$, and so $N_{\min,\Sff}\leq n(n-1)/(2\epsilon)$. For the lower bound, let $N=N_{\min,\Sff}$, so that $\epsilon_N\leq\epsilon$. The inequality $1-x\leq e^{-x}$ applied to each factor of \cref{eq:ff_ratio}, together with $x_a\geq\f{a}{N+2n-2}$, gives
    \begin{align}
        \ln\f{1}{1-\epsilon_N}\geq\sum_{a=1}^{n-1}x_a\geq\f{n(n-1)}{2(N+2n-2)}.
    \end{align}
    Since $\ln\f{1}{1-x}\leq\f{x}{1-x}$ for $x\in[0,1)$ and $\f{x}{1-x}$ is increasing, this implies $\f{\epsilon}{1-\epsilon}\geq\f{\epsilon_N}{1-\epsilon_N}\geq\f{n(n-1)}{2(N+2n-2)}$, which rearranges to the lower bound in \cref{eq:ff_N_bounds}. Finally, \cref{eq:ff_N_bounds} gives $N_{\min,\Sff}=\f{n(n-1)}{2\epsilon}\lrp{1+o(1)}$ as $\epsilon\to0$, uniformly in $n$, since $2(n-1)\leq n(n-1)$.
\end{proof}

%% file: ActiveBosonicCloning.tex
\section{Cloning of Bosonic Gaussian states}
\label{sec:active-bosonic-cloning}

We study cloning pure Gaussian states of $n$ bosonic modes without an energy bound.
The cloning cost depends on which parameters are unknown: displacement, squeezing, or both.
On the Fock space $\cF_n$, these choices give the coherent family $\cS_{\mathrm{coh}}=\{\ketbra\alpha\}$, the centered family $\cS_{\mathrm{sq}}=\{\ketbra{\Omega_W}\}$, and the full pure Gaussian family $\Sbg=\{D(\alpha)\ketbra{\Omega_W}D(\alpha)^\dagger\}$.
Here $D(\alpha)$ displaces the state by $\alpha\in\bC^n$, while $W=W^{\mathsf T}\in\bC^{n\times n}$ with $\|W\|<1$ specifies the squeezing and mixing of the modes.
The respective real dimensions are $2n$, $n(n+1)$, and $n(n+3)$.
The bosonic notation and all proofs are collected in \cref{app:bosonic-cloning-proofs}.

We clone $N$ input copies into $M=N+K>N$ output copies and optimize over all channels, including those that couple the modes or use non-Gaussian operations.
Let $\cH_r^{\boldsymbol{\cdot}}$ be the $r$-copy span of the corresponding family and $\pi_r^{\boldsymbol{\cdot}}$ its projector.
The label $\boldsymbol{\cdot}$ is $\mathrm{coh}$ for coherent states, $\mathrm{sq}$ for squeezed vacua (centered pure Gaussian states), or $\mathrm{bg}$ for the full family of pure bosonic Gaussian states.
The guaranteed global fidelity and its unrestricted optimum are
\begin{equation}
f_{\cS}(\Phi)=\inf_{\psi\in\cS}\Tr\!\left[\psi^{\otimes M}\Phi(\psi^{\otimes N})\right],
\qquad f_{\cS}^\star(N,M)=\sup_\Phi f_{\cS}(\Phi).
\label{eq:ab-objectives}
\end{equation}
The optimal trace error $\epsilon_{\cS}^\star(N,M)$ and the sample complexity $N_{\mr{min},\cS}(\epsilon)$ follow \cref{def:approximate_cloning}; the latter always refers to adding one copy.
The distinction between the two error criteria matters here: for a pure target $Q$ and an output state $\sigma$,
\begin{equation}
1-\Tr(Q\sigma)\leq\frac12\|\sigma-Q\|_1\leq\sqrt{1-\Tr(Q\sigma)}.
\label{eq:ab-metric-comparison}
\end{equation}
Canonical and isometric cloners attain the left and right bounds, respectively.
We therefore distinguish fidelity optimality from trace-distance optimality throughout.

\Cref{tab:ab-orbits} summarizes how the three promises affect cloning.
For coherent states, the canonical construction is unrestrictedly optimal for both criteria.
For centered states, a smaller symmetry group gives an affine cloner that can improve on the canonical one, solves the one-mode fidelity problem for $1\leq K\leq4$, and achieves linear sample complexity in $n$ at fixed trace error.
Allowing both unknown squeezing and unknown displacement leaves a gap between linear and quadratic sample bounds.

\begin{table}[ht]
\centering\small
\long\def\abcell#1#2{\parbox[t]{#1\linewidth}{\raggedright#2\strut}}%
\setlength{\tabcolsep}{4pt}
\renewcommand{\arraystretch}{1.15}
\begin{tabular}{@{}lll@{}}
\hline
\abcell{0.24}{State family}&\abcell{0.37}{Optimal global fidelity $f^\star$}&\abcell{0.33}{Trace error; sample complexity}\\\noalign{\vskip0.6ex}
\hline
\abcell{0.24}{Coherent states}
&\abcell{0.37}{$(N/M)^n$ (\cref{prop:ab-coherent}).}
&\abcell{0.33}{$\epsilon^\star=1-f^\star$;\par $N_{\min}=\frac n\epsilon[1+O(\epsilon)]$.}\\\noalign{\vskip1ex}
\abcell{0.24}{One-mode squeezed vacua}
&\abcell{0.37}{Exact formula in \cref{eq:ab-one-mode-optimum}.}
&\abcell{0.33}{$N_{\min}=\Theta(1/\epsilon)$ (\cref{thm:ab-one-mode}).}\\\noalign{\vskip1ex}
\abcell{0.24}{Centered Gaussian states, $n\geq2$}
&\abcell{0.37}{$1-f^\star=\frac{nK^2}{8N}[1+O(n/N)]$,\par $1\leq K\leq4$, $N\geq2n$ (\cref{prop:ab-centered-uniform-optimum}).}
&\abcell{0.33}{$\Omega(n/\epsilon)\leq N_{\min}\leq O(\min\{n^2/\epsilon,n/\epsilon^2\})$.}\\\noalign{\vskip1ex}
\abcell{0.24}{All pure Gaussian states}
&\abcell{0.37}{$f^\star\leq(N/M)^n$,\par all $M>N\geq1$ (\cref{cor:ab-gaussian-sample-complexity}).}
&\abcell{0.33}{$\Omega(n/\epsilon)\leq N_{\min}\leq O(n^2/\epsilon)$.}\\\noalign{\vskip0.6ex}
\hline
\end{tabular}
\caption{Cloning $N\to M=N+K$ copies over all channels.
Sample complexity refers to $K=1$; the order bounds have absolute constants for $0<\epsilon\leq1/2$, and the coherent asymptotic is uniform as $\epsilon\to0$.
Centered states have $N_{\min}=\Theta_\epsilon(n)$ at every fixed $0<\epsilon<1$.
Exact trace optima remain open except for coherent states.}
\label{tab:ab-orbits}
\end{table}

We begin by recovering the known coherent-state cloner within the resolving-orbit framework.
The displacement group is amenable, so averaging preserves the fidelity guarantee and the optimality bound applies to arbitrary channels.

\begin{theorem}[Optimal cloning of coherent states]
\label{prop:ab-coherent}
Let $n\geq1$ and $M>N\geq1$ be integers.
The unrestricted optima for coherent states are
\begin{equation}
f_{\mathrm{coh}}^\star(N,M)=\left(\frac NM\right)^n,
\qquad \epsilon_{\mathrm{coh}}^\star(N,M)=1-\left(\frac NM\right)^n.
\label{eq:ab-coherent-optimum}
\end{equation}
Both values are attained by the canonical channel
\begin{equation}
\Phi_{\mathrm{coh}}(\rho)=\left(\frac NM\right)^n\pi_M^{\mathrm{coh}}\bigl(\rho\otimes\bI^{\otimes K}\bigr)\pi_M^{\mathrm{coh}}.
\label{eq:ab-coherent-channel}
\end{equation}
It concentrates the inputs into one collective copy, amplifies each mode with a quantum-limited amplifier of intensity gain $M/N$, and distributes the result over $M$ copies using a passive network and vacuum ancillas.
Its exact sample complexity is, for $0<\epsilon<1$,
\begin{equation}
N_{\mr{min},\cS_{\mathrm{coh}}}(\epsilon)
=\left\lceil\frac1{(1-\epsilon)^{-1/n}-1}\right\rceil
=\frac n\epsilon[1+O(\epsilon)]
\label{eq:ab-coherent-sample-complexity}
\end{equation}
as $\epsilon\to0$, uniformly in $n\geq1$.
In particular, $N_{\mr{min},\cS_{\mathrm{coh}}}(\epsilon)=\Theta(n/\epsilon)$ uniformly for $n\geq1$ and $0<\epsilon\leq1/2$, with absolute implicit constants.
The same guarantees hold for the displacement orbit of any fixed, known pure Gaussian state.
\end{theorem}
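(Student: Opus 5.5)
The plan is to realize coherent states as a resolving orbit of the Weyl--Heisenberg (displacement) group and then invoke \cref{thm:structured_optimal_cloning}. Take $G$ to be the Heisenberg group $\bC^n\times\rU(1)$, acting by $\phi(\alpha,e^{i\theta})=e^{i\theta}D(\alpha)$ on $\cF_n$. This group is unimodular, nilpotent and hence amenable, with Haar measure $d^{2n}\alpha\,d\theta/(2\pi)$. The coherent family is the orbit of the vacuum, since the phases are irrelevant for density matrices.

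For the $N$-copy span, apply the passive unitary $U_N$ that maps the copy modes to collective modes. Under this change of basis, $\ket\alpha^{\otimes N}\mapsto\ket{\sqrt N\alpha}\otimes\ket0^{\otimes(N-1)}$. Hence $\cH^{(N)}_{\mathrm{coh}}=U_N^\dagger(\cF_n\otimes\ket{0\cdots0})$, and $\phi^{\otimes N}$ restricted there is unitarily equivalent to the Schr\"odinger representation with displacement $\sqrt N\alpha$ and central character $e^{iN\theta}$. That representation is irreducible by Stone--von Neumann.

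Square-integrability and the formal degree follow from \cref{eq:finite_degree}. Using $|\braket{\alpha}{0}|^{2N}=e^{-N|\alpha|^2}$, the integral gives $d_N^{-1}=\int e^{-N|\alpha|^2}d^{2n}\alpha=(\pi/N)^n$, so $d_N=(N/\pi)^n$. Therefore the resolving condition holds for every $N\geq1$, and \cref{thm:structured_optimal_cloning} yields optimality over all channels, with fidelity and trace error $f^\star=d_N/d_M=(N/M)^n$ and $\epsilon^\star=1-(N/M)^n$. This proves \cref{eq:ab-coherent-optimum,eq:ab-coherent-channel}.

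For the physical description of $\Phi_{\mathrm{coh}}$, first apply $U_N$ to concentrate the input into one collective copy $\ket{\sqrt N\alpha}$ and discard the vacuum modes. Next, a quantum-limited amplifier of gain $g=M/N$ maps $\ket{\sqrt N\alpha}$ to a thermal-displaced state centered at $\sqrt M\alpha$. Finally, $U_M^\dagger$ with $M-1$ vacuum ancillas distributes the result.

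The main obstacle is showing that this composite equals the canonical formula. I would check this by covariance: both channels are displacement-covariant maps into $\cH^{(M)}_{\mathrm{coh}}$, so it suffices to compare their actions on the vacuum input. The canonical channel gives $(N/M)^n\Tr_{\mathrm{anc}}$ of the projected vacuum-tensor-identity. Computing this in collective modes means evaluating $\pi_M(\ket0\bra0^{\otimes N}\otimes\bI_K)\pi_M$ and reducing it to a thermal state of mean photon number $K/N$ in the collective mode. That matches the amplifier output $(g-1)$ and its $\bigl(\tfrac{N}{M}\bigr)^n$ weights $\frac{1}{g}(1-\frac1g)^k$. An alternative route is to compare Kraus operators on the coherent overcomplete basis.

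The sample complexity is the smallest $N$ with $(N/(N+1))^n\geq1-\epsilon$. This condition is equivalent to $1/N\leq(1-\epsilon)^{-1/n}-1$, which gives the stated ceiling. Expanding $(1-\epsilon)^{-1/n}-1=\frac{\epsilon}{n}(1+O(\epsilon))$ uniformly in $n$, using convexity bounds $\epsilon/n\leq(1-\epsilon)^{-1/n}-1\leq\epsilon/(n(1-\epsilon))$, gives $\frac n\epsilon[1+O(\epsilon)]$ and the $\Theta(n/\epsilon)$ claim.

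For the displacement orbit of a fixed known Gaussian state $S\ket0$, conjugate everything by the known $S^{\otimes r}$. Because $SD(\alpha)S^\dagger=D(\alpha')$ with a bijective real-linear map $\alpha\mapsto\alpha'$, the orbit, its spans and its degrees are carried over. All fidelities are unchanged.
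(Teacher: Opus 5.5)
Your proposal is correct and follows essentially the same route as the paper: the vacuum orbit of the amenable Heisenberg group, the collective-mode transform that makes each copy span an irreducible Schr\"odinger representation with formal degree $(r/\pi)^n$, then \cref{thm:structured_optimal_cloning} for the unrestricted optima. The remaining steps also match: the amplifier circuit is identified by covariance from the vacuum output, the sample complexity comes from explicit inversion (your tangent/chord bounds on $(1-\epsilon)^{-1/n}$ replace the paper's bounds via $-\log(1-\epsilon)$), and the known-Gaussian case follows by conjugation. Two steps should be made explicit when you write it up. First, the vacuum computation closes in one line via $(\bra0^{\otimes N}\otimes\bI)J_M\ket{\boldsymbol\ell}=(K/M)^{|\boldsymbol\ell|/2}J_K\ket{\boldsymbol\ell}$, where $J_r\ket\varphi=U_r^\dagger(\ket\varphi\otimes\ket0^{\otimes(r-1)})$; this gives the vacuum output $(N/M)^n(K/M)^{\hat n}$ in the collective mode. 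Second, covariance only gives agreement on coherent-state inputs, so you must add (as the paper does) that these inputs determine a normal map, because a bounded observable is fixed by its coherent-state diagonal.
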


The amplifier realization and global-fidelity optimum are known~\cite{braunstein2001coherentCloning,cerf2005nonGaussianCloning,zhaoChiribella2017}.
The one-mode trace optimum follows from the zero-temperature limit of Ref.~\cite{gutaMatsumoto2006}, and the $1\to2$ projector expression appears in Ref.~\cite{dArianoMacchiavelloSacchi2001}.
Non-Gaussian operations can improve single-clone fidelity~\cite{cerf2005nonGaussianCloning}.

We next consider centered pure Gaussian states, obtained from the vacuum by Gaussian unitaries without displacement.
Up to an overall phase, these unitaries form the metaplectic group $\operatorname{Mp}(2n,\bR)$.
\Cref{thm:structured_optimal_cloning} guarantees optimality of the canonical cloner among channels covariant under this group, but nonamenability prevents the averaging argument used for coherent states from establishing unrestricted optimality.

To obtain bounds over non-covariant channels, we use the subgroup $S_n$ whose unitaries act on position wavefunctions as
\begin{equation}
(U_{(A,B)}\psi)(q)=(\det A)^{1/2}e^{iq^{\mathsf T}Bq/2}\psi(A^{\mathsf T}q),
\label{eq:ab-affine-schrodinger}
\end{equation}
where $q\in\bR^n$, $A$ is a real upper-triangular $n\times n$ matrix with positive diagonal, and $B$ is a real symmetric $n\times n$ matrix.
These operations combine linear changes of position coordinates with quadratic phases, and their action on the vacuum reaches every centered pure Gaussian state.
The group $S_n$ is solvable and hence amenable, allowing us to bound arbitrary cloners through $S_n$-covariant dilations (\cref{lem:ab-covariant-reduction}).

For $M>N\geq n$, the actions of $S_n$ on the input and output copy spans are unitarily equivalent (\cref{lem:ab-common-copy-model}).
Thus there is an isometry $V_{\mathrm{aff}}:\cH_N^{\mathrm{sq}}\to\cH_M^{\mathrm{sq}}$ satisfying $V_{\mathrm{aff}}U_g^{\otimes N}=U_g^{\otimes M}V_{\mathrm{aff}}$ for every $g\in S_n$.
We call the channel $\rho\mapsto V_{\mathrm{aff}}\rho V_{\mathrm{aff}}^\dagger$ the \emph{affine cloner}.
Its fidelity is constant on the family and equals $|\langle0|^{\otimes M}V_{\mathrm{aff}}|0\rangle^{\otimes N}|^2$.
For one mode, this construction is available for every $N\geq1$.
Allowing an $S_1$-covariant isometry into the outputs and an auxiliary environment, which is then traced out, gives the additional channels needed to attain the optimal fidelity for every $K$.

\begin{theorem}[Optimal global fidelity for one-mode squeezed vacua]
\label{thm:ab-one-mode}
For integers $M>N\geq1$ and $K=M-N$,
\begin{equation}
f_{\mathrm{sq},1}^\star(N,M)=
\begin{cases}
\displaystyle\frac{\Gamma((N+M)/4)^2}{\Gamma(N/2)\Gamma(M/2)},&1\leq K\leq4,\\[1ex]
\displaystyle\frac{N}{M-2},&K>4.
\end{cases}
\label{eq:ab-one-mode-optimum}
\end{equation}
The affine cloner attains the first branch, and the channel in \cref{eq:ab-negative-channel} attains the second.
The optimal trace error obeys
\begin{equation}
1-f_{\mathrm{sq},1}^\star
\leq\epsilon_{\mathrm{sq},1}^\star
\leq\min\left\{\sqrt{1-f_{\mathrm{sq},1}^\star},\frac{K}{M-2}\right\},
\label{eq:ab-one-mode-trace-bracket}
\end{equation}
where the second upper bound is available for $N\geq3$.
For $K=1$ and $N\geq3$, this gives $1/(8N+9)\leq\epsilon_{\mathrm{sq},1}^\star(N,N+1)\leq1/(N-1)$ and hence $N_{\mr{min},\cS_{\mathrm{sq},1}}(\epsilon)=\Theta(1/\epsilon)$ as $\epsilon\to0$.
\end{theorem}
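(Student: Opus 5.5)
Since the one-mode family is an orbit of the amenable solvable group $S_1$ (upper-triangular $A=a>0$ and real $B=b$, i.e.\ dilations and quadratic phases), the plan is to use \cref{lem:ab-covariant-reduction} and restrict to $S_1$-covariant channels. The fidelity problem then becomes a covariant optimization. The key structure is that $\cH^{\mathrm{sq}}_r$, the span of $|\Omega_w\rangle^{\otimes r}$, carries an irreducible representation of $S_1$. Concretely, in position representation the $r$-copy squeezed vacuum is $\prod_k e^{-\lambda q_k^2/2}$ with $\lambda=s-ib$ in the right half-plane. It therefore depends only on $R^2=\sum q_k^2$, so the copy span is the space of radial functions in $\bR^r$. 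This is $L^2(\bR_+,R^{r-1}dR)$, on which $S_1$ acts by dilation $R\mapsto aR$ with weight $a^{r/2}$ and by multiplication by $e^{ibR^2/2}$. Substituting $t=R^2/2$, every copy span becomes $L^2(\bR_+,dt)$ with dilation and modulation, up to a conjugation. This is the single irreducible representation of the $ax+b$ group (the positive-frequency one), which gives the affine isometry $V_{\mathrm{aff}}$ for every $N\ge1$.

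\textbf{Fidelity of the affine cloner.} I will compute $|\langle 0|^{\otimes M}V_{\mathrm{aff}}|0\rangle^{\otimes N}|^2$ in the $t$ picture. The vacuum radial function, normalized in $L^2(dt)$, is $\propto t^{r/4-1/2}e^{-t}$. The overlap of two such functions with exponents $N/4-1/2$ and $M/4-1/2$ is
\[
\frac{\Gamma((N+M)/4)}{\sqrt{\Gamma(N/2)\Gamma(M/2)}},
\]
which yields the first branch.

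\textbf{Upper bound.} For general covariant channels I will write a covariant Stinespring isometry into $\cH^{\mathrm{sq}}_M\otimes E$, with $E$ carrying a representation of $S_1$. By Mackey theory, the unitary representations of $ax+b$ decompose into the two infinite-dimensional irreducibles plus characters of the quotient. A covariant isometry from the positive irreducible into (positive irreducible)$\otimes E$ can then be parametrized explicitly. I expect this to reduce to intertwiners realized as multiplication by functions of $t$ together with a spectral measure over the $a$-characters $a^{i\nu}$ of the environment. Optimizing the resulting fidelity functional becomes a one-variable problem in Mellin space: maximize
\[
\Big|\int \overline{g_M(t)}\,h(t)\,g_N(t)\,dt\Big|^2
\]
subject to a normalization constraint, where $g_r\propto t^{r/4-1/2}e^{-t}$. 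I expect to solve it through Cauchy--Schwarz or Gamma-function log-convexity. The transition at $K=4$ should appear where the optimizer stops being the trivial (identity) environment and starts using an environment factor $t^{-1}$-like weight, namely the channel \cref{eq:ab-negative-channel}. That channel gives $N/(M-2)$ through $\Gamma$ ratios, for example $\Gamma(N/2)\Gamma(\cdot)$ identities when $K>4$. Establishing that the supremum is exactly the stated piecewise formula, i.e.\ a dual certificate for the Mellin-space optimization, is the main obstacle. I would first guess the dual from the two branches and check tightness at $K=4$, where they must coincide: $\Gamma(N/2+1)^2/(\Gamma(N/2)\Gamma(N/2+2))=N/(N+2)$. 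This matching is consistent.

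\textbf{Trace bounds.} The lower bound in \cref{eq:ab-one-mode-trace-bracket} follows from \cref{eq:trace-distance-fidelity-lower-bound}. The bound $\sqrt{1-f^\star}$ comes from the isometric affine cloner via \cref{eq:ab-metric-comparison}. The bound $K/(M-2)$ comes from the \cref{eq:ab-negative-channel} construction, whose output is shown to be $\frac{N}{M-2}\psi^{\otimes M}+R$ with $R\ge0$ orthogonal, as in \cref{eq:canonical-cloner-exact-trace-error}; this needs $N\ge3$ for normalizability. For $K=1$, I will expand the Gamma ratio using Gautschi/Wendel-type inequalities to get $1-f\ge 1/(8N+9)$, while $K/(M-2)=1/(N-1)$. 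Together these give $\Theta(1/\epsilon)$.
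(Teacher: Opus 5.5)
Your setup agrees with the paper. Amenability of $S_1$ reduces the problem to covariant maps. Your radial picture with $t=R^2/2$ is unitarily the paper's model $L^2(\bR_+,dx/x)$ carrying the positive-frequency representation $U_+$. Your affine fidelity $\Gamma((N+M)/4)^2/[\Gamma(N/2)\Gamma(M/2)]$ is correct, and so is your $K=1$ lower bound $1/(8N+9)$ via a Gamma-ratio estimate.

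\textbf{Gap 1: the optimality bound is missing.} This is the substance of the theorem, and you leave it to a ``dual certificate'' you have not found. The structure you anticipate (multiplication operators plus a spectral measure over characters $a^{i\nu}$) covers only the zero-frequency part of the environment, and it cannot produce the second branch. Write $p=N/2$, $q=M/2$, $r=(p+q)/2$ and $f_0=\Gamma(r)^2/[\Gamma(p)\Gamma(q)]$.
\begin{itemize}
\item \emph{Zero-frequency sector.} Intertwiners into $W_0(a,b)=a^{iT}$ are indeed $f\mapsto x^{-iT}f\,\omega$. There your Mellin problem is trivial: it gives $\|\Gamma(r-iT)\omega\|^2/[\Gamma(p)\Gamma(q)]\le f_0\|\omega\|^2$ for \emph{every} $K$, since $|\Gamma(r-i\tau)|\le\Gamma(r)$.
\item \emph{Sectors carrying $U_\pm$.} This is where the value $N/(M-2)$ lives. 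The intertwiners there are not multiplications but splitting profiles, $f(x+y)h_+(x/(x+y))$ and $\mathbf 1_{\{x>y\}}\sqrt{x/(x-y)}\,f(x-y)h_-((x-y)/x)$. Their fidelity is the quadratic form of the kernel $J_\gamma(t,u)\propto(tu)^\gamma[(1-t)(1-u)]^r(1-tu)^{-2r}$ on $L^2(dt/[t(1-t)])$.
\item \emph{Origin of the threshold $K=4$.} The paper bounds $\|J_\gamma\|$ by $f_0$ with a weighted Schur test (weight $t^{r-\gamma}$ and Euler's beta integral). This works when $2\gamma\ge r$: always for $\gamma_+=q/2$, and for $\gamma_-=(p+1)/2$ precisely when $K\le4$.
\item \emph{The case $K>4$.} A different argument is needed. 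Use the orthogonality relation $\int U_+(g)BU_+(g)^\dagger\,dg=2\pi\tau_D(B)\bI$ of the non-unimodular affine group, with $D$ multiplication by $x^{-1}$. Applied to $B=V^\dagger(|\xi_q\rangle\langle\xi_q|\otimes\bI)V$, it gives $\tau_D(B)\le2/(q-1)$. Cauchy--Schwarz with $\|D^{-1/2}\xi_p\|^2=p/2$ then gives $p/(q-1)=N/(M-2)$ for every environment.
\end{itemize}
Neither ingredient, nor a substitute, appears in your plan. Checking that the two branches agree at $K=4$ certifies nothing.

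\textbf{Gap 2: the trace bound $K/(M-2)$ comes from the wrong channel.} The channel of \cref{eq:ab-negative-channel} cannot supply it, for three reasons.
\begin{itemize}
\item It exists only for $K>4$, since it needs $s=q-p-2>0$. So it cannot give the $K=1$ bound $1/(N-1)$.
\item Its fidelity would give $1-N/(M-2)=(K-2)/(M-2)$, not $K/(M-2)$.
\item Its output is not of the form $f\,\psi^{\otimes M}+R$ with $R$ orthogonal to the target. At the base point the output $\sigma$ satisfies $\sigma\xi_q\propto x^{q/2-1}e^{-x}$, which is not a multiple of $\xi_q$.
\end{itemize}
The bound actually comes from the metaplectic canonical cloner (\cref{thm:structured_optimal_cloning} with \cref{thm:ab-resolving}). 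For one mode $d_{\mathrm{sq},r}=(r-2)/(2\pi)$, square-integrable exactly for $r\ge3$; this, not normalizability, is the reason for $N\ge3$. Its fidelity is therefore $(N-2)/(M-2)$, and by \cref{eq:canonical-cloner-exact-trace-error} its trace error is exactly $K/(M-2)$.

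Similarly, for $K>4$ the bound $\sqrt{1-f^\star_{\mathrm{sq},1}}$ must come from the fidelity-optimal channel, which has constant fidelity on the orbit, not from the affine cloner.
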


At $K=4$, the two expressions in \cref{eq:ab-one-mode-optimum} agree and give $N/(N+2)$.
For one mode and $K>4$, the channel in \cref{eq:ab-negative-channel}, which uses an auxiliary environment that is traced out, achieves strictly higher global fidelity than the affine cloner.
The exact minimum trace-distance error $\epsilon_{\mathrm{sq},1}^\star(N,M)$ remains open; \cref{eq:ab-one-mode-trace-bracket} gives lower and upper bounds.

For several modes, we obtain an upper bound on the optimal global fidelity by restricting the input family to tensor products of independently chosen one-mode squeezed vacua.
The resulting bound, \cref{eq:ab-centered-upper}, applies to arbitrary channels, including those acting jointly on all modes.
For $1\leq K\leq4$, the affine cloner matches the corresponding lower bound on infidelity to leading order, $nK^2/(8N)$, as $n/N\to0$ (\cref{eq:ab-centered-optimal-asymptotic}).

\begin{theorem}[Cloning centered pure Gaussian states]
\label{prop:ab-affine}
\label{prop:ab-centered-fidelity-upper}
\label{prop:ab-centered-uniform-optimum}
\label{cor:ab-centered-sample-complexity}
Let $n\geq1$ and $M>N\geq1$ be integers.
When $N\geq n$, the affine cloner defined above has constant global fidelity and trace error
\begin{equation}
f_{\mathrm{aff},n}(N,M)=\frac{\Gamma_n((N+M)/4)^2}{\Gamma_n(N/2)\Gamma_n(M/2)},
\qquad\epsilon_{\mathrm{aff},n}=\sqrt{1-f_{\mathrm{aff},n}},
\label{eq:ab-affine-fidelity}
\end{equation}
where $\Gamma_n(s)=\pi^{n(n-1)/4}\prod_{j=1}^n\Gamma(s-(j-1)/2)$.
For every $M>N\geq1$, the unrestricted optimum satisfies
\begin{equation}
f_{\mathrm{sq},n}^\star(N,M)\leq[f_{\mathrm{sq},1}^\star(N,M)]^n.
\label{eq:ab-centered-upper}
\end{equation}
When $1\leq K\leq4$ and $N\geq2n$,
\begin{equation}
1-f_{\mathrm{sq},n}^\star(N,N+K)=\frac{nK^2}{8N}+R_\star,
\qquad |R_\star|\leq\frac{n^2K^2}{N^2}.
\label{eq:ab-centered-optimal-asymptotic}
\end{equation}
For every $0<\epsilon<1$, the sample complexity of adding one copy satisfies
\begin{equation}
\begin{aligned}
N_{\mr{min},\cS_{\mathrm{sq}}}(\epsilon)
&\geq\frac{n}{8[-\log(1-\epsilon)]}-1,\\
N_{\mr{min},\cS_{\mathrm{sq}}}(\epsilon)
&\leq\min\left\{\left\lceil2n-1+\frac{n(n+1)}{2\epsilon}\right\rceil,
\left\lceil\max\left\{2n,\frac{3n}{4\epsilon^2}\right\}\right\rceil\right\}.
\end{aligned}
\label{eq:ab-centered-sample-bounds}
\end{equation}
Thus $N_{\mr{min},\cS_{\mathrm{sq}}}(\epsilon)=\Theta_\epsilon(n)$ at each fixed $0<\epsilon<1$; uniformly for $0<\epsilon\leq1/2$, it is at least $n/(32\epsilon)$.
\end{theorem}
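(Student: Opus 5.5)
The proof splits into four parts: the affine-cloner formula \cref{eq:ab-affine-fidelity}, the product upper bound \cref{eq:ab-centered-upper}, the asymptotic optimum \cref{eq:ab-centered-optimal-asymptotic}, and the sample-complexity bounds \cref{eq:ab-centered-sample-bounds}.

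\textbf{Affine cloner.} The first step is to build a concrete model of the copy spans.

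1. Write $N$ copies of a centered Gaussian $\Omega_W$ in the position representation. The wavefunction is $\prod_{k}\exp(-q_k^{\mathsf T}Zq_k/2)$, with $q_k\in\bR^n$ and $\operatorname{Re}Z>0$.
2. Such a function depends on $(q_1,\dots,q_N)$ only through the Gram matrix $P=\sum_k q_kq_k^{\mathsf T}$.
3. For $N\geq n$, $P$ ranges over positive-definite matrices. So the $N$-copy span should be identified, via the $\mathrm{O}(N)$-invariant functions, with $L^2(\mathrm{Pos}_n,\,(\det P)^{(N-n-1)/2}dP)$.
4. On this space $S_n$ acts by $P\mapsto A P A^{\mathsf T}$ together with the multiplier $e^{i\Tr(BP)/2}$. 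The determinant weight is absorbed by a factor $(\det P)^{\cdot}$.
5. This gives \cref{lem:ab-common-copy-model}: the input and output models are unitarily equivalent, with $V_{\mathrm{aff}}$ being multiplication by $(\det P)^{-(M-N)/4}$ up to normalization.

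The fidelity then reduces to a Wishart-type integral:
\[
\langle0|^{\otimes M}V_{\mathrm{aff}}|0\rangle^{\otimes N}\propto\int_{\mathrm{Pos}_n}e^{-\Tr P}(\det P)^{(N+M)/4-(n+1)/2}dP .
\]
Normalizing with $\Gamma_n(N/2)$ and $\Gamma_n(M/2)$ yields the ratio in \cref{eq:ab-affine-fidelity}. Because $V_{\mathrm{aff}}$ is an isometry, its output is pure, so the right inequality of \cref{eq:ab-metric-comparison} holds with equality. This gives $\epsilon_{\mathrm{aff}}=\sqrt{1-f}$.

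\textbf{Upper bound \cref{eq:ab-centered-upper}.}

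1. Restrict the promise to product inputs $\bigotimes_j\Omega_{w_j}$ of one-mode squeezed vacua. This can only raise the optimal fidelity, so an upper bound for the product family suffices.
2. Use amenability of $S_1^{\times n}\subset S_n$ and the covariant reduction \cref{lem:ab-covariant-reduction} to pass to a channel covariant under the product group.
3. The fidelity functional for such a covariant dilation factorizes across modes. I would show this through a dual (Heisenberg-picture) effect bound in the style of the proof of \cref{thm:structured_optimal_cloning}, applied factorwise to tensor products of effects. Each factor is then bounded by $f^\star_{\mathrm{sq},1}$ from \cref{thm:ab-one-mode}.

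I expect this factorization to be the main obstacle. A general channel can entangle the modes, and fidelity is not multiplicative for arbitrary channels. My first attempt would be an SDP-duality argument: take the one-mode dual certificates from \cref{thm:ab-one-mode} and tensor them. A tensor product of dual-feasible points remains dual-feasible because the constraints are positivity of tensor-product operators.

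\textbf{Asymptotics \cref{eq:ab-centered-optimal-asymptotic}.} For $K\leq4$, the lower bound on $f^\star$ comes from the affine cloner, and the upper bound comes from the $n$-th power of the one-mode formula. I would expand the log-gamma ratios using Stirling's formula with an explicit remainder. Each one-mode factor gives $1-K^2/(8N)+O(1/N^2)$. The multivariate $\Gamma_n$ contributes the terms $j\leq n$ with shifted arguments $s-(j-1)/2$; the shifts produce corrections of relative size $O(n/N)$, which is why $N\geq2n$ is assumed. Bounding the remainders crudely gives $|R_\star|\leq n^2K^2/N^2$.

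\textbf{Sample complexity \cref{eq:ab-centered-sample-bounds}.}

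1. Lower bound: combine \cref{eq:ab-centered-upper} with the one-mode fidelity bound $f^\star_{\mathrm{sq},1}(N,N+1)\leq e^{-1/(8(N+1))}$ and the trace bound $\epsilon\geq1-f$. Taking logarithms gives $N+1\geq n/(8[-\log(1-\epsilon)])$.
2. First upper bound: use the canonical cloner of \cref{thm:structured_optimal_cloning}, whose infidelity is $1-d_N/d_{N+1}$. The formal degrees are computed as Wishart/Siegel integrals, valid from $N\geq 2n-1$, and the ratio is bounded by $n(n+1)/(2(N+1))$.
3. Second upper bound: from the affine cloner, $\epsilon_{\mathrm{aff}}^2=1-f_{\mathrm{aff}}\leq 3n/(4N)$ for $N\geq 2n$, again via log-gamma convexity.
4. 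Finally, for $\epsilon\leq1/2$, use $-\log(1-\epsilon)\leq 2\epsilon$ to obtain $N_{\min}\geq n/(32\epsilon)$.
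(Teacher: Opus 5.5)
\textbf{Genuine gap: the product upper bound.} Your affine-cloner computation is sound. Pushing the $\mathrm{O}(N)$-invariant functions forward to the Gram matrix gives the cone model of \cref{lem:ab-common-copy-model}, after multiplying by $(\det P)^{N/4}$ and setting $X=P/2$, and your Wishart integral reproduces \cref{eq:ab-affine-fidelity}. You still owe a density argument that squeezed vacua span all $\mathrm{O}(N)$-invariant functions; the paper gets this from cyclicity of a.e.\ nonzero vectors under the characters $e^{i\Tr(BX)}$.

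The gap is \cref{eq:ab-centered-upper}, on which your proofs of \cref{eq:ab-centered-optimal-asymptotic} and of the sample-complexity lower bound also rest. The fidelity of a covariant dilation does not factorize across modes, and neither route you sketch goes through.
\begin{itemize}
\item A Werner-style effect bound applied factorwise is, for the square-integrable affine group, the orthogonality relation with weight $D=x^{-1}$ from the $K>4$ branch of \cref{thm:ab-one-mode}. It does extend to products, but it yields only $(N/(M-2))^n$. This is $\geq1$ for $K\le2$, so it is vacuous exactly at $K=1$.
\item The SDP route has nothing to run on. $S_1$ is noncompact, so the Haar average of the fidelity over the orbit is infinite, and there is no average-fidelity SDP of the phase-state type; the covariant reduction goes through an invariant mean instead. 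Moreover, \cref{thm:ab-one-mode} supplies no dual certificate to tensor: its $K\le4$ branch splits the environment representation into frequency sectors and applies weighted Schur tests.
\end{itemize}

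\textbf{The missing idea.} Use the one-mode optimum for intertwiners with an \emph{arbitrary} environment, \cref{eq:ab-one-mode-dilation-bound}, and peel off one mode at a time.
\begin{itemize}
\item Restrict to the product orbit of $S_1^{\times n}$, compress the output to $\cK^{\otimes n}$, and apply \cref{lem:ab-covariant-reduction}. This gives a contraction $V$ intertwining the product representation with itself tensored with some $W$ on $\cE$.
\item Let $p=N/2$, $q=M/2$, $f_1\defeq f^\star_{\mathrm{sq},1}(N,M)$, and let $v$ range over the last $n-1$ input modes. Set $V_1v\defeq(\bra{\xi_q}\otimes\bI)V(\xi_p\otimes v)$.
\item For fixed $v$, the map $f\mapsto V(f\otimes v)$ intertwines the first factor, with the other output modes and $\cE$ serving as its environment. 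Hence $\|V_1v\|^2\le f_1\|V\|^2\|v\|^2$. This needs only the channel optimum: completing the normalized contraction by a failure branch gives a channel whose worst-case fidelity is at least its constant covariant overlap.
\item $V_1$ still intertwines the remaining $n-1$ factors, so induction gives $\|(\bra{\xi_q}^{\otimes n}\otimes\bI)V\xi_p^{\otimes n}\|^2\le f_1^n$.
\end{itemize}

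\textbf{Canonical estimate.} Your bound $1-f\le n(n+1)/(2(N+1))$ is false. At $n=1$ the canonical fidelity is $(N-2)/(N-1)$, so $1-f=1/(N-1)>1/(N+1)$. Also, square-integrability requires $N\ge2n+1$ by Hua's integral, not $N\ge2n-1$. The correct bound is $1-f\le\sum_{i\le j}(N+1-i-j)^{-1}\le n(n+1)/[2(N+1-2n)]$, and this is exactly what produces the shift $2n-1$ in \cref{eq:ab-centered-sample-bounds}.
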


The two constructions favor different regimes.
At fixed trace error, the affine cloner needs only linearly many input copies in the number of modes, matching the lower bound; the canonical construction needs quadratically many.
The canonical guarantee has a better dependence on accuracy: halving a small allowed trace error roughly doubles its input requirement, whereas the affine bound grows by a factor of four.

The canonical cloner is optimal when we require covariance under every squeezing and mode-mixing transformation.
The affine cloner only needs to respect the smaller subgroup defined above.
This leaves more freedom to approximate the target, while transitivity still ensures the same fidelity for every centered state.
The resulting improvement shows that imposing the full symmetry can increase the cloning cost.
For squeezed vacua, the dimension law in \cref{sec:dimension_law} therefore describes the canonical construction, whose performance can be improved by relaxing this covariance requirement.
The quantitative comparisons are given in \cref{app:bosonic-cloning-proofs}.

Finally, suppose that both squeezing and displacement are unknown.
Cloning this full family is at least as hard as cloning coherent states, which gives a lower bound on the required number of copies.
For an upper bound, we use the canonical cloner for the Jacobi group, which combines displacements with Gaussian unitaries without displacement.
These bounds leave a gap between linear and quadratic growth in the number of modes at fixed trace error.

\begin{theorem}[Cloning all pure Gaussian states]
\label{cor:ab-gaussian-sample-complexity}
For integers $n\geq1$ and $M>N\geq1$, $f_{\mathrm{bg}}^\star(N,M)\leq(N/M)^n$.
When $N\geq2n+2$, the canonical channel \cref{eq:optimal_cloner} is optimal among Jacobi-covariant channels, with
\begin{align}
f_{\mathrm{can}}^{\mathrm{bg}}(N,M)
&=\left(\frac NM\right)^nf_{\mathrm{can}}^{\mathrm{sq}}(N-1,M-1)
\label{eq:ab-full-factorization}\\
&=\left(\frac NM\right)^n\prod_{1\leq i\leq j\leq n}\frac{N-1-i-j}{M-1-i-j},
\qquad\epsilon_{\mathrm{can}}^{\mathrm{bg}}=1-f_{\mathrm{can}}^{\mathrm{bg}}.
\label{eq:ab-full-canonical}
\end{align}
Here $f_{\mathrm{can}}^{\mathrm{sq}}$ is the centered canonical fidelity given in \cref{eq:ab-centered-canonical}.
For every $0<\epsilon<1$,
\begin{equation}
\frac{n}{-\log(1-\epsilon)}-\frac12
\leq N_{\mr{min},\Sbg}(\epsilon)
\leq\left\lceil\frac{n(n+3)}{2\epsilon}+2n\right\rceil.
\label{eq:ab-gaussian-sample-complexity}
\end{equation}
As $\epsilon\to0$, the canonical channel's sample complexity is $N_{\mathrm{can},\mathrm{bg}}(\epsilon)=n(n+3)[1+O(\epsilon)]/(2\epsilon)$, uniformly in $n$.
\end{theorem}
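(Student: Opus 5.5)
The plan has three parts: an upper bound on $f^\star_{\mathrm{bg}}$ by reduction to coherent states, the canonical fidelity via \cref{thm:structured_optimal_cloning} plus a factorization of formal degrees, and the sample-complexity bounds from these two formulas.

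\textbf{Upper bound on $f^\star_{\mathrm{bg}}$.} Coherent states form a subfamily, $\cS_{\mathrm{coh}}\subset\Sbg$. Any channel on $\cH^{\mathrm{bg}}_N$ therefore restricts to a channel on $\cH^{\mathrm{coh}}_N$, and its worst-case fidelity can only increase when the infimum runs over the smaller family. Hence $f^\star_{\mathrm{bg}}(N,M)\le f^\star_{\mathrm{coh}}(N,M)=(N/M)^n$ by \cref{prop:ab-coherent}. The lower bound in \cref{eq:ab-gaussian-sample-complexity} then follows from \cref{eq:trace-distance-fidelity-lower-bound}. Any $\epsilon$-cloner with $M=N+1$ has $\epsilon\ge 1-(N/(N+1))^n$, so
\[
-\log(1-\epsilon)\ge n\log(1+1/N)\ge \frac{n}{N+1/2},
\]
using $\log(1+x)\ge 2x/(2+x)$. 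This rearranges to $N\ge n/(-\log(1-\epsilon))-1/2$.

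\textbf{Canonical fidelity.} I would first verify that $\Sbg$ is an $L$-resolving orbit of the Jacobi group $\mathrm{Mp}(2n,\bR)\ltimes H_n$ with $L=2n+2$. Irreducibility on each copy span should follow as in the centered case: the Jacobi Lie algebra contains the collective annihilators, and the $N$-fold vacuum is the unique vector of zero total photon number in the span. Square-integrability reduces to finiteness of the integral \cref{eq:finite_degree}. Once this holds, \cref{thm:structured_optimal_cloning} gives both Jacobi-covariant optimality and the fidelity $d_N/d_M$.

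To compute $d_N$, I would use the overlap factorization
\[
\langle \alpha,W|0\rangle=\langle W|0\rangle\, e^{-\text{quadratic}(\alpha;W)},
\]
which follows from the fact that $|\langle \alpha,W|0\rangle|^{2N}$ is a Gaussian in $\alpha$. Integrating over $\alpha$ with the Haar measure of the Heisenberg part yields a factor $\propto N^{-n}\det(\cdots)^{-1/2}$. The $W$-dependent determinant exactly shifts the power of $|\langle W|0\rangle|^2$ from $2N$ to $2(N-1)$, up to a $W$-independent constant. This gives $d^{\mathrm{bg}}_N\propto N^{n}\,d^{\mathrm{sq}}_{N-1}$, and hence \cref{eq:ab-full-factorization}. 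The product formula \cref{eq:ab-full-canonical} then comes from the centered formal degree on the Siegel disk. That degree is a Selberg/Hua-type integral $\int_{\|W\|<1}\det(1-W^\dagger W)^{N/2-(n+1)}dW$, whose value is a ratio of $\Gamma_n$ functions; its ratio telescopes to $\prod_{i\le j}$. Convergence of this integral requires $N-1>2n+1$, which is exactly the threshold $N\ge 2n+2$. I expect this determinant bookkeeping to be the main obstacle: the exponent shifts have to be matched precisely with the Haar measure normalization of the Jacobi group, and verifying irreducibility on the full copy span rather than only the centered part needs care.

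\textbf{Sample-complexity upper bound and asymptotics.} For the upper bound in \cref{eq:ab-gaussian-sample-complexity}, set $M=N+1$. Applying $\prod(1-x_t)\ge 1-\sum x_t$ to the $n$ factors $N/(N+1)$ and to the $n(n+1)/2$ factors $(N-1-i-j)/(N-i-j)$ gives
\[
\epsilon_N\le \frac{n}{N+1}+\sum_{i\le j}\frac{1}{N-i-j}\le \frac{n(n+3)/2}{N-2n},
\]
since $i+j\le 2n$. This is at most $\epsilon$ once $N\ge n(n+3)/(2\epsilon)+2n$, which also satisfies $N\ge 2n+2$. The asymptotic statement $N_{\mathrm{can},\mathrm{bg}}=n(n+3)[1+O(\epsilon)]/(2\epsilon)$ follows by pairing this with the matching lower estimate via $1-x\le e^{-x}$, exactly as in the proof of \cref{prop:free_fermion_+_orbit}.
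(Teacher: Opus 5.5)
Your proposal is correct and follows the paper's proof. You restrict to coherent states for $f^\star_{\mathrm{bg}}\le(N/M)^n$ and for the lower bound, apply \cref{thm:structured_optimal_cloning} with $d_{\mathrm{bg},r}=(r/\pi)^n d_{\mathrm{sq},r-1}$ from Hua's integral at $s=r-1$, and use $1-f\le\frac{n(n+3)/2}{N-2n}$ for the upper bound. The only variation is in the $\alpha$-integral: you evaluate it as a Gaussian, which requires $\det Q_W=\Delta(W)$, while the paper concentrates the displacement into one copy with the passive network $C_r$ and uses the coherent-state resolution of identity, so no determinant identity is needed. One small slip: Hua's integral at $s=N-1$ converges iff $N-1>2n$, not $N-1>2n+1$, and it is $N-1>2n$ that gives exactly the threshold $N\ge2n+2$.
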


%% file: DimensionSaddlePoint.tex
\section{Dimension scaling of the cloning fidelity from saddle-point approximation}
\label{sec:dimension_law}
 
In \cref{sec:dimension_law_overview} we observed that for the fermionic and bosonic families the $N\rarr N+1$ cloning fidelity $f_N=d_N/d_{N+1}$ follows the pattern $1-f_N\simeq\dim\cS/(2N)$, and we sketched a saddle-point explanation. In this section, we make the explanation rigorous. We formulate geometric conditions on the family $\cS$, under which the pattern holds with explicit error bounds (\cref{thm:uniform_dimension_law_fN}), and we verify these conditions for the Gaussian families (\cref{sec:verifications}). The argument does not use the representation theoretic meaning of $d_N$ as the formal degree; instead, we only view $1/d_N$ as the $N$-th moment of the overlap with a reference state, \cref{eq:finite_degree}.

The logic of this section flows as follows. Given a family $\cS$ with a measure $\nu$ and a reference state $\psi_0\in\cS$, one chooses coordinates around $\psi_0$, computes the overlap with $\psi_0$ and the density of $\nu$ in these coordinates, and verifies that the assumptions \ref{assum:A0}--\ref{assum:A4} of \cref{sec:dimension_law_assumptions} hold with some constant $\kappa$. \Cref{thm:uniform_dimension_law_fN} then gives $1-f_N=\dim\cS/(2N)$ with an explicit error bound, and \cref{cor:uniform_dimension_law_complexity} translates this into the cloning sample complexity. For the families of this work, in \cref{sec:verifications} we establish the existence of coordinates in which the assumptions hold with $\kappa=8$.

\subsection{Setup and the Gaussian approximation}
\label{sec:dimension_law_setup}
 
Throughout this section, $\cS$ is a family of pure states which is a smooth manifold of real dimension $m$, equipped with a measure $\nu$. We fix a reference state $\psi_0\in\cS$ and denote
\begin{align}
    F(\psi)\defeq\Tr(\psi\,\psi_0)=|\braket{\psi}{\psi_0}|^2\in[0,1],
    \qquad h(\psi)\defeq-\log F(\psi)\in[0,\infty].
\end{align}
We consider the moments $I_t\defeq\int_\cS F^t\,d\nu(\psi)$, assumed to be finite for all $t\geq L$, and study the ratio of consecutive moments $f_N\defeq I_{N+1}/I_N$ for integer $N\geq L$. This is the object of interest for cloning: if $\cS$ is an $L$-resolving orbit of a group $G$ and $\nu$ is the invariant measure on $\cS$ induced by the Haar measure of $G$ (\cref{rmk:resolving_orbit}), then $I_N=1/d_N$ by \cref{eq:finite_degree}, so that $f_N=d_N/d_{N+1}$ is the fidelity of the $N\rarr N+1$ cloner of \cref{thm:structured_optimal_cloning}. However, the group structure does not play a role in the analysis below, which concerns the large-$N$ behaviour of $f_N$ for any $\cS$ and $\nu$ that meet the assumptions spelled out in \cref{sec:dimension_law_assumptions}.

Since $I_N-I_{N+1}=\int_\cS(1-F)F^N\,d\nu(\psi)$, the infidelity is an average over the family,
\begin{align}
\label{eq:tilted_expectation}
    1-f_N\;=\;\Ex_{p_N}\big[1-F\big]\;=\;\Ex_{p_N}\big[1-e^{-h}\big],
    \qquad
    p_N(\dd\psi)\defeq\f{F(\psi)^N\,d\nu(\psi)}{I_N}.
\end{align}
Note that \cref{eq:tilted_expectation} is a ratio: any constant factor in $\nu$ cancels.

\emph{The Gaussian approximation.}
Suppose for a moment that there are coordinates $\vec z\in\bR^m$ on $\cS$ around $\psi_0$ in which $h=|\vec z|^2$ exactly and $\nu=\dd^m\vec z$, and that the region far from $\psi_0$ can be ignored altogether --- this is the saddle-point picture of \cref{sec:dimension_law_overview}. Then $p_N$ is $\propto e^{-N|\vec z|^2}$, and a Gaussian integral gives
\begin{align}
\label{eq:gaussian_moment}
    \Ex_{p_N}[h]=\Ex_{p_N}\big[|\vec z|^2\big]=\f{m}{2N}\,.
\end{align}
Combined with $1-e^{-h}=h-O(h^2)$ and $\Ex_{p_N}[h^2]=O(m^2/N^2)$, \cref{eq:tilted_expectation} gives $1-f_N=\f{m}{2N}\big(1+O(m/N)\big)$.
 
The claim of this section is that for large $N$ the actual $p_N$ is effectively Gaussian, in the sense that the computation in \cref{eq:gaussian_moment} goes through with controlled errors. The proof of \cref{thm:uniform_dimension_law_fN} is the computation above, carried out for the actual $h$ and $\nu$. Three contributions were neglected: the anharmonicity of $h$, the variation of the volume density, and the region far from $\psi_0$. In order to control these, we use the assumptions below.

\subsection{Assumptions and the main statement}
\label{sec:dimension_law_assumptions}
 
We use a single regularity constant $\kappa\geq1$ in all assumptions (for the families of this work, $\kappa=8$ will do) and define the radius $R$ of the chart and the scale $u_0$ of the overlap gap as
\begin{align}
\label{eq:scales}
    R^2\defeq\f{1}{8\kappa},\qquad u_0\defeq\f{R^2}{4}=\f{1}{32\kappa}.
\end{align}
\begin{enumerate}[label=\textup{(A\arabic*)},start=0,itemsep=2pt,leftmargin=3em]
\item\label{assum:A0} \emph{(chart)} There is a neighborhood $U\subseteq\cS$ of $\psi_0$ and a chart $\vec z$ mapping $U$ bijectively onto the ball $B_R\defeq\{\vec z\in\bR^m:|\vec z|<R\}$, with $\vec z(\psi_0)=\vec 0$, in which $d\nu(\psi_{\vec z})=w(\vec z)\,\dd^m\vec z$ for a smooth positive density $w$, and $h$ is smooth. We write $r\defeq|\vec z|$.
\item\label{assum:A1} \emph{(anharmonicity)} On $B_R$,
\begin{align}\label{eq:A1}
    \big|h-r^2\big|\;\leq\;\kappa\,r^4
    \qquad\text{and}\qquad
    \big|\vec z\cdot\nabla h-2r^2\big|\;\leq\;\kappa\,r^4 .
\end{align}
\item\label{assum:A2} \emph{(density drift)} On $B_R$, $\big|\vec z\cdot\nabla\log w\big|\leq\kappa\,m\,r^2$.
\item\label{assum:A3} \emph{(separation)} $F\leq e^{-u_0}$ on $\cS\setminus U$.
\item\label{assum:A4} \emph{(no excess mass)} There is a real $L'\geq L$ with
\begin{align}\label{eq:A4}
    u_0L'\;\leq\;\kappa\,m
    \qquad\text{and}\qquad
    I_{L'}\;\leq\;w(\vec 0)\,\pi^{m/2}\lrp{\f{2u_0}{m}}^{m/2}e^{\,\kappa m}.
\end{align}
\end{enumerate}

Let us comment on the meaning of these conditions. \ref{assum:A1} says that in the chart the overlap is the Gaussian $e^{-r^2}$ at the leading order --- the coordinates are normalized as in \cref{sec:dimension_law_overview} --- and that the anharmonic corrections, both to $h$ and to its radial derivative $\vec z\cdot\nabla h=\f{\dd}{\dd t}h(t\vec z)\big|_{t=1}$, are of relative size at most $\kappa r^2$. \ref{assum:A2} controls the volume element, through the first radial derivative of $\log w$: integrating along the ray from $\vec 0$ to $\vec z$, it says that the density changes by a factor at most $e^{\pm\kappa mr^2/2}$ within the chart. Note that the rate is allowed to grow linearly with $m$; this allowance turns out to be harmless and is needed by our families (in \cref{sec:verifications}, $\log w$ is a multiple of $h$ with a coefficient $O(m)$).
The last two assumptions concern the region away from $\psi_0$. On the chart, \ref{assum:A1} and \cref{eq:scales} give $h\geq r^2(1-\kappa r^2)\geq\f78r^2$, so the overlap on the outer part $r^2\geq R^2/2$ of the chart is at most $e^{-7u_0/4}$; \ref{assum:A3} demands that the rest of the family is separated from $\psi_0$ at least as well, by an overlap gap $e^{-u_0}$. Thanks to this gap, $F^N\le e^{-u_0N}$ on the far region, so its contribution to $I_N$ decays exponentially in $N$ — provided its total weight is under control, which is the meaning of \ref{assum:A4}. The right-hand side of the second inequality in \cref{eq:A4} is, up to the factor $e^{\kappa m}$, the Gaussian integral $\int_{\bR^m}w(\vec 0)e^{-t|\vec z|^2}\dd^m\vec z$ at the exponent $t=m/(2u_0)$. This is the Gaussian with $\Ex[|\vec z|^2]=u_0$, i.e., the one concentrated inside the chart on the scale where the Gaussian approximation is used. \ref{assum:A4} thus asks that the whole family, weighted by $F^{L'}$, carries at most $e^{\kappa m}$ times the weight of this Gaussian. The reserve exponent $L'$ is needed for non-compact families, where $\nu(\cS)=\infty$: the tail estimate in the proof spends $N-L'$ powers of $F$ on the exponential suppression and keeps $L'$ of them to make the remaining integral converge. The first inequality in \cref{eq:A4} ensures that the retained powers cost at most a factor $e^{u_0L'}\leq e^{\kappa m}$.

\begin{theorem}[Dimension law]
\label{thm:uniform_dimension_law_fN}
Assume \ref{assum:A0}--\ref{assum:A4} with constant $\kappa$. Then, for every integer $N\geq N_0\defeq 4096\,\kappa^2 m$,
\begin{align}\label{eq:main_sandwich}
    \abs{\;\f{2N}{m}\lrp{1-f_N}\;-\;1\;}\;\leq\;\f{8\,\kappa m}{N}\,.
\end{align}
\end{theorem}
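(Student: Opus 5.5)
The plan is to turn the heuristic of \cref{eq:gaussian_moment} into an exact identity. Instead of evaluating Gaussian integrals, I will use a virial (radial integration-by-parts) identity on the chart and bound every error term by moments of $r^2$ under $p_N$. Write $1-f_N=\Ex_{p_N}[1-e^{-h}]$ as in \cref{eq:tilted_expectation}. Split $\cS$ into the inner ball $B_\rho$, with $\rho^2=R^2/2$, and its complement; the complement is the outer shell of the chart together with $\cS\setminus U$.

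\emph{Step 1 (virial identity).} On $B_\rho$ apply the divergence theorem to the vector field $\vec z\,F^N w=\vec z\,e^{-Nh}w$:
\begin{align*}
\int_{B_\rho}\big(m-N\,\vec z\cdot\nabla h+\vec z\cdot\nabla\log w\big)e^{-Nh}w\,\dd^m\vec z=\rho\int_{\partial B_\rho}e^{-Nh}w\,\dd\sigma\;\geq0 .
\end{align*}
By \ref{assum:A1}, $\vec z\cdot\nabla h=2h+O(\kappa r^4)$, and by \ref{assum:A2}, $|\vec z\cdot\nabla\log w|\leq\kappa m r^2$. Dividing by $I_N$ gives
\begin{align*}
2N\,\Ex_{p_N}[h\,\mathbf 1_{B_\rho}]=m\,p_N(B_\rho)+O\big(N\kappa\,\Ex[r^4\mathbf 1_{B_\rho}]+\kappa m\,\Ex[r^2\mathbf 1_{B_\rho}]\big)-\text{(boundary term)}.
\end{align*}
Since $1-e^{-h}=h+O(h^2)$ and $h\leq 2r^2$ on the chart, it then remains to show four things: $\Ex_{p_N}[r^2\mathbf 1_{B_\rho}]\lesssim m/N$, $\Ex_{p_N}[r^4\mathbf 1_{B_\rho}]\lesssim m^2/N^2$, and that both the boundary term and the $p_N$-mass outside $B_\rho$ are at most $e^{-cN}$ times harmless factors. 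Given these, every error term contributes $O(\kappa m^2/N^2)$ to $1-f_N$, which is exactly the claimed relative error $8\kappa m/N$.

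\emph{Step 2 (moments inside the ball).} On $B_\rho$ we have $h\geq\tfrac78 r^2$, and \ref{assum:A2} gives $w(\vec z)\in w(\vec 0)e^{\pm\kappa m r^2/2}$. The ball density is therefore sandwiched between Gaussians $e^{-(7N/8-\kappa m/2)r^2}$ and $e^{-(N+\kappa N r^2-\ldots)r^2}$, and the factor $\kappa m$ is negligible against $N$ once $N\geq N_0$. Comparing the unnormalized moments $\int r^{2k}e^{-Nh}w$ to the lower bound $I_N\geq\int_{B_\rho}e^{-Nh}w\gtrsim w(\vec 0)(\pi/N)^{m/2}e^{-O(\kappa m^2/N)}$ yields $\Ex[r^{2k}\mathbf 1_{B_\rho}]\leq C_k(m/N)^k$ with explicit constants. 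An alternative is to bootstrap: apply the same divergence identity with the field $r^2\vec z$ to get the fourth moment from the second.

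\emph{Step 3 (far region and boundary).} This is the main obstacle. Outside $B_\rho$, the bound $F\leq e^{-u_0}$ holds on the outer shell (from $h\geq\tfrac78 r^2$) and on $\cS\setminus U$ (by \ref{assum:A3}). Hence
\begin{align*}
\int_{\cS\setminus B_\rho}F^N\,d\nu\leq e^{-u_0(N-L')}I_{L'}\leq w(\vec 0)\,\pi^{m/2}(2u_0/m)^{m/2}e^{2\kappa m-u_0N}
\end{align*}
by \ref{assum:A4}. Dividing by the Gaussian lower bound on $I_N$ leaves a ratio $(2u_0N/m)^{m/2}e^{2\kappa m-u_0N}$. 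I must check that this is at most $(m/N)^2$ times a small constant once $N\geq4096\kappa^2m$. Using $u_0N\geq128\kappa m$, the exponent dominates the polynomial prefactor through $\log(x)\leq x/e$ applied to $x=2u_0N/m$. The boundary integral on $\partial B_\rho$ is handled the same way, because $e^{-Nh}\leq e^{-7Nu_0/4\cdot2}$ there and the surface area is bounded by the same Gaussian comparison.

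Collecting Steps 1--3 and tracking constants, with the $8\kappa$ budget split among the anharmonic, density, $h^2$, and tail terms, gives \cref{eq:main_sandwich}. I expect the constant bookkeeping in Step 3 to be the delicate part, and this is what fixes the threshold $N_0=4096\kappa^2m$.
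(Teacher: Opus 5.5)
Your architecture matches the paper's: a virial (divergence) identity on the chart, plus a tail estimate from \ref{assum:A3}--\ref{assum:A4}. The paper uses a piecewise-linear radial cutoff $\chi$ instead of your sharp ball. That trades your surface flux for a bulk term $\int r^{2k}(\vec z\cdot\nabla\chi)q_N$ of fixed sign, and either choice works.

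The genuine gap is in your primary route for Step~2. You bound $\Ex_{p_N}[r^{2k}\mathbf 1_{B_\rho}]$ by comparing a Gaussian upper envelope of the numerator with a Gaussian lower bound on $I_N$. This does not give constants $C_k$ independent of $m$.
\begin{itemize}
\item With your envelope $e^{-(7N/8-\kappa m/2)r^2}$, the ratio of normalizations is at least $(8/7)^{m/2}$.
\item Even the sharpest pointwise envelopes $e^{-Nr^2(1\mp\kappa r^2)\pm\kappa m r^2/2}$ differ on the typical shell $r^2\approx m/(2N)$ by a factor $e^{\Theta(\kappa m^2/N)}$.
\item At $N\approx4096\kappa^2m$ that factor is $e^{\Theta(m/\kappa)}$, which is unbounded in $m$.
\end{itemize}
The theorem claims relative error $8\kappa m/N$ uniformly in $m$, with a threshold only linear in $m$. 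In that regime $p_N$ is genuinely not close to the Gaussian in total variation, so normalization comparisons cannot close the argument. Such exponential-in-$m$ losses are harmless only in Step~3, where $e^{-u_0N}$ swallows them.

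The fix is the bootstrap you list as an alternative. It must supply the second moment as well, not just the fourth from the second. In your Step~1 identity, drop the nonnegative boundary flux and use $\vec z\cdot\nabla h\geq2r^2-\kappa r^4\geq\frac{15}{8}r^2$ and $\vec z\cdot\nabla\log w\leq\kappa m r^2$. Do the same with the field $r^2\vec z$, whose divergence is $(m+2)r^2$ and whose flux is again nonnegative. This gives
\begin{align*}
\Big(\tfrac{15}{8}N-\kappa m\Big)\int_{B_\rho}r^{2k+2}e^{-Nh}w\,\dd^m\vec z\;\leq\;(m+2k)\int_{B_\rho}r^{2k}e^{-Nh}w\,\dd^m\vec z,\qquad k=0,1.
\end{align*}
With $\kappa m\leq N/256$, this yields $\Ex[r^2\mid B_\rho]\leq\frac{4m}{7N}$ and $\Ex[r^4\mid B_\rho]\leq\frac{48m^2}{49N^2}$. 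Only ratios of integrals over the same region appear, so no normalizing constants are compared and nothing exponential in $m$ is lost. This is exactly how the paper gets \cref{eq:moment_bounds}.

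With that replacement, the rest of your plan goes through as in the paper. The outside mass and surface term need only be at most $1/N^2$ relative to the retained weight. Your Step~3 estimate delivers this once $u_0N\geq8\kappa m+8\log(2N)$, which follows from $N\geq N_0$.
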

 
The numerical constants are loose and perhaps can be optimized. The order $m/N$ of the correction, however, cannot be improved, and its coefficient is not universal --- this is where the curvature of $\cS$ enters. For instance, for unstructured states in $\cH=\bC^d$ ($m=2(d-1)$) the exact fidelity $f_N=\f{N+1}{N+d}$ of Werner's cloner \cite{werner1998optimal} gives $\f{2N}{m}(1-f_N)-1=-\f{d}{N+d}\simeq-\f{m}{2N}$.

For cloning, the theorem translates into the following.
 
\begin{corollary}[Dimension law for the cloning sample complexity]
\label{cor:uniform_dimension_law_complexity}
Let $\cS$ be an $L$-resolving orbit whose invariant measure satisfies \ref{assum:A0}--\ref{assum:A4} with constant $\kappa$, and let $T_N$ be the $N\rarr N+1$ cloner of \cref{thm:structured_optimal_cloning}. There is $\epsilon_0=\Omega(\kappa^{-2})$ such that for all $\epsilon\leq\epsilon_0$, $T_N$ is an $\epsilon$-approximate cloner (\cref{def:approximate_cloning}) if and only if $N\geq N_\epsilon$, where
\begin{align}\label{eq:Nmin}
    N_\epsilon\;=\;\f{m}{2\epsilon}\lrp{1+O(\kappa\epsilon)}.
\end{align}
Consequently $N_{\mr{min},\cS}(\epsilon)\leq N_\epsilon$, with equality whenever $T_N$ is an optimal cloner --- in particular, whenever $\cH$ is finite-dimensional.
\end{corollary}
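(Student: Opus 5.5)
The plan is to derive the corollary directly from \cref{thm:uniform_dimension_law_fN} together with the exact trace error of the canonical cloner. By \cref{thm:structured_optimal_cloning}, specifically \cref{eq:canonical-cloner-exact-trace-error}, the trace error of $T_N$ is exactly $\epsilon_N\defeq 1-f_N$, where $f_N=d_N/d_{N+1}=I_{N+1}/I_N$. The statement "$T_N$ is $\epsilon$-approximate iff $N\geq N_\epsilon$" therefore amounts to two facts. First, $\epsilon_N$ is nonincreasing in $N$, at least on the relevant range. Second, the threshold $N_\epsilon\defeq\min\{N:\epsilon_N\leq\epsilon\}$ obeys \cref{eq:Nmin}.

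\textbf{Monotonicity.} This follows from log-convexity of moments. By Cauchy--Schwarz applied to $F^{N/2}\cdot F^{(N+2)/2}$,
\begin{align}
I_{N+1}^2\leq I_N I_{N+2},
\end{align}
so $f_N=I_{N+1}/I_N\leq I_{N+2}/I_{N+1}=f_{N+1}$ for all $N\geq L$. Hence $\epsilon_N$ is nonincreasing, and the set $\{N:\epsilon_N\leq\epsilon\}$ is an up-ray $[N_\epsilon,\infty)$. This gives the "iff" without needing the asymptotics.

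\textbf{Locating $N_\epsilon$.} For $N\geq N_0=4096\kappa^2m$, \cref{eq:main_sandwich} gives
\begin{align}
\frac{m}{2N}\Big(1-\frac{8\kappa m}{N}\Big)\leq\epsilon_N\leq\frac{m}{2N}\Big(1+\frac{8\kappa m}{N}\Big).
\end{align}
I set $\epsilon_0=c\,\kappa^{-2}$ with a small absolute constant $c$, chosen so that for $\epsilon\leq\epsilon_0$ the candidate $N\approx m/(2\epsilon)$ exceeds $N_0$ comfortably; for instance $m/(4\epsilon)\geq N_0$ requires $\epsilon\leq 1/(16384\,\kappa^2)$.

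For the upper bound on $N_\epsilon$, take $N_+=\lceil\frac{m}{2\epsilon}(1+C\kappa\epsilon)\rceil$. Since $8\kappa m/N_+\leq 16\kappa\epsilon$, we get $\epsilon_{N_+}\leq\epsilon\,\frac{1+16\kappa\epsilon}{1+C\kappa\epsilon}\leq\epsilon$ once $C\geq16$. For the lower bound, take $N_-=\lfloor\frac{m}{2\epsilon}(1-C\kappa\epsilon)\rfloor$, which is still at least $N_0$ by the choice of $\epsilon_0$. Then $\epsilon_{N_-}\geq\epsilon\,\frac{1-16\kappa\epsilon}{1-C\kappa\epsilon}>\epsilon$ for $C>16$, with a little slack absorbing the floor. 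Monotonicity then gives $\epsilon_N>\epsilon$ for every $N\leq N_-$, including $N<N_0$, so $N_-<N_\epsilon\leq N_+$. This yields \cref{eq:Nmin}.

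The one obstacle is the rounding. The integer parts contribute a relative error of order $\epsilon/m\leq\kappa\epsilon$, which is harmless. One must also check that $N_-\geq N_0$ and $N_-\geq L$ so that the theorem applies. This holds because \ref{assum:A4} gives $L\leq L'\leq\kappa m/u_0=32\kappa^2m\leq N_0$, which can be absorbed into the choice of $\epsilon_0$.

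\textbf{Consequences for $N_{\mr{min},\cS}$.} The bound $N_{\mr{min},\cS}(\epsilon)\leq N_\epsilon$ is immediate, since $T_{N_\epsilon}$ is an $\epsilon$-approximate $N_\epsilon\to N_\epsilon+1$ cloner. When $T_N$ is optimal for every $N$, no channel at $N<N_\epsilon$ achieves error $\epsilon$, because its error is at least $\epsilon_N>\epsilon$; hence equality holds. For finite-dimensional $\cH$, every orbit's span is finite-dimensional, and the theorem's optimality clause covers the compact case (amenable $G$), which is the setting the corollary has in mind.
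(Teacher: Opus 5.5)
Your proposal is correct and follows the paper's proof essentially step for step. Both use the exact trace error $1-f_N$ of the canonical cloner, monotonicity of $f_N$ from $I_{N+1}^2\le I_NI_{N+2}$, and inversion of the sandwich \cref{eq:main_sandwich} with $\epsilon_0=\Theta(\kappa^{-2})$ chosen so that the candidate thresholds exceed $N_0$.

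The only difference is cosmetic. You rule out every $N\le N_-$ directly by monotonicity from $N_-$ downward. The paper instead splits into $N_0\le N<N^-$ (where the lower bound is decreasing) and $N<N_0$ (monotonicity plus $1-f_{N_0}\ge 2\epsilon_0$).

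Your finite-dimensional remark can be made rigorous in one line. A finite-dimensional square-integrable irreducible representation forces finite Haar volume, so $G$ is compact and hence amenable.
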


The proofs of \cref{thm:uniform_dimension_law_fN,cor:uniform_dimension_law_complexity} are collected in \cref{app:dimension-law-proofs}.
The verification of assumptions \ref{assum:A0}--\ref{assum:A4} for the state families is given in \cref{sec:verifications}.
There we construct determinantal charts for unstructured pure states, Slater states, fermionic Gaussian states, and all pure bosonic Gaussian states.
These checks allow the common choice $\kappa=8$, uniformly in the dimension, the number of modes, and the particle number.

%% file: ComplementaryChannelsAndCircuits.tex
\section{Complementary channels and circuit implementations}
\label{sec:complementary-channels-circuits}

Every channel is a marginal of an isometry, and the other marginal, the complementary channel, describes what the environment learns.
For the canonical cloner of a compact resolving orbit, both marginals are natural objects.
The first is the optimal cloner.
The second is a measure-and-prepare channel that outputs $K$ noisy copies of the complex-conjugate state, and it is the optimal channel for this conjugation task (\cref{thm:structured-cloning-complementarity}).
For universal pure states, this complementarity between cloning and transposition was recently established by Brzi\'c et al.~\cite{Brzic2026}; we extend it to every compact orbit with irreducible copy spans.

The complementary channel has two consequences.
First, since the environment only performs a measurement, the optimal cloning fidelity equals the optimal fidelity of learning the state well enough to prepare $K$ fresh copies.
The same identity bounds every learning-based cloner and makes the obstruction of \cref{sec:cloning_and_learning} quantitative (\cref{cor:cloning-estimation}).
Second, for fermions the conjugate state becomes the physical particle--hole conjugate (\cref{subsec:fermionic-environment}).
Finally, Howe duality turns the Stinespring isometry into a circuit made of Howe transforms and a single Clebsch--Gordan coupling, as in the Schur-transform circuit of Ref.~\cite{Brzic2026} (\cref{subsec:howe-cg-circuit}).
This is an exact representation-theoretic description, not an efficient gate decomposition.

\subsection{One isometry, two channels}
\label{subsec:structured-complement}

Let $G$ be a compact group with normalized Haar measure $\mu$, let $\phi:G\rarr\rU(\cH)$ be a finite-dimensional unitary representation, and let $\cS=\{\psi_g\}$ be the orbit of a unit vector $\ket{\psi_0}$, with $\ket{\psi_g}\defeq\phi(g)\ket{\psi_0}$.
Fix $N,K\geq1$ and $M=N+K$, and write $\cH_r\defeq\cH^{(r)}_{\cS}$, $\pi_r$ for its projector, and $d_r=\dim\cH_r$.
We assume that $\phi^{\otimes r}$ is irreducible on $\cH_r$ for $r\in\{N,M\}$, as it is for every $N$-resolving orbit; irreducibility of $\cH_K$ is not needed.
By \cref{eq:orbit_resolution}, with $d_r$ the dimension,
\begin{align}
\int_G\psi_g^{\otimes r}\,d\mu(g)=\frac{\pi_r}{d_r},
\qquad r\in\{N,M\}.
\label{eq:compact-orbit-tight-frame}
\end{align}

Regrouping the first $N$ and the last $K$ tensor factors defines an equivariant isometry
\begin{align}
J_{N,K}:\cH_M\hookrightarrow\cH_N\otimes\cH_K,
\qquad
J_{N,K}\ket{\psi_g}^{\otimes M}=\ket{\psi_g}^{\otimes N}\otimes\ket{\psi_g}^{\otimes K}.
\label{eq:cartan-regrouping-inclusion}
\end{align}
It is the restriction of the identification $\cH^{\otimes M}\cong\cH^{\otimes N}\otimes\cH^{\otimes K}$, and it lands in $\cH_N\otimes\cH_K$ because every vector $\ket{\psi_g}^{\otimes M}$ does.
Its range projector $P_{N,K}\defeq J_{N,K}J_{N,K}^\dagger$ is $\pi_M$ written in the regrouped tensor product.
The resolution identity at $r=M$, followed by a partial trace and the identity at $r=N$, gives
\begin{align}
P_{N,K}=d_M\int_G\psi_g^{\otimes N}\otimes\psi_g^{\otimes K}\,d\mu(g),
\qquad
\Tr_{\cH_K}P_{N,K}=\frac{d_M}{d_N}\,\bI_{\cH_N}.
\label{eq:cartan-projector-partial-trace}
\end{align}
Fix a product basis of $\cH^{\otimes K}$ and let $\mathcal K$ be complex conjugation in it; for $\ket v\in\cH^{\otimes K}$ write $\ket{\overline v}\defeq\mathcal K\ket v$, and for an operator $\overline X\defeq\mathcal KX\mathcal K$.
Choose an orthonormal basis $(e_\alpha)_{\alpha=1}^{d_K}$ of $\cH_K$.
Define
\begin{align}
A_\alpha&\defeq\sqrt{\frac{d_N}{d_M}}\,J_{N,K}^\dagger\lrp{\bI_{\cH_N}\otimes\ket{e_\alpha}},
\label{eq:structured-cloner-minimal-kraus}\\
\mathsf{St}_{N,K}\ket x&\defeq\sum_{\alpha=1}^{d_K}A_\alpha\ket x\otimes\ket{\overline{e_\alpha}}
\;\in\;\cH_M\otimes\overline{\cH_K}.
\label{eq:structured-cloner-stinespring}
\end{align}
The environment $\overline{\cH_K}=\mathcal K\cH_K$ is a subspace of the physical space $\cH^{\otimes K}$, and it carries the complex-conjugate representation $\overline{\phi^{\otimes K}}$.

\begin{theorem}[Cloning--conjugation complementarity]
\label{thm:structured-cloning-complementarity}
Under the assumptions above, $\mathsf{St}_{N,K}:\cH_N\to\cH_M\otimes\overline{\cH_K}$ is an equivariant isometry, $\mathsf{St}_{N,K}U^{(N)}_g=(U^{(M)}_g\otimes\overline{U^{(K)}_g})\mathsf{St}_{N,K}$ with $U^{(r)}_g\defeq\phi(g)^{\otimes r}|_{\cH_r}$, and it is a minimal Stinespring dilation of the canonical cloner.
Its two marginals are
\begin{align}
T(\rho)&=\Tr_{\overline{\cH_K}}\lrp{\mathsf{St}_{N,K}\rho\,\mathsf{St}_{N,K}^\dagger}
=\frac{d_N}{d_M}\,J_{N,K}^\dagger\lrp{\rho\otimes\bI_{\cH_K}}J_{N,K},
\label{eq:structured-stinespring-clone-branch}\\
\widetilde T_{N\rarr K}(\rho)&=\Tr_{\cH_M}\lrp{\mathsf{St}_{N,K}\rho\,\mathsf{St}_{N,K}^\dagger}
=d_N\int_G\Tr\lrp{\rho\,\psi_g^{\otimes N}}\,\overline{\psi_g}^{\otimes K}\,d\mu(g).
\label{eq:structured-stinespring-complement}
\end{align}
The first is the canonical cloner of \cref{eq:optimal_cloner}.
The second is entanglement breaking, and it is optimal among all channels from $\cB(\cH_N)$ to $\cB(\overline{\cH}^{\otimes K})$ for the conjugation task $\psi_g^{\otimes N}\mapsto\overline{\psi_g}^{\otimes K}$: every such channel has Haar-average and worst-case fidelity at most
\begin{align}
f^{\mathrm{conj}}_{\cS}(N,K)=\frac{d_N}{d_{N+K}},
\label{eq:structured-conjugation-fidelity}
\end{align}
and $\widetilde T_{N\rarr K}$ attains this value at every orbit point.
The minimal environment has dimension $d_K$.
\end{theorem}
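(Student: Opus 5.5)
The plan is to verify the claims in order: isometry, equivariance, the two marginals, minimality, and finally optimality of the complement.

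\emph{Isometry and the cloning marginal.} Compute
\[
\mathsf{St}_{N,K}^\dagger\mathsf{St}_{N,K}=\sum_\alpha A_\alpha^\dagger A_\alpha=\frac{d_N}{d_M}\sum_\alpha(\bI\otimes\bra{e_\alpha})P_{N,K}(\bI\otimes\ket{e_\alpha})=\frac{d_N}{d_M}\Tr_{\cH_K}P_{N,K}.
\]
By \cref{eq:cartan-projector-partial-trace} this equals $\bI_{\cH_N}$, so $\mathsf{St}_{N,K}$ is an isometry. Tracing out $\overline{\cH_K}$ uses the orthonormality of the vectors $\overline{e_\alpha}$ and gives $\sum_\alpha A_\alpha\rho A_\alpha^\dagger=\frac{d_N}{d_M}J^\dagger(\rho\otimes\pi_K)J$. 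Because the range of $J$ lies in $\cH_N\otimes\cH_K$, we may replace $\pi_K$ by $\bI$. Under the regrouping, $J^\dagger(\cdot)J$ is compression by $\pi_M$, and this recovers \cref{eq:optimal_cloner}.

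\emph{Equivariance.} Write $\mathsf{St}=\sum_\alpha A_\alpha\otimes\ket{\overline{e_\alpha}}$. This is the image of the operator $\sqrt{d_N/d_M}\,J^\dagger$ under the canonical basis-independent identification $\cB(\cH_N\otimes\cH_K,\cH_M)\cong\cB(\cH_N,\cH_M\otimes\overline{\cH_K})$, where $\sum_\alpha\ket{e_\alpha}\otimes\ket{\overline{e_\alpha}}$ is the invariant vector of $\cH_K\otimes\overline{\cH_K}$. Since $J$ intertwines $U^{(M)}$ with $U^{(N)}\otimes U^{(K)}$, the identification yields $\mathsf{St}\,U^{(N)}_g=(U^{(M)}_g\otimes\overline{U^{(K)}_g})\,\mathsf{St}$. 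The cleanest check is to apply both sides to $\ket{x}$ and pair with $\bra{y}\otimes\bra{\overline v}$, then use $\overline{U}\,\overline v=\overline{Uv}$.

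\emph{Minimality.} The environment marginal $\sum_{\alpha,\beta}\Tr(A_\alpha\rho A_\beta^\dagger)\ket{\overline{e_\alpha}}\bra{\overline{e_\beta}}$ has full support on $\overline{\cH_K}$ when $\rho=\pi_N/d_N$. By equivariance this marginal is $\overline{U^{(K)}}$-invariant, and it equals
\[
\frac{1}{d_M}\,\overline{\Tr_{\cH_N}P_{N,K}}=\frac{1}{d_M}\,d_M\int\overline{\psi_g}^{\otimes K}\,d\mu,
\]
whose support is all of $\overline{\cH_K}$ by the definition of $\cH_K$. Hence the Kraus operators $A_\alpha$ are linearly independent and the dilation is minimal, with environment dimension $d_K$.

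\emph{The complementary marginal.} Compute $\Tr_{\cH_M}$ directly. Writing $\pi_M$ as $P_{N,K}=d_M\int\psi_g^{\otimes N}\otimes\psi_g^{\otimes K}$, the environment entries are
\[
\frac{d_N}{d_M}\Tr\bigl[(\rho\otimes\ket{e_\beta}\bra{e_\alpha})P_{N,K}\bigr]=d_N\int\Tr(\rho\,\psi_g^{\otimes N})\,\bra{e_\alpha}\psi_g^{\otimes K}\ket{e_\beta}.
\]
Transferring the partial transpose to the conjugate basis turns $\bra{e_\alpha}\psi_g^{\otimes K}\ket{e_\beta}$ into the $(\alpha,\beta)$ entry of $\overline{\psi_g}^{\otimes K}$, giving \cref{eq:structured-stinespring-complement}. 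The map is measure-and-prepare with POVM $d_N\psi_g^{\otimes N}d\mu$, so it is entanglement breaking. Its fidelity at $\psi_0$ is $d_N\int|\langle\psi_g|\psi_0\rangle|^{2(N+K)}=d_N/d_M$ by \cref{eq:finite_degree}, and by covariance the same value holds at every orbit point.

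\emph{Optimality of the complement (main obstacle).} This is the one step not handled by routine computation. I would repeat the covariant argument of \cref{thm:structured_optimal_cloning}. First, Haar-average any channel $\Phi$ so that it becomes covariant; compactness makes this straightforward, and the averaged channel has fidelity equal to the Haar-average fidelity of $\Phi$, which dominates its worst case. Then bound the covariant $\Lambda=\Phi^*$ by the same trick:
\[
\int U^{(N)}_g\Lambda(\overline{\psi_0}^{\otimes K})U^{(N)\dagger}_g\,d\mu=\frac{\Tr\Lambda(\overline{\psi_0}^{\otimes K})}{d_N}\,\pi_N
\]
by Schur's lemma. One pairs this identity with $\psi_0^{\otimes N}$ and needs an inequality of the form $\Tr\Lambda(\overline{\psi_0}^{\otimes K})\le\cdots$.

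The difficulty is that $\int\overline{\psi_g}^{\otimes K}$ is not controlled by $\pi_M$, so the cloning argument does not transfer verbatim. Instead, I would pass to the Choi operator: the fidelity equals
\[
\int\Tr\bigl[C_\Phi\,(\overline{\psi_g}^{\otimes N}\otimes\overline{\psi_g}^{\otimes K})\bigr]\,d\mu=\frac{1}{d_M}\Tr\bigl[C_\Phi\,\overline{P_{N,K}}\bigr],
\]
up to transpose conventions. With $\Tr_{\mathrm{out}}C_\Phi=\bI_{\cH_N}$ and $C_\Phi\ge0$, I would bound $\Tr[C\,\overline P]\le\|\Tr_{\mathrm{out}}C\|\cdot\Tr(\cdots)$. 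More precisely, since $\overline{P_{N,K}}$ is a projector whose partial trace over the output is $(d_M/d_N)\bI$ and which is rank one in the appropriate operator-Schmidt sense, one uses $C\le d_K\,(\Tr_{\mathrm{out}}C)\otimes\bI$-type inequalities to obtain $\Tr[C\,\overline P]\le d_M/d_N\cdot d_N$. Pinning down this inequality cleanly, most likely by writing $\overline P=\ket{\Omega}\bra{\Omega}$-like via the Stinespring vector and using Cauchy--Schwarz, is where I expect the real work to be.
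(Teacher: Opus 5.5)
The gap is in the optimality claim for the complement. You leave it open, and the route you sketch for closing it does not work. The other parts are sound. Isometry, equivariance, both marginals, and the fidelity $d_N/d_M$ of $\widetilde T_{N\rarr K}$ at every orbit point match the paper's computations. Your minimality argument is also valid: $\widetilde T_{N\rarr K}(\pi_N/d_N)=\int_G\overline{\psi_g}^{\otimes K}d\mu(g)$ has full support on $\overline{\cH_K}$, so the Gram matrix $\Tr(A_\alpha\rho A_\beta^\dagger)$ is nonsingular. This is a fine alternative to the paper's direct check that $J_{N,K}^\dagger(\bI\otimes\ket v)=0$ forces $\langle\psi_g^{\otimes K}|v\rangle=0$ for all $g$.

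You correctly reach
\[
\int_G\Tr\bigl(\overline{\psi_g}^{\otimes K}\cR(\psi_g^{\otimes N})\bigr)\,d\mu(g)=\frac{1}{d_M}\Tr\bigl(C_\cR\,\overline{P_{N,K}}\bigr),
\]
so what you need is $\Tr(C_\cR\overline{P_{N,K}})\le d_N$. Your sketched closing fails on three counts:
\begin{itemize}
\item The target you write, $(d_M/d_N)\cdot d_N=d_M$, only bounds the fidelity by $1$.
\item An inequality of the type $C\le d\,(\Tr_{\mathrm{out}}C)\otimes\bI$, with $d$ the output dimension, loses a full dimension factor. Inserted into the trace it gives $\Tr(C_\cR\overline{P_{N,K}})\le d\,\Tr\overline{P_{N,K}}=d\,d_M$, i.e.\ a fidelity bound of $d$.
\item $\overline{P_{N,K}}$ is a rank-$d_M$ projection, not rank one in any useful sense.
\end{itemize}

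The missing observation is elementary. Since $C_\cR\ge0$ and $0\le\overline{P_{N,K}}\le\bI$,
\[
\Tr\bigl(C_\cR\overline{P_{N,K}}\bigr)\le\Tr C_\cR=\Tr\bigl(\Tr_{\mathrm{out}}C_\cR\bigr)=\Tr\bI_{\cH_N}=d_N .
\]
This bounds the Haar-average fidelity of every channel by $d_N/d_M$, and the worst case is at most the average. Since $\widetilde T_{N\rarr K}$ attains $d_N/d_M$ at every orbit point, the average and worst-case optima coincide. No twirling into covariant maps and no Schur-lemma step is needed, so you can drop the preliminary covariant detour entirely. This is exactly the paper's argument.
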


The optimal conjugation fidelity equals the cloning fidelity $f_{T,\cS}$ of \cref{thm:structured_optimal_cloning}: cloning $K$ extra copies and producing $K$ conjugate copies are equally hard, and one isometry does both optimally.

\begin{proof}
\emph{Isometry and equivariance.}
By \cref{eq:cartan-projector-partial-trace}, $\sum_\alpha A_\alpha^\dagger A_\alpha=\frac{d_N}{d_M}\Tr_{\cH_K}P_{N,K}=\bI_{\cH_N}$.
Writing $\mathsf{St}_{N,K}\ket x=\sqrt{d_N/d_M}\,(J_{N,K}^\dagger\otimes\bI)\lrp{\ket x\otimes\sum_\alpha\ket{e_\alpha}\otimes\ket{\overline{e_\alpha}}}$, equivariance follows from that of $J_{N,K}$ and the invariance of $\sum_\alpha\ket{e_\alpha}\otimes\ket{\overline{e_\alpha}}$ under $U^{(K)}_g\otimes\overline{U^{(K)}_g}$.

\emph{Clone marginal.}
Tracing out the environment gives $\sum_\alpha A_\alpha\rho A_\alpha^\dagger$, which is \cref{eq:structured-stinespring-clone-branch}.
In the original tensor product this is $\frac{d_N}{d_M}\pi_M(\rho\otimes\bI_K)\pi_M$, because $\pi_M(\bI\otimes(\bI_K-\pi_K))=0$.

\emph{Environment marginal.}
Tracing out $\cH_M$ gives $\sum_{\alpha\beta}\Tr(A_\beta^\dagger A_\alpha\rho)\ketbra{\overline{e_\alpha}}{\overline{e_\beta}}$.
By \cref{eq:cartan-projector-partial-trace},
\begin{align}
A_\beta^\dagger A_\alpha
=\frac{d_N}{d_M}\lrp{\bI\otimes\bra{e_\beta}}P_{N,K}\lrp{\bI\otimes\ket{e_\alpha}}
=d_N\int_G\psi_g^{\otimes N}\,\bra{e_\beta}\psi_g^{\otimes K}\ket{e_\alpha}\,d\mu(g).
\end{align}
For Hermitian $X$ supported on $\cH_K$ one has $\sum_{\alpha\beta}\bra{e_\beta}X\ket{e_\alpha}\ketbra{\overline{e_\alpha}}{\overline{e_\beta}}=\mathcal KX\mathcal K=\overline X$, which gives \cref{eq:structured-stinespring-complement}.
The operators $d_N\psi_g^{\otimes N}d\mu(g)$ form a POVM by \cref{eq:compact-orbit-tight-frame} at $r=N$, so $\widetilde T_{N\rarr K}$ measures this POVM and prepares $\overline{\psi_g}^{\otimes K}$; such a channel is entanglement breaking~\cite{HorodeckiShorRuskai2003}.

\emph{Optimality.}
Let $\cR$ be any channel from $\cB(\cH_N)$ to $\cB(\overline{\cH}^{\otimes K})$, with Choi operator $C_\cR=\sum_{ij}\ketbra ij\otimes\cR(\ketbra ij)$ in a basis of $\cH_N$.
Then $\Tr(Y\cR(X))=\Tr(C_\cR(X^{\mathsf T}\otimes Y))$ and $\Tr C_\cR=d_N$.
Since $(\psi_g^{\otimes N})^{\mathsf T}=\overline{\psi_g}^{\otimes N}$, \cref{eq:cartan-projector-partial-trace} gives for the Haar-average fidelity
\begin{align}
\int_G\Tr\lrp{\overline{\psi_g}^{\otimes K}\cR(\psi_g^{\otimes N})}\,d\mu(g)
=\frac{1}{d_M}\Tr\lrp{C_\cR\,\overline{P_{N,K}}}
\leq\frac{\Tr C_\cR}{d_M}
=\frac{d_N}{d_M},
\label{eq:conjugation-figures-of-merit}
\end{align}
because $\overline{P_{N,K}}$ is a projection.
The worst-case fidelity is at most the average.
Conversely, at every orbit point, \cref{eq:finite_degree} gives
\begin{align}
\Tr\lrp{\overline{\psi_h}^{\otimes K}\,\widetilde T_{N\rarr K}(\psi_h^{\otimes N})}
=d_N\int_G|\langle\psi_g|\psi_h\rangle|^{2M}\,d\mu(g)
=\frac{d_N}{d_M}.
\end{align}
Since the average bound is attained pointwise, the average and worst-case optima coincide, and no twirling argument is needed.

\emph{Minimality.}
If $\sum_\alpha c_\alpha A_\alpha=0$, put $\ket v=\sum_\alpha c_\alpha\ket{e_\alpha}$, so that $J_{N,K}^\dagger(\bI\otimes\ket v)=0$.
Taking the matrix element between $\ket{\psi_g}^{\otimes M}$ and $\ket{\psi_g}^{\otimes N}$ gives $\langle\psi_g^{\otimes K}|v\rangle=0$ for all $g$.
These vectors span $\cH_K$, so $v=0$; the $d_K$ Kraus operators are linearly independent.
\end{proof}

Two features of \cref{eq:structured-stinespring-complement} are worth stressing.
It is not the transpose map: it changes the number of copies, it is noisy, and it transforms in the conjugate representation.
And it prepares all $K$ conjugate outputs from the same outcome $g$, so they are separable but correlated.
At $K=1$ its measurement is the covariant measurement whose outcome distribution $d_N|\langle\psi_g|\psi_h\rangle|^{2N}d\mu(g)$ is the tilted measure of \cref{sec:dimension_law}.
The same ratio $d_N/d_{N+1}$ therefore governs cloning, conjugation, and the dimension law.
Any other Stinespring dilation of $T$ has complement $u\,\widetilde T_{N\rarr K}(\cdot)\,u^\dagger$ for an isometry $u$ of the environment.

\emph{Cloning versus learning.}
The complementary channel is a learning procedure: it estimates $g$ and prepares states.
The theorem therefore compares cloning with learning exactly.

\begin{corollary}[Cloning and estimation]
\label{cor:cloning-estimation}
Under the assumptions of \cref{thm:structured-cloning-complementarity}:
\begin{enumerate}[label=\textup{(\alph*)}]
\item The largest worst-case fidelity with $\psi_g^{\otimes K}$ achievable by measuring $\psi_g^{\otimes N}$ and preparing a state from the outcome is $d_N/d_{N+K}$, the optimal cloning fidelity.
\item If $\cH_{2N+K}$ is also irreducible, every entanglement-breaking cloner from $N$ to $N+K$ copies has fidelity at most $d_N/d_{2N+K}$, and measuring $\{d_N\psi_g^{\otimes N}d\mu(g)\}$ and preparing $\psi_g^{\otimes(N+K)}$ attains this value.
\end{enumerate}
\end{corollary}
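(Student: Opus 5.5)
For part (a), I will reduce to the conjugation optimality of \cref{thm:structured-cloning-complementarity}. Let $\cM$ be a measure-and-prepare channel, with POVM $\{E_x\}$ on $\cH_N$ and preparations $\sigma_x$ on $\cH^{\otimes K}$. Its fidelity with $\psi_g^{\otimes K}$ is $\sum_x \Tr(E_x\psi_g^{\otimes N})\Tr(\sigma_x\psi_g^{\otimes K})$. Complex conjugation in the fixed product basis is antiunitary, and it preserves positivity and traces, so $\overline{\sigma_x}$ are again states. The modified channel $\cR$ with the same POVM and preparations $\overline{\sigma_x}$ therefore has
\begin{align}
\Tr\lrp{\overline{\psi_g}^{\otimes K}\,\cR(\psi_g^{\otimes N})}=\sum_x\Tr\lrp{E_x\psi_g^{\otimes N}}\Tr\lrp{\overline{\sigma_x}\,\overline{\psi_g}^{\otimes K}},
\end{align}
which equals the fidelity of $\cM$. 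The bound \cref{eq:structured-conjugation-fidelity} applies to every channel, so the worst case is at most $d_N/d_{N+K}$. For attainment I take the POVM $\{d_N\psi_g^{\otimes N}d\mu(g)\}$ and prepare $\psi_g^{\otimes K}$. By \cref{eq:finite_degree}, the fidelity at $\psi_h$ is $d_N\int_G|\langle\psi_g|\psi_h\rangle|^{2(N+K)}d\mu(g)=d_N/d_{N+K}$. This matches the optimal cloning fidelity of \cref{thm:structured_optimal_cloning}.

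For part (b), an entanglement-breaking cloner $\Phi$ from $N$ to $N+K$ copies is itself measure-and-prepare. Its target $\psi_g^{\otimes(N+K)}$ is a $K'$-copy target with $K'=N+K$, so part (a) applied with $K'$ in place of $K$ gives fidelity at most $d_N/d_{2N+K}$. One technical point needs checking. The proof of \cref{thm:structured-cloning-complementarity} uses irreducibility on $\cH_r$ only at $r\in\{N,N+K'\}=\{N,2N+K\}$, through the resolution identity \cref{eq:compact-orbit-tight-frame} and the projector identity \cref{eq:cartan-projector-partial-trace} for $J_{N,K'}$. This is exactly why the extra hypothesis that $\cH_{2N+K}$ is irreducible appears. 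A second subtlety is the output space. A cloner outputs into $\cH^{\otimes(N+K)}$, whereas the theorem's channels output into $\overline{\cH}^{\otimes K'}$. Here the averaging bound \cref{eq:conjugation-figures-of-merit} only used $\Tr C_\cR=d_N$ and the fact that $\overline{P_{N,K'}}$ is a projection. It therefore holds for arbitrary output spaces containing the targets, after conjugating the preparations as in (a). Attainment again follows from \cref{eq:finite_degree} at exponent $2N+K$, via $d_N\int|\langle\psi_g|\psi_h\rangle|^{2(2N+K)}d\mu=d_N/d_{2N+K}$.

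I expect no step to be a serious obstacle. The main care point is the legitimacy of the conjugation trick in (a): fidelity must be preserved when $\sigma_x\mapsto\overline{\sigma_x}$, and the POVM may be continuous. I would handle a general POVM as a measure and apply the Choi-operator argument directly, with $C_\cR$ built from the conjugated preparation map.
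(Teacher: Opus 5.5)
Your proposal is correct and matches the paper's proof essentially step for step. You conjugate the preparations so that a measure-and-prepare cloner becomes a conjugation channel with the same pointwise fidelities, invoke the bound of \cref{thm:structured-cloning-complementarity}, attain it with the covariant POVM via \cref{eq:finite_degree}, and obtain (b) as (a) with $K$ replaced by $N+K$. Your extra remarks (irreducibility of $\cH_{2N+K}$ is exactly what the theorem needs at $M=2N+K$, and entanglement-breaking channels are measure-and-prepare) make explicit what the paper leaves implicit.
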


\begin{proof}
A measure-and-prepare channel $\rho\mapsto\int\Tr(E(dx)\rho)\,\sigma_x$ has the same pointwise fidelities as the channel $\rho\mapsto\int\Tr(E(dx)\rho)\,\overline{\sigma_x}$ has for the conjugation task, since $\Tr(\overline{\sigma}\,\overline{\psi}^{\otimes K})=\Tr(\sigma\psi^{\otimes K})$.
\Cref{thm:structured-cloning-complementarity} bounds the latter by $d_N/d_{N+K}$.
Measuring $\{d_N\psi_g^{\otimes N}d\mu(g)\}$ and preparing $\psi_g^{\otimes K}$ attains the bound, by the computation at the end of the proof of the theorem.
Part (b) is part (a) with $K$ replaced by $N+K$.
\end{proof}

Part (a) says that adding $K$ copies optimally is exactly as hard as learning the state well enough to prepare $K$ fresh copies.
Part (b) says that learning cannot also regenerate the $N$ copies it consumed.
For the families of \cref{sec:dimension_law} and large $N$, the power law in $K$ following \cref{sec:dimension_law} gives $d_N/d_{2N+K}\approx(N/(2N+K))^{m/2}\leq2^{-m/2}$, where $m$ is the real dimension of the family.
Every learning-based cloner thus has fidelity at most about $2^{-m/2}$, however many copies are available, whereas the fidelity of the optimal cloner approaches one as $1-mK/(2N)$.
This is the quantitative form of the obstruction in \cref{sec:cloning_and_learning}.
For universal qubits, for example, $d_N/d_{2N+1}=1/2$ for every $N$.
The connection between cloning and state estimation goes back to Ref.~\cite{BrussEkertMacchiavello1998}.
As $K\to\infty$ at fixed $N$, the ratio $d_{N+K}/d_{2N+K}$ of the two fidelities in \cref{cor:cloning-estimation} tends to one, consistent with the asymptotic equivalence of global cloning and estimation studied in Ref.~\cite{chiribella2014optimal}.

\begin{remark}[Noncompact orbits]
\label{rmk:noncompact-complement}
The construction of $\mathsf{St}_{N,K}$ and the formula \cref{eq:structured-stinespring-complement} use only the resolution identities at $N$ and $M$.
They therefore hold for the noncompact resolving orbits of \cref{sec:active-bosonic-cloning}, with formal degrees in place of dimensions and an infinite-dimensional environment; square-integrability at $K$ is not needed.
For coherent states, $\widetilde T_{N\rarr K}$ maps $\ket\alpha^{\otimes N}$ to a mixture of $\ket{\overline\beta}^{\otimes K}$ with $\beta-\alpha$ Gaussian of variance $1/N$ per mode.
After combining the $K$ outputs in a passive network, this is a thermal state of mean occupation $K/N$ per mode displaced to $\sqrt K\,\overline\alpha$, which is exactly the idler output of the quantum-limited amplifier in \cref{prop:ab-coherent}.
The optimality part of \cref{thm:structured-cloning-complementarity}, however, uses the normalized Haar measure, and we do not claim it for noncompact groups.
\end{remark}

\subsection{Fermions: conjugation becomes particle--hole conjugation}
\label{subsec:fermionic-environment}

The target $\overline\psi$ of the complementary channel depends on a choice of basis, and complex conjugation is not a physical operation.
For fermionic Gaussian states it becomes a physical target after a fixed unitary decoder.
We use the conventions of \cref{app:ff_background}: $a_j=\frac12(c_{2j-1}-ic_{2j})$, the occupation basis, and the Jordan--Wigner representation, in which $c_{2j-1}$ is real and $c_{2j}$ is imaginary.
Let $\mathcal K$ be complex conjugation in the occupation basis, which is also the product basis used for the environment in \cref{subsec:structured-complement}, and set
\begin{align}
C\defeq c_1c_3\cdots c_{2n-1}=\prod_{j=1}^n\lrp{a_j+a_j^\dagger},
\qquad
\Theta\defeq C\mathcal K,
\label{eq:fermion-particle-hole-antiunitary}
\end{align}
with the product ordered by increasing $j$; $C$ is unitary and $\Theta$ is antiunitary.

\begin{proposition}[Particle--hole decoder]
\label{prop:particle-hole-decoder}
For every Majorana operator $c_\alpha$ and every mode $j$,
\begin{align}
\Theta c_\alpha\Theta^{-1}=(-1)^{n-1}c_\alpha,
\qquad
\Theta a_j\Theta^{-1}=(-1)^{n-1}a_j^\dagger.
\label{eq:particle-hole-majorana-action}
\end{align}
Hence $\Theta$ commutes with every Gaussian unitary, and the particle--hole conjugate $\psi^{\mathrm h}\defeq\Theta\psi\Theta^{-1}=C\,\overline\psi\,C^\dagger$ of a pure Gaussian state $\psi$ is a pure Gaussian state with reversed occupations, $n_j\mapsto\bI-n_j$, and opposite covariance matrix, $\Gamma(\psi^{\mathrm h})=-\Gamma(\psi)$, where $\Gamma_{\alpha\beta}(\psi)=\frac i2\Tr(\psi[c_\alpha,c_\beta])$.
A $q$-particle Slater determinant occupying a subspace $W\subseteq\bC^n$ is mapped to the $(n-q)$-particle Slater determinant occupying $W^\perp$.
For $\cS\in\{\Sff,\Sffq\}$, the decoded environment $\rho\mapsto C^{\otimes K}\widetilde T_{N\rarr K}(\rho)C^{\dagger\otimes K}$ produces $K$ approximate copies of $\psi^{\mathrm h}$ with fidelity $d_{\cS,N}/d_{\cS,N+K}$ at every input, and no channel achieves a larger worst-case fidelity for the task $\psi^{\otimes N}\mapsto(\psi^{\mathrm h})^{\otimes K}$.
\end{proposition}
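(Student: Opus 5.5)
The plan is a direct computation in the Jordan--Wigner representation, with the sign bookkeeping done first; everything else then follows from \cref{thm:structured-cloning-complementarity}. Since $c_{2j-1}$ is real and $c_{2j}$ is imaginary in the occupation basis, we have $\mathcal K c_{2j-1}\mathcal K=c_{2j-1}$ and $\mathcal K c_{2j}\mathcal K=-c_{2j}$.

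\emph{Step 1: the Majorana and mode-operator identities.} Conjugating by $C=c_1c_3\cdots c_{2n-1}$ and using the anticommutation relations gives two cases:
\begin{itemize}
\item an odd Majorana $c_{2j-1}$ commutes with its own factor and anticommutes with the other $n-1$ factors, so $C c_{2j-1}C^\dagger=(-1)^{n-1}c_{2j-1}$;
\item an even Majorana $c_{2j}$ anticommutes with all $n$ factors, so $C c_{2j}C^\dagger=(-1)^{n}c_{2j}$.
\end{itemize}
Composing with $\mathcal K$, the extra minus sign on the even Majoranas gives $\Theta c_\alpha\Theta^{-1}=(-1)^{n-1}c_\alpha$ uniformly in $\alpha$. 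Because $\Theta$ is antiunitary it sends $i\mapsto -i$. Hence $\Theta a_j\Theta^{-1}=(-1)^{n-1}\tfrac12(c_{2j-1}+ic_{2j})=(-1)^{n-1}a_j^\dagger$.

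\emph{Step 2: consequences for Gaussian unitaries and states.}
\begin{itemize}
\item A Gaussian unitary in $\Spin(2n)$ has the form $\exp(\tfrac14\sum h_{\alpha\beta}c_\alpha c_\beta)$ with $h$ real antisymmetric. Each product $c_\alpha c_\beta$ is fixed by $\Theta$ (the two signs cancel) and the coefficients are real, so $\Theta$ commutes with it. The parity flip $\mu=c_{2n-1}$ is only reproduced up to the sign $(-1)^{n-1}$, which is irrelevant on states.
\item From $\Theta n_j\Theta^{-1}=a_ja_j^\dagger=\bI-n_j$, the vacuum maps (up to phase) to $\ket{1\cdots1}$, which is Gaussian. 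Since $\Theta$ commutes with $\Spin(2n)$ and maps each of the two base points $\vac$, $\ket e$ to a Gaussian state, it maps the Gaussian orbits into $\Sff$.
\item For the covariance matrix, use $\Tr(\Theta X\Theta^{-1}Y)=\overline{\Tr(X\,\Theta^{-1}Y\Theta)}$ together with $\Theta^{-1}[c_\alpha,c_\beta]\Theta=[c_\alpha,c_\beta]$. Since $\Tr(\psi[c_\alpha,c_\beta])$ is purely imaginary, its complex conjugate is its negative, which gives $\Gamma(\psi^{\mathrm h})=-\Gamma(\psi)$.
\item For Slater states, note that $\psi$ occupying $W$ is characterized as the state annihilated by $a^\dagger(w)=\sum_i w_ia_i^\dagger$ for $w\in W$ and by $a(v)=\sum_i\bar v_ia_i$ for $v\in W^\perp$. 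Step 1 gives $\Theta a^\dagger(w)\Theta^{-1}=\pm a(w)$ and $\Theta a(v)\Theta^{-1}=\pm a^\dagger(v)$. So $\psi^{\mathrm h}$ is annihilated by $a(w)$ for $w\in W$ and by $a^\dagger(v)$ for $v\in W^\perp$, which characterizes the $(n-q)$-particle Slater state on $W^\perp$.
\item Finally, $\psi^{\mathrm h}=C\mathcal K\psi\mathcal K C^\dagger=C\overline\psi C^\dagger$, where the bar denotes conjugation in the occupation basis.
\end{itemize}

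\emph{Step 3: fidelity and optimality.} By \cref{prop:slater_orbit,prop:free_fermion_+_orbit}, the families $\Sff$ and $\Sffq$ are compact orbits whose copy spans are irreducible for every number of copies. So \cref{thm:structured-cloning-complementarity} applies, with $\mathcal K$ taken in the occupation product basis, the same basis used in \cref{subsec:structured-complement}.
\begin{itemize}
\item Achievability: by Step 2, $(\psi^{\mathrm h})^{\otimes K}=C^{\otimes K}\overline\psi^{\otimes K}C^{\dagger\otimes K}$. Hence the fidelity of the decoded environment output with $(\psi^{\mathrm h})^{\otimes K}$ equals $\Tr(\overline\psi^{\otimes K}\widetilde T_{N\rarr K}(\psi^{\otimes N}))=d_{\cS,N}/d_{\cS,N+K}$ at every input.
\item Converse: take any channel $\cR$ for the task $\psi^{\otimes N}\mapsto(\psi^{\mathrm h})^{\otimes K}$. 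Then $C^{\dagger\otimes K}\cR(\cdot)C^{\otimes K}$ is a channel for the conjugation task with identical pointwise fidelities, so its worst-case fidelity is bounded by \cref{eq:structured-conjugation-fidelity}.
\end{itemize}

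The main obstacle is the sign bookkeeping in Step 1, in particular making sure the real/imaginary convention for $c_{2j-1},c_{2j}$ combines with the anticommutation signs to give a uniform $(-1)^{n-1}$ for all $\alpha$. Everything else reduces to this identity and to the earlier theorem.
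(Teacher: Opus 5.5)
Your proposal is correct and follows essentially the same route as the paper. It uses the same sign bookkeeping for $C$ and $\mathcal K$, the same invariance of the real quadratic exponents, the same antiunitary trace identity for $\Gamma$, and the same reduction of achievability and optimality to \cref{thm:structured-cloning-complementarity} by conjugating with $C^{\otimes K}$. The only minor difference is the Slater-determinant claim: you use the annihilator characterization ($a^\dagger(w)\psi=0$ for $w\in W$, $a(v)\psi=0$ for $v\in W^\perp$), whereas the paper observes that passive Gaussian unitaries commute with $\Theta$ (this holds only up to a phase, which is harmless on states), and both arguments work.
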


\begin{proof}
Moving $c_\alpha$ through the $n$ factors of $C$ gives $Cc_\alpha C^\dagger=(-1)^{n-1}c_\alpha$ for odd $\alpha$ and $(-1)^nc_\alpha$ for even $\alpha$, while $\mathcal Kc_\alpha\mathcal K$ equals $c_\alpha$ for odd $\alpha$ and $-c_\alpha$ for even $\alpha$.
Combining the two signs gives the first identity, and the second follows from $a_j=\frac12(c_{2j-1}-ic_{2j})$ and $\Theta i\Theta^{-1}=-i$.
A Gaussian unitary is $\exp(\sum_{\alpha\beta}A_{\alpha\beta}c_\alpha c_\beta)$ with $A$ real, and $\Theta$ leaves its exponent invariant.
Every pure Gaussian state is a Gaussian unitary applied to the vacuum or to $a_n^\dagger\vac$, and $\Theta$ maps these to occupation-basis states with all occupations reversed, which are Gaussian.
Passive Gaussian unitaries also commute with $\Theta$, which gives the statement about Slater determinants.
For the covariance matrix, $\Theta^{-1}[c_\alpha,c_\beta]\Theta=[c_\alpha,c_\beta]$ and antiunitarity give $\Tr(\psi^{\mathrm h}[c_\alpha,c_\beta])=\overline{\Tr(\psi[c_\alpha,c_\beta])}$, and the latter trace is imaginary.
Finally, $C^{\otimes K}$ maps $\overline\psi^{\otimes K}$ to $(\psi^{\mathrm h})^{\otimes K}$.
A channel that beats the decoded environment, followed by $C^{\dagger\otimes K}$, would beat \cref{thm:structured-cloning-complementarity}.
\end{proof}

The map $\psi\mapsto\psi^{\mathrm h}$ is antiunitary, so no physical process implements it exactly.
The environment of the optimal fermionic cloner implements it approximately, and optimally: from $N$ copies of a $q$-particle Slater determinant it produces $K$ copies of the complementary $(n-q)$-particle determinant, with the same fidelity as the clones.
The decoder is unitary because the environment already supplies the antiunitary factor $\mathcal K$.
Since $C$ is a product of $n$ Majorana operators, the target $\psi^{\mathrm h}$ has the parity of $\psi$ when $n$ is even and the opposite parity when $n$ is odd.

\subsection{Howe--Clebsch--Gordan circuit architecture}
\label{subsec:howe-cg-circuit}

For universal pure states, the isometry $\mathsf{St}_{N,K}$ is implemented in Ref.~\cite{Brzic2026} by a Schur transform, a multiplicity-free Clebsch--Gordan coupling, and two inverse Schur transforms.
Tracing out one output register gives the optimal cloner, and tracing out the other gives the optimal transposition channel.
The Howe dualities of \cref{sec:howe-background} provide the same architecture for structured orbits.
Two ingredients are needed: the copy span must be a single physical irreducible representation tensored with a fixed copy-side vector, and the relevant coupling must be multiplicity-free.
The first is \cref{lem:copy-side-line}; the second holds for highest-weight orbits.

\begin{lemma}[Multiplicity one for Cartan couplings]
\label{lem:cartan-multiplicity-one}
Let $G$ be a compact connected Lie group, and let $\ket{\psi_0}$ be a highest-weight vector of weight $\lambda$, so that $\cH_r\cong V_r\defeq V_{r\lambda}$ for every $r\geq1$ by \cref{thm:cartan_product}.
Then
\begin{align}
\dim\operatorname{Hom}_G\lrp{V_N,V_{N+K}\otimes\overline{V_K}}
=\dim\operatorname{Hom}_G\lrp{V_N\otimes V_K,V_{N+K}}
=1.
\end{align}
\end{lemma}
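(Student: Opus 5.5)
The first equality is formal. For finite-dimensional unitary representations of a compact group, $\operatorname{Hom}_G(V_N,V_{N+K}\otimes\overline{V_K})\cong(V_N^*\otimes V_{N+K}\otimes\overline{V_K})^G$. Since $\overline{V_K}\cong V_K^*$ via the invariant inner product, this is $(V_N^*\otimes V_K^*\otimes V_{N+K})^G\cong\operatorname{Hom}_G(V_N\otimes V_K,V_{N+K})$. So it suffices to show that $V_{(N+K)\lambda}$ occurs in $V_{N\lambda}\otimes V_{K\lambda}$ with multiplicity exactly one. By complete reducibility, this multiplicity is the dimension of the Hom space.

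I would prove multiplicity one by highest-weight theory. Every weight of $V_{N\lambda}\otimes V_{K\lambda}$ has the form $\mu+\nu$, with $\mu$ a weight of $V_{N\lambda}$ and $\nu$ a weight of $V_{K\lambda}$. Every weight of $V_{N\lambda}$ is of the form $N\lambda$ minus a nonnegative combination of positive roots, and similarly for $V_{K\lambda}$. Hence $\mu+\nu=(N+K)\lambda$ forces $\mu=N\lambda$ and $\nu=K\lambda$. Both of these weight spaces are one-dimensional, so the $(N+K)\lambda$-weight space of the tensor product is the line spanned by $v_{N\lambda}\otimes v_{K\lambda}$.

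Any copy of $V_{(N+K)\lambda}$ inside the tensor product contributes a highest-weight vector of weight $(N+K)\lambda$. Distinct copies in an isotypic decomposition contribute linearly independent highest-weight vectors. The weight space has dimension one, so there is at most one copy. At least one copy exists by the Cartan product theorem (\cref{thm:cartan_product}): $v_{N\lambda}\otimes v_{K\lambda}$ is a highest-weight vector generating an irreducible subrepresentation isomorphic to $V_{(N+K)\lambda}$. Concretely, $J_{N,K}$ of \cref{eq:cartan-regrouping-inclusion} is a nonzero equivariant map, so its adjoint gives a nonzero element of $\operatorname{Hom}_G(V_N\otimes V_K,V_{N+K})$. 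Therefore the dimension equals one.

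The only point needing care is the identification $\cH_r\cong V_{r\lambda}$ with $\ket{\psi_0}^{\otimes r}$ as its highest-weight vector. This is exactly what the lemma's hypothesis grants via \cref{thm:cartan_product}. When $G$ has a center (as for $\rU(n)$), the central part acts by scalars, which match on both sides: both carry the character of weight $(N+K)\lambda$. So the semisimple highest-weight argument transfers unchanged, as in the proof of \cref{prop:slater_orbit}. I do not expect a genuine obstacle beyond this bookkeeping of the weight-space dimension count.
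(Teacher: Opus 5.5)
Your proof is correct and follows the same route as the paper: the first equality comes from moving the conjugate factor across the intertwiner (using $\overline{V_K}\cong V_K^*$), and the second from the observation that the weight $(N+K)\lambda$ occurs in $V_{N\lambda}\otimes V_{K\lambda}$ only on the line spanned by the product of the two highest-weight vectors. You also make the existence of at least one copy explicit, via \cref{thm:cartan_product} or the nonzero intertwiner $J_{N,K}^\dagger$, which the paper leaves implicit.
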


\begin{proof}
The first equality moves the conjugate factor across the intertwiner.
For the second, the multiplicity of $V_{(N+K)\lambda}$ in $V_{N\lambda}\otimes V_{K\lambda}$ is the dimension of the space of highest-weight vectors of weight $(N+K)\lambda$.
All weights of $V_{N\lambda}$ and $V_{K\lambda}$ lie below $N\lambda$ and $K\lambda$ in the dominance order, so the weight $(N+K)\lambda$ occurs only in the one-dimensional space spanned by the product of the two highest-weight vectors.
\end{proof}

For $r\in\{N,K,M\}$, \cref{lem:copy-side-line} writes the copy span as $\cH_r=V_r\otimes\bC w_r$, with a fixed copy-side unit vector $w_r$.
We call the corresponding unitary $U^{(r)}_{\mathrm{Howe}}:\cH_r\to V_r\otimes\bC w_r$ the \emph{Howe transform}.
By \cref{lem:cartan-multiplicity-one} there is, up to a phase, a unique $G$-equivariant isometry
\begin{align}
\mathrm{CG}_{N,K}:V_N\longrightarrow V_M\otimes\overline{V_K};
\label{eq:structured-inverse-cg}
\end{align}
it is the Clebsch--Gordan coupling $V_N\otimes V_K\to V_M$ with its $V_K$ leg bent, rescaled by $(d_N/d_M)^{1/2}$.
With the preparation isometry $\operatorname{Prep}(\ket x\otimes\ket{\overline y})\defeq(\ket x\otimes\ket{w_M})\otimes(\ket{\overline y}\otimes\ket{\overline{w_K}})$,
\begin{align}
\mathsf{St}_{N,K}
=\left[(U_{\mathrm{Howe}}^{(M)})^\dagger\otimes(\overline{U_{\mathrm{Howe}}^{(K)}})^\dagger\right]
\circ\operatorname{Prep}\circ\mathrm{CG}_{N,K}\circ\lrp{\bI_{V_N}\otimes\bra{w_N}}\circ U_{\mathrm{Howe}}^{(N)}
\label{eq:howe-cg-circuit-architecture}
\end{align}
up to a global phase, which affects neither marginal.
Indeed, after the Howe identifications both sides are equivariant isometries from $V_N$ into $V_M\otimes\overline{V_K}$, and the lemma leaves only a phase.
Read from right to left, the circuit applies the Howe transform to the input, removes the known copy-side vector, couples the physical irreducible representations, attaches the copy-side vectors of the two outputs, and applies the inverse Howe transforms.
If the input promise is not trusted, the projection onto the copied sector can be measured first, with a failure flag that extends the channel to the full input space.

To realize the above construction of the Stinespring dilation for the families of this paper, the ingredients of the construction are realized as follows.
\begin{enumerate}
\item \emph{Slater determinants.} The skew Howe duality $\rU(n)\times\rU(r)$ of \cref{eq:fermionic-skew-howe} gives $V_r=V_{(r^q)}$, and the copy-side line carries the character $\det^q$ of $\rU(r)$ (\cref{eq:slater-howe-selection}).
The coupling is the rectangular $\rU(n)$ Clebsch--Gordan map $V_{(N^q)}\otimes V_{(K^q)}\to V_{((N+K)^q)}$, and the particle--hole decoder of \cref{prop:particle-hole-decoder} turns the environment register into a physical $(n-q)$-particle register.
\item \emph{Gaussian states of fixed parity.} The spin--orthogonal duality $\Spin(2n)\times\rO(r)$ of \cref{eq:spin-orthogonal-howe} gives $V_r=V_{r\alpha}$ for even and $V_r=V_{r\beta}$ for odd parity, with $\alpha,\beta$ of \cref{eq:ff_highest_weights}.
The copy-side line carries the trivial character of $\rO(r)$ for even and the determinant character for odd parity.
\item \emph{Gaussian states of unknown parity.} Here $G=\Pin(2n)$ is disconnected and the lemma does not apply directly.
The copy span splits into the two parity blocks, $\cH_r\cong(V_{r\alpha}\otimes\bC\ket{\mathbf 1})\oplus(V_{r\beta}\otimes\bC\ket{\det})$.
A weight comparison shows that the coupling cannot mix the blocks, so it has one $\Spin(2n)$ component per block, and the parity-exchanging element of $\Pin(2n)$ fixes their relative phase.
Accordingly, the removal and preparation of copy-side vectors are controlled on the parity block.
Doing this coherently preserves the superposition of parities in the joint output, whereas either marginal loses it.
\item \emph{Bose condensates.} The passive duality $\rU(n)\times\rU(r)$ of \cref{eq:bosonic-symmetric-howe} applies to $Nq$, $Kq$ and $Mq$ bosons; the copy-side line is a weight line inside a copy-side irreducible representation, and $V_r=\Sym^{rq}(\bC^n)$.
\item \emph{Squeezed vacua.} The oscillator duality $\operatorname{Mp}(2n,\bR)\times\rO(r)$ of \cref{eq:metaplectic-orthogonal-howe} identifies $\cH^{\mathrm{sq}}_r$ with the scalar lowest-weight module $\Theta_{n,r}(\varnothing)$ and the copy-side line with the trivial $\rO(r)$ type (\cref{eq:squeezed-howe-selection}).
The coupling is between scalar lowest-weight modules of parameters $N/2$, $K/2$ and $M/2$.
By \cref{rmk:noncompact-complement}, the construction requires $N\geq2n+1$ but no condition on $K$.
The resulting channel is the canonical cloner of \cref{thm:ab-resolving}, which is optimal only among metaplectic-covariant channels.
A physical implementation also needs a photon-number cutoff.
\end{enumerate}

For the compact families, \cref{eq:howe-cg-circuit-architecture} is an exact specification of both optimal channels, and it exposes both in a single coherent process.
It is not an efficient gate decomposition.
Even for universal states the Clebsch--Gordan block of Ref.~\cite{Brzic2026} is a conceptual construction rather than an efficient circuit, and efficient skew-Howe, spin-Howe, $\Pin$ and metaplectic transforms, with their encodings and precision costs, are separate algorithmic problems.
If only the complementary channel is needed, \cref{eq:structured-stinespring-complement} gives a simpler route: measure the covariant POVM $\{d_N\psi_g^{\otimes N}d\mu(g)\}$ and prepare $K$ conjugate outputs.

%% file: PhaseStateCloner.tex
\section{Optimal phase-state cloning}
\label{sec:phase-main}

We next construct a phase-state cloner that maximizes both the average and the worst-case global fidelity. Here global fidelity is the overlap $\bra{\psi}^{\otimes(N+K)}\rho\ket{\psi}^{\otimes(N+K)}$ between the actual output $\rho$ and the ideal state of $N+K$ copies. Throughout this section, optimality refers to this fidelity criterion, which differs from the trace-distance criterion used in our general definition of an approximate cloner. Optimality for global fidelity does not necessarily imply optimality in trace distance. Consider the phase states
\begin{align}
    \ket{\psi_x} \coloneq \frac{1}{\sqrt d}\sum_{j=0}^{d-1}x_j\ket j, \qquad x\in\C^d, \qquad |x_j|=1. \label{eq:phase-family-main}
\end{align}
These states form an orbit under diagonal phase rotations. However, because this group is abelian, its action on the tensor-power span decomposes into occupation-number sectors and is not irreducible. The phase states therefore do not form a resolving orbit, and \cref{thm:structured_optimal_cloning} does not apply. 

Our construction starts from the occupation-number rule of the known optimal economical cloner~\cite{BuscemiDArianoMacchiavello2005}. When the number of added copies is $K=d$, this cloner adds one to every entry of the occupation vector. More generally, when $K=qd$, it adds $q$ to every entry. We extend this rule to the case where $d$ does not divide $K$ by distributing the added occupations as evenly as possible. Writing $K=qd+s$ with $0\leq s<d$, each balanced shift adds $q+1$ to $s$ entries and $q$ to the remaining entries. Our channel combines all such balanced shifts, with the occupation-dependent normalization specified below.

The construction is thus a direct extension of the equal-increment rule. To prove optimality among all quantum channels, we use phase covariance to reduce the optimization of global fidelity to a semidefinite program (cf.~\cite{KimChitambar2022,hettel2026semidefinite,Brzic2026}). The balanced channel provides a feasible primal point, and we construct a dual certificate with the same value, proving optimality.

Previous exact results cover arbitrary $N\to N+K$ qubit cloners and $1\to 1+K$ qutrit cloners~\cite{DArianoMacchiavello2003}, as well as the economical construction for $K=qd$ in arbitrary dimension~\cite{BuscemiDArianoMacchiavello2005}. The balanced construction and its optimality proof apply to arbitrary $d$, $N$, and $K$.

\subsection{Occupation basis and main result}
\label{subsec:phase-main-result}

For $\ell\geq0$, let
\begin{align}
    \cB_\ell \coloneq \{\mathbf b\in\N^d:|\mathbf b|=\ell\}, \qquad |\mathbf b| \coloneq \sum_{j=0}^{d-1}b_j.
\end{align}
For $\mathbf b\in\cB_\ell$, let $S_{\mathbf b}$ be the set of computational-basis words in which the symbol $j$ occurs exactly $b_j$ times, and define the normalized symmetric occupation state
\begin{align}
    \ket{\Symb{\ell}{\mathbf b}} \coloneq \frac{1}{\sqrt{\binom{\ell}{\mathbf b}}} \sum_{\ket{\phi}\in S_{\mathbf b}}\ket{\phi}, \qquad \binom{\ell}{\mathbf b} \coloneq \frac{\ell!}{\prod_{j=0}^{d-1}b_j!}. \label{eq:sym-state-main}
\end{align}
The phase states have the occupation-basis expansion
\begin{align}
    \ket{\psi_x}^{\otimes\ell} = \frac{1}{d^{\ell/2}} \sum_{\mathbf b\in\cB_\ell} \sqrt{\binom{\ell}{\mathbf b}}\, x^{\mathbf b} \ket{\Symb{\ell}{\mathbf b}}, \qquad x^{\mathbf b} \coloneq \prod_{j=0}^{d-1}x_j^{b_j}. \label{eq:phase-expansion-main}
\end{align}

Write the number of added copies as
\begin{align}
    K=qd+s, \qquad 0\leq s<d,
\end{align}
and define the set of balanced $K$-copy occupations by
\begin{align}
    \cC_{d,K} \coloneq \left\{ \boldsymbol\Gamma\in\N^d: |\boldsymbol\Gamma|=K,\ \max_j\Gamma_j-\min_j\Gamma_j\leq1 \right\}. \label{eq:balanced-occupations-main}
\end{align}
Every $\boldsymbol\Gamma\in\cC_{d,K}$ has $s$ entries equal to $q+1$ and $d-s$ entries equal to $q$, so
\begin{align}
    |\cC_{d,K}|=\binom{d}{s}.
\end{align}
For every input occupation $\mathbf b\in\cB_N$, define
\begin{align}
    Z_{\mathbf b} \coloneq \sum_{\boldsymbol\Theta\in\cC_{d,K}} \binom{N+K}{\mathbf b+\boldsymbol\Theta}. \label{eq:Z-main}
\end{align}

\begin{theorem}[Optimal phase-state cloner]
\label{thm:phase-main}
The channel from $\Sym^N(\C^d)$ to $\Sym^{N+K}(\C^d)$ defined by
\begin{align}
    \Phi_{\mathrm{ph}}(X) = \sum_{\boldsymbol\Gamma\in\cC_{d,K}} A_{\boldsymbol\Gamma}X A_{\boldsymbol\Gamma}^{\dagger}, \label{eq:phase-channel-main}
\end{align}
with Kraus operators
\begin{align}
    A_{\boldsymbol\Gamma} = \sum_{\mathbf b\in\cB_N} \sqrt{ \frac{ \binom{N+K}{\mathbf b+\boldsymbol\Gamma} }{ Z_{\mathbf b} } } \ket{\Symb{N+K}{\mathbf b+\boldsymbol\Gamma}} \bra{\Symb{N}{\mathbf b}}, \label{eq:phase-kraus-main}
\end{align}
maximizes the worst-case global fidelity for $N\to N+K$ cloning of the phase states in \cref{eq:phase-family-main}. Its global fidelity is independent of $x$ and equals
\begin{align}
    F_{\mathrm{ph}}^\star(d,N,K) = \frac{|\cC_{d,K}|}{d^{2N+K}} \left( \sum_{\mathbf b\in\cB_N} \frac{ \sqrt{\binom{N}{\mathbf b}}\, \binom{N+K}{\mathbf b+\boldsymbol\Gamma_0} }{ \sqrt{Z_{\mathbf b}} } \right)^2, \label{eq:phase-fidelity-main}
\end{align}
where $\boldsymbol\Gamma_0$ is any element of $\cC_{d,K}$.
\end{theorem}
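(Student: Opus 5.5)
The proof has three parts: verify the proposed channel and compute its fidelity, reduce the optimization over all channels to a phase-covariant semidefinite program, and exhibit a dual certificate.

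\emph{Step 1 (channel and fidelity).} First I check that $\Phi_{\mathrm{ph}}$ is trace preserving. Each $A_{\boldsymbol\Gamma}$ maps occupation states to occupation states. The map $\mathbf b\mapsto\mathbf b+\boldsymbol\Gamma$ is injective, so $A_{\boldsymbol\Gamma}^\dagger A_{\boldsymbol\Gamma}$ is diagonal in the basis $\ket{\Symb{N}{\mathbf b}}$ with entries $\binom{N+K}{\mathbf b+\boldsymbol\Gamma}/Z_{\mathbf b}$. Summing over $\boldsymbol\Gamma\in\cC_{d,K}$ gives $\bI$ by the definition \cref{eq:Z-main}.

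For the fidelity, I insert the expansion \cref{eq:phase-expansion-main} into $\bra{\psi_x}^{\otimes(N+K)}A_{\boldsymbol\Gamma}\ket{\psi_x}^{\otimes N}$. The phases $\overline{x^{\mathbf b+\boldsymbol\Gamma}}x^{\mathbf b}=\overline{x^{\boldsymbol\Gamma}}$ factor out, which shows $x$-independence. Each squared modulus is then $d^{-(2N+K)}\bigl(\sum_{\mathbf b}\sqrt{\binom N{\mathbf b}}\binom{N+K}{\mathbf b+\boldsymbol\Gamma}/\sqrt{Z_{\mathbf b}}\bigr)^2$. This sum does not depend on $\boldsymbol\Gamma\in\cC_{d,K}$, because a coordinate permutation $\sigma$ maps $\boldsymbol\Gamma_0$ to $\boldsymbol\Gamma$, leaves $\cC_{d,K}$ (hence $Z$) invariant, and the summation variable $\mathbf b$ can be permuted accordingly. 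Summing over $\boldsymbol\Gamma$ produces the factor $|\cC_{d,K}|$ and yields \cref{eq:phase-fidelity-main}.

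\emph{Step 2 (reduction to an SDP).} The phase group is a compact torus, so twirling a channel by $U_x^{\otimes N}$ and $U_x^{\otimes(N+K)}$ does not decrease the worst-case fidelity. It also makes the fidelity constant on the family, so worst-case and average optimum coincide and I may assume covariance. For a covariant channel the Choi operator commutes with $\overline{U_x}^{\otimes N}\otimes U_x^{\otimes(N+K)}$. It is therefore block diagonal in the shift $\boldsymbol\Delta=\mathbf b'-\mathbf b\in\cB_K$, and equivalently the channel admits Kraus operators that each change occupation by a fixed $\boldsymbol\Delta$. The fidelity then becomes
\[
d^{-(2N+K)}\sum_{\boldsymbol\Delta}\sum_{\mathbf b,\mathbf c}\sqrt{\tbinom N{\mathbf b}\tbinom N{\mathbf c}}\,\sqrt{\tbinom{N+K}{\mathbf b+\boldsymbol\Delta}\tbinom{N+K}{\mathbf c+\boldsymbol\Delta}}\;(C_{\boldsymbol\Delta})_{\mathbf b\mathbf c},
\]
with positive semidefinite blocks $C_{\boldsymbol\Delta}\succeq0$ indexed by $\cB_N$, subject to the trace constraint $\sum_{\boldsymbol\Delta}\operatorname{diag}(C_{\boldsymbol\Delta})=\mathbf 1$. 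The objective is linear, so this is an SDP. Its dual reads: minimize $\sum_{\mathbf b}y_{\mathbf b}$ subject to $\operatorname{diag}(y)\succeq d^{-(2N+K)}v_{\boldsymbol\Delta}v_{\boldsymbol\Delta}^{\mathsf T}$ for every $\boldsymbol\Delta$, where $(v_{\boldsymbol\Delta})_{\mathbf b}=\sqrt{\binom N{\mathbf b}\binom{N+K}{\mathbf b+\boldsymbol\Delta}}$. Equivalently, the condition is
\[
d^{-(2N+K)}\sum_{\mathbf b}\frac{\binom N{\mathbf b}\binom{N+K}{\mathbf b+\boldsymbol\Delta}}{y_{\mathbf b}}\le1 .
\]

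\emph{Step 3 (dual certificate, main obstacle).} Complementary slackness with the primal point suggests a specific $y$. The primal blocks are rank one for balanced $\boldsymbol\Delta$, with $(C_{\boldsymbol\Gamma})_{\mathbf b\mathbf c}\propto u_{\mathbf b}u_{\mathbf c}$, where $u_{\mathbf b}=\sqrt{\binom{N+K}{\mathbf b+\boldsymbol\Gamma}/Z_{\mathbf b}}$. This leads to the choice
\[
y_{\mathbf b}=\lambda\sqrt{\tbinom N{\mathbf b}Z_{\mathbf b}},\qquad \lambda=\sqrt{F^\star/|\cC_{d,K}|}\;d^{-(2N+K)/2}\cdot d^{N+K/2}\ \text{(fixed by matching the value)}.
\]
With this choice $\sum_{\mathbf b}y_{\mathbf b}=F^\star_{\mathrm{ph}}$ should follow by direct substitution. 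The dual constraint is then tight for balanced $\boldsymbol\Delta$ by construction.

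The hard part is showing that the constraint holds for unbalanced $\boldsymbol\Delta$. Here I would use the unit-transfer argument. If $\Delta_i\ge\Delta_j+2$, set $\boldsymbol\Delta'=\boldsymbol\Delta-e_i+e_j$. The plan is to compare the two constraint sums termwise after the reindexing $\mathbf b\mapsto\mathbf b+e_i-e_j$, and to use log-concavity of multinomials, namely $(b_i+\Delta_i)(\ldots)$ ratio inequalities, to show that the sum for $\boldsymbol\Delta$ is at most the sum for $\boldsymbol\Delta'$. Iterating this reaches a balanced shift, where equality holds. The delicate point is that $Z_{\mathbf b}$ also changes under the reindexing. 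I expect to need a monotonicity lemma for $\sqrt{\binom N{\mathbf b}/Z_{\mathbf b}}$ in the direction $e_i-e_j$, proven by comparing $Z_{\mathbf b}$ and $Z_{\mathbf b+e_i-e_j}$ term by term via a permutation of $\cC_{d,K}$ swapping coordinates $i,j$. Boundary terms with $b_j=0$ need separate care. Once the monotonicity lemma is in place, weak duality gives optimality.
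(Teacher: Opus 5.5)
Your Steps 1 and 2, and the dual point $y_{\mathbf b}\propto\sqrt{\binom N{\mathbf b}Z_{\mathbf b}}$ read off from complementary slackness, follow the paper, with two slips.

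First, the shifts $\boldsymbol\Delta=\mathbf b'-\mathbf b$ are integer vectors with $\sum_j\Delta_j=K$, and their entries may be negative. For example, with $d=2$ and $N=K=1$, the transition $(0,1)\mapsto(2,0)$ has $\boldsymbol\Delta=(2,-1)$. So the shifts do not lie in $\cB_K$, and the dual constraint must be checked for all of them.

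Second, set $\Lambda(\boldsymbol\Delta)=\sum_{\mathbf b\in\cB_N}\sqrt{\binom N{\mathbf b}}\binom{N+K}{\mathbf b+\boldsymbol\Delta}/\sqrt{Z_{\mathbf b}}$. The constant must be $\lambda=\Lambda(\boldsymbol\Gamma_0)/d^{2N+K}$, whereas your expression simplifies to $\Lambda(\boldsymbol\Gamma_0)d^{-(2N+K)/2}$. With the correct $\lambda$, the objective equals $F^\star_{\mathrm{ph}}$, because $\sum_{\mathbf b}\sqrt{\binom N{\mathbf b}Z_{\mathbf b}}=\sum_{\boldsymbol\Gamma\in\cC_{d,K}}\Lambda(\boldsymbol\Gamma)$. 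The dual constraint for $\boldsymbol\Delta$ then reads exactly $\Lambda(\boldsymbol\Delta)\le\Lambda(\boldsymbol\Gamma_0)$.

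The genuine gap is your proof of $\Lambda(\boldsymbol\Delta)\le\Lambda(\boldsymbol\Delta')$, and this step carries all of the optimality claim.

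The monotonicity lemma you plan for $w(\mathbf b)=\sqrt{\binom N{\mathbf b}/Z_{\mathbf b}}$ in the direction $\mathbf e_i-\mathbf e_j$ is false. Since $\cC_{d,K}$ is invariant under swapping $i$ and $j$, so is $w$. On a line $\mathbf b(r)=\boldsymbol\eta+r\mathbf e_i+(n-r)\mathbf e_j$ this gives $w_r=w_{n-r}$, so $w$ cannot be monotone along the line unless it is constant. The output factor also fails termwise: $\binom{N+K}{\mathbf b+\boldsymbol\Delta'}/\binom{N+K}{\mathbf b+\boldsymbol\Delta}=(b_i+\Delta_i)/(b_j+\Delta_j+1)$, which is below $1$ once $b_j$ is large.

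Hence no termwise comparison can work, with or without reindexing. Take $d=2$, $N=K=2$, $\boldsymbol\Delta=(2,0)$, $\boldsymbol\Delta'=(1,1)$. The terms of $\Lambda(\boldsymbol\Delta)$ are $\tfrac12,\tfrac4{\sqrt3},3$ and those of $\Lambda(\boldsymbol\Delta')$ are $2,\tfrac6{\sqrt3},2$. No injective matching dominates term by term, yet $\Lambda(\boldsymbol\Delta)<\Lambda(\boldsymbol\Delta')$ holds in aggregate.

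The missing idea, which is the paper's route, is a line-by-line argument using symmetric log-concave sequences.

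1. On each $(i,j)$-line the contribution is, up to a factor unchanged by the transfer, $F(a)=\sum_{r=0}^n w_r\binom{L}{r+a}$ with $L=n+\Delta_i+\Delta_j$. It is evaluated at $a=\Delta_i$ for $\boldsymbol\Delta$ and at $a=\Delta_i-1$ for $\boldsymbol\Delta'$.
2. The weights $w_r$ are symmetric and log-concave. Indeed, $Z_{\mathbf b(r)}/\binom N{\mathbf b(r)}$ is a sum over $\boldsymbol\Gamma$ of log-convex terms proportional to $\frac{r!}{(r+\Gamma_i)!}\frac{(n-r)!}{(n-r+\Gamma_j)!}$, hence log-convex by Cauchy--Schwarz.
3. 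Therefore $F$ is a convolution of two log-concave sequences, so it is log-concave by Hoggar's theorem.
4. The reflection $r\mapsto n-r$ shows $F$ is symmetric about $(\Delta_i+\Delta_j)/2$, so moving $a$ one step toward this center cannot decrease $F$.

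Symmetric unimodality of $w$ would also suffice, via Abel summation pairing $r$ with $n+1-r$. Your idea of pairing $\boldsymbol\Gamma$ with its $i\leftrightarrow j$ swap inside $Z_{\mathbf b}$ could supply that unimodality. However, the outer comparison must exploit the reflection structure of the whole line rather than a term-by-term bound.
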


When $K=qd$, the balanced shift is unique and \cref{thm:phase-main} recovers the economical cloner in \cite{BuscemiDArianoMacchiavello2005}. When $d$ does not divide $K$, there are $\binom{d}{s}$ equally balanced shifts and the optimal channel combines all of them.

\subsection{Covariance reduction and the block SDP}
\label{subsec:phase-sdp}

For a channel $\Phi$, define its orbit-averaged global fidelity by
\begin{align}
    F_{\mathrm{av}}(\Phi) \coloneq \int \bra{\psi_x}^{\otimes(N+K)} \Phi\!\left( \left(\ketbra{\psi_x}{\psi_x}\right)^{\otimes N} \right) \ket{\psi_x}^{\otimes(N+K)} d\mu(x), \label{eq:average-fidelity-main}
\end{align}
where $\mu$ denotes the probability measure obtained by choosing the phases uniformly on the unit circle. Let $U_z\ket j=z_j\ket j$, where $|z_j|=1$. Twirling $\Phi$ according to
\begin{align}
    \overline\Phi(X) \coloneq \int (U_z^\dagger)^{\otimes(N+K)} \Phi\!\left( U_z^{\otimes N}X (U_z^\dagger)^{\otimes N} \right) U_z^{\otimes(N+K)} d\mu(z) \label{eq:phase-twirl-main}
\end{align}
produces a phase-covariant CPTP map. Its fidelity is constant on the phase orbit and equals $F_{\mathrm{av}}(\Phi)$, which is at least the worst-case fidelity of $\Phi$. Hence an optimal worst-case cloner may be taken to be phase covariant.

We may also restrict the output to $\Sym^{N+K}(\C^d)$. Indeed, if $P_{\mathrm{sym}}$ denotes the projector onto this subspace and $\tau_{\mathrm{sym}}$ is any phase-invariant state supported on it, then
\begin{align}
    \mathcal C(\rho) \coloneq P_{\mathrm{sym}}\rho P_{\mathrm{sym}} + \Tr\!\left[ (\Id-P_{\mathrm{sym}})\rho \right]\tau_{\mathrm{sym}}
\end{align}
is CPTP and cannot decrease the overlap with any target $\ket{\psi_x}^{\otimes(N+K)}$.

Substituting the occupation expansion in \cref{eq:phase-expansion-main} into \cref{eq:average-fidelity-main} gives
\begin{align}
    F_{\mathrm{av}}(\Phi) ={}&
    \frac{1}{d^{2N+K}}
    \sum_{\substack{
        \mathbf b,\mathbf c\in\cB_N\\
        \mathbf a,\mathbf a'\in\cB_{N+K}
    }}
    \sqrt{
        \binom{N}{\mathbf b}
        \binom{N}{\mathbf c}
        \binom{N+K}{\mathbf a}
        \binom{N+K}{\mathbf a'}
    }
    \int x^{\mathbf b-\mathbf c-\mathbf a+\mathbf a'}d\mu(x)
    \nonumber\\
    &\times
    \bra{\Symb{N+K}{\mathbf a}}
    \Phi\!\left(
        \ket{\Symb{N}{\mathbf b}}
        \bra{\Symb{N}{\mathbf c}}
    \right)
    \ket{\Symb{N+K}{\mathbf a'}}.
    \label{eq:average-fidelity-expanded-main}
\end{align}
For $\mathbf q\in\mathbb Z^d$, write $z^{\mathbf q}\coloneq\prod_jz_j^{q_j}$. Character orthogonality gives
\begin{align}
    \int z^{\mathbf q}d\mu(z)
    =
    \begin{cases}
        1,&\mathbf q=\mathbf0,\\
        0,&\mathbf q\neq\mathbf0.
    \end{cases}
    \label{eq:character-orthogonality-main}
\end{align}
Consequently, the integral in \cref{eq:average-fidelity-expanded-main} vanishes unless
\begin{align}
    \mathbf a-\mathbf b=\mathbf a'-\mathbf c.
    \label{eq:charge-selection-main}
\end{align}
Write this common occupation shift as $\boldsymbol\Delta$. It is an integer vector with $\sum_j\Delta_j=K$; its individual entries may be negative. Define the set of admissible input occupations for this shift by
\begin{align}
    \mathcal D_{\boldsymbol\Delta}
    \coloneq
    \left\{
        \mathbf b\in\cB_N:
        \mathbf b+\boldsymbol\Delta\in\cB_{N+K}
    \right\}.
    \label{eq:shift-domain-main}
\end{align}
For each shift, define the matrix indexed by $\mathbf b,\mathbf c\in\mathcal D_{\boldsymbol\Delta}$ by
\begin{align}
    (J_{\boldsymbol\Delta})_{\mathbf b,\mathbf c}
    \coloneq
    \bra{\Symb{N+K}{\mathbf b+\boldsymbol\Delta}}
    \Phi\!\left(
        \ket{\Symb{N}{\mathbf b}}
        \bra{\Symb{N}{\mathbf c}}
    \right)
    \ket{\Symb{N+K}{\mathbf c+\boldsymbol\Delta}}.
    \label{eq:J-block-main}
\end{align}
Writing $\mathbf a=\mathbf b+\boldsymbol\Delta$ and $\mathbf a'=\mathbf c+\boldsymbol\Delta$ in \cref{eq:average-fidelity-expanded-main} therefore gives
\begin{align*}
    F_{\mathrm{av}}(\Phi)
    =
    \frac{1}{d^{2N+K}}
    \sum_{\boldsymbol\Delta}
    \sum_{\mathbf b,\mathbf c\in\mathcal D_{\boldsymbol\Delta}}
    \sqrt{
        \binom{N}{\mathbf b}
        \binom{N}{\mathbf c}
        \binom{N+K}{\mathbf b+\boldsymbol\Delta}
        \binom{N+K}{\mathbf c+\boldsymbol\Delta}
    }
    (J_{\boldsymbol\Delta})_{\mathbf b,\mathbf c}.
\end{align*}

We now express the optimization over quantum channels in terms of their Choi operators. The requirements of complete positivity and trace preservation then become constraints on the blocks $J_{\boldsymbol\Delta}$. The channel matrix elements in \cref{eq:average-fidelity-expanded-main} are the entries of the unnormalized Choi operator
\begin{align*}
    J_\Phi
    =
    \sum_{\mathbf b,\mathbf c\in\cB_N}
    \ket{\Symb{N}{\mathbf b}}
    \bra{\Symb{N}{\mathbf c}}
    \otimes
    \Phi\!\left(
        \ket{\Symb{N}{\mathbf b}}
        \bra{\Symb{N}{\mathbf c}}
    \right).
\end{align*}
As established above, we may take $\Phi$ to be phase covariant, so twirling leaves it unchanged. By \cref{eq:phase-twirl-main}, twirling multiplies each channel matrix element by the same phase integral as in \cref{eq:average-fidelity-expanded-main}. Thus the matrix elements themselves vanish unless \cref{eq:charge-selection-main} holds. The Choi operator therefore decomposes into blocks labeled by the common occupation shift $\boldsymbol\Delta$, and these blocks are precisely the matrices $J_{\boldsymbol\Delta}$.

Complete positivity is equivalent to positivity of the Choi operator. Since its blocks act on mutually orthogonal subspaces, this is equivalent to
\begin{align}
    J_{\boldsymbol\Delta}\succeq0
    \qquad
    \textnormal{for every }\boldsymbol\Delta.
    \label{eq:J-positive-main}
\end{align}
For trace preservation, taking the partial trace over the output gives
\begin{align*}
    \Tr_{\mathrm{out}}J_\Phi
    ={}&
    \sum_{\boldsymbol\Delta}
    \sum_{\mathbf b,\mathbf c\in\mathcal D_{\boldsymbol\Delta}}
    (J_{\boldsymbol\Delta})_{\mathbf b,\mathbf c}
    \ket{\Symb{N}{\mathbf b}}
    \bra{\Symb{N}{\mathbf c}}
    \braket{
        \Symb{N+K}{\mathbf c+\boldsymbol\Delta}
    }{
        \Symb{N+K}{\mathbf b+\boldsymbol\Delta}
    }
    \\
    ={}&
    \sum_{\mathbf b\in\cB_N}
    \left(
        \sum_{\boldsymbol\Delta:
            \mathbf b\in\mathcal D_{\boldsymbol\Delta}}
        (J_{\boldsymbol\Delta})_{\mathbf b,\mathbf b}
    \right)
    \ket{\Symb{N}{\mathbf b}}
    \bra{\Symb{N}{\mathbf b}}.
\end{align*}
The second equality uses orthonormality of the occupation basis. Requiring this partial trace to equal the identity on $\Sym^N(\C^d)$ is therefore equivalent to
\begin{align}
    \sum_{\boldsymbol\Delta:
        \mathbf b\in\mathcal D_{\boldsymbol\Delta}}
    (J_{\boldsymbol\Delta})_{\mathbf b,\mathbf b}
    =
    1
    \qquad
    (\mathbf b\in\cB_N).
    \label{eq:J-trace-main}
\end{align}
Conversely, any collection of matrices satisfying \cref{eq:J-positive-main,eq:J-trace-main} can be assembled into a block-diagonal Choi operator with these occupation-shift blocks. It is positive and has partial trace equal to the identity, so it defines a CPTP map. Its block structure ensures phase covariance. Thus these constraints describe exactly the channels over which we need to optimize.

\begin{proposition}[Phase-covariant primal and dual SDPs]
\label{prop:phase-sdp-main}
For every shift $\boldsymbol\Delta$, let $v_{\boldsymbol\Delta}\in\mathbb R^{\mathcal D_{\boldsymbol\Delta}}$ be the vector with components
\begin{align}
    v_{\boldsymbol\Delta}(\mathbf b) \coloneq \frac{1}{d^{(2N+K)/2}} \sqrt{ \binom{N}{\mathbf b} \binom{N+K}{\mathbf b+\boldsymbol\Delta} }. \label{eq:v-delta-main}
\end{align}
The optimal worst-case phase-state cloning fidelity is the value of the primal SDP
\begin{align}
    \underset{\{J_{\boldsymbol\Delta}\}}{\textnormal{maximize}} \quad& \sum_{\boldsymbol\Delta} v_{\boldsymbol\Delta}^{\dagger} J_{\boldsymbol\Delta} v_{\boldsymbol\Delta} \label{eq:phase-primal-main}\\
    \textnormal{subject to} \quad& J_{\boldsymbol\Delta}\succeq0 \qquad \textnormal{for every }\boldsymbol\Delta, \nonumber\\
    & \sum_{\boldsymbol\Delta: \mathbf b\in\mathcal D_{\boldsymbol\Delta}} (J_{\boldsymbol\Delta})_{\mathbf b,\mathbf b} =1 \qquad (\mathbf b\in\cB_N). \nonumber
\end{align}
Its dual is
\begin{align}
    \underset{\{y_{\mathbf b}\}}{\textnormal{minimize}} \quad& \sum_{\mathbf b\in\cB_N}y_{\mathbf b} \label{eq:phase-dual-main}\\
    \textnormal{subject to} \quad& \operatorname{diag}(y_{\mathbf b})_ {\mathbf b\in\mathcal D_{\boldsymbol\Delta}} \succeq v_{\boldsymbol\Delta} v_{\boldsymbol\Delta}^{\dagger} \qquad \textnormal{for every }\boldsymbol\Delta. \nonumber
\end{align}
\end{proposition}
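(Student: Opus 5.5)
The proof has two parts. First, the primal SDP computes the optimal worst-case fidelity. Second, the dual is the Lagrangian dual of that primal.

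\emph{Primal.} The preceding discussion already shows that an optimal worst-case cloner may be taken phase covariant and with output in $\Sym^{N+K}(\C^d)$. For such a channel the worst-case fidelity equals $F_{\mathrm{av}}(\Phi)$. The block expansion of $F_{\mathrm{av}}$ derived above can be rewritten as
\[
F_{\mathrm{av}}=\sum_{\boldsymbol\Delta}\sum_{\mathbf b,\mathbf c\in\mathcal D_{\boldsymbol\Delta}} v_{\boldsymbol\Delta}(\mathbf b)\,(J_{\boldsymbol\Delta})_{\mathbf b,\mathbf c}\,v_{\boldsymbol\Delta}(\mathbf c)=\sum_{\boldsymbol\Delta}v_{\boldsymbol\Delta}^\dagger J_{\boldsymbol\Delta}v_{\boldsymbol\Delta},
\]
where the factor $d^{-(2N+K)}$ is split symmetrically between the two copies of $v_{\boldsymbol\Delta}$. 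The text has also shown that the feasible sets coincide: the constraints \cref{eq:J-positive-main,eq:J-trace-main} describe exactly the phase-covariant channels. Conversely, every feasible block family defines a covariant channel whose worst-case fidelity is its objective value. So the primal optimum is the optimal worst-case fidelity. Only finitely many shifts occur, since $\mathbf b\in\cB_N$ and $\mathbf b+\boldsymbol\Delta\in\cB_{N+K}$ force $\boldsymbol\Delta$ into a bounded set. The program is therefore a finite SDP, and compactness of the feasible set gives attainment.

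\emph{Dual.} I introduce a real multiplier $y_{\mathbf b}$ for each equality constraint and form the Lagrangian
\[
\sum_{\mathbf b}y_{\mathbf b}+\sum_{\boldsymbol\Delta}\Tr\!\bigl[J_{\boldsymbol\Delta}\bigl(v_{\boldsymbol\Delta}v_{\boldsymbol\Delta}^\dagger-\operatorname{diag}(y_{\mathbf b})_{\mathbf b\in\mathcal D_{\boldsymbol\Delta}}\bigr)\bigr].
\]
The supremum over $J_{\boldsymbol\Delta}\succeq0$ is finite exactly when $\operatorname{diag}(y)\succeq v_{\boldsymbol\Delta}v_{\boldsymbol\Delta}^\dagger$ for every $\boldsymbol\Delta$, and it then equals $\sum_{\mathbf b}y_{\mathbf b}$. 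This gives \cref{eq:phase-dual-main}. Weak duality follows directly: for feasible $J$ and $y$,
\[
\sum_{\boldsymbol\Delta}v_{\boldsymbol\Delta}^\dagger J_{\boldsymbol\Delta}v_{\boldsymbol\Delta}\le\sum_{\boldsymbol\Delta}\Tr\bigl[J_{\boldsymbol\Delta}\operatorname{diag}(y)\bigr]=\sum_{\mathbf b}y_{\mathbf b}\sum_{\boldsymbol\Delta\ni\mathbf b}(J_{\boldsymbol\Delta})_{\mathbf b\mathbf b}=\sum_{\mathbf b}y_{\mathbf b}.
\]

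\emph{Strong duality.} This is the only nonroutine point, and I would settle it with Slater's condition on the dual. Choosing $y_{\mathbf b}$ larger than $\max_{\boldsymbol\Delta}\|v_{\boldsymbol\Delta}\|^2$ makes every dual constraint strict, since $\operatorname{diag}(y)$ then strictly dominates the rank-one matrix $v_{\boldsymbol\Delta}v_{\boldsymbol\Delta}^\dagger$. Hence the optimal values coincide and the primal value is attained. Strictly speaking, only weak duality is needed later, because optimality is certified by exhibiting primal and dual points with equal values.
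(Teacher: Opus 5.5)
Your proposal is correct and follows the paper's proof. You rewrite the block expansion of $F_{\mathrm{av}}$ as $\sum_{\boldsymbol\Delta}v_{\boldsymbol\Delta}^\dagger J_{\boldsymbol\Delta}v_{\boldsymbol\Delta}$, use the already-established correspondence between the block constraints and phase-covariant channels, and read off the dual from the same Lagrangian. Your added weak-duality computation and the Slater argument (strict dual feasibility gives zero duality gap and primal attainment) are correct extras that the paper omits, since it later certifies optimality by exhibiting primal and dual points with matching values.
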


\begin{proof}
With $v_{\boldsymbol\Delta}$ defined in \cref{eq:v-delta-main}, the preceding fidelity calculation reads
\begin{align*}
    F_{\mathrm{av}}(\Phi)
    =
    \sum_{\boldsymbol\Delta}
    v_{\boldsymbol\Delta}^{\dagger}
    J_{\boldsymbol\Delta}
    v_{\boldsymbol\Delta}.
\end{align*}
Together with \cref{eq:J-positive-main,eq:J-trace-main}, this proves the primal formulation. For the dual, introduce a multiplier $y_{\mathbf b}$ for each constraint in \cref{eq:J-trace-main}. The Lagrangian is
\begin{align}
    \sum_{\mathbf b\in\cB_N}y_{\mathbf b} + \sum_{\boldsymbol\Delta} \Tr\!\left[ J_{\boldsymbol\Delta} \left( v_{\boldsymbol\Delta} v_{\boldsymbol\Delta}^{\dagger} - \operatorname{diag}(y_{\mathbf b})_ {\mathbf b\in\mathcal D_{\boldsymbol\Delta}} \right) \right].
\end{align}
Its supremum over $J_{\boldsymbol\Delta}\succeq0$ is finite precisely when the matrix inequalities in \cref{eq:phase-dual-main} hold, which proves the dual formulation.
\end{proof}

For later use, observe that if every $y_{\mathbf b}>0$, then the rank-one matrix constraint for a fixed $\boldsymbol\Delta$ is equivalent to the scalar inequality
\begin{align}
    \sum_{\mathbf b\in\mathcal D_{\boldsymbol\Delta}} \frac{ |v_{\boldsymbol\Delta}(\mathbf b)|^2 }{ y_{\mathbf b} } \leq1. \label{eq:rank-one-dual-main}
\end{align}
Indeed, conjugating the matrix inequality by the inverse square root of its diagonal left-hand side reduces it to $\Id\succeq uu^\dagger$, where $u_{\mathbf b}=v_{\boldsymbol\Delta}(\mathbf b)/\sqrt{y_{\mathbf b}}$. Since $uu^\dagger$ is rank 1, it has the single nonzero eigenvalue $\|u\|_2^2$, the matrix inequality is equivalent to \cref{eq:rank-one-dual-main}.

\subsection{Balanced channel and its moment representation}
\label{subsec:phase-balanced-channel}

The phase moments are diagonal in the occupation basis:
\begin{align}
    D_\ell &\coloneq \int \left( \ketbra{\psi_x}{\psi_x} \right)^{\otimes\ell} d\mu(x) \nonumber\\
    &= \frac{1}{d^\ell} \sum_{\mathbf a\in\cB_\ell} \binom{\ell}{\mathbf a} \ketbra{ \Symb{\ell}{\mathbf a} }{ \Symb{\ell}{\mathbf a} }. \label{eq:phase-moment-main}
\end{align}
Define the projector onto balanced $K$-copy occupations by
\begin{align}
    P_{\mathrm{bal},K} \coloneq \sum_{\boldsymbol\Gamma\in\cC_{d,K}} \ketbra{ \Symb{K}{\boldsymbol\Gamma} }{ \Symb{K}{\boldsymbol\Gamma} } \label{eq:balanced-projector-main}
\end{align}
and the normalization operator
\begin{align}
    R_{N,K}^{\mathrm{bal}} &\coloneq \Tr_K\!\left[ (\Id^{\otimes N}\otimes P_{\mathrm{bal},K}) D_{N+K}^2 (\Id^{\otimes N}\otimes P_{\mathrm{bal},K}) \right] \nonumber\\
    &= \Tr_K\!\left[ (\Id^{\otimes N}\otimes P_{\mathrm{bal},K}) D_{N+K}^2 \right]. \label{eq:petz-normalization-main}
\end{align}
The second equality follows from $P_{\mathrm{bal},K}^2=P_{\mathrm{bal},K}$ and cyclicity on the subsystem being traced out. The same channel as in \cref{thm:phase-main} can then be written as the Petz-type transpose map
\begin{align}
    \Phi_{\mathrm{ph}}(X) ={}& D_{N+K} \Bigl( (R_{N,K}^{\mathrm{bal}})^{-1/2} X (R_{N,K}^{\mathrm{bal}})^{-1/2} \otimes P_{\mathrm{bal},K} \Bigr) D_{N+K}. \label{eq:petz-channel-main}
\end{align}
Here Petz-type refers to the transpose-channel structure associated with the reference operator $D_{N+K}^2$ and the restriction of the last $K$ systems to the balanced sector.

\begin{proposition}[Normalization and Kraus form]
\label{prop:phase-normalization-main}
For any fixed $\boldsymbol\Gamma_0\in\cC_{d,K}$,
\begin{align}
    R_{N,K}^{\mathrm{bal}} = \frac{ \binom{K}{\boldsymbol\Gamma_0} }{ d^{2(N+K)} } \sum_{\mathbf b\in\cB_N} \binom{N}{\mathbf b} Z_{\mathbf b} \ketbra{ \Symb{N}{\mathbf b} }{ \Symb{N}{\mathbf b} }. \label{eq:petz-R-diagonal-main}
\end{align}
Consequently, the map in \cref{eq:petz-channel-main} has the Kraus representation in \cref{eq:phase-channel-main,eq:phase-kraus-main} and is trace preserving on $\Sym^N(\C^d)$.
\end{proposition}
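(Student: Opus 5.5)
The plan is to compute $R_{N,K}^{\mathrm{bal}}$ directly in the occupation basis, and then read off the Kraus operators from \cref{eq:petz-channel-main}.

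\emph{Step 1: expand the integrand.} By \cref{eq:phase-moment-main}, $D_{N+K}^2=d^{-2(N+K)}\sum_{\mathbf a\in\cB_{N+K}}\binom{N+K}{\mathbf a}^2\ketbra{\Symb{N+K}{\mathbf a}}{\Symb{N+K}{\mathbf a}}$. This is diagonal in the symmetric occupation basis.

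\emph{Step 2: restrict the symmetric states.} I will use the standard splitting of a symmetric occupation state across the cut $N|K$:
\begin{align*}
\ket{\Symb{N+K}{\mathbf a}}=\sum_{\mathbf b+\mathbf g=\mathbf a}\sqrt{\frac{\binom{N}{\mathbf b}\binom{K}{\mathbf g}}{\binom{N+K}{\mathbf a}}}\,\ket{\Symb{N}{\mathbf b}}\otimes\ket{\Symb{K}{\mathbf g}}.
\end{align*}
This follows by grouping words according to the occupation of their first $N$ letters.

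\emph{Step 3: take the partial trace.} Multiplying by $\Id\otimes P_{\mathrm{bal},K}$ keeps only the terms with $\mathbf g=\boldsymbol\Gamma\in\cC_{d,K}$. Taking $\Tr_K$ then pairs equal $\boldsymbol\Gamma$ on the two sides, so for fixed $\mathbf a$ we get $\mathbf b=\mathbf b'=\mathbf a-\boldsymbol\Gamma$. Hence $R$ is diagonal in $\ket{\Symb{N}{\mathbf b}}$, with coefficient
\begin{align*}
d^{-2(N+K)}\sum_{\boldsymbol\Gamma\in\cC_{d,K}}\binom{N+K}{\mathbf b+\boldsymbol\Gamma}^2\frac{\binom{N}{\mathbf b}\binom{K}{\boldsymbol\Gamma}}{\binom{N+K}{\mathbf b+\boldsymbol\Gamma}}.
\end{align*}
All balanced $\boldsymbol\Gamma$ are coordinate permutations of one another, so $\binom{K}{\boldsymbol\Gamma}=\binom{K}{\boldsymbol\Gamma_0}$. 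The coefficient therefore collapses to $\binom{K}{\boldsymbol\Gamma_0}\binom{N}{\mathbf b}Z_{\mathbf b}\,d^{-2(N+K)}$, which is \cref{eq:petz-R-diagonal-main}. Every $\binom{N}{\mathbf b}Z_{\mathbf b}$ is positive, so $R$ is invertible on $\Sym^N(\C^d)$.

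\emph{Step 4: Kraus form.} Write $P_{\mathrm{bal},K}=\sum_{\boldsymbol\Gamma}\ketbra{\Symb{K}{\boldsymbol\Gamma}}{\Symb{K}{\boldsymbol\Gamma}}$. Then \cref{eq:petz-channel-main} becomes $\sum_{\boldsymbol\Gamma}B_{\boldsymbol\Gamma}XB_{\boldsymbol\Gamma}^\dagger$ with
\begin{align*}
B_{\boldsymbol\Gamma}=D_{N+K}\bigl(\Id\otimes\ket{\Symb{K}{\boldsymbol\Gamma}}\bigr)R^{-1/2}.
\end{align*}
Acting on $\ket{\Symb{N}{\mathbf b}}$, the factor $R^{-1/2}$ is a scalar. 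The vector $\ket{\Symb{N}{\mathbf b}}\otimes\ket{\Symb{K}{\boldsymbol\Gamma}}$ has overlap $\sqrt{\binom{N}{\mathbf b}\binom{K}{\boldsymbol\Gamma}/\binom{N+K}{\mathbf b+\boldsymbol\Gamma}}$ with $\ket{\Symb{N+K}{\mathbf b+\boldsymbol\Gamma}}$ by Step 2. Its overlap with every other symmetric occupation state vanishes: that part lies in non-symmetric or other-occupation sectors, where $D_{N+K}$ is zero. Combining the eigenvalue $d^{-(N+K)}\binom{N+K}{\mathbf b+\boldsymbol\Gamma}$ with $R^{-1/2}$, the $d$-powers, $\binom{N}{\mathbf b}$ and $\binom{K}{\boldsymbol\Gamma_0}$ all cancel. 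This leaves $\sqrt{\binom{N+K}{\mathbf b+\boldsymbol\Gamma}/Z_{\mathbf b}}$, so $B_{\boldsymbol\Gamma}=A_{\boldsymbol\Gamma}$ as in \cref{eq:phase-kraus-main}.

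\emph{Step 5: trace preservation.} Since $\mathbf b\mapsto\mathbf b+\boldsymbol\Gamma$ is injective and the output states are orthonormal, $\sum_{\boldsymbol\Gamma}A_{\boldsymbol\Gamma}^\dagger A_{\boldsymbol\Gamma}$ is diagonal with entries $\sum_{\boldsymbol\Gamma}\binom{N+K}{\mathbf b+\boldsymbol\Gamma}/Z_{\mathbf b}=1$.

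The only real obstacle is bookkeeping in Steps 2–4. One must check that $D_{N+K}$ annihilates the non-symmetric part of $\ket{\Symb{N}{\mathbf b}}\otimes\ket{\Symb{K}{\boldsymbol\Gamma}}$, and that the normalizations cancel exactly. The permutation invariance of $\binom{K}{\boldsymbol\Gamma}$ over $\cC_{d,K}$ is what makes $R$ factor cleanly.
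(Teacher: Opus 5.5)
Your proposal is correct and follows essentially the same route as the paper. You diagonalize $R_{N,K}^{\mathrm{bal}}$ using the $N|K$ splitting of symmetric occupation states and the fact that $\binom{K}{\boldsymbol\Gamma}$ is the same for all balanced shifts, then expand $P_{\mathrm{bal},K}$ to read off $A_{\boldsymbol\Gamma}=D_{N+K}\bigl(\Id^{\otimes N}\otimes\ket{\Symb{K}{\boldsymbol\Gamma}}\bigr)(R_{N,K}^{\mathrm{bal}})^{-1/2}$ and verify $\sum_{\boldsymbol\Gamma}A_{\boldsymbol\Gamma}^\dagger A_{\boldsymbol\Gamma}=\Id$, which is the paper's proof step for step (your explicit check that $R_{N,K}^{\mathrm{bal}}$ is invertible on $\Sym^N(\C^d)$ is a point the paper uses only implicitly).
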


\begin{proof}
The standard splitting of a symmetric occupation state is
\begin{align}
    \ket{\Symb{N+K}{\mathbf a}} = \sum_{\substack{ \mathbf b\in\cB_N\\ \mathbf a-\mathbf b\in\cB_K}} \sqrt{ \frac{ \binom{N}{\mathbf b} \binom{K}{\mathbf a-\mathbf b} }{ \binom{N+K}{\mathbf a} } } \ket{\Symb{N}{\mathbf b}} \otimes \ket{\Symb{K}{\mathbf a-\mathbf b}}. \label{eq:symmetric-splitting-main}
\end{align}
Insert the diagonal form of $D_{N+K}^2$ from \cref{eq:phase-moment-main}, apply the balanced projector on the last $K$ systems, and use \cref{eq:symmetric-splitting-main}. The contribution indexed by $\mathbf a=\mathbf b+\boldsymbol\Gamma$ is
\begin{align}
    \frac{1}{d^{2(N+K)}} \binom{N+K}{\mathbf b+\boldsymbol\Gamma} \binom{N}{\mathbf b} \binom{K}{\boldsymbol\Gamma}.
\end{align}
All balanced $\boldsymbol\Gamma$ have the same multinomial coefficient $\binom{K}{\boldsymbol\Gamma_0}$. Summing over $\boldsymbol\Gamma\in\cC_{d,K}$ gives \cref{eq:petz-R-diagonal-main}.

Expanding $P_{\mathrm{bal},K}$ using \cref{eq:balanced-projector-main}, we can write \cref{eq:petz-channel-main} as
$$
\begin{aligned}
\Phi_{\mathrm{ph}}(X)
&= \sum_{\boldsymbol\Gamma\in\cC_{d,K}}
D_{N+K}\Bigl[
(R_{N,K}^{\mathrm{bal}})^{-1/2}X(R_{N,K}^{\mathrm{bal}})^{-1/2}
\otimes \ketbra{\Symb{K}{\boldsymbol\Gamma}}{\Symb{K}{\boldsymbol\Gamma}}
\Bigr]D_{N+K} \\
&= \sum_{\boldsymbol\Gamma\in\cC_{d,K}}
\Bigl[D_{N+K}\bigl(\Id^{\otimes N}\otimes\ket{\Symb{K}{\boldsymbol\Gamma}}\bigr)(R_{N,K}^{\mathrm{bal}})^{-1/2}\Bigr]X \\
&\qquad {}\times
\Bigl[(R_{N,K}^{\mathrm{bal}})^{-1/2}\bigl(\Id^{\otimes N}\otimes\bra{\Symb{K}{\boldsymbol\Gamma}}\bigr)D_{N+K}\Bigr] \\
&= \sum_{\boldsymbol\Gamma\in\cC_{d,K}}
A_{\boldsymbol\Gamma}X A_{\boldsymbol\Gamma}^{\dagger},
\end{aligned}
$$
where we used the self-adjointness of $D_{N+K}$ and $(R_{N,K}^{\mathrm{bal}})^{-1/2}$. Thus there is one Kraus operator for each balanced shift:
$$
A_{\boldsymbol\Gamma}
= D_{N+K}
\left(\Id^{\otimes N}\otimes\ket{\Symb{K}{\boldsymbol\Gamma}}\right)
(R_{N,K}^{\mathrm{bal}})^{-1/2}.
$$

Using \cref{eq:phase-moment-main,eq:symmetric-splitting-main}, we obtain
\begin{align}
    D_{N+K}\left(\ket{\Symb{N}{\mathbf b}}\otimes\ket{\Symb{K}{\boldsymbol\Gamma}}\right)
    = \frac{\sqrt{\binom{N}{\mathbf b}\binom{K}{\boldsymbol\Gamma}\binom{N+K}{\mathbf b+\boldsymbol\Gamma}}}{d^{N+K}}
    \ket{\Symb{N+K}{\mathbf b+\boldsymbol\Gamma}}.
\end{align}
Combining this with the diagonal normalization in \cref{eq:petz-R-diagonal-main} gives
\begin{align}
    A_{\boldsymbol\Gamma}\ket{\Symb{N}{\mathbf b}}
    &= \sqrt{\frac{\binom{N}{\mathbf b}\binom{K}{\boldsymbol\Gamma}\binom{N+K}{\mathbf b+\boldsymbol\Gamma}}{\binom{K}{\boldsymbol\Gamma_0}\binom{N}{\mathbf b}Z_{\mathbf b}}}
    \ket{\Symb{N+K}{\mathbf b+\boldsymbol\Gamma}} \nonumber\\
    &= \sqrt{\frac{\binom{N+K}{\mathbf b+\boldsymbol\Gamma}}{Z_{\mathbf b}}}
    \ket{\Symb{N+K}{\mathbf b+\boldsymbol\Gamma}},
\end{align}
Here the powers of $d$ cancel, and we used $\binom{K}{\boldsymbol\Gamma}=\binom{K}{\boldsymbol\Gamma_0}$ because every balanced shift has $s$ entries equal to $q+1$ and $d-s$ entries equal to $q$, so the product of factorials in the multinomial denominator is the same for all balanced shifts. We therefore recover \cref{eq:phase-kraus-main} explicitly:
\begin{align}
    A_{\boldsymbol\Gamma}
    = \sum_{\mathbf b\in\cB_N}
    \sqrt{\frac{\binom{N+K}{\mathbf b+\boldsymbol\Gamma}}{Z_{\mathbf b}}}
    \ket{\Symb{N+K}{\mathbf b+\boldsymbol\Gamma}}
    \bra{\Symb{N}{\mathbf b}}.
\end{align}
Taking the adjoint and multiplying gives
\begin{align}
    A_{\boldsymbol\Gamma}^{\dagger}A_{\boldsymbol\Gamma}
    &= \sum_{\mathbf b,\mathbf c\in\cB_N}
    \sqrt{\frac{\binom{N+K}{\mathbf b+\boldsymbol\Gamma}\binom{N+K}{\mathbf c+\boldsymbol\Gamma}}{Z_{\mathbf b}Z_{\mathbf c}}}
    \braket{\Symb{N+K}{\mathbf b+\boldsymbol\Gamma}}{\Symb{N+K}{\mathbf c+\boldsymbol\Gamma}}
    \ket{\Symb{N}{\mathbf b}}\bra{\Symb{N}{\mathbf c}} \nonumber\\
    &= \sum_{\mathbf b\in\cB_N}
    \frac{\binom{N+K}{\mathbf b+\boldsymbol\Gamma}}{Z_{\mathbf b}}
    \ketbra{\Symb{N}{\mathbf b}}{\Symb{N}{\mathbf b}},
\end{align}

Summing over the balanced shifts and using the definition of $Z_{\mathbf b}$, we find
\begin{align}
    \sum_{\boldsymbol\Gamma\in\cC_{d,K}}
    A_{\boldsymbol\Gamma}^{\dagger}A_{\boldsymbol\Gamma}
    &= \sum_{\mathbf b\in\cB_N}
    \frac{\sum_{\boldsymbol\Gamma\in\cC_{d,K}}\binom{N+K}{\mathbf b+\boldsymbol\Gamma}}{Z_{\mathbf b}}
    \ketbra{\Symb{N}{\mathbf b}}{\Symb{N}{\mathbf b}} \nonumber\\
    &= \sum_{\mathbf b\in\cB_N}
    \ketbra{\Symb{N}{\mathbf b}}{\Symb{N}{\mathbf b}} \nonumber\\
    &= \Id_{\Sym^N(\C^d)}.
\end{align}
Thus the channel is trace preserving on $\Sym^N(\C^d)$. 
\end{proof}

The similarity between \cref{eq:petz-channel-main} and the resolving-orbit cloner reflects a common moment-normalized transpose construction. It does not by itself imply a general optimality principle: for a resolving orbit, the relevant moment is proportional to a projector and the normalization simplifies accordingly, whereas the phase moments are nonuniform and optimality requires the additional restriction $P_{\mathrm{bal},K}$ together with the SDP certificate below.

\subsection{Performance operator and fidelity}
\label{subsec:phase-performance}

The Choi picture makes the fidelity calculation particularly transparent. Writing $\rho_x=\ketbra{\psi_x}{\psi_x}$, define the performance operator
\begin{align}
    \Omega_{N,K} \coloneq \int (\rho_x^{\otimes N})^T \otimes \rho_x^{\otimes(N+K)} d\mu(x), \label{eq:performance-operator-main}
\end{align}
where the transpose is taken in the occupation basis. For the unnormalized Choi operator $J_\Phi$ of a channel $\Phi$,
\begin{align}
    F_{\mathrm{av}}(\Phi) = \Tr(J_\Phi\Omega_{N,K}). \label{eq:performance-overlap-main}
\end{align}
Character orthogonality gives the rank-one decomposition
\begin{align}
    \Omega_{N,K} &= \sum_{\boldsymbol\Delta} \ketbra{ v_{\boldsymbol\Delta} }{ v_{\boldsymbol\Delta} }, \label{eq:performance-decomposition-main}\\
    \ket{v_{\boldsymbol\Delta}} &= \frac{1}{d^{(2N+K)/2}} \sum_{\mathbf b\in\mathcal D_{\boldsymbol\Delta}} \sqrt{ \binom{N}{\mathbf b} \binom{N+K}{\mathbf b+\boldsymbol\Delta} } \ket{\Symb{N}{\mathbf b}} \otimes \ket{\Symb{N+K}{\mathbf b + \mathbf \Delta}}. \label{eq:performance-vector-main}
\end{align}
Thus the vectors $v_{\boldsymbol\Delta}$ in the block SDP are precisely the rank-one vectors of the ensemble performance operator.

Vectorizing the Kraus operators in \cref{eq:phase-kraus-main} gives
\begin{align}
    \ket{u_{\boldsymbol\Gamma}} \coloneq \operatorname{vec}(A_{\boldsymbol\Gamma}) = \sum_{\mathbf b\in\cB_N} \sqrt{ \frac{ \binom{N+K}{\mathbf b+\boldsymbol\Gamma} }{ Z_{\mathbf b} } } \ket{\Symb{N}{\mathbf b}} \otimes \ket{\Symb{N+K}{\mathbf b + \mathbf \Gamma}}, \label{eq:kraus-vector-main}
\end{align}
and hence
\begin{align}
    J_{\mathrm{ph}} = \sum_{\boldsymbol\Gamma\in\cC_{d,K}} \ketbra{ u_{\boldsymbol\Gamma} }{ u_{\boldsymbol\Gamma} }.
\end{align}
The vectors $\ket{v_{\boldsymbol\Gamma}}$ and $\ket{u_{\boldsymbol\Gamma}}$ have different input-dependent weights but lie in the same occupation-shift sector. Since distinct sectors are orthogonal,
\begin{align}
    F_{\mathrm{ph}} &= \Tr(J_{\mathrm{ph}}\Omega_{N,K}) = \sum_{\boldsymbol\Gamma\in\cC_{d,K}} \left| \braket{v_{\boldsymbol\Gamma}}{u_{\boldsymbol\Gamma}} \right|^2, \label{eq:performance-sector-overlap-main}
\end{align}
where
\begin{align}
    \braket{v_{\boldsymbol\Gamma}}{u_{\boldsymbol\Gamma}} = \frac{1}{d^{(2N+K)/2}} \sum_{\mathbf b\in\cB_N} \frac{ \sqrt{\binom{N}{\mathbf b}}\, \binom{N+K}{\mathbf b+\boldsymbol\Gamma} }{ \sqrt{Z_{\mathbf b}} }. \label{eq:performance-sector-coefficient-main}
\end{align}
Equivalently, applying one Kraus operator directly to the phase state gives
\begin{align}
    \bra{\psi_x}^{\otimes(N+K)} A_{\boldsymbol\Gamma} \ket{\psi_x}^{\otimes N} = \frac{ \overline{x}^{\boldsymbol\Gamma} }{ d^{(2N+K)/2} } \sum_{\mathbf b\in\cB_N} \frac{ \sqrt{\binom{N}{\mathbf b}}\, \binom{N+K}{\mathbf b+\boldsymbol\Gamma} }{ \sqrt{Z_{\mathbf b}} }. \label{eq:phase-kraus-overlap-main}
\end{align}
The phase factor has modulus one, which proves that the fidelity is independent of $x$. All balanced shifts are coordinate permutations of one another. Since the remaining sum is invariant under coordinate permutations, it has the same value for every balanced shift. Squaring \cref{eq:phase-kraus-overlap-main} and summing over $\boldsymbol\Gamma\in\cC_{d,K}$ proves the fidelity formula in \cref{thm:phase-main}.

\subsection{Balanced shifts and optimality}
\label{subsec:phase-optimality}

It remains to prove that no other occupation-shift sectors can produce a larger fidelity. Our strategy is to show that the quantity $\Lambda(\boldsymbol\Delta)$ defined below cannot decrease as the shift becomes more balanced: transferring one unit from a larger entry to a smaller entry cannot decrease its value whenever the two entries differ by at least two. Iterating this step shows that balanced shifts maximize $\Lambda$, which allows us to construct a dual certificate matching the fidelity of our channel and thereby prove optimality. For any integer shift $\boldsymbol\Delta$ with $|\boldsymbol\Delta|=K$, define
\begin{align}
    \Lambda(\boldsymbol\Delta) \coloneq \sum_{\mathbf b\in\mathcal D_{\boldsymbol\Delta}} \frac{ \sqrt{\binom{N}{\mathbf b}}\, \binom{N+K}{\mathbf b+\boldsymbol\Delta} }{ \sqrt{Z_{\mathbf b}} }. \label{eq:Lambda-main}
\end{align}
We use the standard convention that a multinomial coefficient is zero if one of its lower entries is negative. This allows us to write the same quantity over the shift-independent domain $\cB_N$:
\begin{align}
    \Lambda(\boldsymbol\Delta) = \sum_{\mathbf b\in\cB_N} \frac{ \sqrt{\binom{N}{\mathbf b}}\, \binom{N+K}{\mathbf b+\boldsymbol\Delta} }{ \sqrt{Z_{\mathbf b}} }. \label{eq:Lambda-common-domain-main}
\end{align}

\begin{lemma}[Two-coordinate smoothing]
\label{lem:phase-smoothing-main}
Suppose that $\Delta_i\geq\Delta_j+2$, and define
\begin{align}
    \boldsymbol\Delta' \coloneq \boldsymbol\Delta-\mathbf e_i+\mathbf e_j.
\end{align}
Then
\begin{align}
    \Lambda(\boldsymbol\Delta') \geq \Lambda(\boldsymbol\Delta). \label{eq:phase-smoothing-main}
\end{align}
\end{lemma}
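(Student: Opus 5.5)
The plan is to exploit the permutation symmetry of the weights in \cref{eq:Lambda-common-domain-main} and reduce \cref{eq:phase-smoothing-main} to an inequality about a single binomial coefficient. Write
\[
w(\mathbf b)\coloneq\sqrt{\tbinom{N}{\mathbf b}/Z_{\mathbf b}},
\qquad
\Lambda(\boldsymbol\Delta)=\sum_{\mathbf b\in\cB_N}w(\mathbf b)\binom{N+K}{\mathbf b+\boldsymbol\Delta}.
\]
Since $\cC_{d,K}$ is closed under coordinate permutations, $Z_{\sigma\mathbf b}=Z_{\mathbf b}$ and $\binom{N}{\sigma\mathbf b}=\binom{N}{\mathbf b}$ for the transposition $\sigma=(i\,j)$. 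Hence $w(\sigma\mathbf b)=w(\mathbf b)$.

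\textbf{Reduction to two coordinates.} Fix all entries $b_k$ with $k\notin\{i,j\}$ and fix $s=b_i+b_j$. This partitions $\cB_N$ into slices on which $w$ is symmetric in $(b_i,b_j)$ and all factorials other than $(b_i+\Delta_i)!$ and $(b_j+\Delta_j)!$ are common to $\boldsymbol\Delta$ and $\boldsymbol\Delta'$. Put $p=\Delta_i$, $r=\Delta_j$ and $T=s+p+r$. The relevant factor of $\binom{N+K}{\mathbf b+\boldsymbol\Delta}$ is then proportional to $h(b_i+p)$, where
\[
h(A)\coloneq\frac{1}{A!\,(T-A)!}=\frac{1}{T!}\binom{T}{A},
\]
with the convention $1/m!=0$ for $m<0$. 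This convention matches the zero-multinomial convention, so boundary terms need no separate treatment. For $\boldsymbol\Delta'$ the total $T$ is unchanged and the factor becomes $h(b_i+p-1)$.

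Next pair each $\mathbf b$ with $\sigma\mathbf b$, i.e. pair $x=b_i$ with $y=s-x$. The unordered pair contributes $w\,[h(x+p)+h(y+p)]$, and a diagonal term $x=y$ can be counted with weight $1/2$ twice. Since $w\ge0$, it suffices to prove, for all integers $x,y\ge0$ with $x+y=s$,
\[
h(A_1-1)+h(A_2-1)\;\ge\;h(A_1)+h(A_2),\qquad A_1=x+p,\ A_2=y+p.
\]
The key geometric fact is that $A_1+A_2=s+2p=T+(p-r)\ge T+2$. So the midpoint of the pair lies at least one unit to the right of the symmetry centre $T/2$ of $h$, and shifting both points left by one moves the pair toward the centre without crossing it.

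\textbf{Main obstacle: the binomial inequality.} Order the points so that $A_1\le A_2$, and use the reflection $h(A)=h(T-A)$. Then $h(A_1)-h(A_1-1)=h(B)-h(B+1)$ with $B=T-A_1$, and the hypothesis gives $B+1\le A_2-1$. The desired inequality becomes
\[
h(A_2-1)-h(A_2)\;\ge\;h(B)-h(B+1),
\qquad B+1\le A_2-1 .
\]
If $A_1-1\ge T/2$, both points lie on the decreasing side of $h$; if $A_2\le T/2$ cannot happen, since the midpoint exceeds $T/2$.

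The first thing I would try is to avoid monotonicity of first differences, which fails for binomials near the inflection points. Instead I would use the explicit ratio $h(A-1)/h(A)=A/(T-A+1)$ and write both sides as $h(\cdot)$ times an elementary rational factor. In the case $B\ge T/2$ both sides are differences on the decreasing tail. Here I would argue by log-concavity of $\binom{T}{A}$, comparing the $\ell^1$-type decrement of the pair $\{B,B+1\}$ with that of $\{A_2-1,A_2\}$ after normalising. If that is not directly sufficient, I would fall back on a majorisation argument. The vectors $(A_1-1,T-A_1+1,A_2-1,T-A_2+1)$ and $(A_1,T-A_1,A_2,T-A_2)$ should be compared via Schur-concavity of $\sum 1/(m!n!)$-type sums under a transfer toward balance.

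The case $B<T/2$ straddles the centre. There the left-hand side is nonnegative while the right-hand side is nonpositive or small, and I expect it to close directly.

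This case analysis is where I expect the genuine difficulty of the lemma to sit. Once it is established, summing over pairs and slices yields $\Lambda(\boldsymbol\Delta')\ge\Lambda(\boldsymbol\Delta)$.
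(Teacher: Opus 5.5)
\textbf{The reduction to pairs fails.} Pairing $\mathbf b$ with $\sigma\mathbf b$ is where your argument breaks. The two-point inequality $h(A_1-1)+h(A_2-1)\ge h(A_1)+h(A_2)$ that you reduce to is false in general, so no case analysis can establish it.

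Take $d=2$, $N=98$, $K=2$, $\boldsymbol\Delta=(2,0)$ and $\boldsymbol\Delta'=(1,1)$. In your notation this is $p=2$, $r=0$, $s=98$, $T=100$. For the pair $x=43$, $y=55$ you have $A_1=45$ and $A_2=57$, and your inequality becomes $2\binom{100}{44}\ge\binom{100}{45}+\binom{100}{57}=\binom{100}{45}+\binom{100}{43}$. In fact $\binom{100}{43}+\binom{100}{45}=\bigl(\frac{44}{57}+\frac{56}{45}\bigr)\binom{100}{44}\approx 2.016\binom{100}{44}$.

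This is the phenomenon you flagged yourself. Outside its inflection points the binomial row is convex. So when the inner point sits near an inflection point and the outer point lies far in the tail, the loss at $A_1$ outweighs the gain at $A_2$. In your reflected form, $h(56)-h(57)<h(55)-h(56)$. A symmetric nonnegative weight could be concentrated on exactly such a pair, so symmetry of $w$ alone cannot imply the lemma. The proof must use how $w$ varies along the line.

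\textbf{The missing ingredient.} What you need is that $w$ is log-concave along each line.

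First, $w(x)^{-2}=Z_{\mathbf b}/\binom{N}{\mathbf b}$ with $b_i=x$, $b_j=s-x$. Each balanced shift $\boldsymbol\Gamma$ contributes a term proportional to $\frac{x!}{(x+\Gamma_i)!}\cdot\frac{(s-x)!}{(s-x+\Gamma_j)!}$. Its consecutive ratio $\frac{x+1}{x+1+\Gamma_i}\cdot\frac{s-x+\Gamma_j}{s-x}$ is nondecreasing in $x$, so each term is log-convex. A sum of log-convex sequences is log-convex by Cauchy--Schwarz, hence $w$ is log-concave.

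Second, do not split the line into pairs. Keep the whole line sum $F(a)=\sum_{x=0}^{s}w(x)\binom{T}{x+a}$. Up to a common factor, it gives the line contribution of $\boldsymbol\Delta$ at $a=p$ and of $\boldsymbol\Delta'$ at $a=p-1$. $F$ is the convolution of the reflected sequence $w(-\,\cdot)$ with the binomial row, so it is log-concave by Hoggar's theorem. By the symmetry of $w$ and of $\binom{T}{\cdot}$, it satisfies $F(p+r-a)=F(a)$.

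A symmetric log-concave sequence is nonincreasing to the right of its centre $(p+r)/2$. Since $p\ge r+2$, you have $p-1\ge(p+r)/2$, hence $F(p-1)\ge F(p)$. Summing over lines gives the lemma. This is the paper's argument, and the log-concavity (or at least symmetric unimodality) of $w$ is exactly the information your pairing discards.
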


The proof is given in \cref{app:phase-smoothing}. Its main idea is to partition the input occupations into lines on which only coordinates $i$ and $j$ vary. On each line, the weight $\sqrt{\binom{N}{\mathbf b}/Z_{\mathbf b}}$ is symmetric and log-concave, while the remaining multinomial coefficient becomes a translated binomial sequence. Hoggar's convolution theorem shows that the complete line contribution is log-concave in the translation parameter \cite{Hoggar1974}, and a direct reflection argument shows that it is symmetric. Replacing $(\Delta_i,\Delta_j)$ by $(\Delta_i-1,\Delta_j+1)$ moves this parameter toward the center and therefore cannot decrease the line contribution.

Before comparing arbitrary shifts with balanced ones, we verify that $\Lambda$ takes the same value on every balanced shift. Fix $\boldsymbol\Gamma_0,\boldsymbol\Gamma\in\cC_{d,K}$ and choose a coordinate permutation $\sigma$ such that $\boldsymbol\Gamma=\sigma\boldsymbol\Gamma_0$. Since $\sigma$ maps $\cC_{d,K}$ bijectively onto itself and multinomial coefficients are invariant under coordinate permutations, we have
\begin{align}
    Z_{\sigma\mathbf b}
    &= \sum_{\boldsymbol\Theta\in\cC_{d,K}}
    \binom{N+K}{\sigma\mathbf b+\sigma\boldsymbol\Theta} \nonumber\\
    &= \sum_{\boldsymbol\Theta\in\cC_{d,K}}
    \binom{N+K}{\mathbf b+\boldsymbol\Theta}
    = Z_{\mathbf b}.
\end{align}
Multinomial coefficients are invariant under coordinate permutations. Together with $Z_{\sigma\mathbf b}=Z_{\mathbf b}$ and the invariance of $\cB_N$ under $\sigma$, this implies that the sum in \cref{eq:Lambda-common-domain-main} satisfies $\Lambda(\boldsymbol\Gamma)=\Lambda(\sigma\boldsymbol\Gamma_0)=\Lambda(\boldsymbol\Gamma_0)$.
Thus all balanced shifts have the same value of $\Lambda$.

\begin{proposition}[Balanced shifts maximize the overlap]
\label{prop:balanced-max-main}
For every integer shift $\boldsymbol\Delta$ with $|\boldsymbol\Delta|=K$ and every $\boldsymbol\Gamma_0\in\cC_{d,K}$,
\begin{align}
    \Lambda(\boldsymbol\Delta) \leq \Lambda(\boldsymbol\Gamma_0). \label{eq:balanced-max-main}
\end{align}
\end{proposition}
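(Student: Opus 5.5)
The plan is to reach $\boldsymbol\Gamma_0$ from $\boldsymbol\Delta$ by finitely many applications of the two-coordinate smoothing of \cref{lem:phase-smoothing-main}, each of which cannot decrease $\Lambda$. At the end we use the permutation invariance of $\Lambda$ on $\cC_{d,K}$, established just before the proposition.

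For the chain, fix an integer shift $\boldsymbol\Delta$ with $|\boldsymbol\Delta|=K$; its entries may be negative. If $\max_j\Delta_j-\min_j\Delta_j\leq1$, then, since the entries sum to $K=qd+s$, $\boldsymbol\Delta$ has exactly $s$ entries equal to $q+1$ and $d-s$ equal to $q$. Thus $\boldsymbol\Delta\in\cC_{d,K}$, in particular it is nonnegative. Otherwise choose $i$ with $\Delta_i=\max_j\Delta_j$ and $j$ with $\Delta_j=\min_j\Delta_j$, so that $\Delta_i\geq\Delta_j+2$. Setting $\boldsymbol\Delta'=\boldsymbol\Delta-\mathbf e_i+\mathbf e_j$, the lemma gives $\Lambda(\boldsymbol\Delta')\geq\Lambda(\boldsymbol\Delta)$. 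The lemma is stated for arbitrary integer shifts, with the convention that multinomials with negative lower entries vanish, as in \cref{eq:Lambda-common-domain-main}.

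Termination follows from the potential $Q(\boldsymbol\Delta)=\sum_j\Delta_j^2$, a nonnegative integer. One step changes $Q$ by
\begin{align*}
(\Delta_i-1)^2+(\Delta_j+1)^2-\Delta_i^2-\Delta_j^2=-2(\Delta_i-\Delta_j)+2\leq-2,
\end{align*}
so $Q$ decreases strictly while $|\boldsymbol\Delta|=K$ is preserved. Hence after finitely many steps we reach some $\boldsymbol\Gamma\in\cC_{d,K}$ with $\Lambda(\boldsymbol\Delta)\leq\Lambda(\boldsymbol\Gamma)$. Finally, $\Lambda(\boldsymbol\Gamma)=\Lambda(\boldsymbol\Gamma_0)$ by the permutation-invariance computation preceding the proposition, since $\boldsymbol\Gamma=\sigma\boldsymbol\Gamma_0$ for some coordinate permutation $\sigma$.

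The only real content is \cref{lem:phase-smoothing-main}, which is assumed here. The point to check is that the lemma applies to shifts with negative entries, since intermediate shifts in the chain may have them. Starting from a shift with negative entries, every step raises the minimum entry, so the chain moves toward nonnegative shifts; still, the lemma must be used in its general form. If $\mathcal D_{\boldsymbol\Delta}$ happens to be empty, then $\Lambda(\boldsymbol\Delta)=0$ and the inequality is trivial, because every term in \cref{eq:Lambda-common-domain-main} is nonnegative.
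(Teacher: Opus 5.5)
Your proposal is correct and matches the paper's proof: you iterate the two-coordinate smoothing lemma (valid for shifts with negative entries thanks to the zero convention), get termination from the strictly decreasing potential $\sum_j\Delta_j^2$, and conclude by permutation invariance of $\Lambda$ on $\cC_{d,K}$. One side remark is slightly inaccurate, though the argument never uses it: a single step need not raise the minimum entry when that minimum is attained at several coordinates; it only never lowers it.
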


\begin{proof}
If $\boldsymbol\Delta$ is not balanced, then some pair of coordinates satisfies $\Delta_i\geq\Delta_j+2$. By \cref{lem:phase-smoothing-main}, transferring one unit from coordinate $i$ to coordinate $j$ does not decrease $\Lambda$. Each transfer strictly decreases $\sum_j\Delta_j^2$, so iteration terminates at a vector whose coordinates differ by at most one. Since its coordinates sum to $K$, the terminal vector belongs to $\cC_{d,K}$. All elements of $\cC_{d,K}$ have the same value of $\Lambda$ by coordinate-permutation symmetry.
\end{proof}

We can now give a dual certificate that matches the balanced channel. Fix $\boldsymbol\Gamma_0\in\cC_{d,K}$ and define
\begin{align}
    y_{\mathbf b} \coloneq \frac{ \Lambda(\boldsymbol\Gamma_0) }{ d^{2N+K} } \sqrt{ \binom{N}{\mathbf b} Z_{\mathbf b} }. \label{eq:dual-certificate-main}
\end{align}
Since $y_{\mathbf b}>0$ for every $\mathbf b\in\cB_N$, it suffices to verify the scalar dual constraint in \cref{eq:rank-one-dual-main}. Substituting \cref{eq:v-delta-main,eq:dual-certificate-main}, we obtain
\begin{align}
    \sum_{\mathbf b\in\mathcal D_{\boldsymbol\Delta}}
    \frac{|v_{\boldsymbol\Delta}(\mathbf b)|^2}{y_{\mathbf b}}
    &= \sum_{\mathbf b\in\mathcal D_{\boldsymbol\Delta}}
    \frac{
        d^{-(2N+K)}
        \binom{N}{\mathbf b}
        \binom{N+K}{\mathbf b+\boldsymbol\Delta}
    }{
        \Lambda(\boldsymbol\Gamma_0)d^{-(2N+K)}
        \sqrt{\binom{N}{\mathbf b}Z_{\mathbf b}}
    } \nonumber\\
    &= \frac{1}{\Lambda(\boldsymbol\Gamma_0)}
    \sum_{\mathbf b\in\mathcal D_{\boldsymbol\Delta}}
    \frac{
        \sqrt{\binom{N}{\mathbf b}}\,
        \binom{N+K}{\mathbf b+\boldsymbol\Delta}
    }{\sqrt{Z_{\mathbf b}}} \nonumber\\
    &= \frac{\Lambda(\boldsymbol\Delta)}{\Lambda(\boldsymbol\Gamma_0)}
    \leq 1.
    \label{eq:dual-feasible-main}
\end{align}
The remaining sum is exactly $\Lambda(\boldsymbol\Delta)$ by \cref{eq:Lambda-main}, and the inequality follows from \cref{prop:balanced-max-main}. Thus every shift satisfies its dual constraint, so the dual point is feasible. Using the definition of $Z_{\mathbf b}$ and exchanging the order of summation, we obtain
\begin{align}
    \sum_{\mathbf b\in\cB_N}\sqrt{\binom{N}{\mathbf b}Z_{\mathbf b}}
    &=\sum_{\mathbf b\in\cB_N}\frac{\sqrt{\binom{N}{\mathbf b}}}{\sqrt{Z_{\mathbf b}}}Z_{\mathbf b} \nonumber\\
    &=\sum_{\mathbf b\in\cB_N}\sum_{\boldsymbol\Gamma\in\cC_{d,K}}\frac{\sqrt{\binom{N}{\mathbf b}}\,\binom{N+K}{\mathbf b+\boldsymbol\Gamma}}{\sqrt{Z_{\mathbf b}}} \nonumber\\
    &=\sum_{\boldsymbol\Gamma\in\cC_{d,K}}\sum_{\mathbf b\in\cB_N}\frac{\sqrt{\binom{N}{\mathbf b}}\,\binom{N+K}{\mathbf b+\boldsymbol\Gamma}}{\sqrt{Z_{\mathbf b}}} \nonumber\\
    &=\sum_{\boldsymbol\Gamma\in\cC_{d,K}}\Lambda(\boldsymbol\Gamma) \nonumber\\
    &=|\cC_{d,K}|\Lambda(\boldsymbol\Gamma_0).
    \label{eq:balanced-lambda-sum-main}
\end{align}
The final equality follows because every $\boldsymbol\Gamma\in\cC_{d,K}$ is a coordinate permutation of $\boldsymbol\Gamma_0$, while the definition of $\Lambda$ is invariant under coordinate permutations. Therefore the objective value of the dual point is
\begin{align}
    \sum_{\mathbf b\in\cB_N}y_{\mathbf b}
    &= \frac{\Lambda(\boldsymbol\Gamma_0)}{d^{2N+K}}
    \sum_{\mathbf b\in\cB_N}
    \sqrt{\binom{N}{\mathbf b}Z_{\mathbf b}} \nonumber\\
    &= \frac{|\cC_{d,K}|}{d^{2N+K}}
    \Lambda(\boldsymbol\Gamma_0)^2 \nonumber\\
    &= \frac{|\cC_{d,K}|}{d^{2N+K}}
    \left(
        \sum_{\mathbf b\in\cB_N}
        \frac{
            \sqrt{\binom{N}{\mathbf b}}\,
            \binom{N+K}{\mathbf b+\boldsymbol\Gamma_0}
        }{\sqrt{Z_{\mathbf b}}}
    \right)^2 \nonumber\\
    &= F_{\mathrm{ph}}^\star(d,N,K).
    \label{eq:dual-matches-main}
\end{align}
The balanced channel is primal feasible by \cref{prop:phase-normalization-main}, and its fidelity equals the value of this dual feasible point. Weak duality therefore proves the optimality claim in \cref{thm:phase-main}.

\subsection{Uniform fidelity bounds for adding one copy}
\label{subsec:phase-one-copy-asymptotics}

We now extract the sample complexity from the exact optimal-fidelity formula for $K=1$. Throughout, $d\ge2$ and $N\ge1$. Recall that $\ket{\psi_x}=d^{-1/2}\sum_{j=0}^{d-1}x_j\ket j$ with $|x_j|=1$. Let $\cS_{\mathrm{ph}}=\{\ket{\psi_x}\bra{\psi_x}:|x_j|=1\}$ denote the phase-state family, and put
\begin{align}
    \delta_{N,d}\coloneq1-F_{\mathrm{ph}}^\star(d,N,1).
    \label{eq:phase-one-copy-infidelity}
\end{align}
Here the global fidelity is the overlap of the channel output with the entire ideal state $\ket{\psi_x}^{\otimes(N+1)}$. The optimal channel has input-independent fidelity, so the average and worst-case optima coincide.

\begin{theorem}[Uniform infidelity bounds]
\label{thm:phase-one-copy-uniform}
There are absolute constants $c,C>0$ such that, for every $d\ge2$ and $N\ge1$,
\begin{align}
    c\left(\left(1-\frac1d\right)^{N+1}+\frac{d-1}{(N+d)^2}\right)
    \le\delta_{N,d}
    \le C\left(\left(1-\frac1d\right)^{N+1}+\frac{d-1}{(N+d)^2}\right).
    \label{eq:phase-one-copy-uniform}
\end{align}
\end{theorem}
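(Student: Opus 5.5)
The plan is to specialize the exact formula of \cref{thm:phase-main} to $K=1$, rewrite it as a probabilistic quantity under the multinomial distribution, and then estimate that quantity in two regimes.

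\emph{Step 1: probabilistic form.} For $K=1$ we have $q=0$ and $s=1$, so the balanced shifts are $\boldsymbol\Gamma=\mathbf e_j$ and $|\cC_{d,1}|=d$. Using $\binom{N+1}{\mathbf b+\mathbf e_j}=\binom{N}{\mathbf b}\frac{N+1}{b_j+1}$, we get
\begin{align*}
Z_{\mathbf b}=(N+1)\binom{N}{\mathbf b}H_{\mathbf b},
\qquad
H_{\mathbf b}\coloneq\sum_j\frac{1}{b_j+1}.
\end{align*}
Since $\Lambda(\boldsymbol\Gamma_0)$ does not depend on the choice of $\boldsymbol\Gamma_0$, I will average it over the $d$ choices $\mathbf e_j$. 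This gives $\Lambda=\frac{\sqrt{N+1}}{d}\sum_{\mathbf b}\binom{N}{\mathbf b}\sqrt{H_{\mathbf b}}$. Substituting into \cref{eq:phase-fidelity-main},
\begin{align*}
F^\star_{\mathrm{ph}}(d,N,1)=\bigl(\Ex\sqrt{Y}\bigr)^2,
\qquad
Y\coloneq\frac{N+1}{d^2}\sum_{j}\frac{1}{b_j+1},
\end{align*}
where $\mathbf b\sim\mathrm{Mult}(N;\tfrac1d,\dots,\tfrac1d)$. Each $b_j\sim\mathrm{Bin}(N,1/d)$ and satisfies $\Ex\frac{1}{b_j+1}=\frac{d}{N+1}\bigl(1-(1-\tfrac1d)^{N+1}\bigr)$, so $\Ex Y=1-(1-\tfrac1d)^{N+1}$. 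This yields the exact decomposition
\begin{align*}
\delta_{N,d}=(1-\tfrac1d)^{N+1}+\operatorname{Var}\sqrt{Y}.
\end{align*}
The first term in \cref{eq:phase-one-copy-uniform} therefore appears exactly. What remains is to show $\operatorname{Var}\sqrt Y\lesssim(1-\frac1d)^{N+1}+\frac{d-1}{(N+d)^2}$, and, in the regime where the second term dominates, the matching bound $\operatorname{Var}\sqrt Y\gtrsim\frac{d-1}{(N+d)^2}$.

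\emph{Step 2: regimes.} Let $\lambda=N/d$. When $\lambda\lesssim 1$, the term $(1-1/d)^{N+1}$ is bounded below by a constant. It also dominates $\frac{d-1}{(N+d)^2}\le\frac1{d}$. Moreover $\delta\le1$ trivially. So this regime needs only crude bounds, together with a check that $\operatorname{Var}\sqrt Y\le C$, which follows from $\Ex Y\le1$. The real work is $\lambda\gtrsim1$, where the target is $\operatorname{Var}\sqrt Y\asymp d/N^2$ (up to the exponentially small term). For the upper bound I will use $\operatorname{Var}\sqrt Y\le\Ex(\sqrt Y-\sqrt{\Ex Y})^2\le\operatorname{Var}Y/\Ex Y$. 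For the lower bound I will use $\operatorname{Var}\sqrt Y\ge c\,\operatorname{Var}Y$ restricted to a high-probability event on which $Y\le 2$; the complement is controlled by Chernoff bounds on the $b_j$.

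\emph{Step 3 (main obstacle): $\operatorname{Var}Y\asymp d/N^2$.} A naive bound via negative association of multinomial counts gives $\operatorname{Var}Y\le\frac{(N+1)^2}{d^3}\operatorname{Var}\frac1{B+1}\approx 1/N$. This is too weak by a factor of $d$, because the linear fluctuations cancel: writing $b_j=\lambda+\delta_j$ with $\sum_j\delta_j=0$, we have $\frac1{b_j+1}=\frac1{\lambda+1}-\frac{\delta_j}{(\lambda+1)^2}+\frac{\delta_j^2}{(\lambda+1)^2(b_j+1)}$ exactly. The linear term sums to zero, so $Y$ is a constant plus $\frac{N+1}{d^2(\lambda+1)^2}\sum_j\frac{\delta_j^2}{b_j+1}$. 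To estimate the variance of this remainder, I would first Poissonize: take $b_j$ i.i.d.\ $\mathrm{Poi}(\lambda)$, conditioned on their sum (or handle the sum via a de-Poissonization/conditioning argument). The summands $\delta_j^2/(b_j+1)$ then have variance $\asymp\lambda^2/\lambda^2=1$ for $\lambda\gtrsim1$, which gives $\operatorname{Var}\asymp\frac{N^2}{d^4\lambda^4}\cdot d\asymp\frac{d}{N^2}$. The matching lower bound comes from the same second-order term, since the summands are nondegenerate. The delicate part is keeping constants uniform in both $d$ and $\lambda$. In particular, near $\lambda\approx1$ small counts $b_j=0$ occur, and I would absorb these contributions into the $(1-1/d)^{N+1}$ term.
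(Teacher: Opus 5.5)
Your Step 1 is exactly the paper's decomposition $\delta_{N,d}=(1-\frac1d)^{N+1}+\operatorname{Var}\sqrt Y$ (your $Y$ is the paper's $X$). Your treatment of the regime $N/d=O(1)$ matches the paper's completion argument, and the reduction $\operatorname{Var}\sqrt Y\le\operatorname{Var}Y/\mathbb{E}Y$ is valid.

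\textbf{The gap is Step~3, which carries the whole content of the theorem.} Conditioning i.i.d.\ $\mathrm{Poi}(\lambda)$ counts on their sum gives back exactly the multinomial law, so that form of Poissonization gains nothing. Dropping the conditioning destroys the constraint $\sum_j\delta_j=0$. The linear term you correctly cancelled then contributes variance of order $d/\lambda^3$ to $\sum_j1/(b_j+1)$, a factor $\lambda$ above the target $d/\lambda^4$. (Your naive bound $1/N$ is likewise off by a factor $\lambda=N/d$, not $d$.)

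So at best you can compute $\operatorname{Var}Q$, with $Q=\sum_j\delta_j^2/(b_j+1)$, in the independent model. You would then have to de-Poissonize a variance, including a \emph{lower} bound, uniformly in $d\ge2$ and $\lambda$. Under the multinomial the summands of $Q$ are dependent and not monotone in $b_j$. Hence ``the summands are nondegenerate'' yields no lower bound, and negative association is unavailable.

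Your truncation step for the lower bound has a similar problem. Restricting $\frac12\mathbb{E}(\sqrt Y-\sqrt{Y'})^2$ to $\{Y,Y'\le2\}$ only helps once the discarded contribution to $\mathbb{E}(Y-Y')^2$ is shown to be small compared with $d/N^2$. Since $Y$ can be as large as $(N+1)/d$, this needs tail bounds of order $d^3/N^4$ in every regime of $(d,\lambda)$, not merely a small complement probability.

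\textbf{How the paper closes it.} It works with the i.i.d.\ samples $J_1,\dots,J_N$ instead of the counts.

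For the upper bound it uses the pointwise bound $X\ge(N+1)/(N+d)$, which also follows from your exact identity since $Q\ge0$. This gives $\operatorname{Var}\sqrt X\le\frac{N+d}{4(N+1)}\operatorname{Var}X$, and the paper then applies Efron--Stein twice. Replacing one sample is a zero-sum transfer of one occupation unit, so the cancellation of the linear term is built in.

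For the lower bound, the missing idea is to test $\sqrt X$ against the centered collision count $L=\sum_{a<b}(\mathbf 1_{\{J_a=J_b\}}-1/d)$. This equals $\frac12\sum_j\delta_j^2-\frac{N(d-1)}{2d}$, i.e.\ exactly your second-order statistic. Then $\operatorname{Var}\sqrt X\ge\operatorname{Cov}(\sqrt X,L)^2/\operatorname{Var}L$, with $\operatorname{Var}L=\binom N2\frac{d-1}{d^2}$ exactly. Exchangeability reduces the covariance to $\binom N2\frac{d-1}{2d^2}\,\mathbb{E}[h_{\mathbf C}(1)+h_{\mathbf C}(-1)-2h_{\mathbf C}(0)]$. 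This is a second difference of the convex function $h_{\mathbf C}(t)=\sqrt{X(\mathbf C+\mathbf e_0+\mathbf e_1+t(\mathbf e_0-\mathbf e_1))}$, whose second derivative is at least a constant times $d/N^2$ on an event of probability at least $1/2$. This gives $\operatorname{Var}\sqrt X\ge c\,(d-1)/N^2$ for $N\ge2d$ with no de-Poissonization and no truncation.
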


For a target global infidelity $\delta$, define
\begin{align}
    N_{\min,\cS_{\mathrm{ph}}}^{F}(\delta)
    \coloneq\min\{N\ge1:F_{\mathrm{ph}}^\star(d,N,1)\ge1-\delta\}.
    \label{eq:phase-one-copy-complexity-definition}
\end{align}
The superscript $F$ distinguishes this fidelity criterion from the trace-distance complexity $N_{\min,\cS_{\mathrm{ph}}}(\epsilon)$ in \cref{def:approximate_cloning}.

\begin{corollary}[Fidelity sample complexity]
\label{cor:phase-one-copy-complexity}
Uniformly for $d\ge2$ and $0<\delta\le1/4$,
\begin{align}
    N_{\min,\cS_{\mathrm{ph}}}^{F}(\delta)
    =\Theta\!\left(d\log\frac1\delta+\sqrt{\frac d\delta}\right).
    \label{eq:phase-one-copy-complexity}
\end{align}
\end{corollary}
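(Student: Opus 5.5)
The plan is to deduce the corollary directly from \cref{thm:phase-one-copy-uniform}. The intermediate target is a two-sided statement: $\delta_{N,d}\le\delta$ holds whenever $N\ge C_1\bigl(d\log\frac1\delta+\sqrt{d/\delta}\bigr)$, and fails whenever $N\le c_1\bigl(d\log\frac1\delta+\sqrt{d/\delta}\bigr)$, for absolute constants $c_1,C_1$.

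One subtlety is that $N^F_{\min}$ is defined as a minimum rather than a threshold. So I will first note that $\delta_{N,d}$ is nonincreasing in $N$. One could get this by discarding an input copy, but that changes the channel's input space. More simply, I will only use the two implications above, which pin the minimum between the two thresholds. These implications do not require monotonicity: the lower implication shows that every $N$ below $c_1(\cdot)$ fails, and the upper one shows that some $N$ at most $C_1(\cdot)+1$ succeeds.

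\emph{Upper bound.} By \cref{thm:phase-one-copy-uniform} it suffices that each of the two terms in the bracket is at most $\delta/(2C)$.

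For the first term, $(1-1/d)^{N+1}\le e^{-(N+1)/d}$. This is at most $\delta/(2C)$ once $N\ge d\log(2C/\delta)$, which is $O(d\log\frac1\delta)$ because $\delta\le1/4$ makes $\log(2C/\delta)=O(\log\frac1\delta)$.

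For the second term, $(d-1)/(N+d)^2\le d/N^2$. This is at most $\delta/(2C)$ once $N\ge\sqrt{2Cd/\delta}$.

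Taking the maximum of the two requirements and rounding up gives an admissible $N=O\bigl(d\log\frac1\delta+\sqrt{d/\delta}\bigr)$. The rounding is harmless since the right-hand side is at least $2$.

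\emph{Lower bound.} Suppose $\delta_{N,d}\le\delta$. Then each term on the left of \cref{eq:phase-one-copy-uniform} is at most $\delta/c$. I treat them separately.

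From $(d-1)/(N+d)^2\le\delta/c$ and $d-1\ge d/2$, we get $N+d\ge\sqrt{cd/(2\delta)}$. From $(1-1/d)^{N+1}\le\delta/c$, I use $\log(1-1/d)\ge-2/d$ for $d\ge2$ to get $(N+1)\cdot 2/d\ge\log(c/\delta)$, i.e.\ $N+1\ge\frac d2\log\frac c\delta$.

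The remaining issue is that $d\log\frac1\delta$ may dominate the additive $d$ and $1$ only up to constants. Here I need care when $c<1$ and $\delta$ is near $1/4$, where $\log(c/\delta)$ could be small or negative. In that regime I instead use the trivial bound $N\ge1$ together with $N\ge\sqrt{cd/(2\delta)}-d$. This is the step I expect to be fiddly: a direct sum of the two lower bounds can lose the $d\log\frac1\delta$ term when $\delta$ is of constant order.

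My fallback is to note that for $\delta$ of constant order the target quantity is $\Theta(d)$. So I need $N\gtrsim d$, which follows from the first term: $(1-1/d)^{N+1}\ge e^{-2(N+1)/d}$, so $N\le\eta d$ forces $(1-1/d)^{N+1}\ge e^{-2\eta-1}$. Since $\delta\le1/4$ is bounded, choosing $\eta$ small enough to make this exceed $\delta/c$ for all $\delta\le\delta_*$ handles small $\delta$. For $\delta\in[\delta_*,1/4]$ the bound is constant and the same argument with a fixed $\eta$ applies after shrinking $c_1$.

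The $\sqrt{d/\delta}$ term also needs attention: $N+d\ge\sqrt{cd/(2\delta)}$ only gives $N\gtrsim\sqrt{d/\delta}$ when $\sqrt{d/\delta}\ge 2d\cdot$const. Otherwise $\sqrt{d/\delta}\lesssim d\le d\log\frac1\delta\cdot O(1)$, and that case is already covered by the $d$-lower bound. Combining the cases, $N\ge c_1\bigl(d\log\frac1\delta+\sqrt{d/\delta}\bigr)$, which completes the $\Theta$ statement.
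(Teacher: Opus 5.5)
Your upper bound is correct, and you are right that no monotonicity of $\delta_{N,d}$ is needed. The lower bound, however, has a genuine gap when $\delta$ is of constant order, and your fallback does not close it.

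\cref{thm:phase-one-copy-uniform} only provides some small absolute constant $c$; the paper's proof gives $c=10^{-9}$. The bracket $(1-1/d)^{N+1}+(d-1)/(N+d)^2$ is always below $2$. So whenever $\delta\ge 2c$, the inequality $c\bigl((1-1/d)^{N+1}+(d-1)/(N+d)^2\bigr)\le\delta$ holds for every $N\ge1$. In that range, which is essentially all of $[10^{-9},1/4]$, the theorem carries no information about $N$, yet the target is at least $d\log 4$.

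Your $\eta$-argument has the same problem. To contradict $\delta_{N,d}\le\delta$ for $N\le\eta d$, you need $c\,e^{-2\eta-1}>\delta$, and no choice of $\eta$ achieves this once $\delta\ge c$. Shrinking $c_1$ cannot rescue it: without information you only know $N\ge1$, while the required bound grows linearly in $d$. The root cause is that a multiplicative constant in front of $\delta$ becomes an additive shift of $\log(1/\delta)$, and on $(0,1/4]$ that logarithm is only guaranteed to be at least $\log 4$.

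The missing ingredient, which is exactly what the paper uses, is the exact decomposition of \cref{lem:phase-one-copy-decomposition}. It gives $\delta_{N,d}\ge(1-1/d)^{N+1}$ with constant exactly one. Then $\delta_{N,d}\le\delta$ gives
\[
N+1\ge\frac{\log(1/\delta)}{-\log(1-1/d)}\ge(d-1)\log\frac1\delta,
\]
hence $N\ge\frac d4\log\frac1\delta$ for all $\delta\le1/4$. Since $\log(1/\delta)\ge\log4>4/3$, this also yields $N+d\le4N$. The rational term of \cref{thm:phase-one-copy-uniform} then gives $\delta\ge cd/(32N^2)$, i.e.\ $N\ge\sqrt{cd/(32\delta)}$. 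Here the small constant $c$ is harmless, because it only rescales a power law. This route also removes the case analysis you needed for the $\sqrt{d/\delta}$ term.
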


For the trace-distance convention of \cref{def:approximate_cloning}, the pure-target inequalities $1-F\le\frac12\|\rho-\ket{\psi}\bra{\psi}\|_1\le\sqrt{1-F}$ imply
\begin{align}
    N_{\min,\cS_{\mathrm{ph}}}^{F}(\epsilon)
    \le N_{\min,\cS_{\mathrm{ph}}}(\epsilon)
    \le N_{\min,\cS_{\mathrm{ph}}}^{F}(\epsilon^2).
    \label{eq:phase-one-copy-trace-comparison}
\end{align}
In particular, the trace-distance complexity is $\Theta(d)$ at every fixed $0<\epsilon\le1/4$. Combining \cref{eq:phase-one-copy-trace-comparison} with \cref{cor:phase-one-copy-complexity}, there are absolute constants $c_1,C_1>0$ such that, for $d\ge2$ and $0<\epsilon\le1/4$,
\begin{align}
c_1\left(d\log\frac1\epsilon+\sqrt{\frac d\epsilon}\right)
&\le N_{\min,\cS_{\mathrm{ph}}}(\epsilon)
\le C_1\left(d\log\frac1\epsilon+\frac{\sqrt d}{\epsilon}\right).
\label{eq:phase-one-copy-trace-bounds}
\end{align}
If $d\ge\epsilon^{-2}$, then $\sqrt d/\epsilon\le d$, so these bounds imply
\begin{align}
N_{\min,\cS_{\mathrm{ph}}}(\epsilon)
&=\Theta\left(d\log\frac1\epsilon\right).
\label{eq:phase-one-copy-trace-high-d}
\end{align}
For comparison, applying \cref{thm:structured_optimal_cloning} to the unstructured family $\cS_{\mathcal H}$ with $\mathcal H=\C^d$ gives optimal trace error
\begin{align}
1-\frac{\dim\operatorname{Sym}^N(\C^d)}{\dim\operatorname{Sym}^{N+1}(\C^d)}
&=1-\frac{N+1}{N+d}
=\frac{d-1}{N+d},
\end{align}
and hence $N_{\min,\cS_{\mathcal H}}(\epsilon)=\Theta(d/\epsilon)$. Thus, in the stated regime, phase states have logarithmic rather than inverse dependence on the target trace error.

We now prove \cref{thm:phase-one-copy-uniform}, starting with an exact decomposition of the infidelity into an explicit term and a variance. Let $J_1,\ldots,J_N$ be independent uniform elements of $\{0,\ldots,d-1\}$, and let $\mathbf{B}=(B_0,\ldots,B_{d-1})$ be their occupation vector, with $B_j=\sum_{r=1}^{N}\mathbf1_{\{J_r=j\}}$. Thus
\begin{align}
    \mathbb{P}(\mathbf{B}=\mathbf{b})=d^{-N}\binom N{\mathbf{b}},
    \qquad
    X(\mathbf{b})=\frac{N+1}{d^2}\sum_{j=0}^{d-1}\frac1{b_j+1}.
    \label{eq:phase-one-copy-auxiliary-function}
\end{align}
We use the existing occupation notation $\cB_N=\{\mathbf{b}\in\mathbb Z_{\ge0}^{d}:\sum_j b_j=N\}$ and $\binom N{\mathbf{b}}=N!/\prod_j b_j!$, and write $X=X(\mathbf{B})$.

\begin{lemma}[Exact decomposition]
\label{lem:phase-one-copy-decomposition}
The optimal fidelity and infidelity satisfy
\begin{align}
    F_{\mathrm{ph}}^\star(d,N,1)=(\E[\sqrt X])^2,
    \qquad
    \delta_{N,d}=\left(1-\frac1d\right)^{N+1}+\operatorname{Var}(\sqrt X).
    \label{eq:phase-one-copy-decomposition}
\end{align}
\end{lemma}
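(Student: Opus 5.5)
We specialize \cref{thm:phase-main} to $K=1$, so $q=0$, $s=1$, and $\cC_{d,1}=\{\mathbf e_0,\dots,\mathbf e_{d-1}\}$ with $|\cC_{d,1}|=d$. For $\mathbf b\in\cB_N$, the multinomial identity
\begin{align*}
\binom{N+1}{\mathbf b+\mathbf e_j}=\binom N{\mathbf b}\frac{N+1}{b_j+1}
\end{align*}
gives $Z_{\mathbf b}=\binom N{\mathbf b}(N+1)\sum_j (b_j+1)^{-1}=\binom N{\mathbf b}\,d^2X(\mathbf b)$. Substituting into \cref{eq:phase-fidelity-main} with $\boldsymbol\Gamma_0=\mathbf e_j$, each summand becomes
\begin{align*}
\frac{\sqrt{\binom N{\mathbf b}}\binom N{\mathbf b}(N+1)/(b_j+1)}{d\sqrt{\binom N{\mathbf b}}\sqrt{X(\mathbf b)}}
=\binom N{\mathbf b}\frac{N+1}{d(b_j+1)\sqrt{X(\mathbf b)}}.
\end{align*}
The sum over $\mathbf b$ is permutation invariant, so we may average over $j$. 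This replaces $(N+1)/(b_j+1)$ by $\frac1d\sum_{j}(N+1)/(b_j+1)=d\,X(\mathbf b)$. The inner sum therefore equals $\sum_{\mathbf b}\binom N{\mathbf b}\sqrt{X(\mathbf b)}=d^N\E[\sqrt X]$. Collecting powers of $d$, the prefactor is $d/d^{2N+1}$ and the bracket squared is $d^{2N}(\E\sqrt X)^2$, so $F^\star=(\E\sqrt X)^2$. The first identity thus reduces to bookkeeping, which we will double-check carefully.

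For the infidelity, write $(\E\sqrt X)^2=\E[X]-\operatorname{Var}(\sqrt X)$. It then suffices to show $\E[X]=1-(1-1/d)^{N+1}$. By linearity, $\E X=\frac{N+1}{d^2}\cdot d\,\E[1/(B_0+1)]$, where $B_0\sim\mathrm{Bin}(N,1/d)$. We use the standard identity
\begin{align*}
\E\!\left[\frac1{B+1}\right]=\frac{1-(1-p)^{N+1}}{(N+1)p},
\end{align*}
which follows from $\binom Nk/(k+1)=\binom{N+1}{k+1}/(N+1)$ and the binomial theorem. With $p=1/d$ this gives $\E X=1-(1-1/d)^{N+1}$, hence $\delta_{N,d}=1-\E X+\operatorname{Var}\sqrt X$, which is the claimed decomposition.

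The only step that needs care is the symmetrization over $j$ that turns $1/(b_j+1)$ into $X$; everything else is a direct computation.
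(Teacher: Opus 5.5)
Your proof is correct and follows essentially the same route as the paper. You rewrite $Z_{\mathbf b}=d^2\binom N{\mathbf b}X(\mathbf b)$ via $\binom{N+1}{\mathbf b+\mathbf e_j}=\frac{N+1}{b_j+1}\binom N{\mathbf b}$, average over the coordinate $j$ to turn each summand into $\binom N{\mathbf b}\sqrt{X(\mathbf b)}$, and then combine $\E[X]=1-(1-1/d)^{N+1}$ with $(\E\sqrt X)^2=\E X-\operatorname{Var}(\sqrt X)$. The only differences are cosmetic: you average over $j$ in the multinomial sum rather than in the expectation, and you derive the binomial reciprocal-moment identity directly rather than citing it.
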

\begin{proof}
For $K=1$, the balanced shifts are the unit vectors $\mathbf{e}_0,\ldots,\mathbf{e}_{d-1}$. Choosing $\boldsymbol\Gamma_0=\mathbf{e}_0$, the exact fidelity formula specializes to
\begin{align}
    F_{\mathrm{ph}}^\star(d,N,1)
    =\frac1{d^{2N}}\left(
    \sum_{\mathbf{b}\in\cB_N}
    \frac{\sqrt{\binom N{\mathbf{b}}}\binom{N+1}{\mathbf{b}+\mathbf{e}_0}}{\sqrt{Z_{\mathbf{b}}}}
    \right)^2,
    \qquad
    Z_{\mathbf{b}}=\sum_{j=0}^{d-1}\binom{N+1}{\mathbf{b}+\mathbf{e}_j}.
    \label{eq:phase-one-copy-exact-fidelity}
\end{align}
The identity
\begin{align}
    \binom{N+1}{\mathbf{b}+\mathbf{e}_j}
    =\frac{N+1}{b_j+1}\binom N{\mathbf{b}}
\end{align}
gives
\begin{align}
    Z_{\mathbf{b}}
    &=(N+1)\binom N{\mathbf{b}}\sum_{j=0}^{d-1}\frac1{b_j+1}
    =d^2\binom N{\mathbf{b}}X(\mathbf{b}).
    \label{eq:phase-one-copy-normalization-X}
\end{align}
Recalling \cref{eq:phase-one-copy-exact-fidelity} with $\boldsymbol\Gamma_0=\mathbf{e}_0$ and substituting these expressions for the multinomial coefficient and $Z_{\mathbf{b}}$, we obtain
\begin{align}
    F_{\mathrm{ph}}^\star(d,N,1)
    &=\frac1{d^{2N}}\left(
    \sum_{\mathbf{b}\in\cB_N}
    \frac{\sqrt{\binom N{\mathbf{b}}}\binom{N+1}{\mathbf{b}+\mathbf{e}_0}}
    {\sqrt{Z_{\mathbf{b}}}}
    \right)^2 \nonumber\\
    &=\frac1{d^{2N}}\left(
    \sum_{\mathbf{b}\in\cB_N}
    \frac{\sqrt{\binom N{\mathbf{b}}}\,(N+1)\binom N{\mathbf{b}}}
    {d(b_0+1)\sqrt{\binom N{\mathbf{b}}X(\mathbf{b})}}
    \right)^2 \nonumber\\
    &=\frac1{d^{2N}}\left(
    \frac{N+1}{d}\sum_{\mathbf{b}\in\cB_N}
    \binom N{\mathbf{b}}\frac1{(b_0+1)\sqrt{X(\mathbf{b})}}
    \right)^2 \nonumber\\
    &=\frac1{d^{2N}}\left(
    (N+1)d^{N-1}\,
    \E\left[\frac1{(B_0+1)\sqrt{X(\mathbf{B})}}\right]
    \right)^2.
    \label{eq:phase-one-copy-fidelity-expectation}
\end{align}
Both the law of $\mathbf{B}$ and the function $X(\mathbf{B})$ are invariant under permutations of the coordinates. We may therefore average the expectation over all $d$ coordinates:
\begin{align}
    \E\left[\frac1{(B_0+1)\sqrt{X(\mathbf{B})}}\right]
    &=\frac1d\sum_{j=0}^{d-1}
    \E\left[\frac1{(B_j+1)\sqrt{X(\mathbf{B})}}\right] \nonumber\\
    &=\frac1d\,\E\left[
    \frac1{\sqrt{X(\mathbf{B})}}\sum_{j=0}^{d-1}\frac1{B_j+1}
    \right] \nonumber\\
    &=\frac1d\,\E\left[
    \frac{d^2X(\mathbf{B})}{(N+1)\sqrt{X(\mathbf{B})}}
    \right]
    =\frac d{N+1}\E[\sqrt{X(\mathbf{B})}].
    \label{eq:phase-one-copy-coordinate-average}
\end{align}
In the third equality we used the definition of $X$. Substituting \cref{eq:phase-one-copy-coordinate-average} into \cref{eq:phase-one-copy-fidelity-expectation} now gives
\begin{align}
    F_{\mathrm{ph}}^\star(d,N,1)
    &=\frac1{d^{2N}}\left(
    (N+1)d^{N-1}\frac d{N+1}\E[\sqrt{X(\mathbf{B})}]
    \right)^2 \nonumber\\
    &=\frac1{d^{2N}}\left(d^N\E[\sqrt{X(\mathbf{B})}]\right)^2
    =\left(\E[\sqrt X]\right)^2.
    \label{eq:phase-one-copy-fidelity-simplified}
\end{align}

The standard binomial reciprocal-moment identity \cite{ChaoStrawderman1972} gives $\E[X]=1-(1-1/d)^{N+1}$, since $B_j\sim\Bin(N,1/d)$. Consequently
\begin{align}
    \delta_{N,d}
    &=1-F_{\mathrm{ph}}^\star(d,N,1)
    =1-(\E[\sqrt X])^2 \nonumber\\
    &=1-\E[X]+\operatorname{Var}(\sqrt X)
    =\left(1-\frac1d\right)^{N+1}+\operatorname{Var}(\sqrt X).
    \qquad\qedhere
\end{align}
\end{proof}

\begin{lemma}[Uniform variance estimate]
\label{lem:phase-one-copy-variance-main}
There are absolute constants $c_0,C_0>0$ such that, for $d\ge2$ and $N\ge2d$,
\begin{align}
    c_0\frac{d-1}{N^2}
    \le\operatorname{Var}(\sqrt X)
    \le C_0\frac{d-1}{N^2}.
    \label{eq:phase-one-copy-variance-main}
\end{align}
\end{lemma}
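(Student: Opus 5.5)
The plan is to treat $\sqrt X$ as a smooth function of the nearly independent, concentrated counts $B_j\sim\Bin(N,1/d)$, linearize around the mean, and check that the linear term dominates. Set $p=1/d$ and $\mu=Np$; the assumption $N\ge 2d$ gives $\mu\ge2$. Write
\begin{align*}
X=\frac{N+1}{d^2}\sum_j g(B_j),\qquad g(b)=\frac1{b+1}.
\end{align*}
The first step is to compute $\operatorname{Var}(X)$ exactly from the multinomial covariances. Since $\operatorname{Cov}(B_i,B_j)$ is negative and the constraint $\sum_jB_j=N$ kills the constant direction, the leading contribution comes from $g'(\mu)=-(\mu+1)^{-2}$ applied to the centered vector $B-\mu\mathbf 1$. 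This gives
\begin{align*}
\operatorname{Var}\Bigl(\sum_j g(B_j)\Bigr)\approx g'(\mu)^2\,d\,\mu\Bigl(1-\frac1d\Bigr)\cdot\Theta(1),
\end{align*}
but the linear term cancels because the linear functional $\sum_j(B_j-\mu)$ vanishes identically. So the fluctuation is genuinely second order: $\sum_j g(B_j)-d\,g(\mu)\approx\tfrac12 g''(\mu)\sum_j(B_j-\mu)^2$, and its variance is of order $g''(\mu)^2\operatorname{Var}\bigl(\sum_j(B_j-\mu)^2\bigr)\asymp\mu^{-6}\cdot d\mu^2$. Multiplying by $(N+1)^2/d^4$ yields
\begin{align*}
\operatorname{Var}(X)\asymp\frac{N^2}{d^4}\cdot\frac{d^7}{N^4}=\frac{d^3}{N^2}.
\end{align*}
This normalization looks off, so I would redo it carefully with exact reciprocal-moment formulas. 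That check is the point of this step: I will compute $\E[g(B_j)]$, $\E[g(B_j)^2]$ and $\E[g(B_i)g(B_j)]$ exactly. For $i\ne j$, the joint factorial-moment trick gives $\E\bigl[\tfrac1{(B_i+1)(B_j+1)}\bigr]$ in closed form in terms of $(1-kp)^{N+2}$, and from these I obtain a closed form for $\operatorname{Var}(X)$. Since $\E X\to1$, the target $\operatorname{Var}(\sqrt X)\asymp(d-1)/N^2$ corresponds to $\operatorname{Var}(X)\asymp4(d-1)/N^2$. I expect the exact formula to produce exactly this order, with the leading $1/N$ pieces cancelling and the exponentially small $(1-kp)^N$ terms negligible for $N\ge2d$.

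The second step transfers from $X$ to $\sqrt X$. I will use the elementary bounds
\begin{align*}
\operatorname{Var}(\sqrt X)\le\frac{\operatorname{Var}(X)}{4\,\mathrm{ess\,inf}\,X}\quad\text{(after truncation)},\qquad \operatorname{Var}(\sqrt X)\ge\frac{\operatorname{Var}(X)}{4\,\sup X}.
\end{align*}
The lower bound is clean because $X\le (N+1)/d$ is far too crude, so I will truncate instead: on the event $\{|B_j-\mu|\le\mu/2\ \forall j\}$, $X$ lies in a constant window around $1$, and off this event I pay at most a Chernoff/union bound $d\,e^{-c\mu}$. This is not automatically $O((d-1)/N^2)$ when $\mu$ is small, so for the lower bound I will use instead the representation $\operatorname{Var}(\sqrt X)=\tfrac12\E\bigl[(\sqrt X-\sqrt{X'})^2\bigr]$ with an independent copy, restricted to a good event of constant probability where $X\asymp1$. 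There the variance of $X$ conditioned on that event is still $\gtrsim(d-1)/N^2$. For the upper bound, I will use $(\sqrt X-\sqrt{\E X})^2\le(X-\E X)^2/\E X$ together with $\E X\ge1-e^{-2}$, which holds for $N\ge2d$. This is the clean route, and it needs no truncation.

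The main obstacle is the matching lower bound. The first-order fluctuation cancels because of the constraint $\sum_jB_j=N$, so the variance is carried by second-order terms $\sum_j(B_j-\mu)^2$ and their interaction with $g''$. I will therefore do an explicit second-order expansion. Lower-bounding $\operatorname{Var}\bigl(\sum_j(B_j-\mu)^2\bigr)$ via multinomial fourth moments gives order $d\mu^2$ when $\mu\gtrsim1$. I will then control the cubic remainder by $\E|B_j-\mu|^3\lesssim\mu^{3/2}$, showing it is smaller by a factor $\mu^{-1/2}$, which is small only if $\mu$ is large. For moderate $\mu\in[2,C]$, I will fall back on the exact closed form from the first step and check positivity of the leading coefficient directly.
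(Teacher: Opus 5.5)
\paragraph{Upper bound: the transfer step is fine, the variance estimate is missing.}
Your reduction is correct, and a little neater than the paper's:
$\operatorname{Var}(\sqrt X)\le\mathbb{E}[(\sqrt X-\sqrt{\mathbb{E}X})^2]\le\operatorname{Var}(X)/\mathbb{E}X$, with $\mathbb{E}X=1-(1-1/d)^{N+1}\ge1-e^{-2}$. The paper instead uses the deterministic bound $X\ge(N+1)/(N+d)$ together with an independent copy.

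The estimate your reduction needs, $\operatorname{Var}(X)=O((d-1)/N^2)$, is never established. Your heuristic was in fact right: $\mu^{-6}\cdot d\mu^2=d^5/N^4$, which gives $d/N^2$, so the $d^3/N^2$ was an arithmetic slip.

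The ``exact computation'' you defer to is not as described:
\begin{itemize}
\item $\mathbb{E}[(B_j+1)^{-2}]$ has no closed form of the $(1-kp)^{N}$ type, unlike the factorial reciprocal moments.
\item The diagonal and covariance contributions are each of order $1/N$, so they must cancel to relative accuracy $d/N$.
\item At $N=2d$ the terms $(1-k/d)^{N}$ are of size about $e^{-2k}$, not negligible, so they have to be tracked through that cancellation.
\end{itemize}
The paper avoids all of this by applying Efron--Stein twice. Resampling one $J_i$ changes $X$ by $\frac{N+1}{d^2}$ times a difference of two conditionally i.i.d.\ copies of $1/((C_J+1)(C_J+2))$. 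A second Efron--Stein step for $T\sim\operatorname{Bin}(N-1,1/d)$, together with a Beta integral, then gives $\operatorname{Var}(X)\le80(d-1)(N+1)/((N+2)(N+3)(N+4))$.

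\paragraph{Lower bound: genuine gap.}
A second-order expansion of $\sum_jg(B_j)$ controls the remainder only when $\mu=N/d$ exceeds a large constant. Your own estimate gains just $\mu^{-1/2}$. Moreover, on the lower tail $B_j\approx0$, which has probability about $e^{-\mu}$, the Taylor remainder of $1/(b+1)$ is governed by $g'''=O(1)$, not by $g'''(\mu)$.

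The lemma must hold uniformly down to $N=2d$ with $d\to\infty$, i.e.\ for bounded $\mu$ and arbitrary $d$. In that regime two of your steps are not arguments:
\begin{itemize}
\item ``Check positivity of the leading coefficient'' of an exact formula does not settle a two-parameter family.
\item The claim that the variance of $X$ conditioned on a good event is still $\gtrsim(d-1)/N^2$ is only asserted. Lower-bounding a conditional variance is no easier than the lemma itself.
\end{itemize}

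\paragraph{The missing idea: project onto your quadratic statistic instead of expanding.}
Let $L=\sum_{a<b}(\mathbf 1\{J_a=J_b\}-1/d)$, which equals $\tfrac12\sum_j(B_j-N/d)^2$ up to an additive constant. Its indicators are pairwise uncorrelated, so $\operatorname{Var}(L)=\binom N2\frac{d-1}{d^2}$ exactly, and Cauchy--Schwarz gives $\operatorname{Var}(\sqrt X)\ge\operatorname{Cov}(\sqrt X,L)^2/\operatorname{Var}(L)$.

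Conditioning on $J_1=J_2$ versus $J_1\ne J_2$ gives
$\operatorname{Cov}(\sqrt X,L)=\binom N2\frac{d-1}{d^2}\,\mathbb{E}[h_{\mathbf C}(1)-h_{\mathbf C}(0)]$,
where $h_{\mathbf C}(t)=\sqrt{X(\mathbf C+\mathbf e_0+\mathbf e_1+t(\mathbf e_0-\mathbf e_1))}$ and $\mathbf C$ counts the other $N-2$ samples. By the $0\leftrightarrow1$ symmetry this equals $\frac12\mathbb{E}\int_{-1}^1(1-|t|)h_{\mathbf C}''(t)\,dt$.

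A Cauchy--Schwarz computation shows $h_{\mathbf C}''\ge0$ for every $\mathbf C$. Moreover $h_{\mathbf C}''\gtrsim d/N^2$ on a Markov event of probability at least $1/2$ on which $C_0+C_1\le8N/d$ and $X(\mathbf C)=O(1)$. This one-sided convexity bound needs no remainder control, and gives $\operatorname{Var}(\sqrt X)\gtrsim\binom N2\frac{d-1}{d^2}\cdot\frac{d^2}{N^4}\asymp\frac{d-1}{N^2}$ uniformly for $N\ge2d$.
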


The upper bound controls how $X$ changes when one sample $J_i$ is replaced by an independent copy. For the lower bound, we compare $\sqrt X$ with the number of pairs of equal samples, and bound their covariance from below. The proof, including the resampling calculation and the lower bound on the second derivative, is given in \cref{app:phase-one-copy-upper,app:phase-one-copy-lower}.

Combining \cref{lem:phase-one-copy-decomposition,lem:phase-one-copy-variance-main} gives \cref{thm:phase-one-copy-uniform} for $N\ge 2d$. The remaining range $1\le N<2d$ is treated in \cref{app:phase-one-copy-completion}.

The two terms in \cref{eq:phase-one-copy-complexity} come from the two contributions in \cref{eq:phase-one-copy-decomposition}. Suppressing $(1-1/d)^{N+1}$ to order $\delta$ requires order $d\log(1/\delta)$ copies. Suppressing the variance to order $\delta$ requires order $\sqrt{d/\delta}$ copies in the range $N\ge2d$. The uniform bounds account for the remaining range. The complete inversion proving \cref{cor:phase-one-copy-complexity} is given in \cref{app:phase-one-copy-completion}.

%% file: SimpleOrbits.tex
\section{Simple examples of resolving orbits}
\label{sec:other_resolving_orbits}
Many structured state families of interest can be shown to fall under the definition of resolving orbits. Here we consider a number of example families, for which this property is straightforward to demonstrate, and discuss the conclusions for their optimal cloning; a part of these cases were previously known in the cloning literature. In the main text, we focus on the cases of multi-mode bosonic and fermionic Gaussian states, which are both new to the literature, and where proving the resolving orbit property is less straightforward.

Let us first take an inventory of the simplest example, which is also the comparison point for the performance of any structured state cloning --- the unstructured case $\cS=\cS_{\cH}$ of Werner for finite-dimensional $\cH$ \cite{werner1998optimal}. In that case, the optimal cloning fidelity is $f_{\cH}=\f{\dim(\Sym^N(\cH))}{\dim(\Sym^{N+K}(\cH))}$, where $\dim(\Sym^{N}(\cH))=\binom{\dim(\cH) + N - 1}{N}$. This in particular implies that, for large $\dim (\cH)$, cloning sample complexity scales as $ N_{\mr{min},\cH}=\Theta(\dim (\cH)/\epsilon)$. 

At the other extreme, let us consider the set $\cS_{\mr{orth}}$ of $d$ mutually orthogonal states $\{\ket{\psi_i}\}$, which is well-known to allow for perfect $1\rarr 2$ cloning. 
In fact, it also forms a resolving orbit. In particular, $\cS_{\mr{orth}}$ is the orbit of the so-called unitary monomial group $G_{\mr{mon}}=\{g_{\boldsymbol{\theta},\sigma}\}$, formed by permutation matrices $\sigma\in \rS_d$ decorated by arbitrary phases $\boldsymbol{\theta}=\{\theta_i\in[0,2\pi)\}$: $g_{\boldsymbol{\theta},\sigma} \ket{\psi_i}=e^{i\theta_i}\ket{\psi_{\sigma(i)}}$ ($G_{\mr{mon}}$ is a semi-direct product $S_d\ltimes U(1)^d$). By Schur's lemma, the defining representation $\phi$ of this group is irreducible on the $d$-dimensional span of $\{\ket{\psi_i}\}$. Indeed, consider a matrix $A$ that commutes with any element in the image $\phi(G_{\mr{mon}})$. Commuting with any $\phi(g_{\boldsymbol{\theta},\sigma=\mr{id}})$ diagonal matrix implies that $A$ is diagonal; and commuting with any permutation $\phi(g_{\boldsymbol{\theta}=\boldsymbol{0},\sigma})$ further implies that $A$ is indeed proportional to the identity --- finishing the proof of irreducibility of $\phi$. The same argument implies the irreducibility of $G_{\mr{mon}}$ as represented on the $N$-copy span $\cH^{(N)}_{\cS_{\mr{orth}}}=\Span(\{\ket{\psi_i}^{\otimes N}\})$. Indeed, under the natural isomorphism $\cH^{(N)}_{\cS_{\mr{orth}}} \cong \cH^{(1)}_{\cS_{\mr{orth}}}$ (given by $\ket{\psi_i}^{\otimes N} \mapsto \ket{\psi_i}$), the image of the tensor power representation $\phi^{\otimes N}$ restricted to this span is identical to the image of the defining representation $\phi$ ($\phi^{\otimes N}(g_{\boldsymbol{\theta},\sigma}) \cong \phi(g_{N\boldsymbol{\theta}\pmod{2\pi},\sigma})$), thus generating the same full matrix algebra and guaranteeing irreducibility for any integer $N$. This implies that the channel presented in \cref{thm:structured_optimal_cloning} is optimal, and indeed, it yields perfect fidelity $f_{\cS_{\mr{orth}}}=\f{d_N}{d_{N+K}}=\f{d}{d}=1$ (because $\dim(\cH^{(N)}_{\cS_{\mr{orth}}})=d$), as expected for this classical case of mutually orthogonal states.

Consider cloning of pure product states of $k$ qudits, i.e., states from the set $\cS_{\mr{prod}}=\{\psi_1\otimes\psi_2\otimes..\otimes\psi_k,~ \ket{\psi_j}\in\bC^d\}$. For this task, it is natural to propose a channel $T=T_d^{\otimes k}$ where $T_d$ is the Werner cloner for $\cH=\bC^d$. In fact, this channel is optimal -- which follows from the fact that $\cS_{\mr{prod}}$ is itself a resolving orbit\footnote{Proving more generally (without using coherent orbits) that a tensor product of optimal cloners is itself optimal is surprisingly tedious, and we skip it in this version of the text.}. This allows to apply \cref{thm:structured_optimal_cloning} to give an optimal cloning channel, which for this case indeed has form $T=T_d^{\otimes k}$. This, in turn, means that the cloning error is proportional to the size of the system, and therefore the cloning sampling complexity of product states is $N_{\mr{min},\cS_{\mr{prod}}}(\epsilon)=\Theta(k(d-1)/\epsilon)$.

Why is $\cS_{\mr{prod}}$ a resolving orbit? It can be shown that a product $\cS=\bigotimes^k_{j=1}\cS_j$ of any $k$ resolving orbits $\cS_j$ (structured or not, finite or infinite dimensional) is again a resolving orbit. To demonstrate this, it is sufficient to consider $k=2$. Suppose
$\cH=\cH_1\otimes\cH_2$ and $\cS= \cS_1\otimes\cS_2$
for $\cS_{1,2}\subset\cD(\cH_{1,2})$, where $\cS_{1,2}$ are resolving orbits of a group $G_{1,2}$ under
representations $\phi_{1,2}:G_{1,2}\rarr\rU(\cH_{1,2})$. $\cS$ is an
orbit of the direct product $G_1\times G_2$, under a representation $\phi=\phi_1\otimes\phi_2$. Consider an operator $A$ that commutes with any $\phi(g)$: it then commutes with any operator of the form $\phi_1(g_1)\otimes\bI_2$ or $\bI_1\otimes\phi_2(g_2)$, which by irreducibility of $\phi_{1,2}$ forces $A=\bI_1\otimes\bI_2$; it follows that $\phi$ is irreducible. The same argument applies to $\phi^{\otimes N}$ restricted to the $N$-copy span of $\cS$, the key requirement for a resolving orbit (cf. \cref{def:resolving_orbit}). The square-integrability requirement follows by Tonnelli's theorem.

Bosonic coherent states provide an infinite-dimensional resolving orbit.
For $n$ modes, the degree is $d_N=N^n$ with normalized phase-space measure, so the optimal global fidelity is $(N/(N+K))^n$ and the optimal trace error is $1-(N/(N+K))^n$.
The displacement group is amenable, making the optimum unrestricted.
We give the collective-mode derivation and the amplifier implementation in \cref{sec:active-bosonic-cloning}, together with the larger squeezed and displaced Gaussian families.

A physically curious example (and to the best of our knowledge not previously considered in the literature) is the family $\cS_{\mr{cond}}$ of Bose condensate states of $q$ bosons on $n$ modes. In particular, these are states of the form
\begin{align}
    \ket{\psi(v,q)}=\f{1}{\sqrt{q!}}\lrp{\sum^n_{j=1} v_j \,a^\dag_j}^{q}\ket{0}.
\end{align}
These states have a physical interpretation as ground states of non-interacting Hamiltonians in the $q$-particle sector. $\cS_{\mr{cond}}=\{\ket{\psi(v,q)}\}$ forms a resolving orbit of $\rU(n)$. This can be seen by identifying the Hilbert space of $q$ bosons on $n$ modes with the symmetric space $\Sym^{q}(\bC^{n})$, where Bose condensate states become simply $\ket{\psi(v,q)}=v^{\vee q}$, and $\rU(n)$ is represented as
\begin{align}
    \phi(U)\ket{\psi(v,q)}=(Uv)^{\vee q}.
\end{align}
(Note that $\ket{\psi(v,q)}$ forms an overcomplete basis on $\Sym^{q}(\bC^{n})$.) Restricted to the $N$-copy span $\cH^{(N)}_{\cS_{\mr{cond}}}=\Span(\{\ket{\psi(v,q)}^{\otimes N}\})=\Sym^{Nq}(\bC^{n})$, the representation $\phi^{\otimes N}$ acts simply as $\phi^{\otimes N}(U)\ket{\psi(v,q)}^{\otimes N}=(Uv)^{\vee qN}$, a standard fully symmetric representation of $\rU(n)$, which is irreducible for any $N$ -- thus $\cS_{\mr{cond}}$ is resolving. Note that the case of Bose condensate states is mathematically equivalent to the Werner's setting of unstructured states $\cS_{\cH}$, by simply replacing $\cH$ with $\bC^n$ and $N$ with $qN$. However, from a physical point of view, states in $\cS_{\mr{cond}}$ are structured, relative to the general $q$-boson states on $n$ modes $\cS_{q-\mr{bos}}$. In particular, the fidelity of $N\rarr N+1$ Bose condensate cloning, as follows from $\cS_{\mr{cond}}$ being a resolving orbit and from \cref{thm:structured_optimal_cloning}, equals
\begin{align}
    f_{\cS_{\mr{cond}}} &= \f{\dim(\Sym^{qN}(\bC^{n}))}{\dim(\Sym^{q(N+1)}(\bC^{n}))} \\ &= \f{\binom{n+qN-1}{n-1}}{\binom{n+q(N+1)-1}{n-1}} \\ &= \prod_{j=1}^{q} \f{qN+j}{n+qN+j-1}
\end{align}
which is much larger than that for general $q$-boson states,
\begin{align}
    f_{\cS_{q-\mr{bos}}} &= \f{\dim(\Sym^{N}(\Sym^{q}(\bC^{n})))}{\dim(\Sym^{N+1}(\Sym^{q}(\bC^{n})))} \\ &= \f{\binom{D+N-1}{N}}{\binom{D+N}{N+1}} \\ &= \f{N+1}{D+N} = \f{N+1}{\binom{n+q-1}{q}+N},
\end{align}
where $D=\dim \Sym^{q}(\bC^{n}))=\binom{n+q-1}{q}$. In particular, for condensate states, $f_{\cS_{\mr{cond}}}=1-\epsilon$ can be achieved starting from $N_{\mr{min}}=\Theta(n/\epsilon)$ copies, independent of $q$. In contrast, for general $q$-boson states, near-exact cloning requires $N_{\mr{min}}=\Theta(\binom{n+q-1}{q}/\epsilon)$, which for constant $q$ scales as $N_{\mr{min}}=\Theta(n^q/\epsilon)$, and for $q=\Theta (n)$ (constant particle density) as $N_{\mr{min}}=2^{\Theta(n)}/\epsilon$. Thus states with Bose-condensate structure yields an exponential or large polynomial (depending on the scaling of $q$) advantage in cloning sample complexity.

%% file: AppCartanProduct.tex
\section{Cartan product theorem}
\label{app:cartan}

This appendix clarifies the statement of the Cartan product theorem, used in the main text to prove irreducibility of some $N$-copy spans. Let us first recall some definitions from standard representation theory \cite{FultonHarris}. Given a Cartan subalgebra and a choice of positive roots of a semisimple Lie algebra $\mathfrak g$, a \emph{highest-weight vector} of a representation is a weight vector annihilated by all raising operators (the root vectors of the positive roots). We have

\begin{theorem}[Cartan product \cite{eastwood2005cartan}, {\cite[Obs.~13.2]{FultonHarris}}]
\label{thm:cartan_product}
    Let $V_1,V_2$ be finite-dimensional representations of a semisimple Lie algebra $\mathfrak g$, and let $v_1\in V_1$, $v_2\in V_2$ be highest-weight vectors of weights $\lambda_1,\lambda_2$. Then $v_1\otimes v_2$ is a highest-weight vector of $V_1\otimes V_2$, of weight $\lambda_1+\lambda_2$. Consequently, $v_1\otimes v_2$ generates an irreducible subrepresentation of $V_1\otimes V_2$ (the \emph{Cartan product}) \cite[Prop.~14.13]{FultonHarris}, whose isomorphism class is determined by its highest weight $\lambda_1+\lambda_2$ \cite[Thm.~14.18]{FultonHarris}.
\end{theorem}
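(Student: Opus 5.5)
The plan is to prove the Cartan product theorem in three steps: the first part by a direct Leibniz-rule computation, the irreducibility claim by a cyclic-module argument using complete reducibility, and the isomorphism claim by the theorem of the highest weight.

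\emph{Step 1: $v_1\otimes v_2$ is a highest-weight vector.} The Lie algebra acts on $V_1\otimes V_2$ by the Leibniz rule, $X(u\otimes w)=Xu\otimes w+u\otimes Xw$. For $H$ in the Cartan subalgebra this gives $H(v_1\otimes v_2)=(\lambda_1+\lambda_2)(H)\,v_1\otimes v_2$. For a raising operator $E_\alpha$, both terms vanish because $E_\alpha v_1=0=E_\alpha v_2$. Hence $v_1\otimes v_2$ is a highest-weight vector of weight $\lambda_1+\lambda_2$.

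\emph{Step 2: the generated subrepresentation $W$ is irreducible.} By the PBW theorem and the triangular decomposition $\mathfrak g=\mathfrak n^-\oplus\mathfrak h\oplus\mathfrak n^+$, we have $W=U(\mathfrak g)\,v=U(\mathfrak n^-)\,v$ with $v\defeq v_1\otimes v_2$. Every weight of $W$ is therefore of the form $\lambda_1+\lambda_2$ minus a sum of positive roots, and the weight space of weight $\lambda_1+\lambda_2$ is the line $\bC v$. Since $V_1\otimes V_2$ is finite-dimensional, Weyl's complete reducibility lets us write $W=\bigoplus_k W_k$ with each $W_k$ irreducible. Decomposing $v$ along this sum, each component is again a weight vector of weight $\lambda_1+\lambda_2$. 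Because the top weight space of $W$ is one-dimensional, $v$ lies in a single summand $W_{k_0}$. But $v$ generates $W$, so $W=W_{k_0}$ is irreducible. I expect this step to be the main obstacle. The key point is that the one-dimensionality of the top weight space comes entirely from $W=U(\mathfrak n^-)v$, and that is what forces $v$ into one summand.

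\emph{Step 3: the isomorphism class.} This is the standard uniqueness part of the theorem of the highest weight, \cite[Thm.~14.18]{FultonHarris}: two finite-dimensional irreducible representations with the same highest weight are isomorphic. Their highest-weight vectors generate Verma quotients, and the unique irreducible quotient of the Verma module with that highest weight is determined by the weight.

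Iterating Steps 1–3 over the $N$ tensor factors gives the form used in the main text: $v^{\otimes N}$ has weight $N\lambda$ and generates an irreducible subrepresentation.
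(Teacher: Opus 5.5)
Your proposal is correct and takes the same route as the paper: the Leibniz-rule computation for the first claim, then standard highest-weight theory for irreducibility and for the isomorphism class. The only difference is that the paper cites \cite[Prop.~14.13]{FultonHarris} for your Step 2, whereas you write that argument out (the top weight space of $U(\mathfrak n^-)v$ is one-dimensional, and Weyl complete reducibility then forces $v$ into a single irreducible summand), which is essentially the Fulton--Harris proof.
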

The first claim is elementary, and explains why the highest-weight property survives tensoring: a raising operator $E$ acts on $V_1\otimes V_2$ by the Leibniz rule, $E\otimes\idx+\idx\otimes E$, and thus annihilates $v_1\otimes v_2$ whenever it annihilates each factor.
Iterating, if $v$ is a highest-weight vector of weight $\lambda$ in $V$, then $v^{\otimes N}$ is a highest-weight vector of weight $N\lambda$ in $V^{\otimes N}$, and it generates an irreducible subrepresentation.
In the main text the theorem is applied to representations of the connected compact groups $\mathrm{SU}(n)$ and $\Spin(2n)$; since these groups are connected, a subspace is invariant under the group if and only if it is invariant under the (complexified) Lie algebra, so the irreducibility statement transfers verbatim.

%% file: AppBosonicCloningProofs.tex
\section{Proofs for cloning of Bosonic Gaussian states}
\label{app:bosonic-cloning-proofs}

\subsection{Bosonic notation and passive linear optics}
\label{subsec:ab-proof-primer}

Each bosonic mode has an orthonormal photon-number basis $\{\ket m:m=0,1,\ldots\}$, and $\cF_n$ is the tensor product of $n$ such spaces.
The vacuum $\ket0$ contains no photons; creation and annihilation operators add and remove photons, with $[a_i,a_j^\dagger]=\delta_{ij}\bI$, $[a_i,a_j]=0$, and $a_j\ket0=0$.
The operator $\hat n=\sum_ja_j^\dagger a_j$ counts total photon number.

Displacements $D(\alpha)=\exp\bigl(\sum_j(\alpha_ja_j^\dagger-\overline{\alpha_j}a_j)\bigr)$ shift mean amplitudes and produce coherent states $\ket\alpha=D(\alpha)\ket0$.
Squeezing changes the vacuum fluctuations while retaining zero mean amplitudes.
Every pure Gaussian state is a displaced centered state $D(\alpha)\ket{\Omega_W}$, with $W$ describing squeezing and correlations between modes.

The Bargmann representation turns states into entire functions: $\ket{m_1,\ldots,m_n}$ corresponds to $\prod_j(z_j^{m_j}/\sqrt{m_j!})$, and the vacuum to $1$.
Creation acts by multiplication $a_j^\dagger=z_j$, annihilation by differentiation $a_j=\partial_{z_j}$, with inner product $\pi^{-n}\int_{\bC^n}\overline{f(z)}g(z)e^{-|z|^2}\,d^{2n}z$.
Centered pure Gaussian states become
\begin{equation}
\Omega_W(z)=\Delta(W)^{1/4}\exp\!\left(\tfrac12z^{\mathsf T}Wz\right),
\qquad\Delta(W)=\det(\bI-W^\dagger W).
\label{eq:ab-squeezed-vector}
\end{equation}
Here $W$ ranges over the Siegel disk $\mathbb D_n=\{W\in\bC^{n\times n}:W=W^{\mathsf T},\ \|W\|<1\}$, where $\|\cdot\|$ is the operator norm.
The norm condition ensures normalizability, and the determinant factor gives unit norm.

Passive linear optics mixes modes while conserving photon number.
Let $C_r$ mix the $r$ copies using the same real orthogonal matrix for each mode, with first row $(1,\ldots,1)/\sqrt r$.
Displacements combine in the first copy, while orthogonality preserves the squeezing form $\sum_cz_c^{\mathsf T}Wz_c$ in the copy variables $z_c$, giving
\begin{equation}
C_r\ket{D(\alpha)\Omega_W}^{\otimes r}
=\ket{D(\sqrt r\alpha)\Omega_W}\otimes\ket{\Omega_W}^{\otimes(r-1)}.
\label{eq:ab-copy-centre}
\end{equation}
For coherent inputs ($W=0$), the remaining copies are vacua.
Coherent states resolve the identity: $\int\ketbra\beta\,d^{2n}\beta=\pi^n\bI$.
Using $|\langle0|D(\beta)\Omega_W\rangle|=|\langle-\beta|\Omega_W\rangle|$ and setting $\beta=\sqrt r\alpha$ therefore gives
\begin{equation}
\int_{\bC^n}|\langle0|D(\sqrt r\alpha)\Omega_W\rangle|^2\,d^{2n}\alpha
=\left(\frac\pi r\right)^n.
\label{eq:ab-displacement-integral}
\end{equation}

\subsection{Coherent and canonical Gaussian cloners}
\label{subsec:ab-proof-canonical}

\begin{proof}[Proof of \cref{prop:ab-coherent}]
By \cref{eq:ab-copy-centre}, $C_r\cH_r^{\mathrm{coh}}=\cF_n\otimes\ket0^{\otimes(r-1)}$, with displacements acting through the irreducible Heisenberg representation.
The Heisenberg group with compact phase center is unimodular and amenable.
For Haar measure $d^{2n}\alpha$ times uniform probability on the phase circle, \cref{eq:ab-displacement-integral} at $W=0$ gives formal degree $d_{\mathrm{coh},r}=(r/\pi)^n$.
Thus \cref{thm:structured_optimal_cloning} gives fidelity $d_{\mathrm{coh},N}/d_{\mathrm{coh},M}=(N/M)^n$ and trace error $1-(N/M)^n$ over all channels.

To identify the canonical channel with the amplifier circuit, set $g=M/N$, $K=M-N$ and $J_r\ket\varphi=C_r^\dagger(\ket\varphi\otimes\ket0^{\otimes(r-1)})$.
Write $\tau_g=g^{-1}\sum_{\ell\geq0}(1-g^{-1})^\ell\ketbra\ell$ for the one-mode thermal state of mean occupation $g-1$.
The collective creation operator for mode $j$ is $M^{-1/2}\sum_{c=1}^M a_{j,c}^\dagger$.
Projecting the first $N$ copies onto vacuum removes their creation terms; the remaining sum is $\sqrt{K/M}$ times the collective creation operator for $K$ copies.
Thus vacuum projection of $J_M\ket{\boldsymbol\ell}$ leaves $(K/M)^{|\boldsymbol\ell|/2}J_K\ket{\boldsymbol\ell}$, where $|\boldsymbol\ell|$ is the total photon number.
Substituting this projection into \cref{eq:optimal_cloner} gives
\begin{equation}
J_M^\dagger T(\ketbra0^{\otimes N})J_M
=g^{-n}(1-g^{-1})^{\hat n}=\tau_g^{\otimes n}.
\label{eq:ab-coherent-canonical-thermal}
\end{equation}
A one-mode quantum-limited amplifier sends $\ket\beta$ to $D(\sqrt g\beta)\tau_gD(\sqrt g\beta)^\dagger$~\cite{braunstein2001coherentCloning}.
The passive input transformation $C_N$, amplification of each mode, and redistribution therefore reproduce the canonical vacuum output and, by displacement covariance, every coherent input output.
Equality on coherent-state projectors determines the channel: expanding the coherent-state diagonal as a power series recovers every matrix element of a bounded observable.
For $K=1$, solving $(N/(N+1))^n\geq1-\epsilon$ gives \cref{eq:ab-coherent-sample-complexity}.
With $L=-\log(1-\epsilon)$, the inequalities $n/L-1/2\leq(e^{L/n}-1)^{-1}\leq n/L$ give the lower bound used below and the uniform relative correction $O(\epsilon)$.
For the displacement orbit of a known pure Gaussian state $V\ket0$, use $V^\dagger D(\alpha)V=D(S\alpha)$ up to phase, with $S$ a real-linear bijection.
Conjugating the inputs and outputs by $V$ therefore identifies this promise with the coherent family and preserves both optima.
\end{proof}

Centered squeezed vacua form the vacuum orbit of $\operatorname{Mp}(2n,\bR)$; adding displacements gives the Jacobi group $\cG_n=\operatorname{Mp}(2n,\bR)\ltimes H_n$.
Here $H_n$ has its phase center reduced to a circle, so it has finite phase volume.

\begin{lemma}[Gaussian resolving orbits]
\label{lem:ab-orbit-degrees}
\label{thm:ab-resolving}
For every $r\geq1$, the metaplectic and Jacobi actions on $\cH_r^{\mathrm{sq}}$ and $\cH_r^{\mathrm{bg}}$ are irreducible.
Their square-integrable ranges are respectively $r\geq2n+1$ and $r\geq2n+2$.
Normalize compact vacuum-ray stabilizers to unit volume and use the invariant quotient measures
\begin{equation}
d\nu_{\mathrm{sq}}=\Delta(W)^{-(n+1)}dW,
\qquad d\nu_{\mathrm{bg}}=d\nu_{\mathrm{sq}}\,d^{2n}\alpha,
\label{eq:ab-orbit-measures}
\end{equation}
where $dW=\prod_{i\leq j}d\Re W_{ij}\,d\Im W_{ij}$.
Then their formal degrees are
\begin{align}
d_{\mathrm{sq},r}
&=\frac{\prod_{i\leq j}(r-i-j)}{2^n\pi^{n(n+1)/2}},
&&r\geq2n+1,
\label{eq:ab-centered-degree}\\
d_{\mathrm{bg},r}
&=\left(\frac r\pi\right)^nd_{\mathrm{sq},r-1},
&&r\geq2n+2.
\label{eq:ab-full-degree}
\end{align}
\end{lemma}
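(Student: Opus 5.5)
The plan has three parts: irreducibility by a photon-number (zero-sector) argument, the formal degrees by the integral formula \cref{eq:finite_degree}, and the Jacobi case by reduction to the centered one through the copy-centring identity \cref{eq:ab-copy-centre}.

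\emph{Irreducibility.} Let $\cV\subseteq\cH_r^{\mathrm{sq}}$ be a closed subspace invariant under $\operatorname{Mp}(2n,\bR)^{\otimes r}$. Its orthogonal complement $\cV^\perp$ inside the copy span is then also invariant. The passive phase rotations $e^{it\hat n}$, with $\hat n$ the total photon number on all $r$ copies, belong to the group up to a global phase. Since $\hat n$ has integer spectrum, the projector $P_0$ onto the zero-photon sector is the strong limit of the averages $\frac1{2\pi}\int_0^{2\pi}e^{it\hat n}\,dt$, so $P_0\cV\subseteq\cV$ and $P_0\cV^\perp\subseteq\cV^\perp$. The zero-photon sector of the whole Fock space $\cF_n^{\otimes r}$ is the line spanned by $\ket0^{\otimes r}$. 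Decompose $\ket0^{\otimes r}=v+w$ with $v\in\cV$ and $w\in\cV^\perp$, and apply $P_0$. This gives $\ket0^{\otimes r}=P_0v+P_0w$, where $P_0v\in\cV$ and $P_0w\in\cV^\perp$ are both multiples of $\ket0^{\otimes r}$. Orthogonality forces one of them to vanish, so $\ket0^{\otimes r}$ lies in $\cV$ or in $\cV^\perp$. By \cref{rmk:orbit_span}, $\ket0^{\otimes r}$ generates the whole copy span, so $\cV$ is trivial or everything. The same argument applies verbatim to the Jacobi group on $\cH_r^{\mathrm{bg}}$, since its orbit still passes through the vacuum and contains the phase rotations.

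\emph{Formal degrees, centered family.} By \cref{eq:finite_degree} and the Bargmann form \cref{eq:ab-squeezed-vector}, $\langle0|\Omega_W\rangle=\Omega_W(0)=\Delta(W)^{1/4}$. Hence
\begin{equation}
d_{\mathrm{sq},r}^{-1}=\int_{\mathbb D_n}\Delta(W)^{r/2-(n+1)}\,dW .
\end{equation}
First I will check that $\Delta(W)^{-(n+1)}dW$ is the invariant measure on the Siegel disk. This is the classical transformation law $\Delta(g\cdot W)=|\det(CW+D)|^{-2}\Delta(W)$ together with the Jacobian $|\det(CW+D)|^{-2(n+1)}$. Then I evaluate the integral with Hua's formula for symmetric matrices. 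The cleanest self-contained route is induction on $n$: peel off the last row and column of $W$ via a Schur complement, which factorizes the integral into a one-dimensional disk integral and an $(n-1)$-dimensional Siegel integral with shifted exponent. The result is finite exactly when $r/2>n$, i.e.\ $r\geq2n+1$, and equals $2^n\pi^{n(n+1)/2}/\prod_{i\leq j}(r-i-j)$. This is the reciprocal of \cref{eq:ab-centered-degree}. As a sanity check, for $n=1$ one gets $\int_{|w|<1}(1-|w|^2)^{r/2-2}d^2w=2\pi/(r-2)$. Necessity of $r\geq2n+1$ for square-integrability follows from the divergence of the integral together with \cref{rmk:resolving_orbit}.

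\emph{Full family.} By \cref{eq:ab-copy-centre},
\begin{equation}
|\langle0|D(\alpha)\Omega_W\rangle|^{2r}=|\langle0|D(\sqrt r\alpha)\Omega_W\rangle|^2\,\Delta(W)^{(r-1)/2}.
\end{equation}
Integrating over $\alpha$ with \cref{eq:ab-displacement-integral} gives the factor $(\pi/r)^n$. The remaining $W$-integral is the centered integral at $r-1$ copies. This yields \cref{eq:ab-full-degree}, with the square-integrable range $r-1\geq2n+1$.

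The main obstacle I expect is the Hua integral: its exact constant, and fixing the normalization conventions (the stabilizer volume and the factor $2^n$) consistently with the stated measures.
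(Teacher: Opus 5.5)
Your proposal matches the paper's proof essentially step for step: the zero-photon-sector argument for irreducibility, the formal degree as the vacuum-overlap integral $\int_{\mathbb D_n}\Delta(W)^{r/2-(n+1)}dW$ evaluated by Hua's formula (finite exactly for $r\geq2n+1$), and the reduction of the Jacobi case to $r-1$ centered copies via \cref{eq:ab-copy-centre,eq:ab-displacement-integral} and Tonelli. Two small points. The paper simply cites Hua's integral; if you derive it inductively instead, note that peeling off the last row and column integrates out $n$ complex variables, so besides a one-dimensional disk integral you also get an integral over the off-diagonal column $b\in\bC^{n-1}$ (consistent with the factor $\pi^n$ in the ratio of the $\mathbb D_n$ and $\mathbb D_{n-1}$ integrals). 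You should also record the invariance of $d\nu_{\mathrm{bg}}$, which holds because displacements translate $\alpha$ while Gaussian unitaries act by $\alpha\mapsto S\alpha$ with $\det S=1$.
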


\begin{proof}
To prove irreducibility, let $P$ project onto an invariant subspace of the copy span.
It commutes with photon-number rotations, so it sends the unique joint vacuum to a multiple of itself; $P^2=P$ makes that multiple zero or one.
Commutation with the group propagates this action to every vacuum-orbit vector, whose closed span is the copy space.
Thus $P=0$ or $P=\bI$.
The groups are unimodular; the disk measure is symplectically invariant, and the Jacobi action also preserves displacement volume because $\det S=1$.
Thus square-integrability is equivalent to finiteness of the vacuum overlap integral.
For centered states, $|\langle0|\Omega_W\rangle|^2=\Delta(W)^{1/2}$, so the reciprocal formal degree is obtained from Hua's integral~\cite{hua1963harmonic}:
\begin{equation}
\int_{\mathbb D_n}\Delta(W)^{s/2-(n+1)}dW
=\frac{2^n\pi^{n(n+1)/2}}{\prod_{i\leq j}(s-i-j)},
\qquad s>2n.
\label{eq:ab-hua}
\end{equation}
At $s=r$, this proves \cref{eq:ab-centered-degree}.
Since $0<\Delta(W)\leq1$, the integrand increases as $s$ decreases; the pole at $s=2n$ therefore implies divergence for every $s\leq2n$.
The passive transformation $C_r$ factors the displaced vacuum overlap into one displaced copy and $r-1$ centered copies.
By \cref{eq:ab-displacement-integral} and Tonelli's theorem, its integral is
\begin{equation}
\int|\langle0|D(\alpha)\Omega_W\rangle|^{2r}d\nu_{\mathrm{bg}}
=\left(\frac\pi r\right)^n\int_{\mathbb D_n}\Delta(W)^{(r-1)/2-(n+1)}dW,
\label{eq:ab-displaced-degree-reduction}
\end{equation}
Hua's integral now applies with $s=r-1$, giving the displaced degree and the threshold $r\geq2n+2$.
In these ranges, \cref{thm:structured_optimal_cloning} supplies the normal canonical channels, their covariant optimality and their constant fidelity $d_N/d_M$.
For centered states, substitution gives
\begin{equation}
f_{\mathrm{can}}^{\mathrm{sq}}(N,M)=\prod_{1\leq i\leq j\leq n}\frac{N-i-j}{M-i-j},
\qquad N\geq2n+1,
\label{eq:ab-centered-canonical}
\end{equation}
and the full-family formulas \cref{eq:ab-full-factorization,eq:ab-full-canonical} follow analogously.
By \cref{eq:canonical-cloner-exact-trace-error}, the canonical trace error equals one minus its fidelity.
\end{proof}

\paragraph{The affine model.}
\label{app:ab-affine-model}

We identify the copy spans for $r>n-1$ with one representation, turning cloning into an overlap of vacuum vectors.
The subgroup $S_n$ defined in \cref{eq:ab-affine-schrodinger} has multiplication law $(A,B)(A',B')=(AA',B+AB'A^{\mathsf T})$.
Its action on the vacuum gives a Gaussian with complex symmetric $n\times n$ matrix $Z=AA^{\mathsf T}-iB$, whose Bargmann parameter is $W=(\bI-Z)(\bI+Z)^{-1}$.
The bijection between $\Re Z>0$ and $\mathbb D_n$, together with the unique upper-triangular factorization $\Re Z=AA^{\mathsf T}$, proves simple transitivity on squeezed vacua.
The group is amenable because it is solvable.

For the common model, use the cone $\Omega_n$ of positive-definite real symmetric $n\times n$ matrices $X$, with coordinate measure $dX=\prod_{i\leq j}dX_{ij}$.
Put $d\mu_n(X)=\det(X)^{-(n+1)/2}dX$ and $\cK_n=L^2(\Omega_n,d\mu_n)$.
The congruence Jacobian $(\det A)^{n+1}$ makes $\mu_n$ invariant.
Hence $R(A,B)f(X)=e^{i\Tr(BX)}f(A^{\mathsf T}XA)$ defines a unitary representation.
Normalize the vacuum vector $\xi_s$ using the cone gamma integral~\cite{FarautKoranyi1994} for $s>(n-1)/2$:
\begin{equation}
\int_{\Omega_n}e^{-\Tr(YX)}\det(X)^s\,d\mu_n(X)=\Gamma_n(s)\det(Y)^{-s},\qquad
\xi_s(X)=\left[\frac{2^{ns}}{\Gamma_n(s)}\right]^{1/2}\det(X)^{s/2}e^{-\Tr X},
\label{eq:ab-cone-model}
\end{equation}
where $\Gamma_n(s)=\pi^{n(n-1)/4}\prod_{j=1}^n\Gamma(s-(j-1)/2)$, and $Y=Y^{\mathsf T}$ is a complex $n\times n$ matrix with $\Re Y>0$.
The determinant branch is continuous and positive for real $Y$, and $Y=2\bI$ gives $\|\xi_s\|=1$.

\begin{lemma}[Translations and the common copy model]
\label{lem:ab-cone-commutant}
\label{lem:ab-common-copy-model}
The translations $R(\bI,B)$ generate the multiplication algebra $L^\infty(\Omega_n)$, every almost-everywhere nonzero vector is cyclic for them, and $R$ is irreducible.
For every integer $r>n-1$, the assignment
\begin{equation}
\Xi_r:\cH_r^{\mathrm{sq}}\longrightarrow\cK_n,\qquad
\Xi_r U_g^{\otimes r}\ket0^{\otimes r}=R(g)\xi_{r/2}
\label{eq:ab-gram-unitary}
\end{equation}
extends to an intertwining unitary.
\end{lemma}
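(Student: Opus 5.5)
For the translations, I will use the Fourier side. The unitaries $R(\bI,B)f(X)=e^{i\Tr(BX)}f(X)$ are multiplication operators by the characters $X\mapsto e^{i\Tr(BX)}$, with $B$ ranging over all real symmetric matrices. These characters separate points of $\Omega_n\subset\mathrm{Sym}_n(\bR)$. Their weak-$*$ closed linear span is therefore all of $L^\infty(\Omega_n)$: a finite measure on $\Omega_n$ annihilating every character has vanishing Fourier transform, so it is zero. Equivalently, the von Neumann algebra they generate is the maximal abelian algebra $L^\infty(\Omega_n,\mu_n)$.

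For a vector $f$ that is nonzero almost everywhere, I will show cyclicity directly. Any $h\perp R(\bI,B)f$ for all $B$ satisfies $\int e^{i\Tr(BX)}f\bar h\,d\mu_n=0$. Here $f\bar h\in L^1$, so Fourier uniqueness gives $f\bar h=0$ a.e., hence $h=0$. For irreducibility, an operator commuting with all $R(g)$ commutes with $L^\infty$, and since $L^\infty$ is maximal abelian the operator is itself a multiplication by some $m\in L^\infty$. Commuting with the dilations $f\mapsto f(A^{\mathsf T}XA)$ forces $m(A^{\mathsf T}XA)=m(X)$ a.e. The congruence action of upper-triangular $A$ with positive diagonal is transitive on $\Omega_n$: this is the Cholesky-type factorization already used for simple transitivity. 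A measurable function invariant under a transitive action of a locally compact group is a.e. constant, by a Fubini argument over the group, so $m$ is constant.

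For $\Xi_r$, I will use the standard Gram-matrix criterion. Vectors $U_g^{\otimes r}\ket0^{\otimes r}$ span $\cH_r^{\mathrm{sq}}$, and $R(g)\xi_{r/2}$ span $\cK_n$ by cyclicity, since $\xi_s>0$ everywhere. It therefore suffices to prove $\langle 0|^{\otimes r}U_{g}^{\dagger\otimes r}U_{h}^{\otimes r}|0\rangle^{\otimes r}=\langle\xi_{r/2},R(g)^\dagger R(h)\xi_{r/2}\rangle$. Both sides depend only on $g^{-1}h$, so by the group law it is enough to check $\langle0|U_g|0\rangle^r=\langle\xi_{r/2},R(g)\xi_{r/2}\rangle$ for all $g\in S_n$. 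The left side is computed from the Gaussian wavefunction: the vacuum $\pi^{-n/4}e^{-|q|^2/2}$ is mapped to $(\det A)^{1/2}\pi^{-n/4}e^{-q^{\mathsf T}Zq/2}$ with $Z=AA^{\mathsf T}-iB$, so $\langle0|U_g|0\rangle=(\det A)^{1/2}\det((\bI+Z)/2)^{-1/2}$. The right side, with $s=r/2$, uses the cone gamma integral \cref{eq:ab-cone-model}. After substituting $X\mapsto A^{-\mathsf T}XA^{-1}$ and using invariance of $\mu_n$, it becomes
\[
\tfrac{2^{ns}}{\Gamma_n(s)}(\det A)^{s}\int e^{-\Tr(YX)}\det(X)^{s}d\mu_n,
\]
where $Y=\bI+A^{-\mathsf T}A^{-1}-iA^{-\mathsf T}BA^{-1}$. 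This evaluates to $2^{ns}(\det A)^{s}\det(Y)^{-s}$. Since $\det Y=\det(\bI+Z)/(\det A)^2$, the result is $(\det A)^{s}\det((\bI+Z)/2)^{-s}$, which matches the left side raised to the power $r=2s$.

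Matching inner products is what makes the map well defined: it sends linear relations to linear relations, extends to an isometry, and is surjective by cyclicity. Intertwining then holds by construction. The condition $r>n-1$ is exactly $s>(n-1)/2$, which is what is needed for $\xi_s\in\cK_n$.

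The step I expect to be most delicate is consistent branch handling of $\det(\cdot)^{-s}$ for half-integer $s$. I will fix the branches by continuity from real $Y$ on the convex domain $\Re Y>0$. This matches the continuous branch of $\langle0|U_g|0\rangle$ on the connected group, since both branches equal $1$ at the identity.
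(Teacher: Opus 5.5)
Your proposal is correct and follows the paper's proof essentially step for step. The shared steps are: Fourier uniqueness for the weak-$*$ density of the characters and for cyclicity of a.e.-nonzero vectors; the maximal-abelian commutant together with a Fubini argument over the transitive triangular congruence action for irreducibility; and equality of the vacuum coefficients $\langle0|U_g|0\rangle^r=\langle\xi_{r/2},R(g)\xi_{r/2}\rangle$ via the cone gamma integral, with surjectivity coming from cyclicity of the positive vector $\xi_{r/2}$.

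One computational slip needs fixing. No substitution is needed, since $\det(A^{\mathsf T}XA)^{s/2}=(\det A)^s\det(X)^{s/2}$ and $\Tr(A^{\mathsf T}XA)=\Tr(AA^{\mathsf T}X)$ give $Y=\bI+AA^{\mathsf T}-iB$ directly. In your substituted version, the prefactor should be $(\det A)^{-s}$ and $Y$ should be $A^{-1}(\bI+AA^{\mathsf T}-iB)A^{-\mathsf T}$. Your transposes are reversed, and your $Y$ does not satisfy $\det Y=\det(\bI+Z)/(\det A)^2$ in general. Your final value $(\det A)^s\det((\bI+Z)/2)^{-s}$ is nonetheless correct.
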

\begin{proof}
The translations multiply by the characters $e^{i\Tr(BX)}$.
If $h\in L^1(\mu_n)$ integrates to zero against every character, the finite measure $h\,d\mu_n$, extended by zero to all real symmetric matrices, has vanishing Fourier transform.
Fourier uniqueness gives $h=0$, so the characters span a weak-star dense subspace of $L^\infty(\Omega_n)$.
If $v\perp R(\bI,B)f$ and $f\ne0$ almost everywhere, Cauchy--Schwarz gives $\overline vf\in L^1(\mu_n)$, and the same argument proves $v=0$.
The multiplication algebra is maximal abelian, so its commutant also consists of multiplication operators.
Commutation with congruences makes the multiplier invariant: in Cholesky coordinates $X=C^{\mathsf T}C$, congruence is right translation of the triangular group, with $\mu_n$ giving right Haar measure.
Fubini's theorem, applied to $(C,A)\mapsto(C,CA)$, makes the invariant multiplier constant almost everywhere.
This proves irreducibility.

For $g=(A,B)$, put $Y_g=\bI+AA^{\mathsf T}-iB$.
Integrating the Schr\"odinger vacuum $\pi^{-n/4}e^{-|q|^2/2}$ and using \cref{eq:ab-cone-model} give
\begin{equation}
\langle0|U_g|0\rangle=\left[\frac{2^n\det A}{\det Y_g}\right]^{1/2},\qquad
\langle\xi_s,R(g)\xi_s\rangle=\left[\frac{2^n\det A}{\det Y_g}\right]^s.
\label{eq:ab-cone-coefficient}
\end{equation}
At $s=r/2$, the cone coefficient is the $r$th power of the physical one, giving identical Gram matrices because every inner product depends only on $g^{-1}h$.
The physical orbit spans $\cH_r^{\mathrm{sq}}$ by transitivity, giving an intertwining isometry in \cref{eq:ab-gram-unitary}.
Its closed range is all of $\cK_n$, since the positive vector $\xi_{r/2}$ is cyclic for translations.
\end{proof}

Only the vacuum vector depends on the copy number, so the affine cloner is implemented by $V_{\mathrm{aff}}=\Xi_M^{-1}\Xi_N$.
Its constant fidelity is $|\langle\xi_{M/2},\xi_{N/2}\rangle|^2$, giving \cref{eq:ab-affine-fidelity} by the gamma integral and the trace error by purity.
A competing covariant isometry satisfies $R(g)^\dagger VR(g)=\chi(g)V$ for a character $\chi$, because isometries defining the same channel differ by a phase.
Translations are commutators, since conjugating $(\bI,B)$ by $(\sqrt2\,\bI,0)$ doubles $B$.
Thus $\chi$ is trivial on translations and $V$ is multiplication by a phase, which cannot increase the overlap of the positive vacuum vectors by the triangle inequality.

\begin{lemma}[Uniform affine estimates]
\label{lem:ab-affine-uniform}
Let $N\geq n$, $K\geq1$, $t=nK^2/(8N)$, and $\bar N=N-n+1$.
At $(N,N+K)$,
\begin{align}
\frac{t}{1+K/N}&\leq-n\log f_{\mathrm{aff},1}\leq-\log f_{\mathrm{aff},n}\leq t\left(\frac N{\bar N}+\frac{2N}{\bar N^2}\right),
\label{eq:ab-affine-log-bound}\\
\frac{t}{1+K/N+t}&\leq1-f_{\mathrm{aff},1}^n\leq1-f_{\mathrm{aff},n}\leq\min\left\{1,t\left(\frac N{\bar N}+\frac{2N}{\bar N^2}\right)\right\}.
\label{eq:ab-affine-uniform}
\end{align}
For $N\geq2n$, every $f$ between $f_{\mathrm{aff},n}$ and $f_{\mathrm{aff},1}^n$ satisfies
\begin{equation}
1-f=t+R,\qquad |R|\leq\frac{n^2K^4}{N^2},\qquad
|R|\leq\frac{n^2K^2}{N^2}\quad(K\leq4).
\label{eq:ab-affine-asymptotic}
\end{equation}
\end{lemma}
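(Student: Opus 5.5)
The plan is to reduce everything to the one-variable log-gamma function. Since $\Gamma_n(s)=\pi^{n(n-1)/4}\prod_{j=1}^n\Gamma(s-(j-1)/2)$, the $\pi$-prefactors cancel in \cref{eq:ab-affine-fidelity}. Write $M=N+K$, $k=K/2$ and $a_j=(N-j+1)/2$. Then
\begin{equation*}
-\log f_{\mathrm{aff},n}(N,N+K)=\sum_{j=1}^n D(a_j),
\qquad
D(a)\coloneq\log\Gamma(a)+\log\Gamma(a+k)-2\log\Gamma\!\left(a+\tfrac k2\right),
\end{equation*}
and $-\log f_{\mathrm{aff},1}=D(a_1)=D(N/2)$. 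The condition $N\geq n$ ensures $a_j\geq1/2>0$.

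$D(a)$ is a symmetric second difference of $\log\Gamma$ with half-step $k/2$. It can therefore be written as $(k/2)^2$ times an average of the trigamma function $\psi'$ over $[a,a+k]$ with a triangular probability kernel centred at $a+k/2$. I will use three standard facts: $\psi'$ is decreasing, $\psi'$ is convex, and $1/x<\psi'(x)\leq 1/x+1/x^2$ for $x>0$.

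\emph{Lower bound on $D$.} By convexity (Jensen), the average is at least $\psi'(a+k/2)>1/(a+k/2)$.

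\emph{Upper bound on $D$.} By monotonicity, the average is at most $\psi'(a)\leq 1/a+1/a^2$.

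\emph{Middle inequality of \cref{eq:ab-affine-log-bound}.} Since $\psi'$ is decreasing, $D$ is decreasing in $a$. The $a_j$ decrease in $j$, so every term satisfies $D(a_j)\geq D(a_1)$, which gives $n(-\log f_{\mathrm{aff},1})\leq-\log f_{\mathrm{aff},n}$.

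\emph{Outer inequalities of \cref{eq:ab-affine-log-bound}.} At $a_1=N/2$ the lower bound reads $D(a_1)\geq \frac{K^2}{16}\cdot\frac{2}{N+K/2}$. This is at least $\frac{K^2}{8N}\cdot\frac1{1+K/N}$, and multiplying by $n$ gives the left inequality. For every $j$ we have $a_j\geq\bar N/2$, so $D(a_j)\leq\frac{K^2}{16}\bigl(\frac2{\bar N}+\frac4{\bar N^2}\bigr)=\frac{K^2}{8N}\bigl(\frac N{\bar N}+\frac{2N}{\bar N^2}\bigr)$. Summing over $j$ gives the right inequality.

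\emph{Passing to \cref{eq:ab-affine-uniform}.} I apply the monotone map $x\mapsto1-e^{-x}$ together with $\frac{x}{1+x}\leq1-e^{-x}\leq\min\{1,x\}$. Put $s=t/(1+K/N)$. The lower bound becomes $\frac{s}{1+s}=\frac{t}{1+K/N+t}$. The upper bound transfers directly.

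\emph{Asymptotics \cref{eq:ab-affine-asymptotic}.} Every $f$ in the range is sandwiched by \cref{eq:ab-affine-uniform}, so it suffices to bound $R$ from both sides.

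For the upper side, assume $N\geq2n$. Then $\bar N\geq N/2+1$, which gives $\frac N{\bar N}-1=\frac{n-1}{\bar N}\leq\frac{2(n-1)}N$ and $\frac{2N}{\bar N^2}\leq\frac8N$. Hence
\begin{equation*}
1-f-t\leq t\,\frac{2n+6}{N}\leq t\,\frac{8n}{N}=\frac{n^2K^2}{N^2},
\end{equation*}
which is below both claimed bounds.

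For the lower side, $\frac{t}{1+u}\geq t(1-u)$ with $u=K/N+t$ gives
\begin{equation*}
t-(1-f)\leq t\Bigl(\frac KN+t\Bigr)=\frac{nK^3}{8N^2}+\frac{n^2K^4}{64N^2}.
\end{equation*}
This is at most $n^2K^4/N^2$ in general. When $K\leq4$ we have $K^3\leq4K^2$ and $K^4\leq16K^2$, so it is at most $\bigl(\tfrac12+\tfrac14\bigr)n^2K^2/N^2$.

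I expect the main obstacle to be the integral representation of $D$ and the trigamma inequalities. This is the only place needing care: checking that the kernel is a probability density of total weight $(k/2)^2$, and that the upper bound $\psi'(x)\leq1/x+1/x^2$ holds for all $x>0$, including the small values $a_j$ near $1/2$ that occur when $N=n$. I would verify these via the series $\psi'(x)=\sum_{m\geq0}(x+m)^{-2}$, comparing it with integrals. After that, everything is elementary constant-chasing.
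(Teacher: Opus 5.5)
Your proposal is correct and follows essentially the same route as the paper. Both write $-\log f_{\mathrm{aff},n}$ as a sum of second differences of $\log\Gamma$, that is, double integrals of the trigamma function, use its monotonicity together with $x^{-1}\leq\psi'(x)\leq x^{-1}+x^{-2}$, and pass to infidelities via $L/(1+L)\leq1-e^{-L}\leq\min\{1,L\}$. The only differences are minor: you use convexity (Jensen) for the lower bound, getting the slightly sharper $K^2/[8(N+K/2)]$ where the paper's monotonicity argument gives $K^2/[8(N+K)]$, and your remainder constant-chasing differs a little, but both versions give the stated bounds.
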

\begin{proof}
Logarithms reduce the finite-copy errors to curvature estimates for each gamma factor.
Put $x_j=(N-j+1)/2$, $h=K/4$, and $\psi_1=(\log\Gamma)''$.
The gamma product becomes a sum of second differences:
\begin{equation}
-\log f_{\mathrm{aff},n}=\sum_{j=1}^n\delta(x_j),\qquad
\delta(x)=\int_0^h\!\int_0^h\psi_1(x+u+v)\,du\,dv.
\label{eq:ab-affine-second-difference}
\end{equation}
The series $\psi_1(x)=\sum_{k\geq0}(x+k)^{-2}$ implies monotonicity and $x^{-1}\leq\psi_1(x)\leq x^{-1}+x^{-2}$.
Hence $\delta(x_1)\geq K^2/[8(N+K)]$, $\delta(x_j)\leq h^2(x_j^{-1}+x_j^{-2})$, and $\delta(x_j)\geq\delta(x_1)$.
Summing and using $2x_j\geq\bar N$ proves \cref{eq:ab-affine-log-bound}.
Applying $L/(1+L)\leq1-e^{-L}\leq\min\{1,L\}$ to the logarithmic losses gives \cref{eq:ab-affine-uniform}, whose outer bounds also control every intervening fidelity $f$.
For $N\geq2n$, $N/\bar N+2N/\bar N^2-1\leq8n/N$; subtracting $t$ from the infidelity bounds gives $-t(K/N+t)\leq R\leq8nt/N$.
The inequalities $K/N\leq8t$ and $8nt/N\leq64t^2$ give the first remainder bound.
For $K\leq4$, the sharper estimate $K/N+t\leq(2n+4)/N\leq8n/N$ gives the second.
\end{proof}

At $N=2n+1$ and $K=1$, the canonical product telescopes to $2^{-n}$, while \cref{eq:ab-affine-uniform} gives $1-f_{\mathrm{aff},n}\leq n(n+4)/[8(n+2)^2]<1/8$.
Thus the canonical trace error is at least $1/2$, whereas the affine fidelity exceeds $7/8$ and its trace error is below $1/\sqrt8$.

\subsection{One-mode optimality}

\begin{lemma}[Covariant reduction]
\label{lem:ab-covariant-reduction}
Let an amenable locally compact group $G$ act through strongly continuous unitary representations on separable input and output spaces, with unit-vector orbits $\psi_g=U_{\mathrm{in}}(g)\psi_0$ and $\varphi_g=U_{\mathrm{out}}(g)\varphi_0$.
Every normal completely positive trace-nonincreasing map $\Phi$ admits a contraction $V:\cH_{\mathrm{in}}\to\cH_{\mathrm{out}}\otimes\cE$ and a strongly continuous unitary representation $W$ on a separable environment such that
\begin{equation}
VU_{\mathrm{in}}(g)=(U_{\mathrm{out}}(g)\otimes W(g))V,
\qquad
\| (\bra{\varphi_0}\otimes\bI)V\psi_0\|^2
\geq\inf_g\langle\varphi_g,\Phi(\ketbra{\psi_g})\varphi_g\rangle.
\label{eq:ab-one-mode-intertwiner}
\end{equation}
An isometric intertwiner defines a channel with this overlap at every orbit point.
\end{lemma}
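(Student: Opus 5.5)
Our plan is to build the covariant dilation by averaging a Stinespring dilation of $\Phi$ over the group with an invariant mean. We place the averaged object in a GNS-type Hilbert space, so that $G$ acts on it unitarily.

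First, we take a Stinespring-type contraction $V_0:\cH_{\mathrm{in}}\to\cH_{\mathrm{out}}\otimes\cE_0$ with $\Phi(\rho)=\Tr_{\cE_0}V_0\rho V_0^\dagger$. Since $\Phi$ is normal and trace-nonincreasing, $V_0^\dagger V_0\le\bI$, and $\cE_0$ can be chosen separable. For $g\in G$ we set $V_g\defeq (U_{\mathrm{out}}(g)^\dagger\otimes\bI)V_0U_{\mathrm{in}}(g)$. The overlap $\|(\bra{\varphi_0}\otimes\bI)V_g\psi_0\|^2$ is exactly the fidelity of $\Phi$ at the orbit point $g$.

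Second, we fix an invariant mean $\theta$ on bounded continuous functions on $G$. On $\cH_{\mathrm{in}}$ we define the sesquilinear form
\[
\langle x,y\rangle_\theta\defeq\theta_g\bigl(\langle (\bra{\eta}\otimes\bI)V_gx,(\bra{\eta'}\otimes\bI)V_gy\rangle\bigr).
\]
More intrinsically, we consider the space of functions $F:G\to\cH_{\mathrm{out}}\otimes\cE_0$ of the form $g\mapsto V_gx$. We put on it the semi-inner product $\langle F_1,F_2\rangle\defeq\theta_g\langle F_1(g),F_2(g)\rangle$, quotient by null vectors and complete the result to a Hilbert space $\widehat{\cH}$. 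Left translation $(\lambda(h)F)(g)=F(gh)$ preserves this inner product by invariance of $\theta$. We then map $x\mapsto[g\mapsto V_gx]$ and note that $V_{gh}x=(U_{\mathrm{out}}(h)^\dagger\otimes\bI)V_g\,U_{\mathrm{in}}(h)x$, which means $U_{\mathrm{in}}$ is intertwined with $U_{\mathrm{out}}\otimes(\text{translation})$.

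To obtain the tensor form $\cH_{\mathrm{out}}\otimes\cE$, we instead build $\cE$ as the completion of functions $G\to\cE_0$, or more simply of $\cH_{\mathrm{out}}^*\otimes$-valued functions. We realise $\widehat{\cH}\cong\cH_{\mathrm{out}}\otimes\cE$ by letting $\cE$ be the completion of $\{g\mapsto(\bra{\eta}U_{\mathrm{out}}(g)^\dagger\otimes\bI)V_0U_{\mathrm{in}}(g)x\}$ under $\theta$. The map $V$ is then defined by $\bra{\eta}\otimes\bI$ matrix elements, and $W$ is right translation.

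The overlap inequality follows because a positive unital mean dominates the infimum. We have $\|(\bra{\varphi_0}\otimes\bI)V\psi_0\|^2=\theta_g\bigl(\langle\varphi_g,\Phi(\psi_g)\varphi_g\rangle\bigr)\ge\inf_g(\cdot)$, exactly as in the amenable step of \cref{thm:structured_optimal_cloning}. Contractivity of $V$ follows from $\theta$ being positive and unital applied to $\|V_gx\|^2\le\|x\|^2$. For the final claim, if $V$ is an isometric intertwiner, then $\rho\mapsto\Tr_{\cE}V\rho V^\dagger$ is a channel. Its overlap at $\psi_g$ equals that at $\psi_0$ by covariance, since $W(g)$ is unitary on the traced factor.

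The main obstacle is analytic. First, we must ensure that the functions averaged by $\theta$ are bounded and continuous, which follows from strong continuity of $U_{\mathrm{in}},U_{\mathrm{out}}$ and boundedness of $V_0$. Second, the translation representation $W$ on the completed $\cE$ must be strongly continuous and $\cE$ must be separable, since a completion with respect to an invariant mean need not be either. We expect to handle this by replacing $\cE$ with the closed cyclic subspace generated by the countable dense set $\{(\bra{\eta_k}\otimes\bI)Vx_l\}$, and restricting $W$ to the subspace of continuous vectors, which contains that cyclic subspace. If this fails, we would fall back on a Kasparov-type argument, or use the subrepresentation generated from the separable input, where continuity follows from the uniform continuity estimates for $g\mapsto V_gx$.
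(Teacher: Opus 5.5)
Your strategy works, but one claim is false as stated and your contractivity argument leans on it. The identification $\widehat{\cH}\cong\cH_{\mathrm{out}}\otimes\cE$ fails because $\theta$ is only finitely additive.

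Write $f_{\eta,x}(g)\defeq(\bra{\eta}\otimes\bI)V_gx$ and fix an orthonormal basis $(e_k)$ of $\cH_{\mathrm{out}}$. Any $V$ with $(\bra{\eta}\otimes\bI)Vx=[f_{\eta,x}]$ satisfies $\|Vx\|^2=\sum_k\theta_g\|f_{e_k,x}(g)\|^2$, and this can be strictly smaller than $\theta_g\|V_gx\|^2$. For example, let $G=\mathbb{Z}$ act trivially on $\cH_{\mathrm{in}}=\bC$ and by shifts on $\cH_{\mathrm{out}}=\ell^2(\mathbb{Z})$, and take $\Phi(\rho)=\rho\,\ket{0}\bra{0}$. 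Then $V_g=\ket{-g}$, so $\theta_g\|V_g\|^2=1$. But $\theta_g|\langle\eta|{-g}\rangle|^2=0$ for every $\eta\in\ell^2(\mathbb{Z})$, because an invariant mean vanishes on $c_0(\mathbb{Z})$; hence $\cE=\{0\}$ and $V=0$. This loss of weight at infinity is exactly why the lemma only promises a contraction.

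The repair is short:
\begin{itemize}
\item Drop $\widehat{\cH}$ and let $\cE$ be the completion of the span of the classes $[f_{\eta,x}]$.
\item Set $Vx\defeq\sum_ke_k\otimes[f_{e_k,x}]$. This converges with $\|Vx\|\leq\|x\|$, because each finite partial sum obeys $\sum_{k\leq K}\theta_g\|f_{e_k,x}(g)\|^2=\theta_g\sum_{k\leq K}\|f_{e_k,x}(g)\|^2\leq\|x\|^2$.
\item The uniform tail bound $\sup_g\|f_{\eta',x}(g)\|\leq\|\eta'\|\,\|x\|$, with $\eta'=\eta-\sum_{k\leq K}\langle e_k|\eta\rangle e_k$, gives $(\bra{\eta}\otimes\bI)Vx=[f_{\eta,x}]$ for every $\eta$.
\item Take $W(h)[F]\defeq[F(\cdot\,h)]$. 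It is unitary by right invariance of $\theta$ together with $f_{\eta,x}(gh)=f_{U_{\mathrm{out}}(h)\eta,\,U_{\mathrm{in}}(h)x}(g)$.
\end{itemize}
The intertwining relation and the equality $\|(\bra{\varphi_0}\otimes\bI)V\psi_0\|^2=\theta_g\langle\varphi_g,\Phi(\ket{\psi_g}\bra{\psi_g})\varphi_g\rangle$ then hold as you intend.

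Your ``main obstacle'' is not one. The estimate $\sup_g\|f_{\eta,x}(g)-f_{\zeta,y}(g)\|\leq\|\eta-\zeta\|\,\|x\|+\|\zeta\|\,\|x-y\|$ shows two things. First, $\cE$ is spanned by a continuous image of a separable space, so it is separable. Second, $\|W(h)[f_{\eta,x}]-[f_{\eta,x}]\|\to0$ as $h\to e$, which extends to all of $\cE$ because $W$ is unitary. No passage to continuous vectors and no Kasparov-type argument is needed.

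Compared with the paper: the paper averages $\Phi^*$ into a covariant, completely positive, subunital but non-normal map $\overline\Lambda$ and restricts it to the compact operators. It gets the tensor form $V^\dagger(X\otimes\bI)V$ from the fact that nondegenerate representations of the compact operators are multiples of the identity representation, and gets $W$ from Scutaru's covariant Stinespring theorem. Your repaired $V$ realizes that same dilation explicitly: $\langle Vx,(\ket{\eta}\bra{\zeta}\otimes\bI)Vy\rangle=\theta_g\langle f_{\eta,x}(g),f_{\zeta,y}(g)\rangle$ is exactly the corresponding matrix element of the paper's $\overline\Lambda(\ket{\eta}\bra{\zeta})$. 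The paper's argument is shorter but imports two structural theorems. Yours is self-contained: it obtains the tensor factor $\cH_{\mathrm{out}}$ directly by indexing the environment functions with the output vector $\eta$, exhibits $W$ as right translation, and reduces separability and strong continuity to one uniform estimate.
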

\begin{proof}
Average $\Phi^*$ with an invariant mean as in the proof of \cref{thm:structured_optimal_cloning}.
The bounded continuous integrands give a covariant completely positive subunital map $\overline\Lambda$, whose target expectation is at least the worst orbit fidelity.
This averaging need not preserve normality, so we cannot directly regard $\overline\Lambda$ as the adjoint of a channel.
Instead, restrict it to compact operators, the norm closure of finite-rank operators, which retain the target projector and therefore the fidelity test we need.

Every nondegenerate representation of this algebra is a multiple of its defining representation, so the minimal Stinespring dilation has the form $\overline\Lambda(X)=V^\dagger(X\otimes\bI)V$ on a separable environment.
Finite-rank projections $P_j\uparrow\bI$ give $V^\dagger V=\operatorname*{s-lim}_j\overline\Lambda(P_j)\leq\bI$, proving the contraction property without normality of $\overline\Lambda$.
A contraction accommodates trace loss, as occurs when we compress the output to the target copy span.

The covariant Stinespring construction~\cite{Scutaru1979} turns covariance of the restricted map into the intertwining relation $VU_{\mathrm{in}}(g)=(U_{\mathrm{out}}(g)\otimes W(g))V$.
The representation $W$ is strongly continuous because the input action is strongly continuous and the output action is norm-continuous on compact operators.
Evaluating the dilation at the target projector proves the fidelity inequality.
Conversely, intertwining gives $(\bra{\varphi_g}\otimes\bI)V\psi_g=W(g)(\bra{\varphi_0}\otimes\bI)V\psi_0$, so an isometry gives constant fidelity.
\end{proof}

\begin{proof}[Proof of \cref{thm:ab-one-mode}]
In the one-mode common model, $\cK=L^2(\bR_+,dx/x)$ and $\xi_s(x)=c_sx^{s/2}e^{-x}$ with $c_s^2=2^s/\Gamma(s)$.
Set $p=N/2$, $q=M/2$, $r=(p+q)/2$ and $f_0=\Gamma(r)^2/[\Gamma(p)\Gamma(q)]$.
For $K\leq4$, we resolve the environment into frequency sectors; for $K>4$, an orbit resolution bounds all environments at once.
With $a=A^2$ and $b=B$, define the positive and negative frequency representations
\begin{equation}
(U_\pm(a,b)f)(x)=e^{\pm ibx}f(ax).
\label{eq:ab-one-mode-model}
\end{equation}
The copy spans carry $U_+$.
Output compression preserves target fidelities and gives a trace-nonincreasing map, so amenability and \cref{lem:ab-covariant-reduction} reduce the upper bound to contractions intertwining $U_+$ with $U_+\otimes W$.

\emph{The branch $K\leq4$.}
Write $W(1,b)=e^{ibP}$.
Since $W(a,0)PW(a,0)^\dagger=aP$ with $a>0$, the positive, negative and zero spectral subspaces of $P$ are invariant under the group.
The affine-group representation classification gives $W_\pm=U_\pm\otimes\bI_{\cM_\pm}$, while $W_0(a,b)=a^{iT}$ for a self-adjoint $T$~\cite{GelfandNaimark1947,folland2016course}.

Put $d\nu(t)=dt/[t(1-t)]$ on $(0,1)$.
In the nonzero sectors use $(z,t)=(x+y,x/(x+y))$ and $(z,t)=(x-y,(x-y)/x)$, respectively; the input's positive frequency requires $x>y$ in the negative sector.
The measures become $(dz/z)d\nu(t)$ and $(dz/z)t\,d\nu(t)$, and the group acts only on $z$.
Covariance therefore fixes the input dependence, leaving only the splitting profiles displayed below.
Irreducibility of $U_+$ and Schur's lemma consequently give
\begin{equation}
\begin{aligned}
(V_+f)(x,y)&=f(x+y)h_+(x/(x+y)),\\
(V_-f)(x,y)&=\boldsymbol1_{\{x>y\}}\sqrt{x/(x-y)}\,f(x-y)h_-((x-y)/x),\\
(V_0f)(x)&=x^{-iT}f(x)\omega,
\end{aligned}
\label{eq:ab-sector-intertwiners}
\end{equation}
with $h_\pm\in L^2(d\nu)\otimes\cM_\pm$, $\omega\in\cE_0$ and $V^\dagger V=(\|h_+\|^2+\|h_-\|^2+\|\omega\|^2)\bI\leq\bI$.
The square-root factor compensates for the negative sector's additional density $t$.
For $A_\sigma=(\bra{\xi_q}\otimes\bI)V_\sigma\xi_p$, substituting $x=yt/(1-t)$ or $x=y/(1-t)$ gives
\begin{equation}
A_\pm(y)=c_pc_qy^r\int_0^1t^{\gamma_\pm}(1-t)^{-r}
 e^{-y(1+t)/(1-t)}h_\pm(t)d\nu(t),
\quad \gamma_+=q/2,\quad\gamma_-=(p+1)/2.
\label{eq:ab-sector-amplitude}
\end{equation}
Integrating its squared norm over $dy/y$, using $\int_0^\infty y^{2r-1}e^{-\kappa y}dy=\Gamma(2r)\kappa^{-2r}$ and $\kappa=2(1-tu)/[(1-t)(1-u)]$, yields $\|A_\pm\|^2=\langle h_\pm,(J_{\gamma_\pm}\otimes\bI)h_\pm\rangle$.
The calculation first holds for profiles supported away from the endpoints and extends by the kernel bound and density.
The kernel is
\begin{equation}
J_\gamma(t,u)=\frac{\Gamma(2r)}{\Gamma(p)\Gamma(q)}
\frac{(tu)^\gamma[(1-t)(1-u)]^r}{(1-tu)^{2r}}.
\label{eq:ab-beta-kernel}
\end{equation}
Choose the positive weight $w(t)=t^{r-\gamma}$ so that the powers of $u$ combine into Euler's beta integral $\int_0^1u^{r-1}(1-u)^{r-1}(1-tu)^{-2r}du=B(r,r)(1-t)^{-r}$, where $B(a,b)=\Gamma(a)\Gamma(b)/\Gamma(a+b)$.
This gives $J_\gamma w=f_0t^{2\gamma-r}w$.
Since $0<t<1$, the condition $2\gamma\geq r$ makes the remaining factor at most one.
The weighted Schur test converts $J_\gamma w\leq f_0w$ into $\|J_\gamma\|\leq f_0$; the weight itself need not lie in $L^2$.
The positive sector satisfies this condition for every $K$, while the negative sector requires $K\leq4$; hence $\|A_\pm\|^2\leq f_0\|h_\pm\|^2$ in this branch.
The zero sector satisfies
\begin{equation}
\|A_0\|^2=\frac{\|\Gamma(r-iT)\omega\|^2}{\Gamma(p)\Gamma(q)}\leq f_0\|\omega\|^2,
\label{eq:ab-zero-sector-bound}
\end{equation}
because Euler's integral gives $|\Gamma(r-i\tau)|\leq\Gamma(r)$.
Orthogonality and the contraction condition give the upper bound $f_0$, attained by the identity isometry on $\cK$ since $|\langle\xi_q,\xi_p\rangle|^2=f_0$.

\emph{The branch $K>4$.}
We now bound the target effect directly, independently of the environment's frequency sectors.
Let $D$ be multiplication by $x^{-1}$ and use left Haar measure $dg=da\,db/a^2$.
Plancherel in $b$ followed by the substitution $u=ax$ gives
\begin{equation}
\int_{S_1}|\langle f,U_+(g)v\rangle|^2dg
=2\pi\|f\|^2\|D^{1/2}v\|^2.
\end{equation}
For bounded $B\geq0$, let $\tau_D(B)=\sum_j\|D^{1/2}B^{1/2}e_j\|^2$ in an orthonormal basis, allowing infinity.
Tonelli's theorem extends the identity to $\int U_+(g)BU_+(g)^\dagger dg=2\pi\tau_D(B)\bI$ as positive forms.
Take $B=V^\dagger(\ketbra{\xi_q}\otimes\bI)V$, the effect whose expectation is the fidelity.
Covariance cancels the environmental unitaries against $\bI$, and $\|D^{1/2}\xi_q\|^2=2/(q-1)$ gives $2\pi\tau_D(B)\bI=[4\pi/(q-1)]V^\dagger V$.
The contraction condition therefore bounds its weighted trace by $\tau_D(B)\leq2/(q-1)$.
Insert $D^{1/2}$ and $D^{-1/2}$ in each inner product of $\langle\xi_p,B\xi_p\rangle=\sum_j|\langle B^{1/2}e_j,\xi_p\rangle|^2$ and apply Cauchy--Schwarz to obtain
\begin{equation}
\langle\xi_p,B\xi_p\rangle
\leq\|D^{-1/2}\xi_p\|^2\tau_D(B)
\leq\frac p2\frac2{q-1}=\frac p{q-1}.
\end{equation}
All weighted-domain conditions follow from the finite trace and the moment $\|D^{-1/2}\xi_p\|^2=p/2$.
For attainment, set $s=q-p-2>0$ and define
\begin{equation}
(V_-f)(x,y)=\boldsymbol1_{\{x>y\}}f(x-y)
\frac{(y/x)^{s/2}(1-y/x)^{p/2}}{\sqrt{B(s,p+1)}}.
\label{eq:ab-negative-channel}
\end{equation}
This intertwines $U_+$ with $U_+\otimes U_-$.
The coordinates $z=x-y$, $u=y/x$ give $dx\,dy/(xy)=(dz/z)(du/u)$, so its norm is fixed by $\int_0^1u^{s-1}(1-u)^pdu=B(s,p+1)$ and $V_-$ is an isometry.
The target projection sends $f$ to $[c_q/(c_p\sqrt{B(s,p+1)})]y^{s/2}e^{-y}\langle x\xi_p,f\rangle$, since substituting $z=x-y$ cancels the powers of $x$.
Integrating this gamma density gives the rank-one effect
\begin{equation}
V_-^\dagger(\ketbra{\xi_q}\otimes\bI)V_-
=\frac4{p(q-1)}\ketbra{x\xi_p}.
\end{equation}
Since $\langle\xi_p,x\xi_p\rangle=p/2$, discarding $y$ and returning through $\Xi_M^{-1}$ attains fidelity $p/(q-1)=N/(M-2)$.
Together the two branches prove the theorem and the arbitrary-environment estimate
\begin{equation}
\| (\bra{\xi_q}\otimes\bI)V\xi_p\|^2
\leq f_{\mathrm{sq},1}^\star(N,M)\|V\|^2,
\label{eq:ab-one-mode-dilation-bound}
\end{equation}
where scaling gives the same statement for bounded intertwiners.
\end{proof}

The metric comparison \cref{eq:ab-metric-comparison} bounds every channel's trace error below by its infidelity and bounds the fidelity-optimal channel's trace error above by $\sqrt{1-f_{\mathrm{sq},1}^\star}$.
For $N\geq3$, the canonical channel supplies the additional upper bound $K/(M-2)$, proving \cref{eq:ab-one-mode-trace-bracket}.
At $K=1$ and $N\geq3$, \cref{eq:ab-affine-uniform} gives $1/(8N+9)\leq\epsilon_{\mathrm{sq},1}^\star(N,N+1)\leq1/(N-1)$, proving $\Theta(1/\epsilon)$ sample complexity as $\epsilon\to0$.

\paragraph{Multimode optimality and sample bounds.}
\label{app:ab-multimode}

To prove \cref{eq:ab-centered-upper}, restrict an arbitrary channel to the product orbit of $S_1^{\times n}$ and compress its output to this orbit's copy span.
Restriction can only increase worst-case fidelity; compression preserves the orbit fidelities.
The spans are $\cK^{\otimes n}$, with vacuum vectors $\xi_p^{\otimes n}$ and $\xi_q^{\otimes n}$, where $p=N/2$, $q=M/2$, and $\cK=\cK_1$.
By \cref{lem:ab-covariant-reduction}, it suffices to bound a contraction $V$ intertwining the product representations with an arbitrary environment representation $W$.
Write $f_1=f_{\mathrm{sq},1}^\star(N,M)$ and set $V_1v=(\bra{\xi_q}\otimes\bI)V(\xi_p\otimes v)$.
For fixed $v$, $f\mapsto V(f\otimes v)$ has norm at most $\|V\|\|v\|$ and intertwines the first factor, including the remaining output modes in its environment.
The one-mode estimate \cref{eq:ab-one-mode-dilation-bound} therefore gives $\|V_1\|^2\leq f_1\|V\|^2$.
Contracting the first-mode vectors preserves the remaining intertwinings, so induction yields
\begin{equation*}
\bigl\|(\bra{\xi_q}^{\otimes n}\otimes\bI)V\xi_p^{\otimes n}\bigr\|^2\leq f_1^n\|V\|^2\leq f_1^n.
\end{equation*}
For $K\leq4$ and $N\geq n$, the one-mode optimum is affine, so $f_{\mathrm{aff},n}\leq f_{\mathrm{sq},n}^\star\leq f_{\mathrm{aff},1}^n$.
Applying \cref{lem:ab-affine-uniform} gives the finite bracket in \cref{eq:ab-affine-uniform} and, for $N\geq2n$, \cref{eq:ab-centered-optimal-asymptotic}.

For a canonical product $f=\prod_{\ell=1}^p(N-a_\ell)/(N+K-a_\ell)$, put $t=pK/N$ and assume $0\leq a_\ell\leq A<N$.
Expanding $f^{-1}$ and using $1-\prod(1-y_\ell)\leq\sum y_\ell$ give
\begin{align}
\frac{t}{1+t}&\leq1-f\leq\frac{pK}{N+K-A},
\label{eq:ab-canonical-uniform}\\
-t^2&\leq1-f-t\leq4t^2\quad(A\leq2p,\ t\leq1/4).
\label{eq:ab-canonical-uniform-remainder}
\end{align}
Indeed, $f^{-1}\geq1+t$ and $y_\ell=K/(N+K-a_\ell)$; the remainder uses $A/N\leq2t$ and $t/(1-2t)\leq t+4t^2$.
For centered states, $p=n(n+1)/2$; combining this expansion with \cref{eq:ab-centered-optimal-asymptotic} shows that, for $1\leq K\leq4$ and $n^2/N\to0$, the unrestricted optimal infidelity is smaller than the canonical one by a factor $4(n+1)/K\,[1+o(1)]$.
Since the canonical trace error equals $1-f$, at $K=1$ and $0<\epsilon<1$ these give necessary and sufficient conditions, respectively.
Inverting them gives
\begin{equation}
\frac{p(1-\epsilon)}\epsilon\leq N_{\mathrm{can}}(\epsilon)\leq\left\lceil\frac p\epsilon+A-1\right\rceil.
\label{eq:ab-canonical-sample-range}
\end{equation}
The sufficient integer exceeds $A$, since $p/\epsilon>1$.
Substituting $(p,A)=(n(n+1)/2,2n)$ or $(n(n+3)/2,2n+1)$ proves the centered or displaced canonical upper bound, respectively.
For the full family, $A/p\leq3/2$, so $N_{\mathrm{can}}=p(1+\theta\epsilon)/\epsilon$ with $-1\leq\theta\leq3/2$, uniformly in $n$.

For centered states, trace error at most $\epsilon$ requires fidelity at least $1-\epsilon$.
At $K=1$, the one-mode case of \cref{eq:ab-affine-log-bound}, valid for every $N\geq1$, and \cref{eq:ab-centered-upper} give $1-\epsilon\leq e^{-n/[8(N+1)]}$ and hence $N\geq n/[8(-\log(1-\epsilon))]-1$.
For $\epsilon\leq1/2$, combining $-\log(1-\epsilon)\leq2\epsilon$ with $N\geq1$ gives $N\geq n/(32\epsilon)$.
For $N\geq2n$, $N/\bar N+2N/\bar N^2\leq6$ bounds the affine trace error by $\sqrt{3n/(4N)}$.
Choosing $N=\lceil\max\{2n,3n/(4\epsilon^2)\}\rceil$ proves its sample upper bound.
Together with the canonical bound, these prove \cref{eq:ab-centered-sample-bounds}; restriction to coherent states gives the full-family lower bound in \cref{eq:ab-gaussian-sample-complexity}.

For arbitrary pure Fock-space states, discard $M-N-1$ outputs and restrict to a $D$-dimensional subspace.
Output compression, completed by preparing a fixed state on the complementary outcome, cannot decrease fidelity.
The $N\to N+1$ fidelity bound $(N+1)/(N+D)$~\cite{werner1998optimal} then vanishes as $D\to\infty$, so \cref{eq:ab-metric-comparison} gives trace error one.

%% file: AppDimensionLawProofs.tex
\section{Proofs and verification of the dimension law}
\label{app:dimension-law-proofs}

\subsection{Proof of the dimension law}
 
\begin{proof}[Proof of \cref{thm:uniform_dimension_law_fN}]
We record two consequences of $N\geq N_0$ used below: first, $N\geq256\kappa^2m$; second,
\begin{align}
\label{eq:N_threshold}
    u_0N\;\geq\;8\kappa m+8\log(2N).
\end{align}
Indeed, at $N=N_0$ we have $u_0N_0=128\kappa m$, while $8\log(2N_0)=8\log(8192\kappa^2m)\leq120\kappa m$ for $\kappa,m\geq1$; and $N\mapsto u_0N-8\log(2N)$ is increasing for $N\geq8/u_0=256\kappa$, which $N_0$ exceeds.
 
The proof has three steps. We first confine the integrals to the chart by a radial cutoff, and show that what is cut off is negligible (using \ref{assum:A3}, \ref{assum:A4}). We then derive the Gaussian moment identity \cref{eq:gaussian_moment} with correction terms, by integrating a total divergence, and bound the corrections (using \ref{assum:A1}, \ref{assum:A2}). Finally, we pass from the mean energy to the infidelity.
 
\emph{Step 1: cutoff and tail.}
On the chart $B_R$ we use the radial cutoff function
\begin{align}
\label{eq:cutoff}
    \chi(\vec z)\defeq
    \begin{cases}
        1, & r^2\leq R^2/2,\\
        2\lrp{1-r^2/R^2}, & R^2/2<r^2<R^2,
    \end{cases}
\end{align}
which is continuous, vanishes on the boundary of $B_R$, and satisfies $\vec z\cdot\nabla\chi=-4r^2/R^2\in[-4,0]$ on the transition region $R^2/2<r^2<R^2$ where it is not constant. It is only piecewise $C^1$, which causes no difficulty in the integrations by parts below: $\chi$ is continuous across the sphere $r^2=R^2/2$, so the interface terms from the two pieces cancel. Let
\begin{align}
\label{eq:Jk}
    q_N(\vec z)\defeq e^{-Nh(\vec z)}w(\vec z),
    \qquad
    J_k\defeq\int_{B_R}\chi\,r^{2k}\,q_N\,\dd^m\vec z\quad(k=0,1,2),
\end{align}
so that $J_0$ is the weight of the cut-off distribution and $J_1/J_0$, $J_2/J_0$ are its moments of $r^2$. Everything outside the support of $\chi$ is well separated from $\psi_0$: on the transition region, $h\geq\f78r^2>\f78\cdot\f{R^2}{2}=\f74u_0$, and outside $U$ we have $h\geq u_0$ by \ref{assum:A3}. Since $L'\leq\kappa m/u_0=32\kappa^2m\leq N$, we can write $F^N=F^{N-L'}F^{L'}\leq e^{-u_0(N-L')}F^{L'}$ on these regions, whence both the discarded weight and the weight of the transition region are bounded by
\begin{align}
\label{eq:discarded_mass}
    0\;\leq\;I_N-J_0\;\leq\;A_N,
    \qquad
    \int_{R^2/2<r^2<R^2}q_N\,\dd^m\vec z\;\leq\;A_N,
    \qquad
    A_N\defeq e^{-u_0(N-L')}I_{L'}.
\end{align}
We compare $A_N$ with $J_0$. On the ball $r^2\leq u_0$ we have $\chi=1$, $h\leq(1+\kappa u_0)r^2=\f{33}{32}r^2$ by \ref{assum:A1}, and $w(\vec z)\geq w(\vec 0)e^{-\kappa mr^2/2}$ by \ref{assum:A2} integrated along the ray. Therefore, with $b\defeq\f{33}{32}N+\f{\kappa m}{2}$,
\begin{align}
\label{eq:J0_lower}
    J_0\;\geq\;w(\vec 0)\int_{r^2\leq u_0}e^{-br^2}\,\dd^m\vec z\;\geq\;\f12\,w(\vec 0)\,\pi^{m/2}b^{-m/2},
\end{align}
where the second inequality says that the ball $r^2\leq u_0$ carries at least half of the Gaussian weight: on $r^2>u_0$, $e^{-br^2}\leq e^{-bu_0/2}e^{-br^2/2}$, so the fraction of the weight outside the ball is at most $e^{-bu_0/2}\int e^{-br^2/2}/\int e^{-br^2}=2^{m/2}e^{-bu_0/2}<\f12$, because $bu_0\geq u_0N\geq8\kappa m\geq8m$ by \cref{eq:N_threshold}.
Combining \cref{eq:A4,eq:discarded_mass,eq:J0_lower} and $b\leq\f{17}{16}N$ (which follows from $N\geq16\kappa m$), the ratio $\tau_N\defeq A_N/J_0$ obeys
\begin{align}
    \log\tau_N
    \;\leq\;\log2-u_0N+2\kappa m+\f m2\log\f{2u_0b}{m}
    \;\leq\;\log2+m\lrb{-x+2\kappa+\f12\log\f{17x}{8}}
    \;\leq\;\log2-\f{u_0N}{2},
\end{align}
where $x\defeq u_0N/m\geq8\kappa$ by \cref{eq:N_threshold}, and in the last step we used $2\kappa\leq x/4$ and $\f12\log(17x/8)\leq x/4$ for $x\geq8$. By the logarithmic part of \cref{eq:N_threshold}, $e^{-u_0N/2}\leq(2N)^{-4}$, and hence
\begin{align}
\label{eq:tau}
    \tau_N\;=\;\f{A_N}{J_0}\;\leq\;\f{1}{N^2}.
\end{align}
That is, the weight that the cutoff discards, relative to the weight it keeps, is negligible on the scale $m/N$ of the effect we are after.
 
\emph{Step 2: the Gaussian moment identity with corrections.}
For $k\in\{0,1\}$ we integrate the divergence of the vector field $\chi\,\vec z\,r^{2k}q_N$ over $B_R$. There is no boundary term, since $\chi$ vanishes on $\partial B_R$, and using $\nabla\cdot(\vec z\,r^{2k})=(m+2k)r^{2k}$ and $\vec z\cdot\nabla q_N=(-N\,\vec z\cdot\nabla h+\vec z\cdot\nabla\log w)\,q_N$ we obtain
\begin{align}
\label{eq:ibp}
    N\int\chi\,r^{2k}\,(\vec z\cdot\nabla h)\,q_N
    \;=\;
    (m+2k)J_k
    +\int\chi\,r^{2k}\,(\vec z\cdot\nabla\log w)\,q_N
    +\int r^{2k}\,(\vec z\cdot\nabla\chi)\,q_N .
\end{align}
For $h=r^2$, $w=1$, $\chi=1$ and $k=0$ this reads $2N\int r^2e^{-Nr^2}=m\int e^{-Nr^2}$, i.e., it is the Gaussian moment \cref{eq:gaussian_moment}, evaluated by integrating the total divergence $\nabla\cdot(\vec z\,e^{-Nr^2})$ rather than directly. The point of this detour is that, unlike the Gaussian integral, the divergence identity makes sense for any $h$ and $w$, and it then produces one correction term for each effect neglected in the Gaussian approximation: the radial derivative $\vec z\cdot\nabla h$ replaces $2h$ (anharmonicity), the varying density $w$ contributes the second term on the right (density drift), and the cutoff contributes the third. By \ref{assum:A1}, \ref{assum:A2} and $\kappa R^2=\f18$, on $B_R$ we have
\begin{align}
\label{eq:score_bounds}
    \vec z\cdot\nabla h\;\geq\;\f{15}{8}r^2,
    \qquad
    |\vec z\cdot\nabla\log w|\;\leq\;\kappa mr^2,
    \qquad
    |2h-\vec z\cdot\nabla h|\;\leq\;3\kappa r^4 .
\end{align}
We first extract the moments of $r^2$ from \cref{eq:ibp}. The cutoff term is nonpositive, so dropping it and inserting the first two bounds of \cref{eq:score_bounds} gives $\lrp{\f{15}{8}N-\kappa m}J_{k+1}\leq(m+2k)J_k$. As $\kappa m\leq N/256$, this yields
\begin{align}
\label{eq:moment_bounds}
    \Ex_\chi[r^2]=\f{J_1}{J_0}\;\leq\;\f{4m}{7N},
    \qquad
    \Ex_\chi[r^4]=\f{J_2}{J_0}\;\leq\;\f{16\,m(m+2)}{49\,N^2}\;\leq\;\f{48\,m^2}{49\,N^2},
\end{align}
where $\Ex_\chi$ denotes the average with respect to the cut-off distribution $\chi q_N/J_0$. Now we take $k=0$ in \cref{eq:ibp}, divide by $J_0$, and replace $N\vec z\cdot\nabla h$ by $2Nh$ using the third bound in \cref{eq:score_bounds}; the cutoff term is at most $4A_N$ in absolute value by \cref{eq:discarded_mass}. This gives the Gaussian moment identity with controlled corrections:
\begin{align}
\label{eq:mean_h}
    \abs{\,2N\,\Ex_\chi[h]-m\,}
    \;\leq\;\kappa m\,\Ex_\chi[r^2]+3\kappa N\,\Ex_\chi[r^4]+4\tau_N
    \;\leq\;\lrp{\f47+\f{144}{49}}\f{\kappa m^2}{N}+\f{4}{N^2}
    \;\leq\;\f{4\kappa m^2}{N},
\end{align}
by \cref{eq:moment_bounds,eq:tau}.
 
\emph{Step 3: from the mean energy to the infidelity.}
Since $h\geq0$, we have $0\leq h-(1-e^{-h})\leq h^2/2$, while $h\leq\f98r^2$ on $B_R$ by \ref{assum:A1}, so that $\Ex_\chi[h^2]\leq\f{81}{64}\Ex_\chi[r^4]<\f54\f{m^2}{N^2}$ by \cref{eq:moment_bounds}. Together with \cref{eq:mean_h},
\begin{align}
\label{eq:cutoff_answer}
    \abs{\,\f{2N}{m}\,\Ex_\chi\big[1-e^{-h}\big]-1\,}
    \;\leq\;\f{4\kappa m}{N}+\f{2N}{m}\cdot\f12\Ex_\chi[h^2]
    \;\leq\;\f{4\kappa m}{N}+\f{5m}{4N}\;\leq\;\f{6\kappa m}{N}.
\end{align}
It remains to remove the cutoff. By \cref{eq:tilted_expectation}, $1-f_N$ is the average of $1-F$ under $p_N$, whereas \cref{eq:cutoff_answer} concerns the average $a_N\defeq\Ex_\chi[1-e^{-h}]$ under the cut-off distribution. Writing $D_N\defeq I_N-J_0$ for the discarded weight and $B_N$ for the contribution of the discarded region to $\int(1-F)F^N\dd\nu$, we have $0\leq B_N\leq D_N\leq A_N$ (as $0\leq1-F\leq1$) and $0\leq a_N\leq1$, hence
\begin{align}
    \abs{(1-f_N)-a_N}
    \;=\;\f{\abs{B_N-a_ND_N}}{J_0+D_N}
    \;\leq\;\f{D_N}{J_0}\;\leq\;\tau_N\;\leq\;\f{1}{N^2}.
\end{align}
Combining this with \cref{eq:cutoff_answer} and using $m,\kappa\geq1$ proves \cref{eq:main_sandwich}: $\f{2N}{m}\cdot\f1{N^2}=\f{2}{mN}\leq\f{2\kappa m}{N}$.
\end{proof}
 
\begin{proof}[Proof of \cref{cor:uniform_dimension_law_complexity}]
The trace-distance error of the cloner $T_N$ is exactly $1-f_N$: the target $\psi^{\otimes N+1}$ is an eigenvector of the output $T_N(\psi^{\otimes N})$ with eigenvalue $f_N$, and the rest of the output is supported orthogonally to it. Thus $T_N$ is $\epsilon$-approximate if and only if $1-f_N\leq\epsilon$. Moreover, $f_N$ is nondecreasing in $N$: writing $F^{N+1}=F^{N/2}F^{(N+2)/2}$, the Cauchy--Schwarz inequality in $L^2(\nu)$ gives $I_{N+1}^2\leq I_NI_{N+2}$, i.e.\ $f_N\leq f_{N+1}$. Hence $\{N: 1-f_N\leq\epsilon\}$ is an interval $[N_\epsilon,\infty)$.
 
By \cref{eq:main_sandwich}, for $N\geq N_0$ we have $\f{m}{2N}(1-\f{8\kappa m}{N})\leq1-f_N\leq\f{m}{2N}(1+\f{8\kappa m}{N})$. Let $N^\pm$ be the solutions of $\f{m}{2N}(1\pm\f{8\kappa m}{N})=\epsilon$; then $N^\pm=\f{m}{2\epsilon}\big(1\pm8\kappa m/N^\pm\big)=\f{m}{2\epsilon}\big(1\pm16\kappa\epsilon+O(\kappa^2\epsilon^2)\big)$. Let $\epsilon\leq\epsilon_0\defeq m/(8N_0)=2^{-15}\kappa^{-2}$. Then $N^-\geq N_0$, and $1-f_N\leq\epsilon$ fails for all $N<N^-$ (for $N_0\leq N<N^-$ because the lower bound is decreasing in $N$ there, and for $N<N_0$ by monotonicity of $f_N$ together with $1-f_{N_0}\geq\f{m}{4N_0}=2\epsilon_0$) and holds for all $N\geq N^+$; so $N^-\leq N_\epsilon\leq N^++1$, which is \cref{eq:Nmin}. Finally, $N_{\mr{min},\cS}(\epsilon)\leq N_\epsilon$ because $T_{N_\epsilon}$ is an $\epsilon$-approximate cloner, and if $T_N$ is optimal for every $N$ then no $\epsilon$-approximate cloner exists for $N<N_\epsilon$.
\end{proof}

\subsection{Verification of the assumptions for the Gaussian families}
\label{sec:verifications}
 
We now verify \ref{assum:A0}--\ref{assum:A4} for the three Gaussian families of this work, $\Sffq$, $\Sff$ and $\Sbg$, and also for the unstructured family $\cS_\cH$ of \cite{werner1998optimal}, which fits the same scheme. All four admit the same kind of explicit description: a chart by a complex matrix $Z$, in which the overlap with the reference state is a power of the determinant $\det(\one\pm Z^\dag Z)$, and so is the measure $\nu$. We isolate this structure, because a single lemma then verifies the assumptions for all families at once.
 
\begin{definition}[Determinantal chart]\label{def:det_chart}
Let $\cS$, $\nu$ and $\psi_0$ be as in \cref{sec:dimension_law_setup}. A \emph{determinantal chart of type $(\sigma,\zeta,\gamma,k)$}, with a sign $\sigma\in\{+1,-1\}$, $\zeta,\gamma>0$ and an integer $k\geq0$, consists of a complex linear space $V$ of (in general rectangular) matrices with $\dim_\bR V=m-2k$, and a bijection $(Z,\vec\beta)\mapsto\psi_{Z,\vec\beta}$ from
\begin{align}\label{eq:det_domain}
    \cD\times\bC^k,
    \qquad
    \cD\defeq\{Z\in V:\;\one+\sigma Z^\dag Z\succ0\},
\end{align}
onto the set $\{\psi\in\cS:F(\psi)>0\}$ of states not orthogonal to $\psi_0$, with $\psi_{0,\vec 0}=\psi_0$, such that
\begin{align}\label{eq:det_overlap}
    h(\psi_{Z,\vec\beta})=\sigma\zeta\log\det\lrp{\one+\sigma Z^\dag Z}+|\vec\beta|^2,
    \qquad
    d\nu(\psi_{Z,\vec\beta})=\det\lrp{\one+\sigma Z^\dag Z}^{-\gamma}\,\dd^{m-2k}\vec z\,\dd^{2k}\vec\beta .
\end{align}
Here $\vec z\in\bR^{m-2k}$ are real linear coordinates on $V$, normalized so that $|\vec z|^2=\zeta\Tr Z^\dag Z$, and $\vec\beta\in\bC^k$ is identified with $\bR^{2k}$.
\end{definition}
In words: the overlap is a power of a determinant in $Z$ and, if $k>0$, exactly Gaussian in the additional $k$ complex coordinates $\vec\beta$ (which will be the displacements of the bosonic Gaussian states); the invariant measure is another power of the same determinant. For $\sigma=+1$ the domain is $\cD=V$, for $\sigma=-1$ it is the matrix ball $\|Z\|_{\mr{op}}<1$. Note that $\{F=0\}$ need not be small: for the free fermions it contains a whole connected component of $\cS$ (see below). This does no harm, as $\{F=0\}$ contributes to no $I_N$. The normalization $|\vec z|^2=\zeta\Tr Z^\dag Z$ of the coordinates is chosen so that $h=|\vec z|^2+|\vec\beta|^2+O(|\vec z|^4)$, as \ref{assum:A1} requires; the overall normalization of $\nu$ is chosen so that $w(\vec 0)=1$, a constant factor in $\nu$ cancels from $f_N$ anyway.
 
\begin{lemma}[Determinantal charts satisfy the assumptions]\label{lem:det_chart}
Let $\cS$ have a determinantal chart of type $(\sigma,\zeta,\gamma,k)$, and let $\kappa\geq1$ satisfy
\begin{align}\label{eq:K_conditions}
    \kappa\;\geq\;\f{4}{\zeta}
    \qquad\text{and}\qquad
    \kappa\,m\;\geq\;\f{3\gamma}{\zeta}\,.
\end{align}
Then \ref{assum:A0}--\ref{assum:A4} hold with constant $\kappa$ and reserve exponent $L'=16\kappa m-\sigma\gamma/\zeta$, provided $L'\geq L$, the threshold above which the moments $I_t$ are assumed finite (\cref{sec:dimension_law_setup}).
\end{lemma}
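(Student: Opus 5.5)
The proof checks \ref{assum:A0}--\ref{assum:A4} one by one in the coordinates $(\vec z,\vec\beta)$ given by the chart. The one structural fact used throughout is unitary invariance: $\det(\one+\sigma Z^\dag Z)=\prod_i(1+\sigma\lambda_i)$, where $\lambda_i\geq0$ are the eigenvalues of $Z^\dag Z$. Write $r_z^2\defeq|\vec z|^2=\zeta\sum_i\lambda_i$, so that $r^2=r_z^2+|\vec\beta|^2$. Every quantity in the assumptions then becomes a sum over the $\lambda_i$ plus an exactly Gaussian $\vec\beta$ part, and the $\vec\beta$ part satisfies all of \ref{assum:A1}--\ref{assum:A2} with zero error.

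For \ref{assum:A0}, take $U$ to be the preimage of the ball $B_R$. For $\sigma=-1$ I must check that $B_R$ lies inside $\cD$. On $B_R$ each eigenvalue satisfies $\lambda_i\leq r^2/\zeta<R^2/\zeta=1/(8\kappa\zeta)\leq1/32$, using $\kappa\geq4/\zeta$, so $\|Z\|_{\mr{op}}<1$. Smoothness of $h$ and of $w=\det(\one+\sigma Z^\dag Z)^{-\gamma}$ is then immediate, and $w(\vec 0)=1$.

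For \ref{assum:A1}, use $|\sigma\log(1+\sigma x)-x|\leq x^2$ for $0\leq x\leq1/32$. This gives
\[
|h-r^2|\leq\zeta\sum_i\lambda_i^2\leq\zeta\Big(\sum_i\lambda_i\Big)^2=\frac{r_z^4}{\zeta}\leq\frac{\kappa r^4}{4}.
\]
For the radial derivative I would use the scaling $Z\mapsto tZ$, $\vec\beta\mapsto t\vec\beta$, under which $\lambda_i\mapsto t^2\lambda_i$. Differentiating at $t=1$ gives
\[
\vec z\cdot\nabla h=2\zeta\sum_i\frac{\lambda_i}{1+\sigma\lambda_i}+2|\vec\beta|^2,
\]
whose deviation from $2r^2$ is $2\zeta\sum_i\lambda_i^2/(1+\sigma\lambda_i)\leq\kappa r^4$. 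The same scaling gives $\vec z\cdot\nabla\log w=-2\gamma\sigma\sum_i\lambda_i/(1+\sigma\lambda_i)$. Its absolute value is at most $\tfrac{64}{31}\gamma r^2/\zeta\leq3\gamma r^2/\zeta\leq\kappa m r^2$ by the second condition in \cref{eq:K_conditions}, which is \ref{assum:A2}.

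For \ref{assum:A3}, the radial derivative above is positive everywhere on $\cD\times\bC^k$ for both signs, so $h$ increases along every ray from the origin. The set $\{F=0\}$ trivially has $F\leq e^{-u_0}$. Every other point outside $U$ lies on a ray that crosses the sphere $r=R$, and there $h\geq r^2(1-\kappa r^2)=\tfrac78R^2>u_0$.

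For \ref{assum:A4}, first the easy inequality: $u_0L'\leq(16\kappa m+\gamma/\zeta)/(32\kappa)\leq\kappa m$, using $\gamma/\zeta\leq\kappa m/3$. Next, $L'$ is chosen so that the powers combine: $F^{L'}w=\det(\one+\sigma Z^\dag Z)^{-16\sigma\kappa m\zeta}\,e^{-L'|\vec\beta|^2}$. Set $a\defeq16\kappa m\zeta$.

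For $\sigma=-1$ this is easy. Bound $\prod_i(1-\lambda_i)^{a}\leq e^{-a\sum_i\lambda_i}=e^{-16\kappa m\,r_z^2}$ and extend the integral to all of $V$. This yields a Gaussian of exponent $t=16\kappa m=m/(2u_0)$, exactly the comparison Gaussian in \cref{eq:A4}. The $\vec\beta$ factor $(\pi/L')^{k}$ is compared with $(\pi/t)^{k}$; since $L'\geq t-\gamma/\zeta\geq t/2$, this costs at most a factor $2^k\leq e^{m}\leq e^{\kappa m}$.

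For $\sigma=+1$, the tails of $\prod_i(1+\lambda_i)^{-a}$ are only polynomial, and I expect this case to be the main obstacle. The plan is to split each factor. For $\lambda\leq1$, use $(1+\lambda)^{-a}\leq e^{-a\lambda/2}$, which gives a Gaussian of exponent $t/2$. For $\lambda>1$, use $(1+\lambda)^{-a}\leq2^{-a/2}(1+\lambda)^{-a/2}$; the remaining power $a/2\gg m$ makes the polynomial tail integrable, and the factor $2^{-a/2}$ makes its contribution comparable to the Gaussian one. Replacing the exponent $t$ by $t/2$ costs a factor $2^{m/2}$, and the tail contributes at most another $e^{O(m)}$. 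Both are absorbed into the allowed $e^{\kappa m}$, since $\kappa\geq1$ and $a$ is large compared with $m$. If this elementary splitting proves too lossy, the fallback is to evaluate $\int_V\det(\one+Z^\dag Z)^{-a}dZ$ exactly as a Hua-type integral, a product of gamma ratios, and bound that product directly. Finally, $L'\geq L$ is assumed in the statement, and $L'\leq\kappa m/u_0$ follows from the first inequality of \cref{eq:A4}.
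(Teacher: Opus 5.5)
\textbf{Verdict: there is a gap, in (A4) for $\sigma=+1$; everything else checks out.} Your checks of \ref{assum:A0}--\ref{assum:A3}, of the first inequality in \ref{assum:A4}, and of the second inequality for $\sigma=-1$ are correct. For \ref{assum:A3} you argue by monotonicity of $h$ along rays, using that $\cD$ is star-shaped and $\vec z\cdot\nabla h\geq0$ on all of $\cD\times\bC^k$. This differs from the paper's explicit lower bounds on $h_Z$, but it is valid.

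\textbf{What fails for $\sigma=+1$.} You split each factor $(1+\lambda_i)^{-a}$ according to whether $\lambda_i\leq1$. This yields a bound $\prod_i g(\lambda_i)$ whose mixed terms (some eigenvalues below $1$, some above) are not functions of $|\vec z|$. Integrating them over $V$ needs the Jacobian of $Z\mapsto(\lambda_i)$. That Jacobian depends on the structure of $V$ and is unavailable for the arbitrary complex linear space $V$ allowed in \cref{def:det_chart}. Your Hua-integral fallback has the same defect, since closed forms exist only for the classical matrix spaces. So the claim that the tail costs ``at most another $e^{O(m)}$'' is unsupported. The constant is not innocuous either: when $\zeta\geq4$, $\kappa$ may equal $1$, so the total loss must stay below $e^{m}$. ``$\kappa\geq1$'' alone does not absorb an unspecified $e^{O(m)}$.

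\textbf{The missing step.} You need the $\sigma=+1$ analogue of what you did for $\sigma=-1$: reduce to a function of the trace. Expanding the product gives $\det(\one+Z^\dag Z)=\prod_i(1+\lambda_i)\geq1+\sum_i\lambda_i=1+r_z^2/\zeta$. Hence $F^{L'}w\leq(1+r_z^2/\zeta)^{-\zeta t_0}e^{-L'|\vec\beta|^2}$ with $t_0=16\kappa m$. The $Z$-integrand is then radial on $V\cong\bR^{2\ell}$, with $\ell=(m-2k)/2$. Polar coordinates and the Beta integral give exactly $\pi^\ell\zeta^\ell\,\Gamma(\zeta t_0-\ell)/\Gamma(\zeta t_0)=\pi^\ell\prod_{j=1}^{\ell}(t_0-j/\zeta)^{-1}$. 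This is at most $\pi^\ell t_0^{-\ell}(128/127)^\ell$, because $\ell/\zeta\leq\kappa m/8=t_0/128$. This is the paper's argument.

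\textbf{Your split also works, applied radially.} Apply it to the single variable $s=r_z^2/\zeta$ instead of eigenvalue by eigenvalue. The region $s\leq1$ costs a factor $2^\ell$. The region $s>1$ carries the prefactor $2^{-\zeta t_0/2}\leq2^{-32m}$, since $\zeta t_0=16\kappa\zeta m\geq64m$, so it is exponentially small. Together with your factor $2^k$ from $\vec\beta$, the total loss is $2^{m/2}$ times $1$ plus an exponentially small correction, well within $e^{\kappa m}$.
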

 
\begin{proof}
Write $r_Z\defeq|\vec z|$ and $r_\beta\defeq|\vec\beta|$, so that $r^2=r_Z^2+r_\beta^2$, and let $h_Z\defeq\sigma\zeta\log\det(\one+\sigma Z^\dag Z)$ be the $Z$-part of $h$. Take $U\defeq\{\psi_{Z,\vec\beta}:r<R\}$ with $R^2=1/(8\kappa)$. Since $r_Z^2=\zeta\Tr Z^\dag Z$, on the ball $B_R$ the operator norm $a\defeq\|Z^\dag Z\|_{\mr{op}}$ satisfies
\begin{align}\label{eq:A_small}
    a\;\leq\;\Tr Z^\dag Z\;=\;\f{r_Z^2}{\zeta}\;<\;\f{R^2}{\zeta}\;=\;\f{1}{8\kappa\zeta}\;\leq\;\f1{32},
\end{align}
so in particular $B_R$ is inside the domain of the chart, and \ref{assum:A0} holds with $w=\det(\one+\sigma Z^\dag Z)^{-\gamma}$.
 
\emph{\ref{assum:A1}.} By $\log\det=\Tr\log$ and the series of $\log(\one+\sigma Z^\dag Z)$, which converges on $B_R$ since $a<1$, $h_Z=\zeta\sum_{j\geq1}\f{(-\sigma)^{j-1}}{j}\Tr (Z^\dag Z)^j$, whose $j=1$ term is $r_Z^2$; the $\vec\beta$-part of $h$ is exactly $r_\beta^2$, so $h-r^2=h_Z-r_Z^2$. In terms of the eigenvalues $a_i\in[0,a]$ of $Z^\dag Z$, $\Tr(Z^\dag Z)^j=\sum_ia_i^j\leq a^{j-2}\sum_ia_i^2\leq a^{j-2}(\Tr Z^\dag Z)^2$ for $j\geq2$; summing the geometric series,
\begin{align}
    \big|h-r^2\big|\;\leq\;\f{\zeta(\Tr Z^\dag Z)^2}{2(1-a)}\;=\;\f{r_Z^4}{2\zeta(1-a)}\;\leq\;\f{16}{31}\cdot\f{r_Z^4}{\zeta}\;\leq\;\f{4}{31}\,\kappa\,r^4,
\end{align}
by $1/\zeta\leq\kappa/4$. For the radial derivative, differentiate along the ray $t\mapsto(tZ,t\vec\beta)$, which multiplies $Z^\dag Z$ by $t^2$:
\begin{align}
    \vec z\cdot\nabla h_Z=\f{\dd}{\dd t}h_Z(tZ)\Big|_{t=1}&=2\zeta\Tr\lrb{Z^\dag Z(\one+\sigma Z^\dag Z)^{-1}},\\
    \vec z\cdot\nabla h_Z-2r_Z^2&=-2\zeta\sigma\Tr\lrb{(Z^\dag Z)^2(\one+\sigma Z^\dag Z)^{-1}},\label{eq:radial_derivative}
\end{align}
while the $\vec\beta$-part contributes exactly $2r_\beta^2$. Hence $|\vec z\cdot\nabla h-2r^2|\leq\f{2\zeta\Tr(Z^\dag Z)^2}{1-a}\leq\f{2r_Z^4}{\zeta(1-a)}\leq\f{16}{31}\kappa r^4$. Both bounds are stronger than \cref{eq:A1}.
 
\emph{\ref{assum:A2}.} By \cref{eq:det_overlap}, $\log w=-\gamma\log\det(\one+\sigma Z^\dag Z)=-\f{\sigma\gamma}{\zeta}h_Z$, so \cref{eq:radial_derivative} gives $|\vec z\cdot\nabla\log w|=\f\gamma\zeta|\vec z\cdot\nabla h_Z|\leq\f\gamma\zeta\lrp{2r_Z^2+\kappa R^2r_Z^2}=\f{17}{8}\f{\gamma}{\zeta}r_Z^2\leq\kappa m\,r^2$ by the second condition in \cref{eq:K_conditions}.
 
\emph{\ref{assum:A3}.} For $\sigma=-1$, $-\log(1-a_i)\geq a_i$ for each eigenvalue $a_i$ of $Z^\dag Z$ gives $h_Z\geq r_Z^2$ on all of $\cD$, so $h\geq r^2\geq R^2=4u_0$ outside $U$. For $\sigma=+1$, $\det(\one+Z^\dag Z)\geq1+\Tr Z^\dag Z$ gives $h_Z\geq\zeta\log(1+r_Z^2/\zeta)$, increasing in $r_Z$. Outside $U$, $r_Z^2+r_\beta^2\geq R^2$, so either $r_\beta^2\geq R^2/2$, and $h\geq r_\beta^2\geq2u_0$; or $r_Z^2\geq R^2/2$, and $h\geq h_Z\geq\zeta\log\big(1+\f{R^2}{2\zeta}\big)\geq\f{R^2}{2}\big(1-\f{R^2}{4\zeta}\big)\geq\f{127}{64}u_0$, by $\log(1+x)\geq x-x^2/2$ and $R^2/\zeta\leq\f1{32}$. In both cases $h\geq u_0$, i.e.\ $F\leq e^{-u_0}$ outside $U$, and $F=0$ on the part of $\cS$ not covered by the chart.
 
\emph{\ref{assum:A4}.} By \cref{eq:det_overlap}, $F^{L'}\nu=e^{-(L'+\sigma\gamma/\zeta)h_Z}e^{-L'r_\beta^2}\dd^{m-2k}\vec z\,\dd^{2k}\vec\beta$: the measure contributes $\sigma\gamma/\zeta$ to the exponent of the overlap, and only the total exponent $L'+\sigma\gamma/\zeta$ matters; hence $L'$ is chosen as $t_0-\sigma\gamma/\zeta$ with $t_0\defeq16\kappa m=\f{m}{2u_0}$. The first condition in \cref{eq:A4} holds since $u_0L'=\f m2-\sigma u_0\gamma/\zeta\leq\f m2+\f{m}{96}\leq\kappa m$, using $\gamma/\zeta\leq\kappa m/3$. For the second, the $\vec\beta$-integral is exactly $(\pi/L')^k$, and $L'\geq t_0(1-\f{1}{48})$ since $\gamma/\zeta\leq\kappa m/3=t_0/48$. For the $Z$-integral, if $\sigma=-1$ then $h_Z\geq r_Z^2$ gives directly $\int_\cD e^{-t_0h_Z}\dd\vec z\leq\pi^{(m-2k)/2}t_0^{-(m-2k)/2}$. If $\sigma=+1$, then $h_Z\geq\zeta\log(1+r_Z^2/\zeta)$ gives, in polar coordinates with $s=r_Z^2/\zeta$ and $\ell\defeq(m-2k)/2$ (an integer, as $V$ is a complex space),
\begin{align}
    \int_\cD e^{-t_0h_Z}\dd\vec z\;\leq\;\int_{\bR^{2\ell}}\lrp{1+\f{r_Z^2}{\zeta}}^{-\zeta t_0}\dd\vec z
    \;=\;\pi^{\ell}\zeta^{\ell}\,\f{\Gamma(\zeta t_0-\ell)}{\Gamma(\zeta t_0)}
    \;=\;\pi^{\ell}\prod_{j=1}^{\ell}\lrp{t_0-\f{j}{\zeta}}^{-1}
    \;\leq\;\pi^{\ell}t_0^{-\ell}\lrp{\f{128}{127}}^{\ell},
\end{align}
where the middle step is the Beta integral $\int_0^\infty s^{\ell-1}(1+s)^{-\zeta t_0}\dd s=B(\ell,\zeta t_0-\ell)$ together with $\f12|S^{2\ell-1}|\Gamma(\ell)=\pi^\ell$, and the last step uses $j/\zeta\leq\ell/\zeta\leq\kappa m/8=t_0/128$. Altogether $I_{L'}\leq\pi^{m/2}t_0^{-m/2}\lrp{\f{128}{127}}^{\ell}\lrp{\f{48}{47}}^{k}\leq\pi^{m/2}t_0^{-m/2}e^{\kappa m}$, which is \cref{eq:A4}.
\end{proof}

It remains to exhibit a determinantal chart for our families of interest: coordinates $Z$ (and, for $\Sbg$, $\vec\beta$) covering the set $\{F>0\}$, in which both the overlap and the invariant measure take the form \cref{eq:det_overlap}, and to read off $(\sigma,\zeta,\gamma,k)$ and $m$. In each case, the chart is given by an unnormalized representative $\ket{\td\psi_Z}$ of the state, with $\braket{\psi_0}{\td\psi_Z}=1$, so that $F=\braket{\td\psi_Z}{\td\psi_Z}^{-1}$. In the compact cases---unstructured states, Slater states and free fermions---we also use that the group acts transitively on the family, so that the invariant probability measure is unique \cite{folland2016course}. The results are collected in \cref{tab:det_charts}.
 
\begin{table}[t]
\caption{Determinantal charts of the four families. The domain of $Z$ is all of $V$ for $\sigma=+1$ and the matrix ball $\|Z\|_{\mr{op}}<1$ for $\sigma=-1$.}
\label{tab:det_charts}
\begin{center}
\small
\begin{tabular}{llcccccc}
\hline
family & reference $\psi_0$ & $Z$ & $k$ & $\sigma$ & $\zeta$ & $\gamma$ & $m$\\
\hline
unstructured $\cS_{\cH}$ & $\ket{\psi_0}\in\cH=\bC^d$ & $\bC^{d-1}$ (column) & $0$ & $+$ & $1$ & $d$ & $2(d-1)$\\
Slater states $\Sffq$ & $a_1^\dag\cdots a_q^\dag\vac$ & $\bC^{(n-q)\times q}$ & $0$ & $+$ & $1$ & $n$ & $2q(n-q)$\\
free fermions $\Sff$ & $\vac$ & $\bC^{n\times n}$ antisym. & $0$ & $+$ & $\tfrac12$ & $n-1$ & $n(n-1)$\\
bosonic Gaussian $\Sbg$ & $\vac$ & $\bC^{n\times n}$ sym. & $n$ & $-$ & $\tfrac12$ & $n+\tfrac32$ & $n(n+3)$\\
\hline
\end{tabular}
\end{center}
\end{table}
 
\emph{Unstructured states.} For $\cH=\bC^d$ with reference state $\ket{\psi_0}$ and orthonormal basis $\ket{\psi_j}$, $0\leq j\leq d-1$, the chart is $\ket{\td\psi_{Z}}=\ket{\psi_0}+\sum_{j\geq1}Z_{j}\ket{\psi_j}$, $Z\in\bC^{d-1}$, which covers all states not orthogonal to $\ket{\psi_0}$; its squared norm is $1+|Z|^2=\det(\one+Z^\dag Z)$ for the column $Z$, so $F=\det(\one+Z^\dag Z)^{-1}$. It remains to find $\nu$, which is the unique $\mathrm U(d)$-invariant probability measure on $\cS_\cH$, in the coordinates $Z$. Writing $U=\begin{pmatrix}a&b\\c&D\end{pmatrix}$ with $a\in\bC$, $b$ a row, $c$ a column and $D\in\bC^{(d-1)\times(d-1)}$, one observes that its action in coordinates $Z$ is fractional-linear,
\begin{align}
\label{eq:mobius_unstructured}
    U\ket{\td\psi_Z}\propto\ket{\td\psi_{Z'}},\qquad Z'=(c+DZ)(a+bZ)^{-1}.
\end{align}
Under \cref{eq:mobius_unstructured}, $1+|Z'|^2=(1+|Z|^2)/|a+bZ|^2$ (the norm of $U\ket{\td\psi_Z}$ is that of $\ket{\td\psi_Z}$) while $\dd\vec z'=|a+bZ|^{-2d}\dd\vec z$, so $(1+|Z|^2)^{-d}\dd\vec z$ is invariant. By uniqueness of the invariant measure, $\nu\propto(1+|Z|^2)^{-d}\dd\vec z$. From the form of the overlap and the measure, we conclude that the chart $Z$ is indeed determinantal, with $(\sigma,\zeta,\gamma,k)=(1,1,d,0)$ and $m=2(d-1)$.
Thus Werner's cloner falls within the scope of the dimension law, which explains its sample complexity $\simeq d/\epsilon$ without reference to the symmetric subspace. Comparing with Werner's exact fidelity $f_N=\f{N+1}{N+d}$, i.e.\ $1-f_N=\f{m/2}{N+d}$: the leading term $m/(2N)$ of \cref{eq:main_sandwich} is exact, and the relative correction $-d/(N+d)$ lies within the bound $8\kappa m/N$.
 
\emph{Slater states} (\cref{prop:slater_orbit}). For $A\in\bC^{n\times q}$, let $\ket{\td\psi_A}\defeq b_1^\dag\cdots b_q^\dag\vac$ with $b_k^\dag\defeq\sum_{i=1}^nA_{ik}a_i^\dag$; then $\braket{\td\psi_A}{\td\psi_{A'}}=\det(A^\dag A')$. Take the reference state $\ket{\psi_0}=\ket{\td\psi_{A_0}}$, $A_0=\binom{\one_q}{0}$. Every Slater state not orthogonal to $\ket{\psi_0}$ is proportional to $\ket{\td\psi_Z}\defeq\ket{\td\psi_A}$ with $A=\binom{\one_q}{Z}$, for a unique $Z\in\bC^{(n-q)\times q}$; its overlap with $\ket{\psi_0}$ is $\det(A_0^\dag A)=1$ and its squared norm is $\det(A^\dag A)=\det(\one+Z^\dag Z)$, so $F=\det(\one+Z^\dag Z)^{-1}$. It remains to find $\nu$, the unique $\rU(n)$-invariant probability measure on $\Sffq$, in the coordinates $Z$. Writing $U\in\rU(n)$ as $\begin{pmatrix}a&b\\c&D\end{pmatrix}$ with $a\in\bC^{q\times q}$ and $D\in\bC^{(n-q)\times(n-q)}$, and using $Ub_k^\dag U^\dag=\sum_i(UA)_{ik}a_i^\dag$, one finds that the action of $U$ in the coordinates $Z$ is fractional-linear,
\begin{align}
\label{eq:mobius_slater}
    U\ket{\td\psi_Z}=\det(a+bZ)\ket{\td\psi_{Z'}},\qquad Z'=(c+DZ)(a+bZ)^{-1}.
\end{align}
Under \cref{eq:mobius_slater}, $\det(\one+Z'^\dag Z')=\det(\one+Z^\dag Z)/|\det(a+bZ)|^2$ while $\dd\vec z'=|\det(a+bZ)|^{-2n}\dd\vec z$, so $\det(\one+Z^\dag Z)^{-n}\dd\vec z$ is invariant. By uniqueness of the invariant measure, $\nu\propto\det(\one+Z^\dag Z)^{-n}\dd\vec z$. From the form of the overlap and the measure, we conclude that the chart $Z$ is determinantal, with $(\sigma,\zeta,\gamma,k)=(1,1,n,0)$ and $m=2q(n-q)$.
 
\emph{Free fermions} (\cref{prop:free_fermion_+_orbit}). The family $\Sff=\Sffp\cup\Sffm$ has two connected components, of the same dimension $n(n-1)$, exchanged by the parity-flipping element of $\Pin(2n)$. The integral \cref{eq:finite_degree} over $\Pin(2n)$ is the sum of the integrals over its two cosets of $\Spin(2n)$, and the coset which maps the even reference state $\psi_0=\vac$ into $\Sffm$ contributes nothing, since $\Sffm\perp\psi_0$. Thus $I_N$ is (one half of) the integral over $\Sffp$ against the $\Spin(2n)$-invariant measure, and in the assumptions the odd component is simply part of $\{F=0\}$. On $\Sffp$, by Thouless' theorem \cite{ring2004nuclear}, every state with $\braket{0}{\psi}\neq0$ is proportional to $\ket{\td\psi_Z}\defeq\exp\big(\tfrac12\sum_{ij}Z_{ij}a^\dag_ia^\dag_j\big)\vac$ for a unique antisymmetric $Z\in\bC^{n\times n}$, with $\braket{0}{\td\psi_Z}=1$ and, by Onishi's formula \cite{ring2004nuclear}, $\braket{\td\psi_Z}{\td\psi_Z}=\det(\one-\bar ZZ)^{1/2}=\det(\one+Z^\dag Z)^{1/2}$, using $\bar Z=-Z^\dag$; so $F=\det(\one+Z^\dag Z)^{-1/2}$ and $\zeta=\f12$. With this $\zeta$, $|\vec z|^2=\f12\Tr Z^\dag Z=\sum_{i<j}|Z_{ij}|^2$, so $\vec z$ consists of the real and imaginary parts of the independent entries $Z_{ij}$, $i<j$, and $m=n(n-1)$. It remains to find $\nu$, the unique $\Spin(2n)$-invariant probability measure on $\Sffp$, in the coordinates $Z$. Conjugation by a Gaussian unitary $\phi(g)$ acts on the Majorana operators by a rotation $R\in\mathrm{SO}(2n)$ (\cref{app:ff_background}); in terms of $\hat c\defeq(a_1,\dots,a_n,a_1^\dag,\dots,a_n^\dag)^\top$ this reads $\phi(g)\,\hat c\,\phi(g)^\dag=M\hat c$ with $M=\begin{pmatrix}A&B\\ \bar B&\bar A\end{pmatrix}$ unitary. Using this transformation law, one finds that the action of $\phi(g)$ in the coordinates $Z$ is fractional-linear,
\begin{align}
\label{eq:mobius_ff}
    \phi(g)\ket{\td\psi_Z}\propto\ket{\td\psi_{Z'}},\qquad Z'=(A-Z\bar B)^{-1}(Z\bar A-B).
\end{align}
Under \cref{eq:mobius_ff} and using Onishi's formula for the overlap \cite{ring2004nuclear}, $$\det(\one+Z'^\dag Z')^{1/2}=\det(\one+Z^\dag Z)^{1/2}/|\det(A-Z\bar B)|,$$ while $\dd\vec z'=|\det(A-Z\bar B)|^{-2(n-1)}\dd\vec z$, so $\det(\one+Z^\dag Z)^{-(n-1)}\dd\vec z$ is invariant. By uniqueness of the invariant measure, $\nu\propto\det(\one+Z^\dag Z)^{-(n-1)}\dd\vec z$. From the form of the overlap and the measure, we conclude that the chart $Z$ is determinantal, with $(\sigma,\zeta,\gamma,k)=(1,\f12,n-1,0)$ and $m=n(n-1)$.

\emph{Bosonic Gaussian states} (\cref{thm:ab-resolving}). The states of $\Sbg$ are $D(\vec\alpha)\ket{\Omega_W}$, where $\ket{\Omega_W}$ is the squeezed vacuum with Bargmann function $\det(\one-W^\dag W)^{1/4}\exp(\tfrac12z^\top Wz)$, $W$ symmetric with $\|W\|_{\mr{op}}<1$, and $D(\vec\alpha)$ a displacement by $\vec\alpha\in\bC^n$; the chart $(W,\vec\alpha)$ is global. The overlap with the vacuum is
\begin{align}\label{eq:gaussian_overlap}
    F(W,\vec\alpha)\;=\;\det(\one-W^\dag W)^{1/2}\,e^{-Q_W(\vec\alpha)},
    \qquad
    Q_W(\vec\alpha)\defeq|\vec\alpha|^2-\mathrm{Re}\lrp{\vec\alpha^\dag W\bar{\vec\alpha}},
\end{align}
by $\braket{0}{D(\vec\alpha)\Omega_W}=\braket{-\vec\alpha}{\Omega_W}=e^{-|\vec\alpha|^2/2}\,\Omega_W(-\bar{\vec\alpha})$ in the Bargmann representation. The quadratic form $Q_W$ is positive definite, since $|\vec\alpha^\dag W\bar{\vec\alpha}|\leq\|W\|_{\mr{op}}|\vec\alpha|^2$. The invariant measure of the Jacobi group, in the coordinates $(W,\vec\alpha)$, is $\nu\propto\det(\one-W^\dag W)^{-(n+1)}\dd W\,\dd^{2n}\vec\alpha$, the Siegel-disk measure times the flat measure on displacements (see \cref{eq:ab-orbit-measures}). The overlap \cref{eq:gaussian_overlap} is not yet of the form \cref{eq:det_overlap}, because $Q_W$ depends on $W$; but the $W$-dependent linear change of coordinates $\vec\beta\defeq Q_W^{1/2}\vec\alpha$ (the positive square root of the real $2n\times2n$ matrix of $Q_W$) makes $Q_W(\vec\alpha)=|\vec\beta|^2$ exact, at the cost of a Jacobian: in the Takagi basis, $Q_W=\sum_k\big[(1-d_k)x_k^2+(1+d_k)y_k^2\big]$ with $\vec\alpha_k=x_k+iy_k$, so $\dd^{2n}\vec\alpha=\prod_k(1-d_k^2)^{-1/2}\dd^{2n}\vec\beta=\det(\one-W^\dag W)^{-1/2}\dd^{2n}\vec\beta$. From the form of the overlap and the measure in the coordinates $(W,\vec\beta)$, we conclude that this chart is determinantal, with $(\sigma,\zeta,\gamma,k)=(-1,\f12,n+\f32,n)$ and $m=n(n+1)+2n=n(n+3)$; here $F=0$ nowhere. The reserve exponent of \cref{lem:det_chart} is $L'=16\kappa m+2n+3$, above the resolving threshold $L=2n+2$ of \cref{thm:ab-resolving}.

\emph{Applicability of \cref{lem:det_chart}.} Finally, we check \cref{eq:K_conditions}. The first condition requires $\kappa\geq4$ for $\cS_\cH$ and $\Sffq$, and $\kappa\geq8$ for $\Sff$ and $\Sbg$. For the second, $\f{3\gamma}{\zeta m}$ equals $\f{3d}{2(d-1)}\leq3$ for unstructured states ($d\geq2$); $\f{3n}{2q(n-q)}\leq3$ for Slater states, since $q(n-q)\geq n-1$ for $1\leq q\leq n-1$; $\f6n\leq3$ for free fermions ($n\geq2$); and $\f{3(2n+3)}{n(n+3)}\leq\f{15}4$ for bosonic Gaussian states ($n\geq1$). Hence
\begin{align}\label{eq:K_value}
    \kappa\;=\;8
\end{align}
satisfies \ref{assum:A0}--\ref{assum:A4} for every family in the table, uniformly in the number of modes $n$, in the particle number $q$ and in $d$.
By \cref{thm:uniform_dimension_law_fN,cor:uniform_dimension_law_complexity}, the canonical cloner for each of these families requires $\dim\cS/(2\epsilon)$ copies up to a relative correction $O(\epsilon)$, with constants independent of the dimension.
This is the unrestricted sample complexity when the canonical cloner is optimal, as it is for the compact families in the table.
For the full bosonic Gaussian family $\Sbg$, it gives an upper bound; whether the quadratic dependence on $n$ is necessary remains open.

%% file: AppGaussianHowe.tex
\section{Gaussian symmetry groups and Howe duality}
\label{app:ff_background}
\label{sec:howe-background}

This appendix collects the representation theory behind the fermionic and bosonic Gaussian families.
\Cref{app:spin_pin,app:slater} treat a single copy of the fermionic families.
We describe the role of $\Spin(2n)$ in describing fermionic Gaussian unitaries, and the $\Pin(2n)$ group, of which the free-fermion states form a single orbit. We also show that the number-conserving fermionic Gaussian unitaries realise $\rU(n)$, with the Slater states as an orbit.
These facts and the Cartan product theorem of \cref{app:cartan} underlie the proofs of \cref{prop:slater_orbit,prop:free_fermion_+_orbit}.
The remaining subsections treat $r$ copies, where Howe duality pairs the physical symmetry group with a group that mixes the copies: \cref{subsec:fermionic-skew-howe} for the fermionic families and \cref{subsec:bosonic-howe} for the bosonic ones (Bose condensates and squeezed vacua).
In each of the four cases of \Cref{tab:howe-summary}, the $r$-copy span is one irreducible representation of the physical group times a one-dimensional vector space on the copy side.
This picture is the Howe-duality view of the resolving-orbit property mentioned in the overview of results, and it supplies the data for the circuits of \cref{subsec:howe-cg-circuit}.

\subsection{Gaussian unitaries, parity sectors, and the groups \texorpdfstring{$\Spin(2n)$}{Spin(2n)} and \texorpdfstring{$\Pin(2n)$}{Pin(2n)}}
\label{app:spin_pin}

We will denote the mode operators $a_j\defeq\f12(c_{2j-1}-ic_{2j})$, so that $a^\dag_ja^{\phantom\dag}_j=n_j$. The fermionic Gaussian unitaries are those of the form $U_A=e^{\sum^{2n}_{\alpha,\beta=1} A_{\alpha\beta} c_{\alpha} c_\beta}$ with $A$ real antisymmetric; being even in the Majorana operators, they commute with the parity operator $\Par$. Conjugation by $U_A$ rotates the Majorana operators,
\begin{align}
\label{eq:majorana_conjugation}
    U_A^{\phantom\dag}\, c_\alpha\, U_A^\dag=\sum_\beta R_{\beta\alpha}\,c_\beta,\qquad R=e^{4A}\in\mathrm{SO}(2n).
\end{align}
The unitaries $\{U_A\}$ form a compact connected group, and the correspondence $U\mapsto R$ of \cref{eq:majorana_conjugation} is a two-to-one homomorphism onto $\mathrm{SO}(2n)$; this identifies the group with $\Spin(2n)$, the double cover of $\mathrm{SO}(2n)$, and we will use the name $\Spin(2n)$ for the group of unitaries itself. It acts on $\Hf$ by the $2^n$-dimensional \emph{spinor representation}, which we denote $\phi$ \cite{FultonHarris}\cite{lawson1989spin}\cite{helsen2022matchgate,chen2026gaussian}.
The parity subspaces $\Hfp$ and $\Hfm$ are invariant under $\Spin(2n)$; they carry the two \emph{half-spinor} representations of $\Spin(2n)$, which are irreducible and mutually inequivalent \cite{FultonHarris}.

The base states $\vac$ and $\ket{e}\defeq a^\dag_n\vac=\ket{0\cdots01}$ are free-fermion states of even and odd parity, respectively. Moreover, each fixed-parity family of free-fermion states is a single orbit of $\Spin(2n)$ \cite{chen2026gaussian},
\begin{align}
\label{eq:sector_orbits}
    \Sffpm=\left\{\phi(g)\,\chi_\pm\,\phi(g)^\dag \,:\, g\in\Spin(2n)\right\},\qquad \chi_+\defeq\ketbra{0},\quad\chi_-\defeq\ketbra{e}.
\end{align}
Since every Fock basis state $\ket{s}$ is itself a free-fermion state of parity $(-1)^{\sum_js_j}$ --- the non-degenerate ground state of a Hamiltonian $\sum_j\pm ic_{2j-1}c_{2j}$ with signs set by the occupations --- the two families span the entire parity sectors:
\begin{align}
\label{eq:sector_spans}
    \Span\lrp{\Sffpm}=\Hfpm.
\end{align}
To obtain the full family $\Sff=\Sffp\cup\Sffm$ as a single orbit of a group, we consider the group generated by $\Spin(2n)$ together with one additional, parity-flipping, unitary $\mu$. Specifically, we choose $\mu\defeq c_{2n-1}$, which is unitary since $\mu=\mu^\dag$ and $\mu^2=\bI$.
Being odd in the Majorana operators, $\mu$ anticommutes with $\Par$ and thus exchanges the parity sectors. It preserves the free-fermion family, mapping $\Sffp$ bijectively onto $\Sffm$; in particular, $\mu\vac=\ket{e}$, since $c_{2n-1}=a_n+a^\dag_n$.
It also normalizes $\Spin(2n)$ (conjugation by $\mu$ maps quadratic exponents to quadratic exponents), so that
\begin{align}
\label{eq:pin_group}
    \Pin(2n)\defeq\langle \Spin(2n),\mu\rangle=\Spin(2n)\,\cup\,\mu\, \Spin(2n)
\end{align}
is again a compact group of unitaries on $\Hf$. The name is justified since this group is a realization of the double cover $\Pin(2n)$ of the full orthogonal group $\rO(2n)$: conjugation by $\mu$ acts on the Majorana operators as the reflection with respect to the axis no.~$2n-1$, namely $\mu\,c_\alpha\,\mu=-c_\alpha$ for $\alpha\neq2n-1$ and $\mu\,c_{2n-1}\,\mu=c_{2n-1}$, which extends \cref{eq:majorana_conjugation} to a two-to-one homomorphism from this group onto $\rO(2n)$ \cite{lawson1989spin}.
We keep the name $\phi$ for the resulting representation of $\Pin(2n)$. Finally, combining \cref{eq:sector_orbits} with $\mu\vac=\ket e$, the full family $\Sff$ is a single orbit of $\Pin(2n)$ --- e.g., of the vacuum.

\subsection{Number-conserving Gaussian unitaries and Slater states}
\label{app:slater}

We now specialize to the fixed-particle-number family $\Sffq$ of \cref{eq:slater_def}, where $\hat N\defeq\sum^n_{j=1}n_j$ is the total particle number. These are exactly the \emph{Slater states}, and they form a single orbit of the number-conserving Gaussian unitaries, as we now explain.

The number-conserving (passive) fermionic Gaussian unitaries are those commuting with $\hat N$; they are of the form $U_h=e^{i\sum_{j,l} h_{jl}a^\dag_j a^{\phantom\dag}_l}$ with $h$ Hermitian, and conjugation by them rotates the modes,
\begin{align}
\label{eq:mode_rotation}
    U_h^{\phantom\dag}\, a^\dag_j\, U_h^\dag=\sum_k u_{kj}\, a^\dag_k,\qquad u=e^{ih}\in\rU(n).
\end{align}
Each fixed-number subspace $\cH_q\subset\Hf$, spanned by the Fock states $\ket s$ with $\sum_js_j=q$, is invariant. Identifying $\cH_q$ with the antisymmetric tensor power $\Lambda^q\bC^n$ via
\begin{align}
\label{eq:wedge_identification}
    \ket{s}\ \longleftrightarrow\ e_{j_1}\wedge\cdots\wedge e_{j_q},\qquad j_1<\cdots<j_q \text{ the occupied sites of } s,
\end{align}
the unitary $U_h$ acts on $\cH_q$ simply as $u^{\wedge q}$; in other words, the number-conserving Gaussian unitaries realize the representation $u\mapsto\Gamma(u)\defeq\bigoplus_q u^{\wedge q}$ of $\rU(n)$ on $\Hf$.
In this language, the Slater states are the normalized decomposable vectors $v_1\wedge\cdots\wedge v_q$, and
\begin{align}
\label{eq:slater_orbit}
    \Sffq=\left\{\Gamma(u)\,\ketbra{\psi_0}\,\Gamma(u)^\dag \,:\, u\in\rU(n)\right\},\qquad \ket{\psi_0}\defeq\ket{1\cdots10\cdots0}\ \longleftrightarrow\ e_{1}\wedge\cdots\wedge e_{q},
\end{align}
where $\ket{\psi_0}$ has the first $q$ sites occupied. Indeed, a decomposable vector depends (up to normalization) only on the span of $v_1,\dots,v_q$, so we may take them orthonormal, and any two orthonormal $q$-frames are related by some $u\in\rU(n)$; every Slater state belongs to $\Sffq$ (it is the non-degenerate ground state of a number-conserving quadratic Hamiltonian, obtained by rotating $\sum_j \eps_j\, n_j$ with $\eps_1=\dots=\eps_q<\eps_{q+1}=\dots=\eps_n$); and, conversely, every state in $\Sffq$ is a Slater state \cite{bravyi2005lagrangian,surace2022fermionic}.
Since every Fock basis state of $\cH_q$ is itself a Slater state, \cref{eq:wedge_identification}, the family spans the entire sector,
\begin{align}
\label{eq:slater_span}
    \Span\lrp{\Sffq}=\cH_q\cong\Lambda^q\bC^n.
\end{align}
As a representation of $\rU(n)$, $\Lambda^q\bC^n$ is irreducible --- it is the $q$-th \emph{fundamental representation} \cite{FultonHarris}.

\subsection{Double commutants, Schur--Weyl duality, and \texorpdfstring{$r$}{r}-copy spans}
\label{subsec:double-commutant-background}

We now move on to the setting of multiple copies and describe a dual relation between the groups which act on physical and on copy indices. We start with an abstract setting, later moving on to those needed for our fermionic and bosonic families. On $r$ copies, the diagonal action of a physical group commutes with a group that mixes the copies, and in all cases below each of the two generates the commutant of the other.
For a finite-dimensional Hilbert space $\cK$, two unital $*$-algebras $\cA,\cB\subseteq\operatorname{End}(\cK)$ form a \emph{mutual-commutant pair} if $\cA'=\cB$ and $\cB'=\cA$, where $\cX'$ denotes the operators commuting with every element of $\cX$.
By the double-commutant theorem and the structure theorem for $*$-algebras, the space and the two algebras then decompose, up to a unitary, as
\begin{align}
\cK
&\cong
\bigoplus_{\lambda\in\Lambda}
V_\lambda\otimes M_\lambda,
\nonumber\\
\cA
&\cong
\bigoplus_{\lambda\in\Lambda}
\operatorname{End}(V_\lambda)
\otimes
\bC\,\bI_{M_\lambda},
\nonumber\\
\cB
&\cong
\bigoplus_{\lambda\in\Lambda}
\bC\,\bI_{V_\lambda}
\otimes
\operatorname{End}(M_\lambda),
\label{eq:double-commutant-normal-form}
\end{align}
where the $V_\lambda$ are pairwise inequivalent irreducible representations of $\cA$, and the $M_\lambda$ of $\cB$, so no pair occurs twice.

The prototype is Schur--Weyl duality.
Here and below, $r$ denotes a generic number of copies; we reserve $N$ and $N+K$ for the input and output copy numbers of a cloner.
On $(\bC^d)^{\otimes r}$, the diagonal action of $\rU(d)$ and the action of $\rS_r$ that permutes the tensor factors generate mutual commutants, so
\begin{align}
(\bC^d)^{\otimes r}
\cong
\bigoplus_{\substack{\lambda\vdash r\\\ell(\lambda)\leq d}}
V_\lambda^{\rU(d)}
\otimes
M_\lambda^{\rS_r},
\label{eq:schur-weyl-duality}
\end{align}
where $V_\lambda^{\rU(d)}$ is the irreducible polynomial representation of $\rU(d)$ with highest weight $\lambda$, $M_\lambda^{\rS_r}$ is the Specht module labeled by the same partition $\lambda$ of $r$, and $\ell(\lambda)$ is the number of rows of $\lambda$.
The $r$-copy span of all pure states on $\bC^d$ is the symmetric subspace, the single block $\lambda=(r)$:
\begin{align}
\Sym^r(\bC^d)
&=
\lrp{(\bC^d)^{\otimes r}}^{\rS_r}
=
\Span\left\{\ket{\psi}^{\otimes r}:\ket{\psi}\in\bC^d\right\}
\nonumber\\
&\cong
V_{(r)}^{\rU(d)}
\otimes
\bC\ket{\mathbf 1},
\label{eq:schur-weyl-symmetric-sector}
\end{align}
where $\ket{\mathbf 1}$ spans the trivial representation of $\rS_r$; for $r=N+K$, $\pi_{N+K}$ in \cref{eq:optimal_cloner} projects onto this block, and \cref{thm:structured_optimal_cloning} reproduces Werner's cloner~\cite{werner1998optimal}.

Howe duality extends this pattern to the Gaussian families.
In each case below, a \emph{physical group} $G$ acts on every copy by $\phi$, and a \emph{copy group} $G'_r$ mixes the $r$ copies and commutes with $\phi^{\otimes r}$; the same separation of mode and copy indices appears in recent work on fermionic Gaussian tomography~\cite{chen2026gaussian}.
The $r$-copy space then decomposes as
\begin{align}
\cH^{\otimes r}
\cong
\bigoplus_{\lambda\in\Lambda_r^{\mathrm{occ}}}
V_\lambda^{G}
\otimes
M_{\vartheta_r(\lambda)}^{G'_r},
\label{eq:abstract-howe-decomposition}
\end{align}
where $V_\lambda^{G}$ and $M_{\vartheta_r(\lambda)}^{G'_r}$ are irreducible representations of $G$ and of $G'_r$, $\Lambda_r^{\mathrm{occ}}$ is the set of physical labels that occur, and the \emph{Howe correspondence} $\vartheta_r$ is a bijection between the physical and the copy-side labels that occur~\cite{Howe1989,FultonHarris}.
We write $V_\lambda$ for physical and $M_\eta$ for copy-side irreducible representations, and $\bigoplus$ denotes the Hilbert direct sum when infinitely many summands occur.

Multiplicity freeness determines the $r$-copy span of an orbit (\cref{def:ncopy_span}) as soon as the span is irreducible.

\begin{lemma}[Copy-side line]
\label{lem:copy-side-line}
Let $\cS$ be an orbit of pure states under a representation $\phi$ of $G$, and suppose that $\cH^{\otimes r}$ decomposes as in \cref{eq:abstract-howe-decomposition}, with $G$ acting by $\phi^{\otimes r}$.
If $G$ acts irreducibly on the $r$-copy span $\cH^{(r)}_{\cS}$, then there are a unique label $\lambda_r\in\Lambda_r^{\mathrm{occ}}$ and a unique line $\bC w_r\subset M^{G'_r}_{\vartheta_r(\lambda_r)}$ such that
\begin{align}
\cH^{(r)}_{\cS}
=
V_{\lambda_r}^{G}\otimes\bC w_r
\cong
V_{\lambda_r}^{G}.
\label{eq:copy-side-line}
\end{align}
Moreover, every $h\in G'_r$ that maps $\cH^{(r)}_{\cS}$ to itself acts on $\bC w_r$ by a scalar, so the subgroup of such $h$ acts on $\bC w_r$ by a character.
\end{lemma}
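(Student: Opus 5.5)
The plan is to combine Schur's lemma with the multiplicity-free form of \cref{eq:abstract-howe-decomposition}. First I record the structure of $G$-invariant subspaces of $\cH^{\otimes r}$. Since the labels $\lambda\in\Lambda_r^{\mathrm{occ}}$ index pairwise inequivalent physical irreducibles, each $G$-irreducible subspace $X\subseteq\cH^{\otimes r}$ lies in a single isotypic block $V_\lambda\otimes M_{\vartheta_r(\lambda)}$. Concretely, the projection of $X$ onto any other block is a $G$-intertwiner from $X\cong V_\mu$ into a multiple of $V_\lambda$ with $\lambda\neq\mu$, so it vanishes. This uses that the $V_\lambda$ are pairwise inequivalent and that the decomposition is a (Hilbert) orthogonal sum.

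Inside the block $V_\lambda\otimes M$, the $G$-invariant irreducible subspaces are exactly $V_\lambda\otimes\bC w$ for lines $\bC w\subseteq M$. To see this, pick a $G$-intertwining isometry $\iota:V_\lambda\to V_\lambda\otimes M$ with image $X$. By Schur's lemma, each functional $\langle m|$ on $M$ composed with $\iota$ is a scalar multiple of the identity on $V_\lambda$. Hence $\iota(v)=v\otimes w$ for a fixed vector $w$. In infinite dimensions, or for the bosonic case, I would apply Schur's lemma for irreducible unitary representations to the bounded intertwiner $(\bI\otimes\langle m|)\iota$. Now apply this to $X=\cH^{(r)}_\cS$, which is irreducible by hypothesis. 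This gives $\lambda_r$ and $w_r$. Uniqueness of $\lambda_r$ follows from the inequivalence of the $V_\lambda$. Uniqueness of the line follows because $V_\lambda\otimes\bC w=V_\lambda\otimes\bC w'$ forces $w\propto w'$. The isomorphism $\cong V^G_{\lambda_r}$ is then immediate.

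For the final claim, take $h\in G'_r$ with $h\cH^{(r)}_\cS\subseteq\cH^{(r)}_\cS$. Since $h$ commutes with $\phi^{\otimes r}$, it preserves each isotypic block. By \cref{eq:abstract-howe-decomposition}, $G'_r$ acts on the block as $\bI_{V_\lambda}\otimes\rho(h)$, where $\rho$ is its action on $M_{\vartheta_r(\lambda)}$. Then $(\bI\otimes\rho(h))(v\otimes w_r)=v\otimes\rho(h)w_r$ must lie in $V_{\lambda_r}\otimes\bC w_r$, which forces $\rho(h)w_r=c(h)w_r$. The map $h\mapsto c(h)$ is multiplicative on the stabilizing subgroup, so it is a character, and it is unitary whenever $\rho$ is.

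The main obstacle is a notational subtlety: the statement that $G'_r$ acts only on the copy factor in each block. This is implicit in the Howe decomposition as written, since both algebras are described by \cref{eq:double-commutant-normal-form}. I will state it explicitly as part of what \cref{eq:abstract-howe-decomposition} means, including in the infinite-dimensional oscillator case, rather than re-derive it.
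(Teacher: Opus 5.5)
Your proposal is correct and follows essentially the same route as the paper. In both, multiplicity-freeness confines the irreducible span to a single summand $V_{\lambda_r}\otimes M$, and Schur's lemma identifies the irreducible $G$-subspaces of that summand as $V_{\lambda_r}\otimes\bC w$. For the last claim the paper applies Schur's lemma to the restriction of $h$ to $\cH^{(r)}_{\cS}$, whereas you read off $\rho(h)w_r\in\bC w_r$ directly from the block action $\bI\otimes\rho(h)$; the two arguments are equivalent.
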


\begin{proof}
The span $\cH^{(r)}_{\cS}$ is invariant and irreducible under $G$, so it is contained in a single $G$-isotypic component of \cref{eq:abstract-howe-decomposition}.
Since the decomposition is multiplicity-free, this component is a single summand $V^G_\lambda\otimes M^{G'_r}_{\vartheta_r(\lambda)}$, which fixes $\lambda_r=\lambda$.
Write $M=M^{G'_r}_{\vartheta_r(\lambda_r)}$.
By Schur's lemma, $\operatorname{Hom}_G\lrp{V^G_{\lambda_r},V^G_{\lambda_r}\otimes M}\cong M$, the intertwiner attached to $w\in M$ being $v\mapsto v\otimes w$.
Hence every irreducible $G$-invariant subspace of $V^G_{\lambda_r}\otimes M$ has the form $V^G_{\lambda_r}\otimes\bC w$ for a unique line $\bC w\subset M$, which proves \cref{eq:copy-side-line}.
For the last claim, let $h\in G'_r$ map $\cH^{(r)}_{\cS}$ to itself.
Since $h$ commutes with $\phi^{\otimes r}(G)$, its restriction to $\cH^{(r)}_{\cS}$ lies in $\operatorname{End}_G(\cH^{(r)}_{\cS})=\bC\,\bI$ by Schur's lemma.
As $h$ acts on the summand as $\bI_{V_{\lambda_r}}\otimes h|_M$, it therefore multiplies $w_r$ by a scalar.
\end{proof}

\Cref{tab:howe-summary} lists $\lambda_r$ and the copy-side line for the four families treated below; in each case, we find the line from the action of $G'_r$ on $\ket{\psi_0}^{\otimes r}$, which generates $\cH^{(r)}_{\cS}$ by \cref{rmk:orbit_span}.

\begin{table}[t]
\centering
\small
\renewcommand{\arraystretch}{1.2}
\setlength{\tabcolsep}{8pt}
\begin{tabular}{@{}lllll@{}}
\hline
Family & $G$ & $G'_r$ & $\cH^{(r)}_{\cS}$ & Copy-side line $\bC w_r$\\
\hline
Slater states $\Sffq$ & $\rU(n)$ & $\rU(r)$ & $V^{\rU(n)}_{(r^q)}$ & the character $\det^q$\\
Free-fermion states $\Sffpm$ & $\Spin(2n)$ & $\rO(r)$ & $V_{r\alpha}$ and $V_{r\beta}$ & trivial ($+$) and $\det$ ($-$)\\
Bose condensates $\cS_{\mr{cond}}$ & $\rU(n)$ & $\rU(r)$ & $\Sym^{rq}\bC^n$ & $\bC\,x_1^q\cdots x_r^q\subset\Sym^{rq}\bC^r$\\
Squeezed vacua $\cS_{\mathrm{sq}}$ & $\operatorname{Mp}(2n,\bR)$ & $\rO(r)$ & $\Theta_{n,r}(\varnothing)$ & trivial\\
\hline
\end{tabular}
\caption{Howe-duality data of the $r$-copy spans, derived in \cref{subsec:fermionic-skew-howe,subsec:bosonic-howe}: the dual pair $(G,G'_r)$, the span $\cH^{(r)}_{\cS}$ as a representation of $G$, and the action of $G'_r$ on the line $\bC w_r$ of \cref{lem:copy-side-line}. Number-conserving symmetries pair with a unitary copy group, and symmetries that create pairs of particles with an orthogonal one.}
\label{tab:howe-summary}
\end{table}

\subsection{Fermionic skew Howe dualities}
\label{subsec:fermionic-skew-howe}

For fermions, the number-conserving fermionic Gaussian unitaries pair with a unitary copy group, $\rU(n)\times\rU(r)$, and the general ones with an orthogonal copy group, $\Spin(2n)\times\rO(r)$.
One point requires care: $\Hf^{\otimes r}$ is an ordinary tensor product, whereas its identification with the Fock space $\Lambda^\bullet(\bC^n\otimes\bC^r)$ of $nr$ modes respects only the parity grading.
Exchanging two copies of parities $p$ and $p'$ therefore produces the Koszul sign $(-1)^{pp'}$, and the copy generators carry Jordan--Wigner strings, \cref{eq:jw-copy-generators}; this sign decides whether $\bC w_r$ carries the trivial or the sign character of the copy permutations.

\paragraph{Unitary skew Howe duality.}
By \cref{eq:wedge_identification}, $\Hf\cong\Lambda^\bullet\bC^n$, and the graded identification $\Hf^{\otimes r}\cong\Lambda^\bullet(\bC^n\otimes\bC^r)$ gives the skew Howe decomposition
\begin{align}
\Hf^{\otimes r}
\cong
\Lambda^\bullet(\bC^n\otimes\bC^r)
\cong
\bigoplus_{\lambda\subseteq(r^n)}
V_{\lambda}^{\rU(n)}\otimes M_{\lambda^{\mathsf T}}^{\rU(r)},
\label{eq:fermionic-skew-howe}
\end{align}
where $\rU(n)$ acts by $\Gamma(u)^{\otimes r}$ (\cref{app:slater}), $(a^b)$ is the rectangular partition with $b$ rows of length $a$, $\lambda^{\mathsf T}$ is the transposed partition, and $V_\lambda^{\rU(n)}$ and $M_{\lambda^{\mathsf T}}^{\rU(r)}$ are the irreducible polynomial representations with these highest weights.
For the Slater states, $\ket{\psi_0}^{\otimes r}$ is a highest-weight vector of weight $r\omega_q$, as in the proof of \cref{prop:slater_orbit}, so $\lambda_r=(r^q)$, $\lambda_r^{\mathsf T}=(q^r)$, and
\begin{align}
\cH^{(r)}_{\Sffq}
&\cong
V_{(r^q)}^{\rU(n)},
&\qquad
M_{(q^r)}^{\rU(r)}
&\cong
\det\nolimits^q.
\label{eq:slater-howe-selection}
\end{align}
Thus the copy-side line is the whole one-dimensional representation $\det^q$, and $\cH^{(r)}_{\Sffq}$ is the $\det^q$-isotypic component of $\Hf^{\otimes r}$, of dimension given by \cref{eq:slater_dimension} at $N=r$.
On permutations, $\det^q$ restricts to $\operatorname{sgn}^q$, in agreement with the Koszul sign $(-1)^{q^2}=(-1)^q$ for exchanging two copies of parity $q\bmod2$.

\paragraph{Spin--orthogonal skew Howe duality.}
The general fermionic Gaussian unitaries act by the spinor representation $\phi$ of $\Spin(2n)$ (\cref{app:spin_pin}).
We state the highest-weight parametrization of type $D_n$ for $n\geq3$; the cases $n=1,2$ need separate conventions.
Let $\ell=\lfloor r/2\rfloor$.
For $r\geq3$, the copy labels form the set $\cO_{n,r}$ of integral dominant weights $\gamma=(\gamma_1,\ldots,\gamma_\ell)$ of $\mathfrak{so}(r)$ with $\gamma_1\leq n$, that is, of the integer vectors with
\begin{align}
\gamma_1\geq\cdots\geq\gamma_\ell\geq0,
\quad \gamma_1\leq n,
&\qquad r=2\ell+1,
\nonumber\\
\gamma_1\geq\cdots\geq\gamma_{\ell-1}\geq|\gamma_\ell|,
\quad \gamma_1\leq n,
&\qquad r=2\ell.
\label{eq:orthogonal-copy-labels}
\end{align}
For $r=1$ and $r=2$ we set $\cO_{n,1}=\{0\}$ and $\cO_{n,2}=\{\gamma\in\bZ:|\gamma|\leq n\}$.
For odd $r$, put $\eta(\gamma)=\gamma$ and $\delta_r=\tfrac12$, with $\eta(0)=\varnothing$ for $r=1$; for even $r$, put $\eta(\gamma)=(\gamma_1,\ldots,\gamma_{\ell-1},|\gamma_\ell|)$ and $\delta_r=0$.
The highest weight $\nu(\gamma)$ of $\Spin(2n)$ paired with $\gamma$, and the reflection $\tau$ of type $D_n$, are
\begin{align}
\nu_i(\gamma)
&=
\ell-\eta(\gamma)^{\mathsf T}_{n+1-i}+\delta_r,
\qquad i=1,\ldots,n,
\nonumber\\
\tau(\nu_1,\ldots,\nu_n)
&=
(\nu_1,\ldots,\nu_{n-1},-\nu_n),
\label{eq:spin-howe-label-map}
\end{align}
where $\eta^{\mathsf T}$ is the transposed partition, padded with zeros to $n$ parts.
In particular, $\nu(0)=\tfrac r2(1,\ldots,1)$, and for even $r$
\begin{align}
\nu_n(\gamma)=0
\iff
\gamma_\ell\neq0,
\label{eq:nu-n-vanishing}
\end{align}
because $\eta(\gamma)^{\mathsf T}_1$, the number of nonzero parts of $\eta(\gamma)$, equals $\ell$ exactly when $\gamma_\ell\neq0$.
Let
\begin{align}
\cV_\gamma^{\Spin(2n)}
\defeq
V_{\nu(\gamma)}^{\Spin(2n)}
\mathbin{\oplus_\tau}
V_{\tau\nu(\gamma)}^{\Spin(2n)},
\end{align}
where $\oplus_\tau$ keeps a single summand if $\tau\nu(\gamma)=\nu(\gamma)$, that is, if $\nu_n(\gamma)=0$.
Orthogonal skew Howe duality states that, as a representation of $\mathfrak{so}(2n)\oplus\mathfrak{so}(r)$ with the physical action integrated to $\Spin(2n)$,
\begin{align}
\Hf^{\otimes r}
&\cong
\bigoplus_{\gamma\in\cO_{n,r}}
\cV_\gamma^{\Spin(2n)}
\otimes
M_\gamma^{\mathfrak{so}(r)},
\label{eq:spin-orthogonal-howe}
\end{align}
where $V_\nu^{\Spin(2n)}$ and $M_\gamma^{\mathfrak{so}(r)}$ are the irreducible representations with the indicated highest weights~\cite{Hasegawa1989,Wenzl2012,Wenzl2020}.

The copy group is generated by the bilinears
\begin{align}
L_{ab}
\defeq
\tfrac i2\sum_{\alpha=1}^{2n}
\td c^{(a)}_\alpha\td c^{(b)}_\alpha,
\qquad
\td c^{(a)}_\alpha
\defeq
\Par^{\otimes(a-1)}\otimes c_\alpha\otimes\bI^{\otimes(r-a)}.
\label{eq:jw-copy-generators}
\end{align}
Their Jordan--Wigner strings make the $\td c^{(a)}_\alpha$, $a=1,\ldots,r$, satisfy the Majorana algebra on $\Hf^{\otimes r}$, so the $L_{ab}$ realize $\mathfrak{so}(r)$, which integrates to $\mathrm{SO}(r)$ because the weights $\gamma$ are integral.
The $L_{ab}$ commute with every $U_A^{\otimes r}$, since $U_A$ commutes with $\Par$ and rotates each $\td c^{(a)}_\alpha$ by the matrix $R$ of \cref{eq:majorana_conjugation}.
The extension to $\rO(r)$ is generated by the parity of the first copy,
\begin{align}
\Par^{(1)}\defeq\Par\otimes\bI^{\otimes(r-1)},
\label{eq:copy-reflection}
\end{align}
which commutes with $\phi^{\otimes r}(\Spin(2n))$ and implements the reflection $\mathrm{diag}(-1,1,\ldots,1)\in\rO(r)$, since it flips the sign of $\td c^{(1)}_\alpha$ and fixes $\td c^{(a)}_\alpha$ for $a\geq2$.

In \cref{eq:spin-orthogonal-howe}, one copy label $\gamma$ can pair with two representations of $\Spin(2n)$, and one representation of $\Spin(2n)$ with the two labels $\gamma$ and $\bar\gamma\defeq(\gamma_1,\ldots,\gamma_{\ell-1},-\gamma_\ell)$.
The full copy group $\rO(r)$ resolves both ambiguities:
\begin{align}
\Hf^{\otimes r}
\cong
\bigoplus_{\sigma}
V_{\kappa_{n,r}(\sigma)}^{\Spin(2n)}
\otimes
M_\sigma^{\rO(r)},
\label{eq:spin-orthogonal-disconnected-copy}
\end{align}
where $\sigma$ runs over the irreducible representations of $\rO(r)$ that occur and $\kappa_{n,r}(\sigma)$ is the paired highest weight.
If $\nu_n(\gamma)\neq0$, the two extensions of $M_\gamma^{\mathfrak{so}(r)}$ to $\rO(r)$ pair with $V_{\nu(\gamma)}$ and $V_{\tau\nu(\gamma)}$, in an order fixed by a convention; if $\nu_n(\gamma)=0$ and $\gamma_\ell>0$, the irreducible representation of $\rO(r)$ that restricts to $M_\gamma\oplus M_{\bar\gamma}$ pairs with $V_{\nu(\gamma)}$.

The parity-flipping unitary $\mu=c_{2n-1}$ of \cref{app:spin_pin} acts on $r$ copies as $\mu^{\otimes r}$, which normalizes the copy group but does not commute with it:
\begin{align}
\mu^{\otimes r}\,\td c^{(a)}_\alpha\,\mu^{\otimes r}
&=
(-1)^{a-1}\epsilon_\alpha\,\td c^{(a)}_\alpha,
\nonumber\\
\mu^{\otimes r}\,\Par^{(1)}\,\mu^{\otimes r}
&=
-\Par^{(1)},
\label{eq:pin-normalizes-copy}
\end{align}
with $\epsilon_\alpha=+1$ for $\alpha=2n-1$ and $\epsilon_\alpha=-1$ otherwise, so that $\mu^{\otimes r}L_{ab}\,\mu^{\otimes r}=(-1)^{a+b}L_{ab}$.
Hence the dual pair is $\lrp{\Spin(2n),\rO(r)}$, and there is no dual pair with $\Pin(2n)$ on the physical side.

The decomposition of $\Hf^{\otimes r}$ under $\Pin(2n)$ nevertheless follows from \cref{eq:spin-orthogonal-howe}.
Conjugation by $\mu$ acts on weights by $\tau$, so $\mu^{\otimes r}$ maps the $V_\nu$-isotypic component onto the $V_{\tau\nu}$-isotypic component.
If $\nu_n(\gamma)\neq0$, these two components share the multiplicity space $M_\gamma^{\mathfrak{so}(r)}$, and together they form $m_\gamma\defeq\dim M_\gamma^{\mathfrak{so}(r)}$ copies of the irreducible representation $V_{\nu(\gamma)}\oplus V_{\tau\nu(\gamma)}$ of $\Pin(2n)$.
If $\nu_n(\gamma)=0$, the $V_{\nu(\gamma)}$-isotypic component has multiplicity $2m_\gamma$ and splits into the two extensions $V^\pm_{\nu(\gamma)}$ of $V_{\nu(\gamma)}$ to $\Pin(2n)$, on whose highest-weight vectors $\mu$ acts by $\pm1$.
Each extension occurs $m_\gamma$ times, because $\Par^{(1)}$ commutes with $\Spin(2n)$ and anticommutes with $\mu^{\otimes r}$, and therefore exchanges the two isotypic components.
In both cases, the representation is induced from $V_{\nu(\gamma)}$, so that, as a representation of $\Pin(2n)$,
\begin{align}
\Hf^{\otimes r}
\cong
\bigoplus_{\gamma\in\cO^{+}_{n,r}}
\operatorname{Ind}_{\Spin(2n)}^{\Pin(2n)}\lrp{V^{\Spin(2n)}_{\nu(\gamma)}}
\otimes
\bC^{m_\gamma},
\label{eq:pin-decomposition}
\end{align}
where $\cO^{+}_{n,r}$ keeps, of each pair $\gamma,\bar\gamma$, the label with $\gamma_\ell\geq0$ (for odd $r$, $\cO^{+}_{n,r}=\cO_{n,r}$).
The multiplicities are those of \cref{eq:spin-orthogonal-howe}, but $\rO(r)$ does not act on the multiplicity spaces, since it does not commute with $\mu^{\otimes r}$.

The free-fermion states sit in the term $\gamma=0$.
With $\alpha,\beta$ the half-spinor highest weights of \cref{eq:ff_highest_weights}, $\nu(0)=r\alpha$ and $\tau\nu(0)=r\beta$.
By the proof of \cref{prop:free_fermion_+_orbit}, $\cH^{(r)}_{\Sffp}$ and $\cH^{(r)}_{\Sffm}$ are irreducible under $\Spin(2n)$, with highest weights $r\alpha$ and $r\beta$, respectively.
Since $\nu_n(\gamma)\geq0$ for all $\gamma$, and $\nu(\gamma)=\tfrac r2(1,\ldots,1)$ only for $\gamma=0$, they are the two summands of the block $\gamma=0$ of \cref{eq:spin-orthogonal-howe}.
The copy-side factor $M_0^{\mathfrak{so}(r)}$ of this block is trivial.
Hence the full $r$-copy span is the $\mathfrak{so}(r)$-invariant subspace,
\begin{align}
\cH^{(r)}_{\Sff}
=
\cH^{(r)}_{\Sffp}\oplus\cH^{(r)}_{\Sffm}
=
\lrp{\Hf^{\otimes r}}^{\mathfrak{so}(r)},
\end{align}
and in \cref{eq:pin-decomposition} it is the single copy ($m_0=1$) of the irreducible representation $V_{r\alpha}\oplus V_{r\beta}$ of $\Pin(2n)$, in agreement with \cref{eq:ncopy_split} and \cref{prop:free_fermion_+_orbit}.
On the copy side, $\Par^{(1)}\ket{\psi}^{\otimes r}=\pm\ket{\psi}^{\otimes r}$ for $\kpsi\in\Sffpm$, so $\cH^{(r)}_{\Sffp}$ carries the trivial and $\cH^{(r)}_{\Sffm}$ the determinant character of $\rO(r)$, the two extensions of $M_0^{\mathfrak{so}(r)}$.

\subsection{Bosonic Howe dualities}
\label{subsec:bosonic-howe}

For bosons, symmetric powers replace exterior powers, and again passive bosonic Gaussian unitaries pair with a unitary copy group and active ones with an orthogonal copy group; no grading is involved, so the copy permutations act without signs.

\paragraph{Passive bosons.}
\label{subsec:passive-bosonic-howe}
Identifying the Fock space of $n$ bosonic modes with $\Hb\cong\Sym^\bullet\bC^n$, the $r$-copy space has the symmetric Howe decomposition
\begin{align}
\Hb^{\otimes r}
\cong
\Sym^\bullet(\bC^n\otimes\bC^r)
\cong
\bigoplus_{\lambda:\,\ell(\lambda)\leq\min(n,r)}
V_{\lambda}^{\rU(n)}\otimes M_{\lambda}^{\rU(r)},
\label{eq:bosonic-symmetric-howe}
\end{align}
where $V_\lambda^{\rU(n)}$ and $M_\lambda^{\rU(r)}$ are the irreducible polynomial representations with highest weight $\lambda$.
For the Bose condensate states $\cS_{\mr{cond}}$ of \cref{sec:other_resolving_orbits}, the line of \cref{lem:copy-side-line} is a weight line inside a larger representation of the copy group.
The diagonal torus of $\rU(r)$ counts the bosons in each copy, and its weight-$(q,\ldots,q)$ subspace is $\lrp{\Sym^q\bC^n}^{\otimes r}$, which contains $\cH^{(r)}_{\cS_{\mr{cond}}}$; restricting \cref{eq:bosonic-symmetric-howe} to it gives
\begin{align}
\lrp{\Sym^q\bC^n}^{\otimes r}
\cong
\bigoplus_{\substack{\lambda\vdash rq\\\ell(\lambda)\leq\min(n,r)}}
V_\lambda^{\rU(n)}
\otimes
\lrp{M_\lambda^{\rU(r)}}_{(q^r)},
\end{align}
where $(\cdot)_{(q^r)}$ denotes the weight-$(q,\ldots,q)$ subspace.
The vector $\ket{\psi(v,q)}^{\otimes r}=(v^{\vee q})^{\otimes r}$ generates the Cartan component $\Sym^{rq}\bC^n$, so $\lambda_r=(rq)$.
Realizing $M_{(rq)}^{\rU(r)}=\Sym^{rq}\bC^r$ as polynomials in $x_1,\ldots,x_r$, its weight space $\lrp{M_{(rq)}^{\rU(r)}}_{(q^r)}=\bC\,x_1^q\cdots x_r^q$ is one-dimensional, and
\begin{align}
\cH^{(r)}_{\cS_{\mr{cond}}}
\cong
V_{(rq)}^{\rU(n)}
=
\Sym^{rq}(\bC^n),
\qquad
\dim
=
\binom{n+rq-1}{n-1},
\label{eq:condensate-howe-selection}
\end{align}
which recovers the count of \cref{sec:other_resolving_orbits}.

\paragraph{Active bosons.}
\label{subsec:active-bosonic-howe}
Allowing pair creation and annihilation enlarges the physical group to the metaplectic group $\operatorname{Mp}(2n,\bR)$ of active bosonic Gaussian unitaries and shrinks the copy group to $\rO(r)$; the two form the oscillator dual pair $\lrp{\operatorname{Mp}(2n,\bR),\rO(r)}$ inside $\operatorname{Mp}(2nr,\bR)$.
The corresponding Kashiwara--Vergne decomposition is indexed by the set $\cP_{n,r}$ of partitions, padded with zeros as needed, that satisfy
\begin{align}
\cP_{n,r}
\defeq
\left\{\lambda:\ \ell(\lambda)\leq n,
\ \lambda^{\mathsf T}_1+\lambda^{\mathsf T}_2\leq r\right\}.
\label{eq:oscillator-howe-labels}
\end{align}
For $\lambda\in\cP_{n,r}$, let $M_\lambda^{\rO(r)}$ be the irreducible tensor representation of $\rO(r)$ labeled by $\lambda$; then
\begin{align}
\Hb^{\otimes r}
\cong
\left(\Sym^\bullet\bC^n\right)^{\otimes r}
\cong
\bigoplus_{\lambda\in\cP_{n,r}}
\Theta_{n,r}(\lambda)
\otimes
M_\lambda^{\rO(r)},
\label{eq:metaplectic-orthogonal-howe}
\end{align}
where $\Theta_{n,r}(\lambda)$ is the irreducible unitary lowest-weight representation of $\operatorname{Mp}(2n,\bR)$ whose lowest $\rU(n)$-type has highest weight $\lambda+(\tfrac r2,\ldots,\tfrac r2)$~\cite{KashiwaraVergne1978,Howe1989}.

Writing $U_g$ for the metaplectic representation on $\Hb$, let $\cS_{\mathrm{sq}}\defeq\{U_g\ketbra{0}{0}U_g^\dagger:g\in\operatorname{Mp}(2n,\bR)\}$ be the orbit of centered squeezed vacua.
We determine its copy-side action directly from the squeezed-vacuum wavefunctions in \cref{eq:ab-squeezed-vector}.
Let $Z=(z^{(1)},\ldots,z^{(r)})\in\bC^{n\times r}$, where $z^{(c)}$ contains the Bargmann variables of copy $c$, and set $Y=ZZ^{\mathsf T}$.
Taking the tensor product of the $r$ normalized wavefunctions gives
\begin{align*}
F_{W,r}(Z)
&\defeq\prod_{c=1}^r\Omega_W(z^{(c)})\\
&=\Delta(W)^{r/4}\exp\!\left(\frac12\sum_{c=1}^r(z^{(c)})^{\mathsf T}Wz^{(c)}\right)\\
&=\Delta(W)^{r/4}\exp\!\left(\frac12\operatorname{tr}(WY)\right).
\end{align*}
The copy group $\rO(r)$ acts by mixing the columns of $Z$: for $R\in\rO(r)$, we use the vacuum-preserving action $(\mathcal U_Rf)(Z)\defeq f(ZR)$.
Since $\operatorname{tr}((ZR)^\dagger(ZR))=\operatorname{tr}(Z^\dagger Z)$ and the real Jacobian is $|\det R|^{2n}=1$, this change of variables preserves the product Bargmann inner product defined in \cref{sec:active-bosonic-cloning}.
It also fixes the vacuum function $F_{0,r}=1$.
Orthogonality now gives the action on every squeezed-vacuum tensor power explicitly:
\begin{align*}
(ZR)(ZR)^{\mathsf T}
&=ZRR^{\mathsf T}Z^{\mathsf T}=Y,\\
(\mathcal U_RF_{W,r})(Z)
&=\Delta(W)^{r/4}\exp\!\left(\frac12\operatorname{tr}\lrp{W(ZR)(ZR)^{\mathsf T}}\right)
=F_{W,r}(Z).
\end{align*}
This calculation applies also when $\det R=-1$, so reflections act by $+1$ and the copy-side character is trivial.
Thus the closed span $\cH^{(r)}_{\cS_{\mathrm{sq}}}$ is contained in the $\rO(r)$-fixed subspace of $\Hb^{\otimes r}$.

In \cref{eq:metaplectic-orthogonal-howe}, the only copy-side representation with a fixed vector is the trivial representation $M_{\varnothing}^{\rO(r)}=\bC$, so this fixed subspace is $\Theta_{n,r}(\varnothing)$.
The squeezed-vacuum span is nonzero because it contains the vacuum at $W=0$, and it is a closed metaplectic-invariant subspace because it is generated by the vacuum orbit.
Irreducibility of $\Theta_{n,r}(\varnothing)$ therefore gives
\begin{align}
\cH^{(r)}_{\cS_{\mathrm{sq}}}
=
\lrp{\Hb^{\otimes r}}^{\rO(r)}
=
\Theta_{n,r}(\varnothing),
\label{eq:squeezed-howe-selection}
\end{align}
the scalar lowest-weight representation of $\operatorname{Mp}(2n,\bR)$ with lowest weight $\tfrac r2(1,\ldots,1)$.
It is square-integrable exactly when $r/2>n$, so its coherent-state frame has a finite formal degree precisely for $r\geq2n+1$, in agreement with \cref{thm:ab-resolving}.
Displaced Gaussian states also involve the Heisenberg--Weyl displacements and lie outside this decomposition.

%% file: AppPhaseStates.tex
\section{Technical lemmas for phase-state optimality}
\label{app:phase-smoothing}

This appendix proves the two-coordinate smoothing lemma used in \cref{subsec:phase-optimality}. The proof reduces the comparison of two occupation shifts to a one-dimensional statement on each line of input occupations. We first establish the required log-concavity of the line weights.

\begin{lemma}[Line log-concavity]
\label{lem:phase-line-log-concavity-app}
Fix distinct coordinates $i,j$ and a vector $\boldsymbol\eta\in\N^d$ satisfying $\eta_i=\eta_j=0$ and $|\boldsymbol\eta|\leq N$. Set $n=N-|\boldsymbol\eta|$ and
\begin{align}
    \mathbf b(r) \coloneq \boldsymbol\eta + r\mathbf e_i + (n-r)\mathbf e_j, \qquad 0\leq r\leq n. \label{eq:phase-line-app}
\end{align}
Then the sequence
\begin{align}
    w_r \coloneq \sqrt{ \frac{ \binom{N}{\mathbf b(r)} }{ Z_{\mathbf b(r)} } } \label{eq:phase-line-weight-app}
\end{align}
is symmetric and log-concave:
\begin{align}
    w_{n-r} &= w_r, \label{eq:phase-line-symmetry-app}\\
    w_r^2 &\geq w_{r-1}w_{r+1} \qquad (1\leq r\leq n-1). \label{eq:phase-line-log-concavity-app}
\end{align}
\end{lemma}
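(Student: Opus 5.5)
The plan is to rewrite $w_r^{-2}$ as a positive sum of explicitly log-convex sequences in $r$. Log-concavity of $w_r$ is then equivalent to log-convexity of $w_r^{-2}$, which follows because log-convexity is preserved under products and positive sums. Symmetry will come separately from a coordinate-transposition argument, as in the proof that $\Lambda$ is constant on $\cC_{d,K}$.

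\emph{Symmetry.} Let $\sigma$ be the transposition of coordinates $i$ and $j$. Since $\eta_i=\eta_j=0$, we have $\sigma\mathbf b(r)=\mathbf b(n-r)$. Multinomial coefficients are invariant under $\sigma$, so $\binom{N}{\mathbf b(n-r)}=\binom{N}{\mathbf b(r)}$. Moreover, $Z_{\sigma\mathbf b}=Z_{\mathbf b}$ by exactly the computation given before \cref{prop:balanced-max-main}, using that $\sigma$ permutes $\cC_{d,K}$. This gives \cref{eq:phase-line-symmetry-app}.

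\emph{Reformulation.} For $\boldsymbol\Theta\in\cC_{d,K}$ and any $\mathbf b\in\cB_N$,
\begin{align*}
\frac{\binom{N+K}{\mathbf b+\boldsymbol\Theta}}{\binom{N}{\mathbf b}}
=\frac{(N+K)!}{N!}\prod_{k=0}^{d-1}\frac{b_k!}{(b_k+\Theta_k)!}.
\end{align*}
Hence
\begin{align*}
w_r^{-2}=\frac{(N+K)!}{N!}\sum_{\boldsymbol\Theta\in\cC_{d,K}} c_{\boldsymbol\Theta}\,\phi_{\Theta_i}(r)\,\phi_{\Theta_j}(n-r),
\end{align*}
where
\begin{align*}
\phi_a(x)\coloneq\frac{x!}{(x+a)!}=\prod_{t=1}^{a}\frac1{x+t},
\qquad
c_{\boldsymbol\Theta}\coloneq\prod_{k\neq i,j}\frac{\eta_k!}{(\eta_k+\Theta_k)!}.
\end{align*}
The weights $c_{\boldsymbol\Theta}>0$ are independent of $r$.

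\emph{Log-convexity.} I will check three facts in turn.
\begin{enumerate}
\item Each factor $x\mapsto 1/(x+t)$ is log-convex on $x\geq0$, since $\log$ of it is $-\log(x+t)$, a convex function. As a sequence this reads $(x+t)^2\le(x-1+t)(x+1+t)+1$, i.e.\ the discrete inequality holds with slack. Consequently $\phi_a(r)$ and $\phi_a(n-r)$ are log-convex sequences in $r$ (the second because reflection $r\mapsto n-r$ preserves log-convexity), and so is their product.
\item A positive combination of log-convex sequences $x^{(\ell)}_r$ is log-convex. Using $x^{(\ell)}_{r-1}x^{(\ell)}_{r+1}\ge (x^{(\ell)}_r)^2$ and Cauchy--Schwarz,
\begin{align*}
\Bigl(\sum_\ell x^{(\ell)}_{r-1}\Bigr)\Bigl(\sum_\ell x^{(\ell)}_{r+1}\Bigr)
\ge\Bigl(\sum_\ell \sqrt{x^{(\ell)}_{r-1}x^{(\ell)}_{r+1}}\Bigr)^2
\ge\Bigl(\sum_\ell x^{(\ell)}_r\Bigr)^2 .
\end{align*}
Applied to the sum over $\boldsymbol\Theta$, this shows $w_r^{-2}$ is log-convex in $r$ for $1\le r\le n-1$.
\item Log-convexity of $w_r^{-2}$ means $w_{r-1}^{-2}w_{r+1}^{-2}\ge w_r^{-4}$, i.e.\ $w_r^4\ge w_{r-1}^2w_{r+1}^2$. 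Taking square roots gives \cref{eq:phase-line-log-concavity-app}.
\end{enumerate}

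I expect no serious obstacle. The only delicate point is the bookkeeping in the reformulation: every $\boldsymbol\Theta$ must contribute a genuinely positive term for all $0\le r\le n$. This holds here because the shifts are nonnegative, so all factorial ratios are finite and positive, and we never rely on the zero-convention for negative multinomial entries.
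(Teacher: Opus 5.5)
Your proposal is correct and follows essentially the same route as the paper: symmetry comes from the $i\leftrightarrow j$ transposition (using $\eta_i=\eta_j=0$ and the invariance of $\cC_{d,K}$). Log-concavity comes from showing that $w_r^{-2}=Z_{\mathbf b(r)}/\binom{N}{\mathbf b(r)}$ is a positive sum of log-convex terms $c_{\boldsymbol\Theta}\frac{r!}{(r+\Theta_i)!}\frac{(n-r)!}{(n-r+\Theta_j)!}$, combined by the same Cauchy--Schwarz step. The only difference is cosmetic: the paper checks each term through nondecreasing successive ratios, whereas you factor it into terms $1/(x+t)$. Note, though, that your displayed inequality $(x+t)^2\le(x-1+t)(x+1+t)+1$ is a tautology; the condition you actually need is $(x-1+t)(x+1+t)\le(x+t)^2$.
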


\begin{proof}
Let $\sigma_{ij}$ be the permutation that exchanges coordinates $i$ and $j$. Since $\mathbf b(n-r)=\sigma_{ij}\mathbf b(r)$, the multinomial coefficient satisfies
\begin{align}
    \binom{N}{\mathbf b(n-r)} = \binom{N}{\mathbf b(r)}.
\end{align}
The set $\cC_{d,K}$ is invariant under $\sigma_{ij}$. Reindexing $\boldsymbol\Gamma=\sigma_{ij}\boldsymbol\Theta$ in the definition of $Z_{\mathbf b}$ gives
\begin{align}
    Z_{\mathbf b(n-r)} &= \sum_{\boldsymbol\Gamma\in\cC_{d,K}} \binom{N+K}{ \sigma_{ij}\mathbf b(r) + \boldsymbol\Gamma } = \sum_{\boldsymbol\Theta\in\cC_{d,K}} \binom{N+K}{ \sigma_{ij} \bigl( \mathbf b(r) + \boldsymbol\Theta \bigr) } = Z_{\mathbf b(r)}.
\end{align}
This proves \cref{eq:phase-line-symmetry-app}.

For log-concavity, define
\begin{align}
    H(r) \coloneq \frac{ Z_{\mathbf b(r)} }{ \binom{N}{\mathbf b(r)} }. \label{eq:phase-H-app}
\end{align}
We show that $H$ is log-convex. For a fixed $\boldsymbol\Gamma\in\cC_{d,K}$, write $p=\Gamma_i$ and $\ell=\Gamma_j$. Its contribution to $H(r)$ has the form
\begin{align}
    h_{\boldsymbol\Gamma}(r) = c_{\boldsymbol\Gamma} \frac{r!}{(r+p)!} \frac{(n-r)!}{(n-r+\ell)!}, \label{eq:phase-h-gamma-app}
\end{align}
where $c_{\boldsymbol\Gamma}>0$ is independent of $r$. Its successive ratio is
\begin{align}
    \frac{ h_{\boldsymbol\Gamma}(r+1) }{ h_{\boldsymbol\Gamma}(r) } = \frac{r+1}{r+p+1} \frac{n-r+\ell}{n-r}. \label{eq:phase-h-ratio-app}
\end{align}
Both factors are nondecreasing in $r$: they can be written as $1-p/(r+p+1)$ and $1+\ell/(n-r)$, respectively. Hence the ratio in \cref{eq:phase-h-ratio-app} is nondecreasing, which is equivalent to
\begin{align}
    h_{\boldsymbol\Gamma}(r)^2 \leq h_{\boldsymbol\Gamma}(r-1) h_{\boldsymbol\Gamma}(r+1). \label{eq:phase-h-log-convex-app}
\end{align}
Thus every $h_{\boldsymbol\Gamma}$ is log-convex. Since $H(r)=\sum_{\boldsymbol\Gamma}h_{\boldsymbol\Gamma}(r)$, \cref{eq:phase-h-log-convex-app} and Cauchy--Schwarz imply
\begin{align}
    H(r) &\leq \sum_{\boldsymbol\Gamma\in\cC_{d,K}} \sqrt{ h_{\boldsymbol\Gamma}(r-1) h_{\boldsymbol\Gamma}(r+1) } \leq \sqrt{ H(r-1)H(r+1) }.
\end{align}
Therefore $H$ is log-convex, and $w_r=H(r)^{-1/2}$ is log-concave.
\end{proof}

\begin{lemma}[Hoggar's convolution theorem]
\label{lem:hoggar-convolution-app}
The convolution of two nonnegative log-concave sequences with no internal zeros is log-concave \cite{Hoggar1974}.
\end{lemma}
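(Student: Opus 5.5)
The plan is to prove the statement through total positivity of order two, via the Toeplitz matrix of a sequence. For a nonnegative sequence $(a_k)_{k\in\bZ}$, extended by zeros outside its support, let $T_a$ be the Toeplitz matrix with entries $(T_a)_{k,l}\defeq a_{k-l}$. The goal is an equivalence between log-concavity and nonnegativity of the $2\times2$ minors of $T_a$, after which the theorem reduces to Cauchy--Binet. In the application of \cref{lem:hoggar-convolution-app} the sequences are finite (indexed by $0\le r\le n$), so I would state and prove the lemma for finitely supported sequences. That choice makes all matrix products finite sums and avoids any convergence issues.

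\emph{Step 1 (characterization).} Here I show that a nonnegative sequence with no internal zeros is log-concave if and only if $a_i a_j\ge a_{i-s}a_{j+s}$ for all $i\le j$ and $s\ge0$. The key case is $s=1$: these are exactly the $2\times2$ minors $\det\begin{pmatrix}a_{k-l}&a_{k-l'}\\a_{k'-l}&a_{k'-l'}\end{pmatrix}\ge0$ for $k<k'$ and $l<l'$. The general $s$ then follows by iterating the $s=1$ case. For the $s=1$ case I restrict to the support, which is an interval $\{p,\dots,q\}$ because there are no internal zeros. On this interval log-concavity says that the ratios $\rho_k\defeq a_{k+1}/a_k$ are nonincreasing. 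For $i\le j$ with all indices inside the support, this gives
\begin{align}
\frac{a_i}{a_{i-1}}=\rho_{i-1}\ \ge\ \rho_j=\frac{a_{j+1}}{a_j}.
\end{align}
Cross-multiplying yields $a_ia_j\ge a_{i-1}a_{j+1}$. If $i-1$ or $j+1$ falls outside the support, the right-hand side vanishes and the inequality is trivial. One more case must be checked: $a_i=0$ or $a_j=0$ while both $i-1$ and $j+1$ lie in the support. This cannot happen, since $i-1\le i\le j\le j+1$ and the support is an interval. I expect this careful handling of the boundary of the support to be the only real obstacle. It is exactly where the hypothesis of no internal zeros is needed: a sequence such as $(1,0,1)$ is log-concave in the pointwise sense but fails the minor condition.

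\emph{Step 2 (products).} The convolution $c=a*b$ satisfies $T_c=T_aT_b$, because $\sum_m a_{k-m}b_{m-l}=c_{k-l}$. By the Cauchy--Binet formula, each $2\times2$ minor of $T_c$ equals a sum of products of a $2\times2$ minor of $T_a$ and a $2\times2$ minor of $T_b$. Every such term is nonnegative by Step 1, so all $2\times2$ minors of $T_c$ are nonnegative. In particular, the minor with rows $(k,k+1)$ and columns $(0,1)$ is $c_k^2-c_{k-1}c_{k+1}\ge0$, which is the log-concavity of $c$.

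\emph{Step 3 (no internal zeros of $c$).} The support of $c$ is the Minkowski sum of the supports of $a$ and $b$. Both are integer intervals and all terms in the convolution are nonnegative, so no cancellation can occur and the support of $c$ is again an interval. Hence $c$ also has no internal zeros, which is the property needed when the lemma is iterated or combined with the symmetry argument in \cref{lem:phase-smoothing-main}.
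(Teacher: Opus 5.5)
\textbf{Verdict.} Your proof is correct. The paper itself does not prove this lemma: it cites \cite{Hoggar1974} and uses the result as a black box in the proof of \cref{lem:phase-smoothing-main}. Your write-up is therefore a self-contained replacement for the citation, not a variant of an argument in the paper.

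\textbf{Comparison of routes.} What you give is the standard total-positivity ($\mathrm{PF}_2$) proof. Log-concavity with interval support makes every $2\times2$ minor of the Toeplitz matrix nonnegative. Since $T_{a*b}=T_aT_b$, Cauchy--Binet writes each $2\times2$ minor of the product as a sum of products of such minors.

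\begin{itemize}
\item Only the direction ``log-concave $\Rightarrow$ minors nonnegative'' of your Step~1 is actually used, and you prove it correctly. A minor with $k<k'$, $l<l'$ equals $a_ia_j-a_{i-s}a_{j+s}$ with $s\geq1$. The $s=1$ case follows from monotone ratios, and the general case follows by telescoping.
\item Restricting to finitely supported sequences is harmless. It matches Hoggar's polynomial setting and the paper's application to $\check w$ and $\beta_L$.
\item Your Step~3 does real work for the paper. The subsequent step ``a symmetric log-concave sequence is nonincreasing to the right of its center'' is false when internal zeros are allowed; $(1,0,0,1)$ is a counterexample. So the fact that $F_{\boldsymbol\eta}=\check w*\beta_L$ has interval support, which the lemma as stated in the paper does not record, is exactly what makes that step valid.
\end{itemize}

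\textbf{Correction to your example.} Your illustration $(1,0,1)$ is not log-concave even pointwise, since $a_1^2=0<1=a_0a_2$. A correct illustration is $a=(1,0,0,1)$. It satisfies $a_k^2\geq a_{k-1}a_{k+1}$ for every $k$, but it violates the minor condition $a_1a_2\geq a_0a_3$. Convolving it with $(1,1)$ gives $(1,1,0,1,1)$, which is not log-concave.

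\textbf{Wording.} The phrase ``the key case is $s=1$: these are exactly the $2\times2$ minors'' conflates two things. The minors correspond to the whole family $s\geq1$, and $s=1$ is only the case you verify directly.
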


Here a sequence has no internal zeros when the indices on which it is positive form an interval. We apply \cref{lem:hoggar-convolution-app} only to finite sequences that are strictly positive on their respective supports.

\begin{proof}[Proof of \cref{lem:phase-smoothing-main}]
Fix coordinates $i,j$ satisfying $\Delta_i\geq\Delta_j+2$, and let
\begin{align}
    \boldsymbol\Delta' = \boldsymbol\Delta-\mathbf e_i+\mathbf e_j.
\end{align}
We partition $\cB_N$ into lines on which only the $i$th and $j$th coordinates vary. Define
\begin{align}
    \mathcal E_{ij} \coloneq \left\{ \boldsymbol\eta\in\N^d: \eta_i=\eta_j=0,\ |\boldsymbol\eta|\leq N \right\}.
\end{align}
For $\boldsymbol\eta\in\mathcal E_{ij}$, put $n_{\boldsymbol\eta}=N-|\boldsymbol\eta|$ and
\begin{align}
    \mathbf b_{\boldsymbol\eta}(r) \coloneq \boldsymbol\eta + r\mathbf e_i + (n_{\boldsymbol\eta}-r)\mathbf e_j, \qquad 0\leq r\leq n_{\boldsymbol\eta}. \label{eq:phase-line-partition-app}
\end{align}
Every $\mathbf b\in\cB_N$ occurs on exactly one of these lines. Using the zero convention from \cref{eq:Lambda-common-domain-main}, define the complete contribution of a line by
\begin{align}
    \Lambda_{\boldsymbol\eta}(\boldsymbol\Delta) \coloneq \sum_{r=0}^{n_{\boldsymbol\eta}} w_{\boldsymbol\eta,r} \binom{N+K}{ \mathbf b_{\boldsymbol\eta}(r) + \boldsymbol\Delta }, \label{eq:phase-line-contribution-app}
\end{align}
where
\begin{align}
    w_{\boldsymbol\eta,r} \coloneq \sqrt{ \frac{ \binom{N}{ \mathbf b_{\boldsymbol\eta}(r) } }{ Z_{\mathbf b_{\boldsymbol\eta}(r)} } }.
\end{align}
The line partition gives
\begin{align}
    \Lambda(\boldsymbol\Delta) = \sum_{\boldsymbol\eta\in\mathcal E_{ij}} \Lambda_{\boldsymbol\eta}(\boldsymbol\Delta), \qquad \Lambda(\boldsymbol\Delta') = \sum_{\boldsymbol\eta\in\mathcal E_{ij}} \Lambda_{\boldsymbol\eta}(\boldsymbol\Delta'). \label{eq:phase-Lambda-line-sum-app}
\end{align}

Fix one line and suppress the subscript $\boldsymbol\eta$. Write
\begin{align}
    n &\coloneq n_{\boldsymbol\eta}, & A &\coloneq \Delta_i, & B &\coloneq \Delta_j, & L &\coloneq n+A+B. \label{eq:phase-line-parameters-app}
\end{align}
If $\eta_\ell+\Delta_\ell<0$ for some $\ell\neq i,j$, then this line contributes zero for both shifts. The same is true if $L<0$, because the total output occupation in coordinates $i,j$ is then negative. We may therefore assume that these quantities are nonnegative. Define
\begin{align}
    C_{\boldsymbol\eta} \coloneq \frac{ (N+K)! }{ L! \prod_{\ell\neq i,j} (\eta_\ell+\Delta_\ell)! }
\end{align}
and let $\beta_L(u)=\binom{L}{u}$, with the binomial coefficient understood to be zero outside $0\leq u\leq L$. For every $0\leq r\leq n$,
\begin{align}
    \binom{N+K}{ \mathbf b(r)+\boldsymbol\Delta } &= C_{\boldsymbol\eta} \beta_L(r+A), \label{eq:phase-line-factorization-delta-app}\\
    \binom{N+K}{ \mathbf b(r)+\boldsymbol\Delta' } &= C_{\boldsymbol\eta} \beta_L(r+A-1). \label{eq:phase-line-factorization-delta-prime-app}
\end{align}
The same $C_{\boldsymbol\eta}$ and $L$ occur in both expressions because smoothing preserves $A+B$ and leaves all remaining coordinates unchanged. Thus
\begin{align}
    \Lambda_{\boldsymbol\eta}(\boldsymbol\Delta) &= C_{\boldsymbol\eta}F_{\boldsymbol\eta}(A), \label{eq:phase-line-F-delta-app}\\
    \Lambda_{\boldsymbol\eta}(\boldsymbol\Delta') &= C_{\boldsymbol\eta}F_{\boldsymbol\eta}(A-1), \label{eq:phase-line-F-delta-prime-app}\\
    F_{\boldsymbol\eta}(a) &\coloneq \sum_{r=0}^{n} w_r\beta_L(r+a). \label{eq:phase-F-definition-app}
\end{align}

By \cref{lem:phase-line-log-concavity-app}, $(w_r)_{r=0}^n$ is symmetric and log-concave. Extend it by zero outside $0\leq r\leq n$, and define the reflected sequence $\check w_s\coloneq w_{-s}$. Reflection preserves log-concavity, and $\check w$ is positive exactly on the consecutive indices $-n,\ldots,0$, so it has no internal zeros. The binomial sequence $\beta_L$ also has no internal zeros and is log-concave because
\begin{align}
    \frac{ \binom{L}{u}^2 }{ \binom{L}{u-1} \binom{L}{u+1} } = \frac{ (u+1)(L-u+1) }{ u(L-u) } \geq1
\end{align}
at every interior index of its support. Therefore \cref{lem:hoggar-convolution-app} gives
\begin{align}
    F_{\boldsymbol\eta}(a) &= \sum_{s\in\mathbb Z} \check w_s \beta_L(a-s) = (\check w*\beta_L)(a), \label{eq:phase-F-convolution-app}
\end{align}
so $F_{\boldsymbol\eta}$ is log-concave.

It is also symmetric. Using $w_{n-r}=w_r$ and $\beta_L(L-u)=\beta_L(u)$,
\begin{align}
    F_{\boldsymbol\eta}(L-n-a) &= \sum_{r=0}^{n} w_r \beta_L(r+L-n-a) = \sum_{s=0}^{n} w_{n-s} \beta_L(L-s-a) \nonumber\\
    &= \sum_{s=0}^{n} w_s \beta_L(s+a) = F_{\boldsymbol\eta}(a). \label{eq:phase-F-symmetry-app}
\end{align}
Hence $F_{\boldsymbol\eta}$ is symmetric about
\begin{align}
    \frac{L-n}{2} = \frac{A+B}{2}.
\end{align}
A symmetric log-concave sequence is nonincreasing to the right of its center. Since $A\geq B+2$,
\begin{align}
    A-1-\frac{A+B}{2} = \frac{A-B-2}{2} \geq0,
\end{align}
and $A-1$ is one step closer to the center than $A$. Therefore
\begin{align}
    F_{\boldsymbol\eta}(A-1) \geq F_{\boldsymbol\eta}(A).
\end{align}
Together with \cref{eq:phase-line-F-delta-app,eq:phase-line-F-delta-prime-app}, this proves
\begin{align}
    \Lambda_{\boldsymbol\eta}(\boldsymbol\Delta') \geq \Lambda_{\boldsymbol\eta}(\boldsymbol\Delta)
\end{align}
on every nonexcluded line. On an excluded line, either a fixed output coordinate is negative for both shifts or the common $i,j$ total $L$ is negative, so both contributions vanish. Summing over the common line partition in \cref{eq:phase-Lambda-line-sum-app} yields
\begin{align}
    \Lambda(\boldsymbol\Delta') - \Lambda(\boldsymbol\Delta) = \sum_{\boldsymbol\eta\in\mathcal E_{ij}} \left( \Lambda_{\boldsymbol\eta}(\boldsymbol\Delta') - \Lambda_{\boldsymbol\eta}(\boldsymbol\Delta) \right) \geq0.
\end{align}
This proves \cref{lem:phase-smoothing-main}.
\end{proof}

\section{Technical proofs for one-copy phase-state cloning}
\label{app:phase-one-copy-asymptotics}

This appendix supplies the detailed variance estimates and sample-complexity inversion for \Cref{subsec:phase-one-copy-asymptotics}. 

\subsection{Upper variance estimate}
\label{app:phase-one-copy-upper}

The first step compares the variances of $\sqrt X$ and $X$. We then apply Efron--Stein twice, keeping track of the reciprocal expressions explicitly.
\begin{proposition}[Upper bound]
\label{prop:phase-one-copy-upper}
For $d\ge2$ and $N\ge2$,
\begin{align}
    \operatorname{Var}(\sqrt X)
    \le \frac{20(d-1)(N+d)}{(N+2)(N+3)(N+4)}.
    \label{eq:phase-one-copy-upper-exact}
\end{align}
In particular, $\operatorname{Var}(\sqrt X)\le30(d-1)/N^2$ for $N\ge2d$.
\end{proposition}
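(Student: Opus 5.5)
The plan is to bound $\operatorname{Var}(\sqrt X)$ by a multiple of $\operatorname{Var}(X)$, and then control $\operatorname{Var}(X)$ by the Efron--Stein inequality over the independent samples $J_1,\dots,J_N$.

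\emph{Step 1: reduce to $\operatorname{Var}(X)$.} Since $\sqrt X$ is a function of $X$, for any constant $m>0$ we have $\operatorname{Var}(\sqrt X)\le\E[(\sqrt X-\sqrt m)^2]=\E[(X-m)^2/(\sqrt X+\sqrt m)^2]$. The map $x\mapsto 1/(b+1)$ gives the deterministic lower bound
\begin{align*}
X\ge\frac{N+1}{d^2}\cdot\frac{d^2}{N+d}=\frac{N+1}{N+d}
\end{align*}
by convexity, because $\sum_j(b_j+1)=N+d$. Taking $m=\E[X]$, which is also at least this value, yields $\operatorname{Var}(\sqrt X)\le\frac{N+d}{4(N+1)}\operatorname{Var}(X)$. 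This step explains the factor $N+d$ in the statement.

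\emph{Step 2: Efron--Stein for $X$.} Replace one sample $J_r$ by an independent copy $J_r'$. If the two values differ, say $J_r=a$ and $J_r'=c$, then $B_a$ decreases by one and $B_c$ increases by one. Hence
\begin{align*}
X'-X=\frac{N+1}{d^2}\Big(\frac1{B_a}-\frac1{B_a+1}-\frac1{B_c+1}+\frac1{B_c+2}\Big)
=\frac{N+1}{d^2}\Big(\frac1{B_a(B_a+1)}-\frac1{(B_c+1)(B_c+2)}\Big),
\end{align*}
where $B_a\ge1$ counts the occupation including sample $r$. Efron--Stein gives
\begin{align*}
\operatorname{Var}(X)\le\frac N2\,\E\big[(X'-X)^2\big].
\end{align*}
Apply $(u-v)^2\le 2u^2+2v^2$. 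The remaining task is to compute the reciprocal moments $\E[(B_a(B_a+1))^{-2}\mathbf 1\{J_r=a\}]$ and $\E[((B_c+1)(B_c+2))^{-2}]$, where $B_c$ refers to the other $N-1$ samples and is distributed as $\Bin(N-1,1/d)$.

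\emph{Step 3: reciprocal binomial moments.} Using $\frac{1}{(k+1)(k+2)\cdots(k+j)}$-type identities, as in \cite{ChaoStrawderman1972}, one gets exact bounds of the form
\begin{align*}
\E\big[((B+1)(B+2))^{-2}\big]\le\E\Big[\frac{C}{(B+1)(B+2)(B+3)(B+4)}\Big]\le\frac{C\,d^4}{N(N+1)(N+2)(N+3)},
\end{align*}
for $B\sim\Bin(N-1,1/d)$, since $(b+1)^2(b+2)^2\ge(b+1)(b+2)(b+3)(b+4)/6$. The probability $(d-1)/d$ that $J_r'\ne J_r$ supplies the factor $d-1$. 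Collecting terms gives $\operatorname{Var}(X)\lesssim\frac{(d-1)(N+1)^2N}{d^4}\cdot\frac{d^4}{N(N+1)\cdots}$, which produces $(N+2)(N+3)(N+4)$ in the denominator. Here one would invoke Efron--Stein again, presumably once for the $\sqrt X$ comparison, to tidy the constant to $20$.

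The main obstacle is Step 3: keeping the falling-factorial reciprocal identities exact enough to reach the explicit constant $20$ and the precise product $(N+2)(N+3)(N+4)$. If the direct route is too loose, I would instead apply Efron--Stein directly to $\sqrt X$, using $|\sqrt{X'}-\sqrt X|\le|X'-X|/(2\sqrt{\min X})$ with the Step 1 lower bound. The final claim then follows from $N+d\le\tfrac32 N$ and $(N+2)(N+3)(N+4)\ge N^3$ when $N\ge2d$.
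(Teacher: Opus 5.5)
Your Step 1 is correct: comparing $\sqrt X$ with $\sqrt{\E X}$, rather than with an independent copy $\sqrt{X'}$ as the paper does, gives the same factor $\frac{N+d}{4(N+1)}$. The gap is in Step 2.

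There you bound $(u-v)^2\le 2u^2+2v^2$ for $u=1/(B_a(B_a+1))$ and $v=1/((B_c+1)(B_c+2))$. This discards the cancellation on which the whole estimate rests. For $N\gg d$, each of $u,v$ is of order $d^2/N^2$. But $B_a-1$ and $B_c$ fluctuate by about $\sqrt{N/d}$ around the same mean, so $u-v$ is only of order $(d/N)^{5/2}$.

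If you run your route honestly with your own moment bound $\E[((T+1)(T+2))^{-2}]\le 6d^4/(N(N+1)(N+2)(N+3))$, you get $\operatorname{Var}(X)\le\frac{12(d-1)(N+1)}{d(N+2)(N+3)}$. Hence $\operatorname{Var}(\sqrt X)\le\frac{3(d-1)(N+d)}{d(N+2)(N+3)}$, which is $O(1/N)$. It exceeds the claimed bound by the factor $\frac{3(N+4)}{20d}$, so the route cannot establish the proposition once $N>\frac{20}{3}d-4$. Nor can it give $30(d-1)/N^2$ when $N\gg d$.

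The sentence ``collecting terms produces $(N+2)(N+3)(N+4)$'' is where this is hidden: a fourth-order reciprocal moment only produces $(N+2)(N+3)$. Better constants cannot fix it, since $\E[u^2]$ genuinely is of order $d^4/N^4$. Your fallback (Efron--Stein directly on $\sqrt X$) has the same defect, because it still needs $\E[(X'-X)^2]=O((d-1)/N^3)$.

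The missing idea is to write both terms through the counts $\mathbf C$ of the other $N-1$ samples. With $g(k)=1/((k+1)(k+2))$ one has $X^{(1)}-X=\frac{N+1}{d^2}\bigl(g(C_{J_1})-g(C_{J_1'})\bigr)$. Conditionally on $\mathbf C$ the two terms are i.i.d., so $\E[(X-X^{(1)})^2\mid\mathbf C]=\frac{2(N+1)^2}{d^4}\operatorname{Var}(g(C_{J_1})\mid\mathbf C)$. The law of total variance then gives $\operatorname{Var}(X)\le\frac{N(N+1)^2}{d^4}\operatorname{Var}(g(T))$ with $T\sim\Bin(N-1,1/d)$.

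The second Efron--Stein step belongs here, applied to the Bernoulli summands of $T$, not to the $\sqrt X$ comparison. It is the true source of the factor $d-1$: $\operatorname{Var}(g(T))\le(N-1)\frac1d\bigl(1-\frac1d\bigr)\E[(g(U+1)-g(U))^2]$ with $U\sim\Bin(N-2,1/d)$. Next use $(g(k+1)-g(k))^2=4/((k+1)^2(k+2)^2(k+3)^2)\le 80/((k+1)\cdots(k+6))$. The Beta-integral bound gives $\E[1/((U+1)\cdots(U+6))]\le d^6/((N-1)N(N+1)(N+2)(N+3)(N+4))$. Together these give $\operatorname{Var}(X)\le\frac{80(d-1)(N+1)}{(N+2)(N+3)(N+4)}$. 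Your Step 1 then turns this into exactly the constant $20$, and the case $N\ge2d$ follows from $N+d\le\frac32N$.
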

\begin{proof}
For each $\mathbf{b}\in\cB_N$, apply Cauchy--Schwarz to the vectors with entries $\sqrt{b_j+1}$ and $1/\sqrt{b_j+1}$:
\begin{align*}
    d^2
    &=\left(\sum_{j=0}^{d-1}\sqrt{b_j+1}\,\frac1{\sqrt{b_j+1}}\right)^2 \\
    &\le\left(\sum_{j=0}^{d-1}(b_j+1)\right)
    \left(\sum_{j=0}^{d-1}\frac1{b_j+1}\right)
    =(N+d)\sum_{j=0}^{d-1}\frac1{b_j+1}.
\end{align*}
Here $\sum_{j=0}^{d-1}(b_j+1)=N+d$ because $\sum_{j=0}^{d-1}b_j=N$. Dividing by $N+d$ and substituting into the definition of $X$ gives
\begin{align*}
    X(\mathbf{b})
    &=\frac{N+1}{d^2}\sum_{j=0}^{d-1}\frac1{b_j+1} \\
    &\ge\frac{N+1}{d^2}\,\frac{d^2}{N+d}
    =\frac{N+1}{N+d}.
\end{align*}
Thus $X=X(\mathbf{B})\ge(N+1)/(N+d)$ for every realization of $\mathbf{B}$. To compare the variances of $\sqrt X$ and $X$, let $X'$ be independent of $X$ and have the same distribution. Expanding the squares and using independence gives
\begin{align*}
    \E\bigl[(\sqrt X-\sqrt{X'})^2\bigr]
    &=\E[X]+\E[X']-2\E[\sqrt X]\,\E[\sqrt{X'}]
    =2\operatorname{Var}(\sqrt X), \\
    \E\bigl[(X-X')^2\bigr]
    &=\E[X^2]+\E[(X')^2]-2\E[X]\,\E[X']
    =2\operatorname{Var}(X).
\end{align*}
Since both $X$ and $X'$ are at least $(N+1)/(N+d)$, we also have
\begin{align*}
    (\sqrt X+\sqrt{X'})^2
    \ge\left(2\sqrt{\frac{N+1}{N+d}}\right)^2
    =\frac{4(N+1)}{N+d}.
\end{align*}
Therefore,
\begin{align}
    \operatorname{Var}(\sqrt X)
    &=\frac12\E\bigl[(\sqrt X-\sqrt{X'})^2\bigr] \nonumber\\
    &=\frac12\E\left[\frac{(X-X')^2}{(\sqrt X+\sqrt{X'})^2}\right] \nonumber\\
    &\le\frac{N+d}{8(N+1)}\E\bigl[(X-X')^2\bigr]
    =\frac{N+d}{4(N+1)}\operatorname{Var}(X).
    \label{eq:phase-one-copy-lipschitz}
\end{align}

We use Efron--Stein in its replacement form: if $J_1,\ldots,J_m$ are independent, $U=f(J_1,\ldots,J_m)$, and $U^{(i)}$ is obtained by replacing only $J_i$ with an independent copy, then
\begin{align}
    \operatorname{Var}(U)\le\frac12\sum_{i=1}^m\E\bigl[(U-U^{(i)})^2\bigr].
    \label{eq:phase-one-copy-efron-stein}
\end{align}
This standard inequality follows from the tensorization of variance; see \cite{BLM2013}.

Recall that
\begin{align*}
    X(\mathbf{B})=\frac{N+1}{d^2}\sum_{j=0}^{d-1}\frac1{B_j+1},
\end{align*}
where $\mathbf{B}$ is the occupation vector of the independent uniform samples $J_1,\ldots,J_N$.

Let $J_1'$ be another uniform sample, independent of all $J_1,\ldots,J_N$, and define
\begin{align*}
    C_j=\sum_{r=2}^{N}\mathbf1_{\{J_r=j\}},
    \qquad j=0,\ldots,d-1.
\end{align*}
Thus $\mathbf{C}$ counts the remaining $N-1$ samples, and
\begin{align*}
    \mathbf{B}=\mathbf{C}+\mathbf{e}_{J_1},
    \qquad X^{(1)}=X(\mathbf{C}+\mathbf{e}_{J_1'}).
\end{align*}
For each $i$, adding one sample to coordinate $i$ gives
\begin{align*}
    X(\mathbf{C}+\mathbf{e}_i)
    &=\frac{N+1}{d^2}\left(
    \sum_{j=0}^{d-1}\frac1{C_j+1}
    -\frac1{C_i+1}+\frac1{C_i+2}\right) \\
    &=\frac{N+1}{d^2}\left(
    \sum_{j=0}^{d-1}\frac1{C_j+1}
    -\frac1{(C_i+1)(C_i+2)}\right).
\end{align*}
Consequently,
\begin{align*}
    X^{(1)}-X
    =\frac{N+1}{d^2}\left(
    \frac1{(C_{J_1}+1)(C_{J_1}+2)}
    -\frac1{(C_{J_1'}+1)(C_{J_1'}+2)}\right).
\end{align*}
Conditional on $\mathbf{C}$, the two reciprocals on the right are independent and identically distributed. The independent-copy variance identity therefore gives
\begin{align*}
    \E\bigl[(X-X^{(1)})^2\mid\mathbf{C}\bigr]
    =\frac{2(N+1)^2}{d^4}
    \operatorname{Var}\left(\frac1{(C_{J_1}+1)(C_{J_1}+2)}\,\middle|\,\mathbf{C}\right).
\end{align*}
All $N$ sample positions have the same expected squared replacement effect. Applying \cref{eq:phase-one-copy-efron-stein} yields
\begin{align*}
    \operatorname{Var}(X)
    &\le\frac N2\E\bigl[(X-X^{(1)})^2\bigr] \\
    &=\frac{N(N+1)^2}{d^4}
    \E\left[\operatorname{Var}\left(\frac1{(C_{J_1}+1)(C_{J_1}+2)}\,\middle|\,\mathbf{C}\right)\right].
\end{align*}
By the law of total variance, the expected conditional variance is at most the unconditional variance. Moreover, conditional on $J_1=i$, the variable $C_{J_1}=C_i$ counts how many of the other $N-1$ independent samples equal $i$. Hence
\begin{align*}
    T\coloneq C_{J_1}\sim\Bin(N-1,1/d),
\end{align*}
and we obtain
\begin{align}
    \operatorname{Var}(X)
    \le\frac{N(N+1)^2}{d^4}
    \operatorname{Var}\left(\frac1{(T+1)(T+2)}\right).
    \label{eq:phase-one-copy-first-efron}
\end{align}

We next bound this binomial variance. Write
\begin{align*}
    T=\xi_1+\cdots+\xi_{N-1},
    \qquad \xi_r\sim\operatorname{Bernoulli}(1/d)
\end{align*}
independently, and let
\begin{align*}
    U=\xi_2+\cdots+\xi_{N-1}
    \sim\Bin(N-2,1/d).
\end{align*}
Replacing $\xi_1$ by an independent copy changes its value with probability
\begin{align*}
    2\frac1d\left(1-\frac1d\right).
\end{align*}
When it changes, the reciprocal switches between
\begin{align*}
    \frac1{(U+1)(U+2)}
    \quad\text{and}\quad
    \frac1{(U+2)(U+3)}.
\end{align*}
A second application of \cref{eq:phase-one-copy-efron-stein} therefore gives
\begin{align}
    \operatorname{Var}\left(\frac1{(T+1)(T+2)}\right)
    &\le\frac{N-1}{2}\,2\frac1d\left(1-\frac1d\right) \nonumber\\[-2pt]
    &\qquad{}\times\E\left[\left(
    \frac1{(U+2)(U+3)}-\frac1{(U+1)(U+2)}
    \right)^2\right].
    \label{eq:phase-one-copy-binomial-poincare}
\end{align}
For every $k\ge0$,
\begin{align}
    \left(\frac1{(k+2)(k+3)}-\frac1{(k+1)(k+2)}\right)^2
    &=\frac4{(k+1)^2(k+2)^2(k+3)^2} \nonumber\\
    &\le\frac{80}{(k+1)(k+2)\cdots(k+6)}.
    \label{eq:phase-one-copy-difference-bound}
\end{align}
Indeed, the last inequality is equivalent to
\begin{align*}
    4\frac{k+4}{k+1}\frac{k+5}{k+2}\frac{k+6}{k+3}
    \le4\cdot4\cdot\frac52\cdot2=80.
\end{align*}
It follows that
\begin{align*}
    \operatorname{Var}\left(\frac1{(T+1)(T+2)}\right)
    \le\frac{80(N-1)(d-1)}{d^2}
    \E\left[\frac1{(U+1)(U+2)\cdots(U+6)}\right].
\end{align*}
To evaluate the remaining expectation, use
\begin{align*}
    \frac1{(k+1)(k+2)\cdots(k+6)}
    =\frac1{120}\int_0^1u^5(1-u)^k\,du.
\end{align*}
Recall that $U=\xi_2+\cdots+\xi_{N-1}\sim\Bin(N-2,1/d)$ counts successes in $N-2$ independent trials, each succeeding with probability $1/d$. For fixed $u\in[0,1]$, a successful trial contributes a factor $1-u$, while a failed trial contributes a factor $1$. There are exactly $U$ successful trials, so
\begin{align*}
    (1-u)^U=\prod_{r=2}^{N-1}(1-u\xi_r).
\end{align*}
The expected factor from one trial is
\begin{align*}
    \E[1-u\xi_r]
    =\frac1d(1-u)+\left(1-\frac1d\right)\cdot1
    =1-\frac ud.
\end{align*}
By independence, the expectation of the product equals the product of the expectations. There are $N-2$ factors, giving
\begin{align*}
    \E[(1-u)^U]
    &=\E\left[\prod_{r=2}^{N-1}(1-u\xi_r)\right]
    =\prod_{r=2}^{N-1}\E[1-u\xi_r]
    =\left(1-\frac ud\right)^{N-2}.
\end{align*}
\begin{samepage}
Since $U$ takes only finitely many values, its expectation is a finite weighted sum, which can be interchanged with the integral. Using the identity above then gives
\begin{align}
    \E\left[\frac1{(U+1)(U+2)\cdots(U+6)}\right]
    &=\frac1{120}\int_0^1u^5\E[(1-u)^U]\,du \nonumber\\
    &=\frac1{120}\int_0^1u^5\left(1-\frac ud\right)^{N-2}\,du \nonumber\\
    &=\frac{d^6}{120}\int_0^{1/d}v^5(1-v)^{N-2}\,dv \nonumber\\
    &\le\frac{d^6}{120}\int_0^1v^5(1-v)^{N-2}\,dv \nonumber\\
    &=\frac{d^6}{(N-1)N(N+1)(N+2)(N+3)(N+4)}.
    \label{eq:phase-one-copy-beta-integral}
\end{align}
\end{samepage}
Thus
\begin{align*}
    \operatorname{Var}\left(\frac1{(T+1)(T+2)}\right)
    \le\frac{80(d-1)d^4}{N(N+1)(N+2)(N+3)(N+4)}.
\end{align*}
Substituting into our bound on $\operatorname{Var}(X)$, we obtain
\begin{align*}
    \operatorname{Var}(X)
    &\le\frac{N(N+1)^2}{d^4}
    \frac{80(d-1)d^4}{N(N+1)(N+2)(N+3)(N+4)} \\
    &=\frac{80(d-1)(N+1)}{(N+2)(N+3)(N+4)}.
\end{align*}
Finally, combining this with the previously proved comparison of variances in \cref{eq:phase-one-copy-lipschitz} gives
\begin{align*}
    \operatorname{Var}(\sqrt X)
    &\le\frac{N+d}{4(N+1)}\operatorname{Var}(X) \\
    &\le\frac{20(d-1)(N+d)}{(N+2)(N+3)(N+4)}.
\end{align*}
For $N\ge2d$, we have $N+d\le3N/2$, so
\begin{align*}
    \operatorname{Var}(\sqrt X)\le\frac{30(d-1)}{N^2}.
    \qquad\qedhere
\end{align*}
\end{proof}

\subsection{Lower variance estimate}
\label{app:phase-one-copy-lower}

We use the covariance with the count of equal sample pairs to obtain a lower bound on the variance. The calculation then reduces to the second derivative along a transfer of occupation between two coordinates.
\begin{proposition}[Lower bound]
\label{prop:phase-one-copy-lower}
For $d\ge2$ and $N\ge2d$,
\begin{align}
    \operatorname{Var}(\sqrt X)\ge10^{-9}\frac{d-1}{N^2}.
    \label{eq:phase-one-copy-lower}
\end{align}
\end{proposition}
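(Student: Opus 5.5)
The lower bound will come from a covariance argument. For any random variable $P$ with $\operatorname{Var}(P)>0$, Cauchy--Schwarz gives
\begin{align*}
\operatorname{Var}(\sqrt X)\;\ge\;\frac{\operatorname{Cov}(\sqrt X,P)^2}{\operatorname{Var}(P)}.
\end{align*}
I will take $P$ to be the number of equal sample pairs,
\begin{align*}
P\coloneq\sum_{1\le r<s\le N}\mathbf 1_{\{J_r=J_s\}}=\sum_{j=0}^{d-1}\binom{B_j}{2}.
\end{align*}
This choice is guided by a heuristic expansion. Write $B_j=N/d+\delta_j$ with $\sum_j\delta_j=0$. Then the linear term in $\sum_j 1/(B_j+1)$ cancels, and to second order $X\approx\mathrm{const}+\tfrac{d}{N^2}\sum_j\delta_j^2$. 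Since $\sum_j B_j^2=2P+N$, this says $X$ is essentially an affine function of $P$ with slope of order $d/N^2$. The target is therefore $\operatorname{Cov}(\sqrt X,P)\gtrsim\tfrac{d}{N^2}\operatorname{Var}(P)$, which yields $\operatorname{Var}(\sqrt X)\gtrsim\tfrac{d^2}{N^4}\operatorname{Var}(P)$.

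\emph{Step 1: the variance of $P$.} The pair indicators have mean $1/d$. Two indicators sharing one index are uncorrelated, because $\mathbb P(J_1=J_2=J_3)=1/d^2$ equals the product of their means. Disjoint pairs are independent. Hence
\begin{align*}
\operatorname{Var}(P)=\binom N2\frac1d\Bigl(1-\frac1d\Bigr)\asymp\frac{N^2(d-1)}{d^2}.
\end{align*}
Once the covariance is bounded below by $c\,(d-1)/d$, the two displays combine to give $\operatorname{Var}(\sqrt X)\ge c'(d-1)/N^2$.

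\emph{Step 2: the covariance as a second difference.} Write $\operatorname{Cov}(\sqrt X,P)=\binom N2\operatorname{Cov}(\sqrt X,\mathbf 1_{\{J_1=J_2\}})$. Condition on $J_3,\dots,J_N$, i.e.\ on the occupation vector $\mathbf C$ of the remaining $N-2$ samples. Then the conditional covariance becomes
\begin{align*}
\frac1d\Bigl(1-\frac1d\Bigr)\Bigl[\operatorname{avg}_i\, g(\mathbf C+2\mathbf e_i)-\operatorname{avg}_{i\ne k}\, g(\mathbf C+\mathbf e_i+\mathbf e_k)\Bigr],\qquad g=\sqrt{X(\cdot)}.
\end{align*}
Symmetrizing, the bracket is an average over $i\neq k$ of $\tfrac12\bigl[g(\mathbf C+2\mathbf e_i)+g(\mathbf C+2\mathbf e_k)-2g(\mathbf C+\mathbf e_i+\mathbf e_k)\bigr]$. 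This is a second difference of $g$ along a one-unit transfer of occupation between coordinates $i$ and $k$.

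For the function $X$ itself, this second difference is explicit: it equals $\tfrac{N+1}{d^2}$ times a sum of the second differences of $b\mapsto 1/(b+1)$ at $C_i$ and at $C_k$. The second difference of $1/(b+1)$ is at least $2/(b+3)^3$. Jensen's inequality for the convex function $b\mapsto(b+3)^{-3}$ then gives an expected lower bound of order $(N+1)d^{-2}(d/N)^3=d/N^2$ for this transfer quantity, with the expectation taken over $\mathbf C$. Here $N\ge 2d$ guarantees $N/d+3\lesssim N/d$.

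\emph{Step 3: passing from $X$ to $\sqrt X$ (the main obstacle).} The square root is concave, so it can only shrink a positive second difference, and concavity must be controlled. Write $\sqrt x-\sqrt y=(x-y)/(\sqrt x+\sqrt y)$. This splits the second difference of $g$ into two pieces: the second difference of $X$ divided by roughly $2\sqrt{X}$, and a negative correction of order (first difference)$^2/X^{3/2}$.

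Two facts are needed to handle these pieces. First, $X\ge (N+1)/(N+d)\ge 2/3$ always (as in the proof of \cref{prop:phase-one-copy-upper}). Second, on the event that every $C_j\ge N/(2d)$, $X$ is bounded above by an absolute constant.

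With these, the correction term is small relative to the main term. The first differences of $X$ are of order $(N+1)d^{-2}\,|1/(C_i+2)-1/(C_k+2)|$. Their squares are therefore of order $d^{-4}N^2\bigl(d/N\bigr)^4(C_i-C_k)^2\cdot d^{0}$. Using $\mathbb E(C_i-C_k)^2\asymp N/d$, this gives a contribution of order $d/N^3$, which is negligible against $d/N^2$.

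The complement of the good event needs separate treatment, since near-empty bins can make $X$ large. For this I will use a Chernoff bound for the binomial $C_i$. If that bound is too weak when $N/d$ is only a constant, I would fall back on the crude estimate $|g$-second-difference$|\le$ the second difference of $X$ divided by $\sqrt{2/3}$, which suffices because the bracket is nonnegative pointwise after symmetrization.

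Multiplying the resulting conditional bound by $\binom N2\tfrac1d(1-\tfrac1d)$ gives $\operatorname{Cov}(\sqrt X,P)\ge c(d-1)/d$. The constant $10^{-9}$ leaves ample room for lossy absolute constants in all of the above steps.
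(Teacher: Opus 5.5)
Your Steps 1 and 2 match the paper's argument: the collision count, the bound $\operatorname{Var}(\sqrt X)\ge\operatorname{Cov}^2/\operatorname{Var}$, and the reduction to the symmetrized second difference of $g=\sqrt X$ along a one-unit transfer between two coordinates. Step 3, however, has a genuine gap.

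The event that every $C_j\ge N/(2d)$ does not have probability bounded below uniformly on the allowed range. At $N=2d$ it is the event that $2d-2$ uniform samples leave no bin empty, and this has probability exponentially small in $d$. A Chernoff bound plus a union bound over the $d$ bins only makes its complement small once $N$ is of order $d\log d$. So for $2d\le N\ll d\log d$ your good event carries almost no mass, and it cannot deliver an $\Omega(d/N^2)$ lower bound on the expected second difference.

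The fallback does not repair this. Concavity of the square root bounds the second difference of $g$ \emph{above} by that of $X$ divided by $2\sqrt X$, never below. The pointwise nonnegativity you use to discard the bad event does not follow from symmetrization. It is exactly the statement that $t\mapsto\sqrt{X(\mathbf C+\mathbf e_i+\mathbf e_k+t(\mathbf e_i-\mathbf e_k))}$ is convex, and that needs proof.

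There is also a smaller slip. The first difference of $X$ along the transfer is $\frac{N+1}{d^2}\bigl[\frac{1}{(C_k+1)(C_k+2)}-\frac{1}{(C_i+2)(C_i+3)}\bigr]$. Squaring the expression you wrote gives order $1/(Nd)$, not $d/N^3$, and that would not be negligible against $d/N^2$ once $N$ exceeds order $d^2$.

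The missing ingredient is a pointwise curvature bound. Write $h(t)$ for the square root above, with $u=C_i+2+t$ and $v=C_k+2-t$. Cauchy--Schwarz gives $(v^{-2}-u^{-2})^2\le(u^{-1}+v^{-1})(u^{-3}+v^{-3})\le\frac{d^2h(t)^2}{N+1}(u^{-3}+v^{-3})$. Hence the concavity term is at most a quarter of the main term, so $h''(t)\ge\frac{3(N+1)}{4d^2}\frac{u^{-3}+v^{-3}}{h(t)}\ge0$ for every $\mathbf C$. This removes any need for lower bounds on $C_i,C_k$ or for an estimate of $\E(C_i-C_k)^2$, and it makes the bad event contribute nonnegatively.

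The good event then only has to bound $C_i+C_k$ and $X(\mathbf C)$ from above, and Markov's inequality does this. The event $\{C_i+C_k\le 8N/d,\ X(\mathbf C)\le 4(N+1)/(N-1)\}$ has probability at least $1/2$, since $\E[X(\mathbf C)]\le(N+1)/(N-1)$. On it, $u,v\le 10N/d$ and $h^2\le X(\mathbf C)\le 6(N+1)/N$, so $h''\ge c\,d/N^2$ there. Integrating against $1-|t|$ gives the $\Omega(d/N^2)$ bound on the second difference uniformly for $N\ge 2d$. This is how the paper closes the proof.
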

\begin{proof}
Recall that $J_1,\ldots,J_N$ are independent uniform samples from $\{0,\ldots,d-1\}$ whose occupation vector is $\mathbf{B}$. Introduce the centered collision count
\begin{align}
    L=\sum_{1\le a<b\le N}
    \left(\mathbf1_{\{J_a=J_b\}}-\frac1d\right).
\end{align}
Distinct collision indicators are uncorrelated: this is immediate for disjoint pairs, while two pairs sharing one sample satisfy $\mathbb{P}(J_1=J_2=J_3)=d^{-2}$. Hence
\begin{align}
    \operatorname{Var}(L)=\binom N2\frac{d-1}{d^2}.
    \label{eq:phase-one-copy-collision-variance}
\end{align}

Let $\mathbf{C}\sim\Mult(N-2;1/d,\ldots,1/d)$ record the occupations of the remaining $N-2$ samples. By permutation symmetry,
\begin{align*}
    \E\left[\sqrt{X(\mathbf{B})}\,\middle|\,J_1=J_2\right]
    &=\E\left[\sqrt{X(\mathbf{C}+2\mathbf{e}_0)}\right], \\
    \E\left[\sqrt{X(\mathbf{B})}\,\middle|\,J_1\ne J_2\right]
    &=\E\left[\sqrt{X(\mathbf{C}+\mathbf{e}_0+\mathbf{e}_1)}\right].
\end{align*}
Since the equality event has probability $1/d$,
\begin{align}
    \Cov\bigl(\sqrt{X(\mathbf{B})},L\bigr)
    =\binom N2\frac{d-1}{d^2}
    \E\left[\sqrt{X(\mathbf{C}+2\mathbf{e}_0)}-\sqrt{X(\mathbf{C}+\mathbf{e}_0+\mathbf{e}_1)}\right].
    \label{eq:phase-one-copy-collision-covariance}
\end{align}
Since $\E[L]=0$, applying Cauchy--Schwarz to $\sqrt{X(\mathbf{B})}-\E[\sqrt{X(\mathbf{B})}]$ and $L$ gives
\begin{align*}
    \Cov\bigl(\sqrt{X(\mathbf{B})},L\bigr)^2
    &=\left(\E\left[\left(\sqrt{X(\mathbf{B})}-\E[\sqrt{X(\mathbf{B})}]\right)L\right]\right)^2 \\
    &\le\E\left[\left(\sqrt{X(\mathbf{B})}-\E[\sqrt{X(\mathbf{B})}]\right)^2\right]\E[L^2] \\
    &=\operatorname{Var}\bigl(\sqrt{X(\mathbf{B})}\bigr)\operatorname{Var}(L).
\end{align*}
Dividing by $\operatorname{Var}(L)>0$ and substituting \cref{eq:phase-one-copy-collision-variance,eq:phase-one-copy-collision-covariance}, we obtain
\begin{align}
    \operatorname{Var}\bigl(\sqrt{X(\mathbf{B})}\bigr)
    &\ge\frac{\Cov\bigl(\sqrt{X(\mathbf{B})},L\bigr)^2}{\operatorname{Var}(L)} \nonumber\\
    &=\frac{\left(\binom N2\frac{d-1}{d^2}\right)^2}
    {\binom N2\frac{d-1}{d^2}}
    \left(\E\left[\sqrt{X(\mathbf{C}+2\mathbf{e}_0)}-\sqrt{X(\mathbf{C}+\mathbf{e}_0+\mathbf{e}_1)}\right]\right)^2 \nonumber\\
    &=\binom N2\frac{d-1}{d^2}
    \left(\E\left[\sqrt{X(\mathbf{C}+2\mathbf{e}_0)}-\sqrt{X(\mathbf{C}+\mathbf{e}_0+\mathbf{e}_1)}\right]\right)^2.
    \label{eq:phase-one-copy-covariance-lower}
\end{align}

We bound the difference of expectations in \cref{eq:phase-one-copy-covariance-lower} by moving occupation between the first two coordinates. Recall that
\begin{align*}
    X(\mathbf{b})=\frac{N+1}{d^2}\sum_{j=0}^{d-1}\frac1{b_j+1}.
\end{align*}
We use this same formula on nonnegative real coordinates, keeping the prefactor $(N+1)/d^2$ fixed. For fixed $\mathbf{C}$, define
\begin{align}
    h_{\mathbf{C}}(t)
    =\sqrt{X\bigl(\mathbf{C}+\mathbf{e}_0+\mathbf{e}_1+t(\mathbf{e}_0-\mathbf{e}_1)\bigr)},
    \qquad -1\le t\le1.
\end{align}
Writing out $X$ gives
\begin{align*}
    h_{\mathbf{C}}(t)
    =\left[\frac{N+1}{d^2}\left(
    \frac1{C_0+2+t}+\frac1{C_1+2-t}
    +\sum_{j=2}^{d-1}\frac1{C_j+1}
    \right)\right]^{1/2}.
\end{align*}
Set $u_t=C_0+2+t$ and $v_t=C_1+2-t$, so that $u_t'=1$ and $v_t'=-1$. The remaining sum is independent of $t$, and the chain rule gives
\begin{align*}
    h_{\mathbf{C}}'(t)
    &=\frac1{2h_{\mathbf{C}}(t)}\frac{N+1}{d^2}
    \left(-u_t^{-2}u_t'-v_t^{-2}v_t'\right) \\
    &=\frac{N+1}{2d^2}\frac{-u_t^{-2}+v_t^{-2}}{h_{\mathbf{C}}(t)}.
\end{align*}
Differentiating once more and substituting this expression for $h_{\mathbf{C}}'(t)$, we obtain
\begin{align}
    h_{\mathbf{C}}''(t)
    &=\frac{N+1}{2d^2}\left[
    \frac{2u_t^{-3}+2v_t^{-3}}{h_{\mathbf{C}}(t)}
    -\frac{(-u_t^{-2}+v_t^{-2})h_{\mathbf{C}}'(t)}{h_{\mathbf{C}}(t)^2}
    \right] \nonumber\\
    &=\frac{N+1}{d^2}\frac{u_t^{-3}+v_t^{-3}}{h_{\mathbf{C}}(t)}
    -\frac{N+1}{2d^2}\frac{-u_t^{-2}+v_t^{-2}}{h_{\mathbf{C}}(t)^2}
    \left(\frac{N+1}{2d^2}\frac{-u_t^{-2}+v_t^{-2}}{h_{\mathbf{C}}(t)}\right) \nonumber\\
    &=\frac{N+1}{d^2}\frac{u_t^{-3}+v_t^{-3}}{h_{\mathbf{C}}(t)}
    -\frac{(N+1)^2}{4d^4}\frac{(-u_t^{-2}+v_t^{-2})^2}{h_{\mathbf{C}}(t)^3}.
    \label{eq:phase-one-copy-curvature-identity}
\end{align}
To obtain a lower bound on this derivative, we bound the term being subtracted. First, the definition of $h_{\mathbf{C}}$ gives
\begin{align*}
    \frac{d^2h_{\mathbf{C}}(t)^2}{N+1}
    =u_t^{-1}+v_t^{-1}+\sum_{j=2}^{d-1}\frac1{C_j+1}
    \ge u_t^{-1}+v_t^{-1}.
\end{align*}
Cauchy--Schwarz therefore yields
\begin{align*}
    (-u_t^{-2}+v_t^{-2})^2
    &\le(u_t^{-1}+v_t^{-1})(u_t^{-3}+v_t^{-3}) \\
    &\le\frac{d^2h_{\mathbf{C}}(t)^2}{N+1}(u_t^{-3}+v_t^{-3}).
\end{align*}
Multiplying by $(N+1)^2/(4d^4h_{\mathbf{C}}(t)^3)$ gives
\begin{align*}
    \frac{(N+1)^2}{d^4}\frac{(-u_t^{-2}+v_t^{-2})^2}{4h_{\mathbf{C}}(t)^3}
    &\le\frac{(N+1)^2}{4d^4h_{\mathbf{C}}(t)^3}
    \frac{d^2h_{\mathbf{C}}(t)^2}{N+1}(u_t^{-3}+v_t^{-3}) \\
    &=\frac{N+1}{4d^2}\frac{u_t^{-3}+v_t^{-3}}{h_{\mathbf{C}}(t)}.
\end{align*}
Substituting this upper bound for the subtracted term in \cref{eq:phase-one-copy-curvature-identity}, we obtain
\begin{align}
    h_{\mathbf{C}}''(t)
    &\ge\frac{N+1}{d^2}\frac{u_t^{-3}+v_t^{-3}}{h_{\mathbf{C}}(t)}
    -\frac{N+1}{4d^2}\frac{u_t^{-3}+v_t^{-3}}{h_{\mathbf{C}}(t)} \nonumber\\
    &=\frac{3(N+1)}{4d^2}\frac{u_t^{-3}+v_t^{-3}}{h_{\mathbf{C}}(t)}
    \ge0.
    \label{eq:phase-one-copy-curvature}
\end{align}

Consider the event
\begin{align}
    \mathcal G
    =\left\{C_0+C_1\le\frac{8N}{d},
    \quad X(\mathbf{C})\le\frac{4(N+1)}{N-1}\right\}.
\end{align}
Each condition fails with probability at most $1/4$ by Markov's inequality, since
\begin{align}
    \E[C_0+C_1]=\frac{2(N-2)}d,
    \qquad
    \E[X(\mathbf{C})]
    =\frac{N+1}{N-1}
    \left[1-\left(1-\frac1d\right)^{N-1}\right]
    \le\frac{N+1}{N-1}.
\end{align}
Thus $\mathbb{P}(\mathcal G)\ge1/2$. Recall that
\begin{align*}
    u_t=C_0+2+t,
    \qquad v_t=C_1+2-t,
    \qquad X(\mathbf{C})=\frac{N+1}{d^2}\sum_{j=0}^{d-1}\frac1{C_j+1}.
\end{align*}
On $\mathcal G$, uniformly for $-1\le t\le1$,
\begin{align}
    u_t,v_t\le\frac{8N}{d}+3\le\frac{10N}{d},
    \qquad
    h_{\mathbf{C}}(t)^2\le X(\mathbf{C})\le\frac{6(N+1)}{N}.
    \label{eq:phase-one-copy-event-estimates}
\end{align}
Here $N\ge2d\ge4$, and $h_{\mathbf{C}}(t)^2\le X(\mathbf{C})$ because the two added occupations $1+t$ and $1-t$ are nonnegative. Substituting \cref{eq:phase-one-copy-event-estimates} into \cref{eq:phase-one-copy-curvature} gives
\begin{align}
    h_{\mathbf{C}}''(t)\ge\frac{3d}{2000\sqrt6\,N^2}
    \qquad(\mathbf{C}\in\mathcal G).
\end{align}
By \cref{eq:phase-one-copy-curvature}, $h_{\mathbf{C}}''(t)\ge0$ also outside $\mathcal G$. These outcomes therefore contribute nonnegatively to the expected integral against the nonnegative weight $1-|t|$. To lower-bound this expectation, we may retain only the outcomes in $\mathcal G$, which occur with probability at least $1/2$.

Finally, swapping coordinates $0$ and $1$ preserves the distribution of $\mathbf{C}$ and the value of $X$, so
\begin{align*}
    \E\left[\sqrt{X(\mathbf{C}+2\mathbf{e}_0)}\right]
    =\E\left[\sqrt{X(\mathbf{C}+2\mathbf{e}_1)}\right],
    \qquad
    \E[h_{\mathbf{C}}(1)]=\E[h_{\mathbf{C}}(-1)].
\end{align*}
Using the definition of $h_{\mathbf{C}}$ and the integral formula for a second difference gives
\begin{align}
    &\E\left[\sqrt{X(\mathbf{C}+2\mathbf{e}_0)}-\sqrt{X(\mathbf{C}+\mathbf{e}_0+\mathbf{e}_1)}\right] \nonumber\\
    &\qquad=\E\bigl[h_{\mathbf{C}}(1)-h_{\mathbf{C}}(0)\bigr] \nonumber\\
    &\qquad=\frac12\E\bigl[h_{\mathbf{C}}(1)+h_{\mathbf{C}}(-1)-2h_{\mathbf{C}}(0)\bigr] \nonumber\\
    &\qquad=\frac12\E\left[\int_{-1}^1(1-|t|)h_{\mathbf{C}}''(t)\,dt\right] \nonumber\\
    &\qquad\ge\frac12\cdot\frac12\cdot\frac{3d}{2000\sqrt6\,N^2}
    =\frac{3d}{8000\sqrt6\,N^2}.
    \label{eq:phase-one-copy-second-difference}
\end{align}
The last inequality follows by restricting to $\mathcal G$, using $\mathbb{P}(\mathcal G)\ge1/2$ and $\int_{-1}^1(1-|t|)\,dt=1$. The factor $1/4$ comes from multiplying the prefactor $1/2$ by the probability lower bound $1/2$ for the good event; the integral of the weight contributes a factor of $1$.

Since $N\ge2d\ge4$, we have
\begin{align*}
    \binom N2
    =\frac{N(N-1)}2
    =\frac{N^2}{2}\frac{N-1}{N}
    \ge\frac{N^2}{2}\cdot\frac34
    =\frac{3N^2}{8}.
\end{align*}
Combining \cref{eq:phase-one-copy-covariance-lower,eq:phase-one-copy-second-difference} and applying this bound gives
\begin{align}
    \operatorname{Var}(\sqrt X)
    &\ge\binom N2\frac{d-1}{d^2}
    \left(\frac{3d}{8000\sqrt6\,N^2}\right)^2 \nonumber\\
    &\ge\frac{3N^2}{8}\frac{d-1}{d^2}
    \frac{9d^2}{384\,000\,000\,N^4} \nonumber\\
    &=\frac9{1\,024\,000\,000}\frac{d-1}{N^2}
    \ge10^{-9}\frac{d-1}{N^2}.
    \qquad\qedhere
\end{align}
\end{proof}

\subsection{Uniform completion and sample-complexity inversion}
\label{app:phase-one-copy-completion}

The estimates above prove \Cref{lem:phase-one-copy-variance-main} with $c_0=10^{-9}$ and $C_0=30$. Together with the exact decomposition, they give the more precise bound
\begin{align}
    \left(1-\frac1d\right)^{N+1}+10^{-9}\frac{d-1}{N^2}
    \le\delta_{N,d}
    \le\left(1-\frac1d\right)^{N+1}+30\frac{d-1}{N^2},
    \qquad N\ge2d.
    \label{eq:phase-one-copy-dense}
\end{align}
The numerical constants are not optimized. We record the uniform completion with $c=10^{-9}$ and $C=68$, and then the full inversion of these bounds.

\begin{proof}[Proof of \Cref{thm:phase-one-copy-uniform}]
For $N\ge2d$, \Cref{lem:phase-one-copy-decomposition} and \Cref{prop:phase-one-copy-upper,prop:phase-one-copy-lower} give \cref{eq:phase-one-copy-dense}. Since $N\le N+d\le3N/2$, they also give \cref{eq:phase-one-copy-uniform} with $c=10^{-9}$ and $C=68$ in this regime.

If $1\le N<2d$, then
\begin{align}
    \left(1-\frac1d\right)^{N+1}\ge\left(1-\frac1d\right)^{2d}\ge\frac1{16},
    \qquad
    \frac{d-1}{(N+d)^2}\le\frac14.
\end{align}
The middle inequality follows from the increase of $(1-1/d)^d$ for $d\ge2$, whose value at $d=2$ is $1/4$. Thus $(1-1/d)^{N+1}+(d-1)/(N+d)^2\le5(1-1/d)^{N+1}$. Since $\delta_{N,d}\ge (1-1/d)^{N+1}$ by \Cref{lem:phase-one-copy-decomposition} and $\delta_{N,d}\le1\le16(1-1/d)^{N+1}$, the same constants $c,C$ also suffice here.
\end{proof}

\begin{proof}[Proof of \Cref{cor:phase-one-copy-complexity}]
Suppose $\delta_{N,d}\le\delta\le1/4$. The exact lower bound $\delta_{N,d}\ge (1-1/d)^{N+1}$ implies
\begin{align}
    N+1
    \ge\frac{\log(1/\delta)}{-\log(1-1/d)}
    \ge(d-1)\log\frac1\delta.
\end{align}
Using $N+1\le2N$ and $d-1\ge d/2$, we obtain
\begin{align}
    N\ge\frac d4\log\frac1\delta.
    \label{eq:phase-one-copy-log-necessary}
\end{align}
In particular, $N+d\le4N$, because $\log(1/\delta)\ge\log4>4/3$. The rational term in \cref{eq:phase-one-copy-uniform} now gives
\begin{align}
    \delta\ge c\,\frac{d-1}{(N+d)^2}
    \ge\frac{cd}{32N^2},
    \qquad
    N\ge\sqrt{\frac{cd}{32\delta}}.
    \label{eq:phase-one-copy-root-necessary}
\end{align}
Combining \cref{eq:phase-one-copy-log-necessary,eq:phase-one-copy-root-necessary} proves the required lower bound.

For sufficiency, choose
\begin{align}
    N=\left\lceil
    8\left(d\log\frac1\delta+\sqrt{\frac d\delta}\right)
    \right\rceil.
    \label{eq:phase-one-copy-sufficient}
\end{align}
This ensures $N\ge2d$, $(1-1/d)^{N+1}\le e^{-(N+1)/d}\le\delta^8\le\delta/2$, and $30(d-1)/N^2\le30\delta/64<\delta/2$. The upper bound in \cref{eq:phase-one-copy-dense} therefore gives $\delta_{N,d}\le\delta$.
\end{proof}

%% file: CloningBounds.tex
\section{Bounds for non-resolving orbit families}
\label{app:bounds_nonresolving}

In this appendix, for the benefit of the Outlook section (\cref{sec:outlook}), we give an upper and lower bound for the set of states which form a well-behaved, but not necessarily resolving orbit inside a finite-dimensional $\cH$. Throughout this section, we fix $\cS \subseteq \cal{D}(\cH)$ to be a pure state family which is an orbit for some group $G$ as defined in~\Cref{def:resolving_orbit}. That is, there exists a compact group $G$ admitting a unimodular Haar measure and a strongly continuous unitary representation $\phi:G\rarr\rU(\cH)$ such that
\begin{equation*}
    \cS = \{\phi(g)\, \ketbra{\psi}{\psi} \,\phi^\dag(g) : g \in G\}
\end{equation*}
for some pure state $ \ketbra{\psi}{\psi} \in \cal{D}(\cH)$. We define the $N$-th moment of $\cS$ as 
\begin{equation*}
    \tau_{N} := \int_{\ket{\psi} \in \cS} \left(\ketbra{\psi}{\psi}\right)^{\otimes N} \nu(d\ket{\psi})
\end{equation*}
where the integral is taken with respect to the normalized uniform measure on $\cS$. Furthermore, we let $\pi_N$ be the projector onto the support of $\tau_N$. Note that $\pi_{N+K}\le\pi_N\otimes\pi_K$; in particular $\pi_{N+K}(\cI\otimes\pi_K)=\pi_{N+K}$. We have the following general bound with respect to cloning for a general orbit. 

\begin{theorem}
\label{thm:general_cloning_bound}
Let $\cH$ be a finite-dimensional Hilbert space, $N$ and $K$ be two natural numbers and let $\cS\subseteq \cal{D}(\cH)$ be an orbit of a compact $G$ as described above. Then:
\begin{enumerate}
    \item There exists an explicit $N \rightarrow N+K$ cloning channel that achieves the cloning fidelity of at least $\frac{1}{\|\Tr_B(\pi_{N+K})\|_{\infty}} $, where $\Tr_B$ denotes tracing out the last $K$ registers of $\pi_{N+K}$. 
    \item For any $N \rightarrow N+K$ cloning channel, the optimal worse-case cloning fidelity is bounded above by
    \begin{equation*}
        F_{\cS}^{\star}(N,N+K) \leq \| \tau_{N+K} \|_{\infty} \cdot  \operatorname{rank}(\pi_{N})
    \end{equation*}
\end{enumerate}
\end{theorem}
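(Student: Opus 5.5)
For part 1, I will mimic the canonical cloner of \cref{thm:structured_optimal_cloning}, replacing the scalar $d_N/d_{N+K}$ by a normalization operator. Set $R\defeq\Tr_B(\pi_{N+K})$, an operator on $\cH^{\otimes N}$. Since $\pi_{N+K}\le\pi_N\otimes\pi_K$, $R$ is supported in $\supp\pi_N$. Because $\pi_{N+K}$ commutes with $\phi(g)^{\otimes N+K}$, the operator $R$ commutes with $\phi(g)^{\otimes N}$. I will then define
\begin{align*}
T(\rho)=\frac{1}{\|R\|_\infty}\pi_{N+K}(\rho\otimes\bI_K)\pi_{N+K}+\Tr\!\Big[\Big(\pi_N-\tfrac{R}{\|R\|_\infty}\Big)\rho\Big]\,\sigma,
\end{align*}
where $\sigma$ is any fixed state. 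The first term is CP, and its trace equals $\Tr(R\rho)/\|R\|_\infty\le\Tr\rho$. The second term is a measure-and-prepare completion with a positive effect, since $R\le\|R\|_\infty\pi_N$ on the support. Together the two terms make $T$ CPTP on $\cH^{(N)}_\cS$.

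For the fidelity, I use $\psi^{\otimes N+K}\le\pi_{N+K}$ and $\psi^{\otimes N+K}\le\psi^{\otimes N}\otimes\bI_K$, exactly as in the proof of \cref{thm:structured_optimal_cloning}. These give $\Tr(\psi^{\otimes N+K}T(\psi^{\otimes N}))\ge 1/\|R\|_\infty$ for every $\psi\in\cS$. This step is routine. The only point to check is that $\psi^{\otimes N+K}$ lies in $\supp\pi_{N+K}$, which holds because it appears in the integral defining $\tau_{N+K}$ with full orbit support.

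For part 2, I will use the Choi-operator bound from the optimality part of \cref{thm:structured-cloning-complementarity}. First, the worst-case fidelity is at most the orbit average. For any channel $\cR$ with Choi operator $C_\cR$ on $\supp\pi_N\otimes\cH^{\otimes N+K}$, the average fidelity is $\Tr\big(C_\cR\int\overline{\psi}^{\otimes N}\otimes\psi^{\otimes N+K}d\nu\big)$. Denote this averaged operator by $\Omega$. Then $\Tr(C_\cR\Omega)\le\|\Omega\|_\infty\Tr C_\cR=\|\Omega\|_\infty\operatorname{rank}(\pi_N)$, because $\Tr C_\cR=\dim\cH^{(N)}_\cS=\operatorname{rank}\pi_N$.

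It then remains to show $\|\Omega\|_\infty\le\|\tau_{N+K}\|_\infty$. This is the step I expect to be the main obstacle. I plan to bound $\langle x|\Omega|x\rangle$ for a unit vector $x\in\cH^{\otimes N}\otimes\cH^{\otimes N+K}$. Viewing $x$ as an operator $X:\cH^{\otimes N+K}\to\overline{\cH^{\otimes N}}$ (or its conjugate), the expectation becomes $\int|\langle\psi^{\otimes N}|X|\psi^{\otimes N+K}\rangle|^2d\nu$, with $\|X\|_2=1$. I would first try Cauchy--Schwarz in the form $|\langle a|X|b\rangle|^2\le\langle b|X^\dagger X|b\rangle$ with $a=\psi^{\otimes N}$ a unit vector. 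This gives $\int\langle\psi^{\otimes N+K}|X^\dagger X|\psi^{\otimes N+K}\rangle d\nu=\Tr(X^\dagger X\tau_{N+K})\le\|\tau_{N+K}\|_\infty\Tr X^\dagger X=\|\tau_{N+K}\|_\infty$, which yields the claim. If the conjugation bookkeeping between $\overline\psi$ and $\psi$ causes trouble, I will use the same transpose identity $(\psi^{\otimes N})^{\mathsf T}=\overline{\psi}^{\otimes N}$ as in \cref{eq:conjugation-figures-of-merit}.
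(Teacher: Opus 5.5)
Your proposal is correct, and part~1 is the paper's construction. Since $\pi_{N+K}(\bI\otimes\pi_K)=\pi_{N+K}$, your first term equals the paper's $c\,\pi_{N+K}(\rho\otimes\pi_K)\pi_{N+K}$ with $c=1/\|\Tr_B(\pi_{N+K})\|_\infty$. The only difference is how the missing trace is restored: the paper uses the Kraus operator $(\cI-c\,\Tr_B\pi_{N+K})^{1/2}\otimes\ket{0}$, while you use a measure-and-prepare term. This is immaterial, because either completion only adds a nonnegative amount to the fidelity.

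For part~2 the paper stays in the Schr\"odinger picture. It bounds the orbit-averaged fidelity by $\Tr\bigl(\tau_{N+K}T'(\pi_N)\bigr)$, using $\psi^{\otimes N}\le\pi_N$ and positivity of $T'$. It then applies H\"older together with $\Tr T'(\pi_N)=\operatorname{rank}(\pi_N)$.

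Your Choi-operator route computes the same quantity. The Cauchy--Schwarz step you flagged as the main obstacle is just the operator inequality $\Omega\le\bI\otimes\tau_{N+K}$, which follows from $\overline{\psi}^{\otimes N}\le\bI$ on the input space. Moreover, $\Tr\bigl(C_\cR(\bI\otimes\tau_{N+K})\bigr)=\Tr\bigl(\tau_{N+K}\,\cR(\pi_N)\bigr)$. So the detour through $\|\Omega\|_\infty$ and the conjugation bookkeeping can be dropped, though the argument is valid as written.
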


As a corollary of the above theorem, the cloning complexity for any orbit can be bounded by 
\begin{equation*}
        \min\{N: \|\tau_{N+1} \|_{\infty} \cdot \operatorname{rank}(\pi_{N}) \geq 1 - \epsilon  \} \leq N_{\text{min}, \cS}(\epsilon) \leq \min\{N: \frac{1}{\|\Tr_B(\pi_{N+1})\|_{\infty}} \geq 1 - \epsilon^2 \} 
\end{equation*}

We remark that if $\cS$ is a resolving orbit, then $\operatorname{rank}(\pi_{N}) \cdot \tau_N = \pi_N$. Since 
\begin{equation*}
    \|\Tr_B(\pi_{N+K})\|_{\infty} =  \|\Tr_B(\operatorname{rank}(\pi_{N+K}) \cdot \tau_{N+K})\|_{\infty} = \frac{\operatorname{rank}(\pi_{N+K})}{\operatorname{rank}(\pi_{N})},
\end{equation*}
and 
\begin{equation*}
    \| \tau_{N+K} \|_{\infty}  = \frac{1}{\operatorname{rank}(\pi_{N+K})},
\end{equation*}
this shows that~\Cref{thm:general_cloning_bound} matches the bound given in~\Cref{sec:orbits_cloners}. For phase state $\cS$, we have
\begin{align*}
    \tau_{N} &= \frac{1}{d^N} \sum_{\mathbf a\in\cB_N} \binom{N}{\mathbf a} \ketbra{ \Symb{N}{\mathbf a} }{ \Symb{N}{\mathbf a} }.  \\
    \pi_N&=\sum_{\mathbf a\in\cB_N} \ketbra{\Symb{N}{\mathbf a}}{\Symb{N}{\mathbf a}}.
\end{align*}
Here we remark that $\pi_{N}$ projects onto the symmetric subspace. Using the bound given in~\Cref{thm:general_cloning_bound}, we obtain a worst-case lower bound of $\frac{\binom{N +d-1}{N}}{\binom{N +K +d-1}{N +K}}$, the same bound as the general state cloner. 

We are now ready to show~\Cref{thm:general_cloning_bound}, we remark that the proof follows a similar structure to that of~\cite{werner1998optimal}.

\begin{proof}
    For part 1, let $c = \frac{1}{\|\Tr_B(\pi_{N+K})\|_{\infty}}$. We wish to show that the cloning channel $T$ given by
    \begin{equation*}
        T(\rho_{N}) := c \cdot \pi_{N+K} \left(\rho_{N} \otimes \pi_K \right) \pi_{N+K} + \left((\cI - c \cdot \Tr_B(\pi_{N+K}))^{\frac{1}{2}} \rho_{N} (\cI - c \cdot \Tr_B(\pi_{N+K}))^{\frac{1}{2}}\right) \otimes \ketbra{0}{0}
    \end{equation*}
    is a valid quantum channel. Write 
    \begin{equation*}
        \pi_K = \sum_{i \in [\operatorname{rank}(\pi_{K})]} \ketbra{\psi_i}{\psi_i}
    \end{equation*}
    as the spectral decomposition for $\pi_{K}$. For $i \in [\operatorname{rank}(\pi_{K})]$, define the Kraus operator 
    \begin{equation*}
        A_i = \sqrt{c} \cdot \pi_{N+K} (\cI_{N} \otimes \ket{\psi_i})
    \end{equation*}
    and 
    \begin{equation*}
        A_0 = (\cI - c \cdot \Tr_B(\pi_{N+K}))^{\frac{1}{2}} \otimes \ket{0}.
    \end{equation*}
    By simple calculation, one can verify that
    \begin{equation*}
        T(\rho_{N}) = \sum_{i=0}^{\operatorname{rank}(\pi_K)} A_i \rho_{N} A_i^{\dagger},
    \end{equation*}
    and 
    \begin{align*}
        \sum_{i=0}^{\operatorname{rank}(\pi_K)}  A^{\dagger}_i A_i &= \cI. 
    \end{align*}
    Thus showing that $T$ is a valid channel. For any state $\ket{\psi} \in \cS$
    \begin{align*}
        &\Tr\!\left[
    \ketbra{\psi}{\psi}^{\otimes(N+K)}
    T\!\left(\ketbra{\psi}{\psi}^{\otimes N}\right)
\right]  \\
        &\geq  \frac{1}{\|\Tr_B(\pi_{N+K})\|_{\infty}} \cdot \bra{\psi}^{\otimes N+K} \pi_{N+K} \left( \ketbra{\psi}{\psi}^{\otimes N} \otimes \pi_K \right)  \pi_{N+K}\ket{\psi}^{\otimes N+K} = \frac{1}{\|\Tr_B(\pi_{N+K})\|_{\infty}}. 
    \end{align*}
    This proves part 1 of the theorem. To show part 2, for any cloning channel $T'$, since $\ketbra{\psi}{\psi}^{\otimes N} \leq \pi_{N}$, 
    \begin{align*}
        F_{\cS}^{\star}(N,N+K) &\leq \int_{\cS} \Tr(\ketbra{\psi}{\psi}^{\otimes N+K} T'(\ketbra{\psi}{\psi}^{\otimes N})) \nu(d\ket{\psi}) \\
        &\leq \Tr(\int_{\cS}  \ketbra{\psi}{\psi}^{\otimes N+K} T'(\pi_N) \nu(d\ket{\psi}) ) \\
        &\leq  \|\tau_{N+K}\|_{\infty}  \cdot \operatorname{rank}(\pi_{N})
    \end{align*}
    where the last line follows H\"older and the channel being trace preserving. This proves part 2 of the theorem.

\end{proof}